\pgfplotsset{compat=1.18}
\newcommand{\fref}[2]{\hyperref[#2]{#1 \ref*{#2}}}	
\newcommand{\wavecoord}{\sigma}
\newcommand{\altwavecoord}{\zeta}
\newcommand{\norm}[2]{{\left\lvert \left\lvert #1 \right\rvert \right\rvert}_{#2}} 
\DeclareMathOperator{\im}{im}				
\DeclareMathOperator{\sign}{sign}			
\DeclareMathOperator*{\argmin}{arg\,min}		
\numberwithin{equation}{section}
\theoremstyle{plain}
\newtheorem{thm}{Theorem}[section]
\newtheorem{theorem}[thm]{Theorem}
\newtheorem{prop}[thm]{Proposition}
\newtheorem{lemma}[thm]{Lemma}
\newtheorem{cor}[thm]{Corollary}
\theoremstyle{definition}
\newtheorem{definition}[thm]{Definition}
\newtheorem{remark}[thm]{Remark}
\title{
    Beyond Linear Decomposition: a Nonlinear Eigenspace Decomposition for a Moist Atmosphere with Clouds
}
\author{
    Antoine Remond-Tiedrez
    \thanks{Department of Mathematics, University of Wisconsin--Madison, Madison, WI, USA}
    \and Leslie M. Smith
    \thanks{Department of Mathematics and Department of Atmospheric and Oceanic Sciences, University of Wisconsin--Madison, Madison, WI, USA}
    \and Samuel N. Stechmann
    \footnotemark[2]
}
\date{}
\begin{document}


\maketitle

\abstract{
	A linear decomposition of states underpins many classical systems.
	This is the case of the Helmholtz decomposition, used to split vector fields into divergence-free and potential components, and of the dry Boussinesq system in atmospheric dynamics, where identifying the slow and fast components of the flow
	can be viewed as a decomposition.
	The dry Boussinesq system incorporates two leading ingredients of mid-latitude atmospheric motion: rotation and stratification.
	In both cases the leading order dynamics are linear so we can rely on an eigendecomposition to decompose states.

	Here we study the extension of dry Boussinesq to incorporate another important ingredient in the atmosphere: moisture and clouds.
	The key challenge with this system is that nonlinearities are present at leading order due to phase boundaries at cloud edge.
	Therefore standard tools of linear algebra, relying on eigenvalues and eigenvectors, are not applicable.
	The question we address in this paper is this: in spite of the nonlinearities, can we find a decomposition for this moist Boussinesq system?

	We identify such a decomposition adapted to the nonlinear balances arising from water phase boundaries.
	This decomposition combines perspectives from partial differential equations (PDEs), the geometry, and the conserved energy.
	Moreover it sheds light on two aspects of previous work.
	First, this decomposition shows that the nonlinear elliptic PDE used for potential vorticity and moisture inversion can be used outside the limiting system where it was first derived.
	Second, we are able to rigorously justify, and interpret geometrically, an existing numerical method for this elliptic PDE.
	This decomposition may be important in applications because, like its linear counterparts, it may be used to analyze observational data.
	Moreover, by contrast with previous decompositions, it may be used even in the presence of clouds.
}


\vspace{2em}
\textbf{MSC codes.}
    Primary: 86A10, 76U60; Secondary: 37N10, 53Z05, 65N99.

\vspace{2em}
\textbf{Keywords:}
Potential vorticity inversion, clouds, wave-vortical decomposition, inertia-gravity waves, water vapor, moisture, geophysical fluid dynamics, phase interface.


\newpage
\tableofcontents


\newpage

\noindent
Note to the reader: in \fref{Section}{sec:intro} below we provide a ``shortest path'' to the result, providing just enough background to state and motivate the result.
The main results and conclusions are summarized in \fref{Sections}{sec:summary} and \ref{sec:conclusion}.
For a more detailed discussion of the motivation and context we refer the reader to \fref{Section}{sec:background}.

\section{Introduction}
\label{sec:intro}
	
	Decomposing a vector field, or a state vector, into several constitutive components is ubiquitous in fluid mechanics and atmospheric science.
	For example, the Helmholtz decomposition will split a vector field into its potential and divergence-free components, and it
	is used throughout fluid mechanics \cite{chorin1968numerical,temam1969approximation,majda2001vorticity}. Another example is the vortical--wave decomposition of a state vector which is used in atmospheric and oceanic sciences to understand the fast rotation and strong stratification
	 regime of the Boussinesq equations, e.g., \cite{blumen1972geostrophic,bartello95,majda_2003}.
	Crucially, as discussed in \fref{Section}{sec:background}, both of these decompositions can be viewed from multiple perspectives and share
 salient features such as
 reconstruction of the original field via elliptic PDE,
	a rich geometric interpretation of the decomposition (involving orthogonality and projections), and a connection to slow and fast components of appropriate dynamical systems.

	In this paper we consider a moist Boussinesq system from atmospheric dynamics, which brings additional realism and complexity due to clouds. We ask the following question:
	does this moist, cloudy Boussinesq system also admit such a decomposition?

While these decompositions have applications in slow--fast dynamical systems,
they can be investigated without reference to the dynamical evolution,
by considering the associated operator. 
To see this from a general perspective, consider a dynamical system
written abstractly as
\begin{equation*}
    \partial_t U + \frac{1}{\epsilon}\mathcal{D}_0 U + \mathcal{D}_1 U = 0,
\end{equation*}
where $\epsilon$ is a small parameter,
$\mathcal{D}_0$ is the leading-order operator,
and $\mathcal{D}_1$ is the next-order contribution to the operator for the dynamical system.
To leading order with respect to $\epsilon$, it is the operator $\mathcal{D}_0$
that contains the essential information.
Understanding the dynamics can be tantamount to understanding the operator $\mathcal{D}_0$.
Hence, in the remainder of the paper, we will seldom refer to the actual dynamical evolution,
and instead the main object of interest is the
operator $\mathcal{D}_0$.

In past cases, such as the Helmholtz decomposition or Boussinesq equations, the leading-order operator is linear ($\mathcal{D}_0=\mathcal{L}$), and the decomposition involves a linear operator and linear eigenmodes. On the other hand, for an atmosphere with clouds, additional nonlinearity arises from phase changes of water, and $\mathcal{D}_0=\mathcal{N}$ is a nonlinear operator. 
Hence a substantial challenge in the present paper is from nonlinearity, from the distinction
\begin{equation}
    \mathcal{D}_0=\mathcal{L} 
    \quad\mbox{versus}\quad 
    \mathcal{D}_0=\mathcal{N}.
    \label{eq:LvsN}
\end{equation}
For a nonlinear operator, it is not clear if eigenmodes can be identified. 
Consequently, a decomposition
\begin{equation}
    U=U_1+U_2+\cdots+U_{N_{\text{eig}}}
    \label{eq:Udecomp}
\end{equation}
into $N_{\text{eig}}$ eigenmodes is likely to be impossible.

Despite the challenges of nonlinearity in \eqref{eq:LvsN}--\eqref{eq:Udecomp},
here we forge ahead with a decomposition in the spirit of \eqref{eq:Udecomp} as the main aim
and main theme of the paper.

In \fref{Section}{sec:background} we describe the background and main questions in more detail. \fref{Sections}{sec:decomposition}, \ref{sec:slow_fast_decomp}, \ref{sec:energy_and_decomposition}, and \ref{sec:iterative_methods} describe main results, and \fref{Sections}{sec:summary} and \ref{sec:conclusion} present a summary of main results and conclusions, including links to the main theorems for easy reference.
We point the reader who may skip the background in \fref{Section}{sec:background} to note \fref{Section}{sec:notation} where the simplified notation
used in the main body of the paper is recorded.

\section{Background and motivation}
\label{sec:background}

	In this section we first expound on known decompositions (namely the Helmholtz decomposition and the vortical--wave decomposition of the dry Boussinesq system)
	in order to set the stage for the decomposition of moist Boussinesq states studied in this paper.
	We also motivate in more detail the context in which the dry and moist Boussinesq equations arise.

	We start with the Helmholtz decomposition for two reasons.
	First, we present it here for folks familiar with fluid dynamics but not necessarily familiar with geophysical and atmospheric fluid dynamics.
	Second, we present it here because, as we will detail later, the Helmholtz decomposition plays a central role in the other decompositions, in both the dry and the moist cases.

\subsection{Helmholtz decomposition}
\label{sec:Helmholtz}

	The Helmholtz decomposition of a vector field $u$ splits it into two components:
	\begin{equation*}
		u = \nabla p + \sigma
	\end{equation*}
	where $\sigma$ is divergence-free. One of the main reasons for the ubiquity of this decomposition, especially in the realm of fluid mechanics,
	is that this single decomposition can be viewed through a variety of lenses.
	Here we recall how it can be viewed geometrically, viewed in relation to elliptic PDE, and viewed in relation to slow-fast dynamical systems.
	We emphasize these three perspectives because these are precisely the perspectives that coincide with each other in the dry case but diverge in the moist case.

	We can view the Helmholtz decomposition geometrically: it decomposes the space of three-dimensional vector fields into two orthogonal pieces,
	namely potential vector fields and divergence-free vector fields,
	and thus each component of the decomposition can be understood as a projection onto the corresponding subspace.
	The Helmholtz decomposition is also related to elliptic PDE since the scalar potential $p$ solves
	\begin{equation*}
		\Delta p = \nabla\cdot u
	\end{equation*}
	while, if we write the solenoidal component as $\sigma = \nabla\times\psi + c$, the vector potential $\psi$ solves
	\begin{equation*}
		\nabla\times (\nabla\times \psi) = \nabla\times u
	\end{equation*}
	with $c$ given by $c = \fint u$.
        Here we use the notation $\fint f$ to denote the average over the spatial domain of any scalar or vector function $f$.

Finally we can view the Helmholtz decomposition as a slow-fast decomposition for an appropriate dynamical system.
The relevant system is the low Mach number limit of the compressible Euler equations (for an ideal isentropic gas)
\cite{klainerman1981singular,klainerman1982compressible,majda1984compressible}.
In this setting the slow dynamics correspond to incompressible flow whereas the fast dynamics correspond to the acoustic wave equation.
Crucially: the solenoidal component of the Helmholtz decomposition is precisely the slow piece and the potential component is precisely the fast piece.
We postpone a detailed discussion of this slow-fast dynamics to \fref{Section}{app:Helmholtz}:
this low Mach number limit is well-known, so we only provide details later once it helps us compare and contrast these classical results (pertaining to Helmholtz and dry Boussinesq)
	with the results we have obtained in this paper.

\subsection{Dry Boussinesq and its decomposition}
\label{sec:dry_Bouss_intro}

Now we turn our attention to the dry Boussinesq system and its decomposition. As we describe this decomposition we will highlight the properties that it
shares with the Helmholtz decomposition discussed above.

In oceanic and atmospheric application problems, one may need a decomposition of observational or model data. The latter is often motivated by, or closely related to, normal-mode analysis of a PDE model such as the dry Boussinesq equations, e.g. \cite{lienmuller1992,buhlercalliesferrari2014,lelongetal2020,zagaretal2023}.  Furthermore, a normal-mode decomposition is often the starting point for numerical computations \cite{leith1980,bartello95,smithwaleffe2002,herbertetal2016,earlyetal2020,vas-zagar2021}.   For the Boussinesq equations, the normal-mode analysis is also a wave-vortical decomposition, as further explained below.  Although many numerical computations focused on constant rotation and buoyancy frequencies in periodic domains \cite{bartello95,smithwaleffe2002,herbertetal2016}, other analyses also address more realistic PDE systems, such as 
variable-stratification Boussinesq dynamics with rigid surface and bottom boundaries \cite{earlyetal2020} and 
the shallow water equations on a sphere \cite{kasahara1978, vas-zagar2021}.

Let's stop and discuss this terminology for a second: why is the decomposition of dry Boussinesq states called a vortical-wave decomposition?
It is called this because the decomposition has two components.
The first component is characterised by the potential vorticity. This is why it is referred to as the ``vortical mode''. (In the dry case this corresponds to an honest-to-goodness eigenmode. This is no longer true in the nonlinear moist case, which is why we use the term ``component'' instead of ``mode'' in that case.)
The second component corresponds to the fast piece of the decomposition. It can then be shown that the dynamics obeyed by this fast piece are that of linear oscillations.
This is why this component is known as the ``wave mode''. Together the vortical mode and the wave mode thus constitute the vortical-wave decomposition.

The dry Boussinesq system and its fast rotation and strong stratification limit is one of the success stories of atmospheric science.
In the associated parameter regime the dry Boussinesq system may be written in terms of conservation laws:
\begin{subnumcases}{}
		\partial_t u + u\cdot\nabla u + \frac{1}{\varepsilon} ( e_3\times u + \nabla p - \theta e_3 ) = 0 \label{eq:dry_Boussinesq_u} \text{ and } \\
		\partial_t \theta + u\cdot\nabla \theta + \frac{1}{\varepsilon} u_3 = 0 \label{eq:dry_Boussinesq_theta}
	\end{subnumcases}
	subject the incompressibility assumption $\nabla\cdot u = 0$. Note that for simplicity physical parameters of order one have be set to unity.
	Here the unknowns are the velocity vector field $u$ and the potential temperature $\theta$ (a scalar field).
	The first equation corresponds to conservation of linear momentum while the second equation corresponds to the conservation of potential temperature.
	Of course, as written above \eqref{eq:dry_Boussinesq_theta} makes it look like $\theta$ is \emph{not} conserved.
	This is because there is a large background potential temperature gradient (this is a manifestation of the strong stratification assumption)
	and so the quantity $\theta$ used above is actually the \emph{anomaly} of the potential temperature about that background gradient.

	Here the small parameter indicates that we are particularly interested in the behaviour of this system as the stratification is strong and the rotation is fast.
	In the limit we formally obtain from \eqref{eq:dry_Boussinesq_theta} that the vertical component of the velocity vanishes and we obtain from the \eqref{eq:dry_Boussinesq_u} the
	geostrophic balance $u_h = \nabla^\perp_h p$ and the hydrostatic balance $\theta = \partial_3 p$.
	Here $v_h = (v_1,\, v_2,\, 0)$ denotes the horizontal part of a vector $v$ and $v_h^\perp$ denotes its $\frac{\pi}{2}$--rotation $v_h^\perp = (-v_2,\,v_1,\,0)$.
	Note that the velocity is horizontal since its vertical component vanishes, yet it still depends on all three spatial coordinates
	(in this sense the flow is \emph{not} two-dimensional). 
 The discussion is formal here, but rigorous theorems are available (see \cite{em96} and \cite{me98} for balanced and general initial data, respectively).

	In particular, the slow dynamics are characterised by the potential vorticity $PV = \nabla_h^\perp \cdot u_h + \partial_3 \theta$.
	This follows from two observations which hold in the limit:
	(1) we may compute from \eqref{eq:dry_Boussinesq_u}--\eqref{eq:dry_Boussinesq_theta} that $PV$ is purely advected by the flow
	and (2) the geostrophic and hydrostatic balances tell us that $p$ solves $\Delta p = PV$.	
 This is the crux of why the limiting Boussinesq dynamics on the slow timescale, known as the (dry) quasi-geostrophic equations, are a success story:
	everything comes down to a single conserved quantity, the potential vorticity, from which all other information about the system may be recovered (by way of inverting a Laplacian).
	The potential vorticity not only helps us understand the limit, it also allows us to decompose states of the original dry Boussinesq system into slow and fast components. 

	We now discuss this decomposition, also known as the vortical-wave decomposition.
	This decomposition splits the state-vector $(u,\,\theta)$ into two pieces:
	\begin{equation*}
		\begin{pmatrix}
			u \\ \theta
		\end{pmatrix} = \begin{pmatrix}
			\nabla^\perp_h p \\ \partial_3 p
		\end{pmatrix} + \begin{pmatrix}
			\wavecoord_h^\perp + we_3 \\ \wavecoord_3
		\end{pmatrix}
	\end{equation*}
	where $\wavecoord$ is divergence-free.
	The first component is known as the \emph{vortical mode} and the second component is known as the \emph{wave mode}.
	As in the case of the Helmholtz decomposition above, this decomposition has several equivalent interpretations.
	We can view this decomposition geometrically as projecting into two $L^2$--orthogonal subspaces:
	the first piece is the set of balanced states satisfying both geostrophic and hydrostatic balance
	and the second piece is the set of states with vanishing potential vorticity.
	This is similar to the Helmholtz decomposition where the first piece, the potential component, satisfies the ``balance'' $u = \nabla p$
	while the second piece belongs to the set of states with vanishing divergence.
	Note that $L^2$ is used here, not by default, but because it is precisely the energy conserved by the dry Boussinesq system \eqref{eq:dry_Boussinesq_u}--\eqref{eq:dry_Boussinesq_theta}.
	We can view this decomposition in relation to elliptic PDE since $p$ solves
	\begin{equation*}
		\Delta p = PV.
	\end{equation*}
	Meanwhile, since $\wavecoord$ is divergence-free, we can write it in terms of a vector potential as $\wavecoord = \nabla\times\psi + c$ for some constant $c$ where $\psi$ solves
	\begin{equation*}
		\nabla\times (\nabla\times \psi) = \begin{pmatrix}
			j \\ \partial_3 w
		\end{pmatrix}
	\end{equation*}
	for $j = \partial_3 u_h - \nabla_h^\perp \theta$ denoting the thermal wind imbalance\footnote{
    A thermal wind refers to a velocity field which satisfies both geostrophic and hydrostatic balance.
    This terminology is warranted by the fact that such velocity fields are fully determined by the pressure through geostrophic balance, which in turn is defined by the ``thermal'' variable, namely the potential temperature, through hydrostatic balance.
    Here the thermal wind \emph{imbalance} vanishes when both the geostrophic an hydrostatic balances hold. This imbalance can therefore be thought of as measuring how far a given state $(u,\, \theta)$ is from satisfying both of these balances.
    }, 
	$c = - \fint u_h^\perp + \fint \theta e_3$,
	and finally $w = u_3$.
	Finally we can view this decomposition as a slow-fast decomposition for \eqref{eq:dry_Boussinesq_u}--\eqref{eq:dry_Boussinesq_theta}.
	Here we describe only cursorily how this may be done, postponing a more detailed discussion to \fref{Section}{sec:slow_fast_decomp}. 
	Indeed, the vortical mode corresponds precisely to solutions of the limiting quasi-geostrophic system
	while the wave mode corresponds to the component of the solution with time derivative proportional to $1/\varepsilon$ (and hence fast).
	We discuss in more detail why this decomposition works as a slow-fast decomposition in \fref{Section}{sec:slow_fast_decomp},
	where we recall how the fast-wave averaging framework allows us to make this idea precise in order to then use this framework to discuss the slow-fast decomposition in the moist case.

 \subsection{Moist Boussinesq}
\label{sec:moist_Bouss_intro}

	We have seen in \fref{Sections}{sec:Helmholtz} and \ref{sec:dry_Bouss_intro} above that both the Helmholtz decomposition
	and the vortical-wave decomposition possess several interpretations, be they geometric, related to elliptic PDE, or about the slow-fast nature of appropriate dynamical systems.
	In this section we introduce the moist Boussinesq system, keeping in mind that our goal is to find a decomposition for that system which possesses properties akin to those discussed above.

	First, we motivate the moist Boussinesq system itself.
	Its \emph{dry} counterpart is potent for analytical, computational, and predictive purposes because it focuses on two physical features which dominate mid-latitude atmospheric dynamics:
	stratification and rotation (i.e. the Coriolis effect).
	The \emph{moist} Boussinesq system introduces another important physical feature: moisture.
	Even without incorporating all of the complexities of the water cycle, moisture plays an important role in atmospheric dynamics
	because of the role it plays in the energy budget:
	as moist air cools, condensing water releases latent heat energy (and vice-versa as air warms and liquid water evaporates).
	When incorporating moisture into the dry system from \eqref{eq:dry_Boussinesq_u}--\eqref{eq:dry_Boussinesq_theta}, we obtain 
\begin{subnumcases}{}
		\partial_t u + u\cdot\nabla u + \frac{1}{\varepsilon} ( e_3\times u + \nabla p - \theta e_3 ) = 0 \label{eq:moist_source_Boussinesq_u}, \\
		\partial_t \theta + u\cdot\nabla \theta + \frac{\Gamma_\theta}{\varepsilon} u_3 = C \label{eq:moist_source_Boussinesq_theta}, \\
		\partial_t q_v + u\cdot\nabla q_v - \frac{\Gamma_q}{\varepsilon} u_3 = -C \label{eq:moist_source_Boussinesq_qv} \text{ and } \\
		\partial_t q_l + u\cdot\nabla q_l = C \label{eq:moist_source_Boussinesq_ql}.
	\end{subnumcases}
In comparison to the dry Boussinesq system,
the moist Boussinesq system includes additional evolution equations for the
water vapor mixing ratio, $q_v$, in \eqref{eq:moist_source_Boussinesq_qv}
and for the liquid water mixing ratio, $q_l$, in \eqref{eq:moist_source_Boussinesq_ql}. 
It also includes a source/sink term $C$
for condensation and evaporation, and associated heating and cooling.
Condensation occurs for $C>0$ and represents a
phase transition from vapor to liquid, and 
evaporation occurs for $C<0$ and represents a
phase transition from liquid to vapor.
The constants $\Gamma_\theta$ and $\Gamma_q$ correspond to parameters encoding the strength of the background vertical gradients in potential temperature and water vapor, respectively.
We also note that the precise definition of the water content $q_v$ or $q_l$ as a mixing ratio is measured as kilograms of total water (vapor plus liquid) per kilogram of dry air 
\cite{klein_majda_06, hernandez_duenas}.
While additional cloud microphysics processes will not be
considered here, the moist Boussinesq system above
is valuable for more complex scenarios as well, since it
provides the starting point for extensions that
include rainfall, ice, other precipitation, and other complexities
\cite{grabowski1996two,klein_majda_06,smith2017precipitating,wetzel2019balanced,wetzel2020potential}.

Second, we rewrite the moist Boussinesq system in a way that
facilitates the main goal here of a slow--fast state decomposition.
In particular, note that the source terms $C$ in
\eqref{eq:moist_source_Boussinesq_theta}--\eqref{eq:moist_source_Boussinesq_ql}
appear to potentially cause complications.
However, the system can be rewritten so that the source terms
do not explicitly appear,
through a convenient change of variables.
This reformulation has been used in several past studies
\cite{bretherton1987theory,pauluis2010idealized,hernandez_duenas,smith2017precipitating,marsico2019energy,kooloth2023hamilton}
and has facilitated theoretical advances such as 
an energy decomposition 
\cite{smith2017precipitating,marsico2019energy}
and conservation of potential vorticity \cite{kooloth2023hamilton},
even in the presence of clouds and phase changes.

For the convenient way of rewriting \eqref{eq:moist_source_Boussinesq_u}--\eqref{eq:moist_source_Boussinesq_ql},
we transform to a different set of thermodynamic variables that
is conserved. In particular, define the 
equivalent potential temperature, $\theta_e$,
and total water mixing ratio, $q_t$, as
\begin{subnumcases}{}
    \theta_e = \theta+q_v, 
    \label{eq:thetae-def} 
    \\
    q_t = q_v + q_l. 
    \label{eq:qt-def}
\end{subnumcases}
Note that the equivalent potential temperature $\theta_e$ may be understood physically as the potential temperature that a parcel of air would have if it were heated by converting all of its water vapour into liquid water.
The evolution equations of $\theta_e$ and $q_t$ can be found
by taking appropriate linear combinations of
\eqref{eq:moist_source_Boussinesq_theta}--\eqref{eq:moist_source_Boussinesq_ql},
and they take the form
\begin{subnumcases}{}
\partial_t \theta_e + u\cdot\nabla \theta_e + \frac{1}{\epsilon}u_3 = 0, 
\label{eq:thetae-evol}
\\
\partial_t q_t + u\cdot\nabla q_t - \frac{1}{\epsilon} u_3 = 0,
\label{eq:qt-evol}
\end{subnumcases}
where we have now set the background gradient parameters from
\eqref{eq:moist_source_Boussinesq_theta} and \eqref{eq:moist_source_Boussinesq_qv}
to be $\Gamma_\theta=2$ and $\Gamma_q=1$
without loss of generality within the stably stratified regime,
and for simplicity.
The important feature of \eqref{eq:thetae-evol}--\eqref{eq:qt-evol}
is that the source term of condensation and evaporation, $C$,
has been eliminated.
As a result, \eqref{eq:thetae-evol}--\eqref{eq:qt-evol}
show that $\theta_e+(z/\epsilon)$ and 
$q_t-(z/\epsilon)$ (which combine the anomalies $\theta_e$ and $q_t$
with their background profiles $z/\epsilon$ and $-z/\epsilon$, respectively)
are conserved along fluid parcel trajectories.
Physically, the equivalent potential temperature
is conserved because losses of water vapor $q_v$
are compensated by gains in heat $\theta$,
as indicated by the definition in \eqref{eq:thetae-def};
and the total water mixing ratio, $q_t=q_v+q_l$, is conserved because
losses of water vapor $q_v$ are compensated by gains in
liquid water $q_l$.

To complete the rewriting, the buoyancy term $\theta/\epsilon$
from \eqref{eq:moist_source_Boussinesq_u} must also be
rewritten in terms of $\theta_e$ and $q_t$.
To do so, the following transformation is used:
\begin{subnumcases}{}
    \theta = \theta_e - \min(q_t,q_{vs}),
    \label{eq:theta-def} 
    \\
    q_v = \min(q_t,q_{vs}),
    \label{eq:qv-def} 
    \\
    q_l = \max(0,q_t-q_{vs}),
    \label{eq:ql-def}
\end{subnumcases}
which is the reverse of the transformation in
\eqref{eq:thetae-def}--\eqref{eq:qt-def}.
The saturation water vapor value $q_{vs}$ appears here
for distinguishing between the unsaturated phase ($q_v<q_{vs}$),
when no cloud is present,
and the saturated phase ($q_v=q_{vs}$),
when a cloud is present.
Note that $q_{vs}$ is not appearing here for the first time. It is already implicit in the condensation/evaporation source term $C$ in \eqref{eq:moist_source_Boussinesq_theta}--\eqref{eq:moist_source_Boussinesq_ql}.
The value of $q_{vs}$ will be constant here for simplicity.
For explicit expressions for $\theta$ in each phase,
one can rewrite \eqref{eq:theta-def} as
\begin{equation}
\theta = \left\{
\begin{array}{ll}
\theta_e -q_t 
& \mbox{if}\quad q_t<q_{vs},
\\[4pt]
\theta_e - q_{vs} 
& \mbox{if}\quad q_t\ge q_{vs},
\end{array}
\right. 
\end{equation}
which can be used for another expression of the buoyancy
in terms of the variables $\theta_e$ and $q_t$.

        The end result of the rewriting is
	the moist Boussinesq system in the following form:
	\begin{subnumcases}{}
		\partial_t u + u\cdot\nabla u + \frac{1}{\varepsilon} ( e_3\times u + \nabla p - (\theta_e - {\min}_0\, q_t) e_3 ) = 0 \label{eq:moist_Boussinesq_u},\\
		\partial_t \theta_e + u\cdot\nabla \theta_e = -\frac{1}{\varepsilon} u_3 \label{eq:moist_Boussinesq_theta} \text{ and } \\
		\partial_t q_t + u\cdot\nabla q_t = \frac{1}{\varepsilon} u_3
        \label{eq:moist_Boussinesq_q},
	\end{subnumcases}
	subject to $\nabla\cdot u=0$, and
        following from \eqref{eq:moist_source_Boussinesq_u},
        \eqref{eq:thetae-evol}, \eqref{eq:qt-evol},
        and \eqref{eq:theta-def}.
    Some additional specifications are as follows.
    The notation $\min_0 q_t = \min(0,\,q_t)$ has been introduced
    in \eqref{eq:moist_Boussinesq_u}, 
    and the saturation water vapor parameter $q_{vs}$ has been
    set to zero without loss of generality, and for simplicity.
	Throughout this paper we work on the three-dimensional torus (i.e. with triply-periodic boundary conditions).
    Also recall that we set the background gradient parameters from
    \eqref{eq:moist_source_Boussinesq_theta} and \eqref{eq:moist_source_Boussinesq_qv}
    to be $\Gamma_\theta=2$ and $\Gamma_q=1$ 
    without loss of generality in the stably stratified regime,
    and for simplicity.
One can now see the value in rewriting the moist Boussinesq system:
while the $(\theta,q_v,q_l)$ formulation in
\eqref{eq:moist_source_Boussinesq_u}--\eqref{eq:moist_source_Boussinesq_ql}
is helpful for seeing the additional cloud physics that
is added to the dry system,
the $(\theta_e,q_t)$ formulation in
\eqref{eq:moist_Boussinesq_u}--\eqref{eq:moist_Boussinesq_q}
is helpful for writing the moist system in a way that is
mathematically similar to the dry system.
The main differences between the moist and dry systems are
that the moist system has two thermodynamic variables
($\theta_e$ and $q_t$) instead of one ($\theta$),
and the buoyancy term in the moist case
involves the nonlinear expression
$\theta_e - {\min}_0\, q_t$.

	What makes the moist Boussinesq system particularly interesting
	is that moisture introduces \emph{phase boundaries}.
	In concrete terms: here we have assumed without loss of generality that water saturation occurs at $q_t=0$ and so different physics are in play
	depending on the sign of $q_t$, i.e. depending on whether water is above or below saturation
	(this is discussed in more detail in \fref{Section}{sec:background}).
	We therefore refer to the set $ \left\{ q_t = 0 \right\}$ as the \emph{phase boundary} since it delineates the separation between water in vapour form (outside of a cloud)
	and water in vapour-plus-liquid form (inside of a cloud).

Finally, to end this subsection, 
we review some properties of the moist Boussinesq system
that are potentially relevant to the aim here---i.e., to the search for a state decomposition.

	As $\varepsilon\to 0$, solutions of the moist Boussinesq system \eqref{eq:moist_Boussinesq_u}--\eqref{eq:moist_Boussinesq_q} are forced to satisfy two \emph{balances},
	namely geostrophic balance $u = u_h = \nabla_h^\perp p$ and hydrostatic balance $\theta_e - \min_0 q_t = \partial_3 p$,
	which come from the fast rotation and strong stratification, respectively.
	This warrants the following definition.
	\begin{definition}[State space and balanced set]
	\label{def:state_space}
		We denote by $\mathbb{L}_\sigma^2 \vcentcolon= {(L^2)}^3_\sigma \times L^2 \times L^2$
		the state space of the moist Boussinesq system \eqref{eq:moist_Boussinesq_u}--\eqref{eq:moist_Boussinesq_q}.
		Here $u \in {(L^2)}^3_\sigma$ means that $u\in L^2 ( \mathbb{T}^3; \mathbb{R}^3)$ is divergence-free.
		We then define the \emph{balanced set}
		\begin{equation}
			\mathcal{B} \vcentcolon= \left\{
				(u,\,\theta_e,\,q_t) \in \mathbb{L}^2_\sigma : 
				u_h = \nabla_h^\perp p,\, u_3 = 0,\, \text{and } \theta_e - {\min}_0\, q_t = \partial_3 p
				\text{ for some } p\in \mathring{H}^1
			\right\}
		\label{eq:def_balanced_set}
		\end{equation}
		where $ \mathring{H}^1 $ is the set of scalar fields in $H^1$ with vanishing averages.
	\end{definition}
	We note that a state belongs to the balanced set precisely when it satisfies both geostrophic and hydrostatic balance.

	Proceeding once again by analogy with the dry case, one may now identify \emph{two} slow quantities:
	the \emph{equivalent} potential vorticity ${PV}_e = \nabla_h^\perp \cdot u_h + \partial_3 \theta_e$, defined now in terms of the \emph{equivalent} potential temperature,
	and the moist variable $M = \theta_e + q_t$ \cite{smith2017precipitating}.
	Similar to the dry case, now $p$ and $M$ fully determine a balanced state by way of the ${PV}_e$-and-$M$ inversion
	\begin{equation}
		\Delta p + \frac{1}{2} \partial_3 {\min}_0\, (M - \partial_3 p) = {PV}_e,
        \label{eq:PV-and-M-inversion-section-2}
	\end{equation}
	which is obtained as before by inserting the balances into the definition of ${PV}_e$. This is a nonlinear elliptic PDE whose well-posedness was established recently \cite{remond2024nonlinear}.
	Using this PDE for ${PV}_e$-and-$M$ inversion, one can then phrase the limiting moist (and potentially \emph{precipitating}) quasi-geostrophic system as a nonlinear transport equation for ${PV}_e$ and $M$ \cite{smith2017precipitating}.

    It is important to note that the nonlinear elliptic PDE in
    \eqref{eq:PV-and-M-inversion-section-2}
    only has meaning for the \emph{balanced states} from
    \eqref{eq:def_balanced_set},
    at this point in the story. 
    This is in contrast to the cases of 
    the Helmholtz decomposition and dry Boussinesq system
    from \fref{Sections}{sec:Helmholtz} and \ref{sec:dry_Bouss_intro}, respectively,
    for which an elliptic PDE is part of a state decomposition
    for \emph{general} states.
    One aim of the present paper is to investigate whether
    the nonlinear ${PV}_e$-and-$M$ inversion in 
    \eqref{eq:PV-and-M-inversion-section-2}
    may have additional meaning as part of a state decomposition
    for \emph{general} states.
    
	The question is now this: can we find a decomposition for the moist Boussinesq system 
	which has the same favourable properties as the Helmholtz and vortical--wave decompositions?

\subsection{Literature on atmospheric dynamics with moisture and clouds, and mathematical theory}
\label{sec:literature}

In this section we review some literature on topics related to the
main focus of the paper, i.e., state decomposition for a
moist atmosphere with clouds.
We mention both atmospheric sciences literature and
mathematics literature.

Decompositions have been useful for numerous applications
in atmospheric and science, and they are commonly based on
decomposing the dry portions $(u,\,\theta)$ of the atmospheric state.
Applications include 
analysis of observational or model data \cite{hoskins1985use,zetal15},
data assimilation \cite{lorenc1981global,daley1993atmospheric,derber1999reformulation,zgk04qjrms,bannister2021balance},
and nonlinear wave interactions
\cite{gottwald1999formation,majda2003nonlinear,biello2004effect,raupp2008resonant,ferguson2009two}.
Other references are also described at the beginning of \fref{Section}{sec:dry_Bouss_intro}.
One aim here is to extend the ideas of dry decompositions
to the realm of a moist atmosphere with clouds.

For a moist atmosphere with clouds, decompositions can potentially
provide valuable insight into the behavior of
rainfall and precipitation \cite{khouider2012climate}.
Past work has commonly used linearisation, and either
dry or moist eigenmodes.
By adapting or modifying the dry linear eigenmodes,
some studies have analyzed observational and model data
for understanding moisture-coupled phenomena
such as convectively coupled equatorial waves (CCEWs)
\cite{yang2003convectively,gehne2012spectral,ogrosky2016identifying,marques2018diagnosis,knippertz2022intricacies},
tropical intraseasonal oscillations such as the 
Madden--Julian Oscillation (MJO)
\cite{zagar2015systematic},
and the Walker circulation \cite{stechmann2014walker}.
Moist eigenmodes, obtained by linearisation of
moist dynamics, have also been used for
analyzing observational and model data
for the MJO
\cite{ogrosky2015mjo,stechmann2015identifying},
and for asymptotic analysis of weakly nonlinear wave interactions \cite{chen2015multiscale,chen2016tropical}.
Balanced and unbalanced moisture decompositions have been investigated with a type of linearisation that applies with clouds present \cite{wetzel2019balanced,wetzel2020potential}.
Applications to data assimilation have also investigated linear eigenmodes that decompose moisture into its balanced and unbalanced components \cite{ogrosky2024data}.

Here we aim for a decomposition that does not rely on linearisation,
so that it may potentially include a more faithful representation 
of the nonlinearity and the physics 
associated with clouds and latent heating.

The PDE community has recently provided rigorous treatment of moist atmospheric dynamics with phase changes.  A model with water vapor and liquid water was analyzed in 
\cite{coti_zelati_temam_12,coti_zelati_fremond_temam_tribbia_13, bousquet_coti_zelati_temam_14},
where the velocity field is prescribed.  Then followed results for two-phase flow with dynamical velocity evolving via the primitive equations \cite{coti_zelati_huang_kukavica_temam_ziane,lian_ma,temam_wu_15, teman_wang_16}.  An additional category of water was included in 
\cite{cao_hamouda_temam_tribbia,tan_liu,hittmeir_klein_li_titi_17,hittmeir_klein_li_titi_20}, such that liquid water is divided into cloud water (that does not fall) and precipitating water.  The work on warm-rain microphysics flow may be separated by the microphysics model determining how water is partitioned between vapor, cloud water and rain water, as well as by whether the velocity is prescribed or governed by the primitive equations.  The case of prescribed velocity is considered in \cite{cao_hamouda_temam_tribbia} for Grabowski microphysics \cite{grabowski}, and in \cite{hittmeir_klein_li_titi_17} for Klein-Majda microphysics \cite{klein_majda_06}.  For primitive-equation velocity, theorems for the Grabowski and Klein-Majda models are found, respectively, in \cite{tan_liu} and \cite{hittmeir_klein_li_titi_20}.
Incorporation of ice into water microphysics is discussed in \cite{cao_jia_temam_tribbia}.

We note that the aforementioned rigorous studies of atmospheric flows with phase changes treat the velocity as either prescribed, or as governed by the primitive equations.
Following \cite{cao_titi}, the barotropic-baroclinic decomposition of the velocity is the basis for analysis of the primitive equations and its extensions.  Recently, the current authors obtained existence, uniqueness and regularity results \cite{remond2024nonlinear} for the nonlinear elliptic PDE \eqref{eq:PV-and-M-inversion-section-2} underlying Boussinesq velocity dynamics in the limit of asymptotically fast rotation rate and buoyancy frequencies \cite{smith2017precipitating,zhang_smith_stechmann_22}.  Here, we show that the same nonlinear elliptic PDE is the foundation for decomposition of the Boussinesq equations more generally, in the regime of finite parameter values.

Pertaining to the dry Boussinesq evolution equations in the limit of asymptotically fast rotation rate and buoyancy frequency, a series of rigorous results established the decoupling between slow, balanced motions and fast, unbalanced waves \cite{em96,babinetal1997,em98,me98,majda_2003}. For moist atmospheric dynamics with phase changes, first steps to use the fast-wave averaging framework \cite{em96,em98,me98,majda_2003} appear in \cite{zhang_smith_stechmann_20_asymptotics,Zhang_Smith_Stechmann_2021_JFM}.  The latter references present formal asymptotic analysis \cite{zhang_smith_stechmann_20_asymptotics} and supporting numerical analyses \cite{Zhang_Smith_Stechmann_2021_JFM}. The inclusion of boundary layers and the diabatic layer have been investigated in formal asymptotic analysis as well \cite{klein2022qg,baumer2023scaling}. We note that rigorous fast-wave-averaging analysis for multi-phase atmospheric flows remains open.  In \fref{Section}{sec:slow_fast_decomp}, we explain the path forward for rigorous fast-wave-averaging analysis using the decomposition defined in \fref{Section}{sec:decomposition}. To date, the evidence suggests that nonlinear oscillations lead to coupling between fast and slow flow components, possibly even in the limiting dynamics when distinct phases co-exist (see \fref{Figure}{fig:nonlinear_oscillations} in the appendix and \cite{Zhang_Smith_Stechmann_2021_JFM}).

\subsection{Simplified notation}
\label{sec:notation}

	To simplify the notation used throughout the rest of the paper, we will from now on use $\theta$ to denote the equivalent potential temperature $\theta_e$
	(which is not an issue since the potential temperature, denoted $\theta$ in the text above, will not make any appearances in the sequel).
	Similarly we will write $PV$ for the equivalent potential temperature ${PV}_e$ and $q$ to mean $q_t$, the total water content.

\section{Decomposition}
\label{sec:decomposition}

As discussed in \fref{Sections}{sec:intro} and \ref{sec:background}, we seek a decomposition adapted to the leading-order operator of the moist Boussinesq system.
We can see from \eqref{eq:moist_Boussinesq_u}--\eqref{eq:moist_Boussinesq_q} that this operator is 
	\begin{equation}
		\mathcal{N} (u,\,\theta,\,q) = \begin{pmatrix}
		    \mathbb{P}_L \left( u_h^\perp - (\theta - {\min}_0\, q \right) e_3 \\
            u_3\\
            -u_3
		\end{pmatrix}
    \label{eq:def_N}
	\end{equation}
 where we note again that $\theta_e$ and $q_t$ are written as $\theta$ and $q$ for notational simplicity, as specified in \fref{Section}{sec:notation}. Here $\mathbb{P}_L$ denotes the Leray projector onto the space of divergence-free vector fields. On the three-dimensional torus it takes the simple form $\mathbb{P}_L u= -\nabla \times ( \Delta^{-1} \nabla\times u)$.

Given that the operator $\mathcal{N}$ is nonlinear, it is 
likely impossible to define and characterize 
a full set of the nonlinear eigenfunctions
and eigenvalues. Hence it is likely impossible to 
define a full eigenfunction decomposition such as in
\eqref{eq:Udecomp}.
An alternative route to a decomposition must be sought.

Here we propose to use the notions of null set and image
as concepts that may allow us to extend from the linear to
the nonlinear realm. 
We propose a decomposition that will split any state in the state space $ \mathbb{L}^2_\sigma $ into two pieces:
	one piece residing in the null set of $ \mathcal{N} $ and another piece residing in the image of $ \mathcal{N} $.
	In other words we will perform the decomposition
	\begin{equation}
	\label{eq:decomp_foreshadow}
		\mathbb{L}^2_\sigma = \left\{ \mathcal{N} = 0 \right\} + \im \mathcal{N}.
	\end{equation}
If the operator $ \mathcal{N} $ were \emph{linear}, 
then this decomposition would be a (coarse) eigendecomposition,
into the eigenspace of the zero eigenvalue and the eigenspace of
all nonzero eigenvalues. Such eigenspace ideas are helpful as
motivation for pursuing the null set and image as fundamental concepts
for defining a decomposition here. However, the operator $\mathcal{N}$
is nonlinear, so the full meaning of \eqref{eq:decomp_foreshadow} 
is not immediately clear.
We aim to demonstrate in this section that 
\eqref{eq:decomp_foreshadow} is able to characterize the desired
decomposition in the nonlinear realm.
	
 Note that in this section we will only present and discuss this decomposition itself.
	We will not discus the various interpretations of this decomposition.
	This will be done in later sections.
	In \fref{Section}{sec:slow_fast_decomp} we discuss how this decomposition is related to slow-fast dynamics,
	in \fref{Section}{sec:energy_and_decomposition} we discuss how this decomposition relates to the conserved energy and projections, and
	finally in \fref{Section}{sec:iterative_methods} we discuss iterative methods used to compute this decomposition in practice and their geometric interpretation.

	Since we wish to perform the decomposition foreshadowed in \eqref{eq:decomp_foreshadow} we must begin by understanding both the null set and the image of $ \mathcal{N} $ as well as possible.
	We first turn our attention to the null set.
	As shown in the result below, this null set is precisely the set of states consisting of states in both hydrostatic and geostrophic balance.
	This is why the null set of $ \mathcal{N} $ will be referred to, from now on, as the \emph{balanced set}.

	\begin{prop}[Alternate characterisations of the balanced set]
	\label{prop:alt_char_balanced_set}
		For any state $ \mathcal{X}\in \mathbb{L}^2_\sigma $ the following are equivalent.
		\begin{enumerate}
			\item	$ \mathcal{X}$ is balanced, i.e., $\mathcal{N}(u,\,\theta,\,q)=0$.
			\item	$ \mathcal{X} = (u,\,\theta,\,q)$ satisfies
				\begin{equation}
					\partial_3 u_h - \nabla_h^\perp (\theta - {\min}_0\, q ) = 0,\,
					u_3 = 0,\,
					\fint u_h = 0, \text{ and } 
					\fint \theta - {\min}_0\, q = 0.
				\label{eq:char_balanced_set_as_zero_set}
				\end{equation}
			\item	$ \mathcal{X}$ may be parameterised as $ \mathcal{X} = \Phi (p,\,M)$ for some unique $p\in \mathring{H}^1 $ and $M\in L^2$, where
				\begin{equation*}
					\Phi (p,\,M)
					= \begin{pmatrix}
						\nabla^\perp_h p\\
						\partial_3 p + \frac{1}{2} {\min}_0\, (M - \partial_3 p)\\
						M - \partial_3 p - \frac{1}{2} {\min}_0\, (M-\partial_3 p)
					\end{pmatrix}.
				\end{equation*}
		\end{enumerate}
	\end{prop}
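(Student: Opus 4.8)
The plan is to prove the cycle: being balanced $\Rightarrow$ the conditions \eqref{eq:char_balanced_set_as_zero_set} $\Rightarrow$ the parameterisation $\mathcal{X}=\Phi(p,M)$ $\Rightarrow$ being balanced, with the first two arrows carrying the content. Write $b \vcentcolon= \theta - {\min}_0\, q$ throughout. The only external fact needed is the standard characterisation of the kernel of the Leray projector on $\mathbb{T}^3$: since $\mathbb{P}_L$ is the orthogonal projection onto divergence-free fields, $\mathbb{P}_L V = 0$ holds if and only if $V = \nabla\phi$ for some $\phi\in\mathring{H}^1$, equivalently if and only if $\nabla\times V = 0$ and $\fint V = 0$.

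First I would unpack $\mathcal{N}(u,\theta,q)=0$. Its second and third components each read $u_3=0$. Given $u_3 = 0$ and $\nabla\cdot u = 0$ one has $\nabla_h\cdot u_h = 0$, and a direct computation of $\nabla\times(u_h^\perp - b e_3) = \nabla\times(-u_2,\,u_1,\,-b)$ gives vanishing vertical component and horizontal component equal to $-(\partial_3 u_h - \nabla_h^\perp b)$. Thus the first component of $\mathcal{N}(u,\theta,q)=0$, namely $u_h^\perp - b e_3 \in \ker\mathbb{P}_L$, is equivalent to $\partial_3 u_h - \nabla_h^\perp b = 0$ together with $\fint u_h^\perp = 0$ and $\fint b = 0$; the first mean condition is the same as $\fint u_h = 0$. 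This is exactly \eqref{eq:char_balanced_set_as_zero_set}, and the equivalence runs both ways by reading the computation backwards.

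Next, assuming \eqref{eq:char_balanced_set_as_zero_set}, the field $u_h^\perp - b e_3$ is curl-free with zero mean, hence equals $\nabla\phi$ for a unique $\phi\in\mathring H^1$; putting $p\vcentcolon= -\phi$ gives $u_h = \nabla_h^\perp p$, $u_3 = 0$, $b = \partial_3 p$, and $p\in\mathring H^1$ since $\nabla p\in L^2$. Setting $M\vcentcolon= \theta + q \in L^2$, it remains to show that $(\theta, q)$ is recovered by the formula for $\Phi$, which is the only genuinely nonlinear step: one argues pointwise a.e.\ by cases on the sign of $q$. Where $q\ge 0$ one has $b = \theta$, so $\theta = \partial_3 p$ and $q = M - \partial_3 p \ge 0$, hence ${\min}_0\,(M-\partial_3 p) = 0$ and the two thermodynamic entries of $\Phi(p,M)$ collapse to $\partial_3 p$ and $M - \partial_3 p$; where $q<0$ one has $b = \theta + q$, hence $M = \partial_3 p + 2q$ and $q = \tfrac12(M-\partial_3 p) < 0$, so ${\min}_0\,(M-\partial_3 p) = M-\partial_3 p$ and the entries of $\Phi(p,M)$ become $\partial_3 p + \tfrac12(M-\partial_3 p)$ and $\tfrac12(M-\partial_3 p)$, matching $\theta$ and $q$. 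In both cases $\mathcal{X} = \Phi(p,M)$. Uniqueness is immediate: $p$ is $-\phi$ for the unique mean-zero potential $\phi$ of $u_h^\perp - b e_3$, and $M = \theta + q$.

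Finally, for the return arrow I would take arbitrary $(p,M)\in\mathring H^1\times L^2$ and check that $\Phi(p,M)\in\mathbb{L}^2_\sigma$ (the velocity entry $\nabla_h^\perp p$ is divergence-free and in $L^2$, and since $|{\min}_0\, f|\le |f|$ the thermodynamic entries lie in $L^2$) and that it satisfies \eqref{eq:char_balanced_set_as_zero_set}: indeed $u_3 = 0$ and $\fint u_h = \fint\nabla_h^\perp p = 0$ are immediate, and the same two-case computation shows $\theta - {\min}_0\, q = \partial_3 p$ for $\Phi(p,M)$, which yields both $\partial_3 u_h - \nabla_h^\perp b = \nabla_h^\perp\partial_3 p - \nabla_h^\perp\partial_3 p = 0$ and $\fint b = \fint\partial_3 p = 0$; being balanced then follows from the first equivalence. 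I expect the case split reconciling the saturated and unsaturated regimes with the single closed form for $\Phi$ — equivalently, checking that $(\theta,q)\mapsto(\theta - {\min}_0\, q,\ \theta+q)$ is pointwise invertible — to be the main (though mild) obstacle; everything else is routine Helmholtz/Leray bookkeeping on the torus.
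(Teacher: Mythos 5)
Your proof is correct and takes essentially the same route as the paper's: rewrite $\mathcal{N}(u,\theta,q)=0$ in terms of the good unknown $u_h^\perp-(\theta-{\min}_0\,q)e_3$, use the Helmholtz/Leray kernel characterisation on the torus to get the equivalence of items 1 and 2, and recover $(\theta,q)$ from $(\theta-{\min}_0\,q,\ \theta+q)$ for item 3. Your pointwise case split on the sign of $q$ is exactly the content of the paper's \fref{Lemma}{lemma:invert_b_and_M}, and your curl computation reproduces \fref{Lemma}{lemma:div_and_curl_of_good_unknown}, so the only difference is that you inline these auxiliary facts rather than cite them.
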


	Note that this proposition characterizes the balanced set in three ways:
	\begin{itemize}
		\item	it is the null set of $ \mathcal{N} $,
		\item	it is characterised as the null set of the three \emph{measurements} $j$, $a$, and $w$ given by
          \begin{align*}
              j(u,\,\theta,\,q) = \partial_3 u_h^\perp - \nabla_h^\perp (\theta - {\min}_0\,q),\,
              w(u,\,\theta,\,q) = u_3\\
              \text{ and } a(u,\,\theta,\,q) = -\fint u_h^\perp + \fint (\theta - {\min}_0\,q ) e_3,
          \end{align*}
          and
		\item	it may be parameterised via the \emph{coordinates} $p$ and $M$.
	\end{itemize}
    The measurement $j$ is commonly referred to as the \emph{thermal wind imbalance} since it quantifies how far the state is from obeying both geostrophic and hydrostatic balanced.
	Further down we will see that the image of $ \mathcal{N} $ also admits three comparable descriptions.
    The measurement $a$ depends on some spatial averages of the state $(u,\,\theta,\,q)$.
    Why it depends on these particular averages (e.g. it depends on the horizontal velocity average but not the vertical velocity average) and why these averages are combined in this way to constitute $a$ is explained in \fref{Remark}{rmk:good_unknown}.

	\begin{proof}[Proof of \fref{Proposition}{prop:alt_char_balanced_set}]
		It will be particularly convenient to consider the new unknown
		\begin{equation*}
			v \vcentcolon= -u_h^\perp + (\theta - {\min}_0\, q)e_3.
		\end{equation*}
		Indeed, we see that using this unknown we may rephrase items 1, 2, and 3 in a much simpler way.
		The first item may now be written as
		\begin{equation*}
			v = \nabla p \text{ and } u_3 = 0
			\text{ for some } p \in \mathring{H}^1
		\end{equation*}
		while the second item may be written, using \fref{Lemma}{lemma:div_and_curl_of_good_unknown}, as
		\begin{equation*}
			\nabla\times v = 0,\, \fint v = 0, \text{ and } u_3 = 0.
		\end{equation*}
		It then follows directly from the Helmholtz decomposition (see \fref{Corollary}{cor:Helmholtz}) that items 1 and 2 are equivalent.
		Finally we note that, by virtue of \fref{Lemma}{lemma:invert_b_and_M}, item 3 is equivalent to
		\begin{equation*}
			\left\{
			\begin{aligned}
				& u = \nabla_h^\perp p,\\
				& \theta  + q = M, \text{ and } \\
				& \theta - {\min}_0\, q = \partial_3 p,
			\end{aligned}
			\right.
		\end{equation*}
		which implies that $v = \nabla p$ and that $u_3 = 0$, from which item 1 follows.
	\end{proof}

	\begin{remark}[The ``good unknown'']
	\label{rmk:good_unknown}
		The so-called ``good-unknown'' $-u_h^\perp + (\theta - {\min}_0\, q) e_3$ will appear throughout this paper.
		This is because this three-dimensional vector field, which combines the horizontal velocity and the buoyancy, turns out to be the one that needs to be decomposed
		according to a three-dimensional Helmholtz decomposition. This contrasts with the approach commonly used in the literature where instead of a three-dimensional Helmholtz decomposition
		of this good unknown (or of its dry counterpart, the good unknown $-u_h^\perp + \theta e_3$), the decomposition carried out is a two-dimensional Helmholtz decomposition
		of only the horizontal velocity. It is precisely because the typical decomposition is two-dimensional (even though the problem is three-dimensional)
		that vertically sheared horizontal flows often require special treatment. When an honest-to-goodness three-dimensional decomposition is used, these vertically sheared horizontal flows
		naturally accounted for (see \fref{Remark}{rmk:VSHF} where this is discussed in more detail).

		Note that the role played by the good unknown in the good decomposition is discussed in more detail in \fref{Remark}{rmk:global_chg_coord_dry_case}.
		For now, in order to illustrate why it is such a convenient unknown to work with, we record a couple of computations from the dry case.
		In the dry case, the good unknown is
		\begin{equation*}
			-u_h^\perp + \theta e_3.
		\end{equation*}
		We then observe from \fref{Lemma}{lemma:div_and_curl_of_good_unknown} that its divergence is precisely the potential vorticity since
		\begin{equation*}
			\nabla\cdot  ( -u_h^\perp + \theta e_3 ) = - \nabla_h^\perp \cdot u_h + \partial_3 \theta = - PV
		\end{equation*}
		while its curl is precisely the thermal wind imbalance since
		\begin{equation*}
			\nabla\times ( -u_h^\perp + \theta e_3 ) = \partial_3 u_h - \nabla_h^\perp \theta.
		\end{equation*}
		The physical relevance of thermal wind imbalance is that it vanishes precisely when the (dry) state $(u,\,\theta)$ is in both hydrostatic and geostrophic balance
		(provided that both $u_h$ and $\theta$ have vanishing averages).
		Once again: \fref{Remark}{rmk:global_chg_coord_dry_case} discusses in more detail the precise role that the \emph{moist} good unknown $-u_h^\perp + (\theta - {\min}_0\, q) e_3$ plays in the decomposition,
		but already the computations above, from the dry case, indicate that essential quantities may readily be computed from the (dry) good unknown.
	\end{remark}

	Now that we understand the null set quite well we turn our attention to the image of $ \mathcal{N} $.
	This set will actually be called the \emph{wave set}.
	This is because the dynamics induced on that set by the moist Boussinesq system are oscillatory
	(in the dry case, the dynamics induced on the dry analog of the wave set by the dry Boussinesq system are precisely the dynamics of linear oscillations, i.e. waves).

	\begin{prop}[Characterisation of the image of $\mathcal{N}$]
	\label{prop:chara_im_N}
		For any state $ \mathcal{X}\in \mathbb{L}^2_\sigma $ the following are equivalent.
		\begin{enumerate}
			\item	$ \mathcal{X}$ belongs to the image of $\mathcal{N}$.
			\item	$ \mathcal{X} = (u,\,\theta,\,q)$ satisfies
			\begin{equation}
	PV = \nabla_h^\perp \cdot u_h + \partial_3 \theta = 0 \text{ and } 
					M = \theta + q = 0.
				\label{eq:char_image_N_as_zero_set}
				\end{equation}
			\item	$ \mathcal{X}$ may be parameterised as $ \mathcal{X} = \Psi (\wavecoord,\,w)$ for some unique $\wavecoord\in H^1_\sigma$ and $w\in L^2$
				satisfying $\nabla_h^\perp \cdot \wavecoord_h = \partial_3 w$, where
				\begin{equation*}
					\Psi (\wavecoord,\,w)
					= \begin{pmatrix}
						\wavecoord_h^\perp + w e_3\\
						\wavecoord_3 \\
						-\wavecoord_3
					\end{pmatrix}.
				\end{equation*}
		\end{enumerate}
		In either case we say that $\mathcal{X} \in \mathcal{W}$ for the \emph{wave set} $\mathcal{W}$.
		Why this set warrants the name ``wave set'' will be discussed in \fref{Section}{sec:slow_fast_decomp} below when we discuss slow-fast decompositions in more detail.
	\end{prop}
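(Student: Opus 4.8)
The plan is to establish the cycle of implications $(1)\Rightarrow(2)\Rightarrow(3)\Rightarrow(1)$, in the same spirit as the proof of \fref{Proposition}{prop:alt_char_balanced_set} above, but now exploiting the explicit form of $\mathcal{N}$ directly rather than routing everything through the good unknown. The only ingredients I expect to need are the elementary two-dimensional identities $\nabla_h^\perp\cdot(F_h^\perp)=\nabla_h\cdot F_h$, $(F_h^\perp)^\perp=-F_h$, and $\nabla_h^\perp\cdot\nabla_h f=0$ (all from \fref{Lemma}{lemma:div_and_curl_of_good_unknown}), together with the two structural facts about the Leray projector $\mathbb{P}_L$: that it annihilates gradients and restricts to the identity on divergence-free fields.

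For $(1)\Rightarrow(2)$, I would write $\mathcal{X}=\mathcal{N}(u,\theta,q)=(\mathbb{P}_L G,\,u_3,\,-u_3)$ with $G=u_h^\perp-(\theta-{\min}_0\,q)e_3$. The identity $M(\mathcal{X})=0$ is immediate since the two thermodynamic components of $\mathcal{X}$ are $u_3$ and $-u_3$. For $PV(\mathcal{X})$: since $\mathbb{P}_L G$ differs from $G$ by a gradient, the horizontal scalar curl of its horizontal part equals that of $G_h=u_h^\perp$, namely $\nabla_h^\perp\cdot u_h^\perp=\nabla_h\cdot u_h$; adding $\partial_3 u_3$ (the $\theta$-component of $\mathcal{X}$ being $u_3$) and invoking incompressibility $\nabla\cdot u=0$ gives $PV(\mathcal{X})=\nabla_h\cdot u_h+\partial_3 u_3=0$.

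For $(2)\Rightarrow(3)$: given $\mathcal{X}=(u,\theta,q)$ with $PV=0$ and $M=0$, requiring $\mathcal{X}=\Psi(\wavecoord,w)$ forces $\wavecoord_3=\theta$, $w=u_3$, and $\wavecoord_h=-u_h^\perp$ (inverting $(\cdot)^\perp$ on the horizontal-velocity slot), which already yields uniqueness; so I would set $\wavecoord\vcentcolon=-u_h^\perp+\theta e_3$ and $w\vcentcolon= u_3$. One then checks $\wavecoord_h^\perp+we_3=u_h+u_3 e_3=u$ and $\wavecoord_3=\theta$, while the $q$-component $-\wavecoord_3=-\theta=q$ by $M=0$, so $\Psi(\wavecoord,w)=\mathcal{X}$; the constraint $\nabla_h^\perp\cdot\wavecoord_h=\partial_3 w$ unwinds to incompressibility $\nabla\cdot u=0$, which holds automatically in $\mathbb{L}^2_\sigma$; and $\nabla\cdot\wavecoord=\nabla_h^\perp\cdot u_h+\partial_3\theta=PV(\mathcal{X})=0$, so $\wavecoord$ is divergence-free. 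Here I would invoke the wave-set counterpart of \fref{Lemma}{lemma:invert_b_and_M} to pin down the precise function space for $\wavecoord$ and to record that $(\wavecoord,w)\mapsto\Psi(\wavecoord,w)$ is a bijection onto $\{PV=0,\,M=0\}$ (which in particular also gives $(3)\Rightarrow(2)$). Finally, for $(3)\Rightarrow(1)$, the state $(\wavecoord,\,-w,\,0)$ is a preimage under $\mathcal{N}$: its image has $\theta$- and $q$-components $\wavecoord_3$ and $-\wavecoord_3$, and velocity component $\mathbb{P}_L(\wavecoord_h^\perp+we_3)=\wavecoord_h^\perp+we_3$ — the field $\wavecoord_h^\perp+we_3$ being divergence-free precisely by the constraint $\nabla_h^\perp\cdot\wavecoord_h=\partial_3 w$ — that is, exactly $\Psi(\wavecoord,w)$.

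I expect the main obstacle to be bookkeeping rather than anything deep: keeping the signs straight through the repeated $(\cdot)^\perp$ rotations, and recognizing that, because $\mathbb{P}_L$ is far from injective, there is freedom to select an especially clean preimage in $(3)\Rightarrow(1)$ (one may simply take the $q$-component of the preimage to vanish) — and that the velocity field of this preimage is precisely the wave coordinate $\wavecoord$ of item 3, which is the structural thread tying the three characterizations together. A secondary care-point is making sure the reconstructed $\wavecoord$ lands in the function space asserted in item 3, which is what the cited inversion lemma is for. The genuine content of the statement is this simultaneous identification of the wave set with $\im\mathcal{N}$ and with $\{PV=0,\,M=0\}$; the verification itself is elementary.
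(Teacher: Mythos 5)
Your proposal is correct and takes essentially the same route as the paper's proof: the same cycle $(1)\Rightarrow(2)\Rightarrow(3)\Rightarrow(1)$, computing $PV$ of an image state modulo the gradient introduced by $\mathbb{P}_L$ together with incompressibility of the preimage velocity, reading off $\wavecoord=-u_h^\perp+\theta e_3$ and $w=u_3$ in item 3, and exhibiting a preimage with vanishing $q$-component. Two small remarks: your preimage $(\wavecoord,\,-w,\,0)$ is the sign-consistent choice given the stated form of $\mathcal{N}$ (the paper's proof takes $(\wavecoord,\,w,\,0)$ under a flipped sign convention, which is immaterial for the image), and your appeal to a ``wave-set counterpart'' of \fref{Lemma}{lemma:invert_b_and_M} is superfluous, since your explicit formulas already pin down uniqueness and the function space.
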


	Note that, as was done for the balanced set above, \fref{Proposition}{prop:chara_im_N} characterises the wave set in three ways:
	\begin{itemize}
		\item	it is the image of  $ \mathcal{N} $,
		\item	it is characterised as the null set of the two measurements $PV$ and $M$, and
		\item	it may be parameterised via the coordinates $\wavecoord$ and $w$.
	\end{itemize}

	\begin{proof}
		First we suppose that item 1 holds, i.e. that
		\begin{equation*}
			\left\{
			\begin{aligned}
				& u = \mathbb{P}_L \left( v_h^\perp + (\phi - {\min}_0\, r ) e_3 \right),\\
				& \theta = v_3, \text{ and } \\
				& q = -v_3
			\end{aligned}
			\right.
		\end{equation*}
		for some $(v,\,\phi,\,r) \in \mathbb{L}^2_\sigma $.
		This means that $ u = v_h^\perp + (\phi - {\min}_0\, q) e_3 + \nabla\pi$ for some $\pi \in \mathring{H}^1 $ and so
		\begin{equation*}
			\nabla_h^\perp \cdot u_h + \partial_3 \theta
			= \nabla_h^\perp \cdot ( v_h^\perp + \nabla_h \pi ) + \partial_3 v_3
			= \nabla \cdot v
			= 0
		\end{equation*}
		while clearly $\theta + q = 0$. Thus item 2 holds.

		Now suppose that item 2 holds. We may introduce the new unknown $v = -u_h^\perp + \theta e_3$
		such that now item 2 reads
		\begin{equation*}
			\nabla \cdot v = 0 \text{ and } \theta + q = 0
		\end{equation*}
		(by virtue of \fref{Lemma}{lemma:div_and_curl_of_good_unknown}).
		In other words $v = \wavecoord$ for some divergence-free $\wavecoord$ while $\theta = -q$, and so
		\begin{equation*}
			\left\{
			\begin{aligned}
				&u = \wavecoord_h^\perp + u_3 e_3,\\
				&\theta = \wavecoord_3, \text{ and } \\
				&q = -\wavecoord_3.
			\end{aligned}
			\right.
		\end{equation*}
		Therefore, for $w \vcentcolon= u_3$, we see that item 3 holds as desired since $\nabla_h^\perp \cdot \wavecoord_h - \partial_3 w = - \nabla\cdot u = 0$.

		Finally we suppose that item 3 holds.
		Then item 1 follows immediately from choosing $v = \wavecoord$, $\phi = w$, and $r = 0$
		since $\wavecoord_h^\perp + we_3$ is divergence-free and so the Leray projection of this vector field is equal to itself.
	\end{proof}

	At this stage we have a good understanding of both the balanced set (which is the null set of $ \mathcal{N} $)
	and of the wave set (which is the image of $ \mathcal{N} $).
	We now seek to leverage this understanding to produce the decomposition formally foreshadowed in \eqref{eq:decomp_foreshadow}.

	In order to do this in a manner that will be useful later, especially when it comes to dynamics, we need to discuss dynamics briefly.
	Recall that, at leading order, the dynamics of interest are those governed by $ \partial_t + \mathcal{N} = 0$.
	So how do the null set and image of $ \mathcal{N} $ relate to this dynamical system?

	Well, note that \fref{Proposition}{prop:alt_char_balanced_set} tells us that producing \emph{balanced} states is easy: it suffices to specify $p$ and $M$
	since then a balanced state may be reconstructed from these two quantities.
	If the leading operator $ \mathcal{N} $ were linear, then this would be enough since balanced solutions of $ \partial_t + \mathcal{N} $ would then be guaranteed to be slow,
	meaning that they would have vanishing time derivatives.
	However, due to the presence of moisture and phase boundaries, $ \mathcal{N} $ is \emph{not} linear.
	This means that we must \emph{additionally} impose that balanced solutions be slow.
	In more prosaic terms: since any choice of the coordinates $p$ and $M$ produces a balanced state,
	we must make sure that these coordinates are carefully chosen to also be \emph{slow}.

	This is where our understanding of the image of $ \mathcal{N} $, i.e. of the wave set, comes in handy.
	Indeed: we have characterised this wave set as the set of states whose $PV$ and $M$ vanish.
    This is another way of saying that $PV$ and $M$ are precisely a complete set of \emph{time-dependent} slow measurements! (Any constant measurement would be trivially slow.)

	We have found our solution: to guarantee that our balanced component determined by $p$ and $M$ is also slow, it must be reconstructed solely from the slow measurements, $PV$ and $M$.
	If we think about this geometrically, this is really a matter of \emph{transversality}: we need the wave set $ \mathcal{W}$ to be transverse to the balanced set $\mathcal{B}$
	so that balanced states in $\mathcal{B}$ may be fully characterised by the measurements annihilating the wave set $\mathcal{W}$.
	This is taken care of in the result below.

	We can think about why the balanced set and the wave set need to be transverse another way.
	Recall that we are after a decomposition of the form \eqref{eq:decomp_foreshadow}.
	This decomposition is only unique if its two components are transverse, otherwise the same state could be decomposed in several different ways.

	\begin{lemma}[$PV$ and $M$ characterise balanced states]
	\label{lemma:PV_and_M_charac_balanced_states}
		Given $PV \in H^{-1}$ and $M\in L^2$, where $H^{-1}$ denotes the dual of $ \mathring{H}^1 $,
		there exists a unique $ \mathcal{X}\in \mathcal{B}$ for which $PV( \mathcal{X}) = PV$ and $\mathcal{M} ( \mathcal{X}) = M$.
	\end{lemma}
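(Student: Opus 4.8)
The plan is to reduce the statement to the parameterisation of the balanced set $\mathcal{B}$ provided by item~3 of \fref{Proposition}{prop:alt_char_balanced_set} and to the well-posedness of the nonlinear elliptic PDE \eqref{eq:PV-and-M-inversion-section-2} established in \cite{remond2024nonlinear}. By that proposition, a state $\mathcal{X}\in\mathbb{L}^2_\sigma$ is balanced if and only if $\mathcal{X} = \Phi(p,\widetilde{M})$ for a unique pair $(p,\widetilde{M})\in\mathring{H}^1\times L^2$, so it suffices to determine which pairs $(p,\widetilde{M})$ produce a balanced state with the prescribed values of the measurements $PV$ and $\mathcal{M}$.

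First I would read off the components of $\Phi(p,\widetilde{M})$. Its thermodynamic components are $\theta = \partial_3 p + \tfrac12\,{\min}_0(\widetilde{M}-\partial_3 p)$ and $q = \widetilde{M} - \partial_3 p - \tfrac12\,{\min}_0(\widetilde{M}-\partial_3 p)$, so the two $\tfrac12\,{\min}_0$ contributions cancel and
\[
    \mathcal{M}\big(\Phi(p,\widetilde{M})\big) = \theta + q = \widetilde{M}.
\]
Hence the constraint $\mathcal{M}(\mathcal{X}) = M$ is equivalent to $\widetilde{M} = M$. Next, using $u_h = \nabla_h^\perp p$, $u_3 = 0$, together with the identity $\nabla_h^\perp\cdot\nabla_h^\perp p = \Delta_h p$, the equivalent potential vorticity of $\Phi(p,M)$ is
\[
    PV\big(\Phi(p,M)\big) = \nabla_h^\perp\cdot u_h + \partial_3\theta = \Delta_h p + \partial_3\!\left(\partial_3 p + \tfrac12\,{\min}_0(M-\partial_3 p)\right) = \Delta p + \tfrac12\,\partial_3\,{\min}_0(M-\partial_3 p).
\]
Therefore the pair $(p,M)$ yields a balanced state with the prescribed data exactly when $p\in\mathring{H}^1$ solves the nonlinear elliptic equation \eqref{eq:PV-and-M-inversion-section-2}.

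It then remains to invoke \cite{remond2024nonlinear}: for every $PV\in H^{-1}$ and $M\in L^2$ there is a unique $p\in\mathring{H}^1$ solving \eqref{eq:PV-and-M-inversion-section-2}. Setting $\mathcal{X}\vcentcolon=\Phi(p,M)$ gives a balanced state with $PV(\mathcal{X})=PV$ and $\mathcal{M}(\mathcal{X})=M$; uniqueness follows since any balanced $\mathcal{X}'$ with the same two measurements must be $\Phi(p',M)$ with $p'$ solving the same PDE, whence $p'=p$.

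The only point requiring care — and the step I expect to be the main, if minor, obstacle — is making sure the function spaces line up: one must confirm that the well-posedness theorem of \cite{remond2024nonlinear} is stated (or can be restated) precisely for data $PV\in H^{-1}$, $M\in L^2$ with solution $p\in\mathring{H}^1$, and that the equation is read in the correct weak sense. In particular one should check that $\partial_3\,{\min}_0(M-\partial_3 p)\in H^{-1}$ whenever $p\in\mathring{H}^1$ and $M\in L^2$; this holds because ${\min}_0(M-\partial_3 p)\in L^2$, being pointwise dominated by $|M|+|\partial_3 p|$, so that its vertical derivative lies in $H^{-1}$, and likewise $\Delta p\in H^{-1}$. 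Everything else is bookkeeping with the explicit formula for $\Phi$.
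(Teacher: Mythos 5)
Your proposal is correct and follows essentially the same route as the paper: reduce to the parameterisation $\Phi$ from \fref{Proposition}{prop:alt_char_balanced_set}, note that $\mathcal{M}\circ\Phi$ forces $\widetilde{M}=M$, compute $(PV\circ\Phi)(p,M)=\Delta p + \tfrac12\partial_3{\min}_0(M-\partial_3 p)$, and invoke the well-posedness of nonlinear $PV$-and-$M$ inversion from \cite{remond2024nonlinear}. Your extra check that the data and solution spaces ($PV\in H^{-1}$, $M\in L^2$, $p\in\mathring{H}^1$) match the inversion theorem is sound bookkeeping that the paper leaves implicit.
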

	\begin{proof}
		By virtue of \fref{Proposition}{prop:alt_char_balanced_set} it suffices to show that there exist a unique $p\in \mathring{H}^1 $ and a unique $\widetilde{M} \in L^2$
		for which $(PV \circ \Phi) (p,\,\widetilde{M}) = PV$ and $(\mathcal{M} \circ \Phi) (p,\,\widetilde{M}) = M$.
		The latter identity tells us that $\widetilde{M} = M$, and since
		\begin{equation*}
			(PV \circ \Phi) (p,\, M) = \Delta p + \frac{1}{2} \partial_3 {\min}_0\, (M-\partial_3 p)
		\end{equation*}
		the result follows from the uniqueness of a solution to nonlinear $PV$-and-$M$ inversion as proved in \cite{remond2024nonlinear}.
	\end{proof}

	Now that we have verified that the wave set and the balanced set are transverse,
	we are ready to obtain the decomposition that this entire section is building towards.

	\begin{theorem}[Pre-decomposition]
	\label{theorem:pre_decomp}
		We have the decomposition $ \mathbb{L}^2_\sigma = \mathcal{B} + \mathcal{W}$ in the sense that, for any $ \mathcal{X}\in \mathbb{L}^2_\sigma $ there exist unique
		$ \mathcal{X}_B \in \mathcal{B}$ and $ \mathcal{X}_W \in \mathcal{W}$,
		called the \emph{balanced component} of $ \mathcal{X}$ and the \emph{wave component} of $ \mathcal{X}$ respectively, such that
		\begin{enumerate}
			\item	$ \mathcal{X} = \mathcal{X}_B + \mathcal{X}_W$ and
			\item	$ PV( \mathcal{X}) = PV ( \mathcal{X}_B)$ and $\mathcal{M} ( \mathcal{X}) = \mathcal{M} ( \mathcal{X}_B)$
				for $\mathcal{M} (u,\,\theta,\,q) = \theta + q$ and $ PV(u,\,\theta,\,q) = \nabla_h^\perp \cdot u_h + \partial_3 \theta$.
		\end{enumerate}
		Moreover $ \mathcal{X}_B$ and $ \mathcal{X}_W$ may be computed explicitly as follows.
		\begin{itemize}
			\item	$ \mathcal{X}_B = \Phi (p,\, M)$ for $\Phi$ as in \fref{Proposition}{prop:alt_char_balanced_set}, where $M \vcentcolon= \theta + q$
				and where $p \in \mathring{H}^1 $ is the unique solution of
				\begin{equation}
					\Delta p + \frac{1}{2} \partial_3 {\min}_0\, (M - \partial_3 p) = PV.
				\label{eq:PV_M_inversion_pre_decomp_statement}
				\end{equation}
			\item	$\mathcal{X}_W = \mathcal{X} - \mathcal{X}_B$, and,
				in light of \fref{Proposition}{prop:chara_im_N}, we also have $ \mathcal{X}_W = \Psi (\wavecoord,\,w)$ where $w = u_3$ and
				\begin{align*}
					\wavecoord
					&= - ( u_h^\perp - u_{B,\,h}^\perp ) + (\theta - \theta_B) e_3\\
					&= - ( u_h^\perp - \nabla_h p) + (\theta - \partial_3 p - \frac{1}{2} {\min}_0\, (M-\partial_3 p)) e_3.
				\end{align*}
				In particular $\wavecoord$ satisfies
				\begin{equation*}
					\nabla\times \left( A_B^{-1} \wavecoord \right) = \partial_3 u_h - \nabla_h^\perp (\theta - H_B q) + (\partial_3 w) e_3,
				\end{equation*}
				where $H_B \vcentcolon= \mathds{1} (q_B < 0) = \mathds{1} (M < \partial_3 p)$ and $A_B \vcentcolon= I - \frac{1}{2} H_B e_3\otimes e_3$,
				such that $A_B^{-1} = I + H_B e_3\otimes e_3$, subject to $\nabla\cdot\wavecoord = 0$ and $\fint A_B^{-1} \wavecoord = - \fint u_h^\perp + \fint (\theta - H_B q) e_3$.
		\end{itemize}
	\end{theorem}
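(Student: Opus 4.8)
The plan is to reduce everything to three facts already in hand: the parameterisation $\mathcal{X}_B = \Phi(p,\,M)$ of the balanced set from \fref{Proposition}{prop:alt_char_balanced_set}, the identification of the wave set $\mathcal{W} = \im\mathcal{N}$ with the common null set $\{PV = 0\}\cap\{\mathcal{M} = 0\}$ from \fref{Proposition}{prop:chara_im_N}, and the unique solvability of nonlinear $PV$-and-$M$ inversion packaged in \fref{Lemma}{lemma:PV_and_M_charac_balanced_states}. The structural observation that makes this work is that, although $\mathcal{N}$ is nonlinear, the measurements $PV$ and $\mathcal{M}$ are \emph{linear} maps on $\mathbb{L}^2_\sigma$. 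For uniqueness (and for property 2), suppose $\mathcal{X} = \mathcal{X}_B + \mathcal{X}_W$ with $\mathcal{X}_B\in\mathcal{B}$ and $\mathcal{X}_W\in\mathcal{W}$. Since $\mathcal{X}_W\in\mathcal{W}$, \fref{Proposition}{prop:chara_im_N} gives $PV(\mathcal{X}_W) = 0$ and $\mathcal{M}(\mathcal{X}_W) = 0$, so linearity yields $PV(\mathcal{X}) = PV(\mathcal{X}_B)$ and $\mathcal{M}(\mathcal{X}) = \mathcal{M}(\mathcal{X}_B)$, which is property 2; since by \fref{Lemma}{lemma:PV_and_M_charac_balanced_states} a balanced state is determined by its $PV$ and $\mathcal{M}$, this pins down $\mathcal{X}_B$, and hence $\mathcal{X}_W = \mathcal{X} - \mathcal{X}_B$, uniquely.

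For existence and the formula for $\mathcal{X}_B$, given $\mathcal{X} = (u,\,\theta,\,q)$ put $M \vcentcolon= \theta + q$ and let $\mathcal{X}_B\in\mathcal{B}$ be the unique balanced state with $PV(\mathcal{X}_B) = PV(\mathcal{X})$ and $\mathcal{M}(\mathcal{X}_B) = M$ supplied by \fref{Lemma}{lemma:PV_and_M_charac_balanced_states}; by \fref{Proposition}{prop:alt_char_balanced_set} this state is $\Phi(p,\,M)$ with $p\in\mathring{H}^1$ the solution of \eqref{eq:PV_M_inversion_pre_decomp_statement}, which exists and is unique by \cite{remond2024nonlinear}. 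Setting $\mathcal{X}_W \vcentcolon= \mathcal{X} - \mathcal{X}_B$, linearity gives $PV(\mathcal{X}_W) = 0 = \mathcal{M}(\mathcal{X}_W)$, so \fref{Proposition}{prop:chara_im_N} places $\mathcal{X}_W\in\mathcal{W}$. This already yields the decomposition and the stated expression for $\mathcal{X}_B$.

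For the explicit description of $\mathcal{X}_W$, \fref{Proposition}{prop:chara_im_N} writes $\mathcal{X}_W = \Psi(\wavecoord,\,w)$; matching the components of $\Psi(\wavecoord,\,w)$ against those of $\mathcal{X} - \Phi(p,\,M)$ — using that $\Phi(p,\,M)$ has vanishing vertical velocity and that its thermodynamic components sum to $M$ — gives $w = u_3$ together with an explicit expression for $\wavecoord$ in terms of $u$, $\theta$, and $p$. To obtain the curl identity one computes $A_B^{-1}\wavecoord = \wavecoord + H_B\wavecoord_3 e_3$ and checks, separately on $\{H_B = 0\}$ and $\{H_B = 1\}$ and using the balances $\theta_B - {\min}_0\, q_B = \partial_3 p$ and $\theta_B + q_B = M$ built into $\Phi$, that
\begin{equation*}
    A_B^{-1}\wavecoord = -u_h^\perp + (\theta - H_B q)e_3 - \nabla p .
\end{equation*}
Applying $\nabla\times$ annihilates $\nabla p$, and \fref{Lemma}{lemma:div_and_curl_of_good_unknown} for the curl of the good unknown, together with incompressibility $\nabla\cdot u = 0$ (which rewrites the horizontal-divergence term in the vertical component as $\partial_3 w$), yields $\nabla\times(A_B^{-1}\wavecoord) = \partial_3 u_h - \nabla_h^\perp(\theta - H_B q) + (\partial_3 w)e_3$; the constraint $\nabla\cdot\wavecoord = 0$ is part of \fref{Proposition}{prop:chara_im_N}, and averaging the displayed identity — with $\fint\nabla p = 0$ on the torus — gives the stated value of $\fint A_B^{-1}\wavecoord$.

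The heavy lifting — solvability and uniqueness for the nonlinear inversion \eqref{eq:PV_M_inversion_pre_decomp_statement} — is imported wholesale from \fref{Lemma}{lemma:PV_and_M_charac_balanced_states}, so the remaining steps are bookkeeping. The one delicate point, which I expect to be the main obstacle, is the piecewise verification of the displayed identity for $A_B^{-1}\wavecoord$: it relies on the fact that $\sign(q_B) = \sign(M - \partial_3 p)$, so that $H_B = \mathds{1}(q_B < 0) = \mathds{1}(M < \partial_3 p)$ is unambiguous pointwise and the two cases glue consistently — this is precisely the mechanism by which the matrix $A_B$ absorbs the jump of the nonlinearity ${\min}_0$ across the phase boundary. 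Beyond that, once $\Phi$ and $\Psi$ are in hand the remaining identities are linear-algebraic.
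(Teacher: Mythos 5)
Your proposal is correct and takes essentially the same route as the paper: uniqueness from \fref{Lemma}{lemma:PV_and_M_charac_balanced_states} together with the vanishing of $PV$ and $\mathcal{M}$ on $\mathcal{W}$ (\fref{Proposition}{prop:chara_im_N}), existence via nonlinear $PV$-and-$M$ inversion and \fref{Proposition}{prop:alt_char_balanced_set}, and the curl/average identities via \fref{Lemma}{lemma:div_and_curl_of_good_unknown}. The only cosmetic difference is that your piecewise verification of $A_B^{-1}\wavecoord = -u_h^\perp + (\theta - H_B q)e_3 - \nabla p$ on $\{H_B=0\}$ and $\{H_B=1\}$ re-derives \fref{Lemma}{lemma:identity_lin_buoyancy_and_xi}, which the paper simply cites.
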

	\begin{proof}
		The uniqueness of this decomposition follows from \fref{Lemma}{lemma:PV_and_M_charac_balanced_states} and the fact that, as per \fref{Proposition}{prop:chara_im_N},
		elements of the wave set $\mathcal{W}$ have vanishing $PV$ and $M$.
		The existence of the decomposition follows from the existence of solutions to nonlinear $PV$-and-$M$ inversion since finding $ \mathcal{X}_B$ is,
		by virtue of \fref{Proposition}{prop:alt_char_balanced_set}, equivalent to finding $p$ and $M$ and since $p$ may be chosen to be the solution of nonlinear $PV$-and-$M$ inversion.

		The latter characterisation of $\wavecoord$ follows from \fref{Lemmas}{lemma:div_and_curl_of_good_unknown} and \ref{lemma:identity_lin_buoyancy_and_xi}:
		\begin{equation*}
			\nabla\times \left( A_B^{-1} \wavecoord \right)
			= \nabla\times \left[ -u_h^\perp + (\theta-H_Bq) e_3 \right] + \nabla\times \nabla p
			= \partial_3 u_h - \nabla_h^\perp (\theta - H_B q) + (\partial_3 w) e_3
		\end{equation*}
		and
		\begin{equation*}
			\fint A_B^{-1} \wavecoord = - \fint u_h^\perp + \fint (\theta - H_B q) e_3.
			\qedhere
		\end{equation*}
	\end{proof}

It is interesting to see 
$\nabla\times \left( A_B^{-1} \wavecoord \right)$ 
arise in the pre-decomposition above, where matrix $A_B^{-1}$
depends on the balanced moisture, $q_B$. 
It would be difficult to guess the form of 
$\nabla\times \left( A_B^{-1} \wavecoord \right)$,
with its coefficient matrix that depends on balanced moisture, $q_B$.
Here, the appearance of $A_B^{-1}$ follows from the
decomposition into nullspace and image,
and from seeking a curl-like operator for determining the
unbalanced coordinate $\wavecoord$.
Also note that $\nabla\times \left( A_B^{-1} \wavecoord \right)$
involves a linear differential operator, once $q_B$ is known,
but it is a nonlinear transformation from 
$(u,\,\theta,\,q)$ to $\wavecoord$.

	As an immediate corollary of this decomposition we make the following observation:
	since both the balanced set and the wave set could be parameterised
    and since any state can be decomposed into a balanced and a wave piece,
    this means that we have obtained a global parametrisation of the entire state space.

	\begin{cor}
	\label{cor:decomposition}
		Any state $\mathcal{X}\in \mathbb{L}^2_\sigma $ can be uniquely parameterised by
		\begin{equation*}
			\left\{
			\begin{aligned}
				u &= \nabla_h^\perp p + \wavecoord_h^\perp + we_3,\\
				\theta &= \partial_3 p + \frac{1}{2} {\min}_0\, (M - \partial_3 p) + \wavecoord_3, \text{ and } \\
				q &= M - \partial_3 p - \frac{1}{2} {\min}_0\, (M - \partial_3 p) - \wavecoord_3
			\end{aligned}
			\right.
		\end{equation*}
		for $p\in \mathring{H}^1 $, $M,\,w\in L^2$, and $\wavecoord\in { \left( L^2 \right)}^3_\sigma$ satisfying $\nabla_h^\perp \cdot \wavecoord_h = \partial_3 w$.
	\end{cor}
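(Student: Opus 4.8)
The plan is to read this off directly from \fref{Theorem}{theorem:pre_decomp} together with the parametrisations of the balanced set and the wave set supplied by item~3 of \fref{Proposition}{prop:alt_char_balanced_set} and item~3 of \fref{Proposition}{prop:chara_im_N}. First I would take an arbitrary state $\mathcal{X} = (u,\,\theta,\,q) \in \mathbb{L}^2_\sigma$ and apply the pre-decomposition theorem to obtain the unique splitting $\mathcal{X} = \mathcal{X}_B + \mathcal{X}_W$ with $\mathcal{X}_B \in \mathcal{B}$ and $\mathcal{X}_W \in \mathcal{W}$; that theorem also tells us that $\mathcal{X}_B = \Phi(p,\,M)$ with $M = \theta + q$ and $p \in \mathring{H}^1$ the solution of the nonlinear $PV$-and-$M$ inversion \eqref{eq:PV_M_inversion_pre_decomp_statement}. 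Since $\mathcal{X}_W \in \mathcal{W} = \im\mathcal{N}$, item~3 of \fref{Proposition}{prop:chara_im_N} provides unique $\wavecoord \in {(L^2)}^3_\sigma$ and $w \in L^2$ with $\nabla_h^\perp\cdot\wavecoord_h = \partial_3 w$ such that $\mathcal{X}_W = \Psi(\wavecoord,\,w)$. Substituting the explicit expressions for $\Phi$ and $\Psi$ and adding the three components of $\mathcal{X}_B$ and $\mathcal{X}_W$ produces precisely the claimed formulas for $u$, $\theta$, and $q$, which settles existence.

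For uniqueness I would run the same identifications backwards. Suppose two tuples $(p,\,M,\,\wavecoord,\,w)$ and $(p',\,M',\,\wavecoord',\,w')$ with the stated regularity and the constraint $\nabla_h^\perp\cdot\wavecoord_h = \partial_3 w$ both yield the given $\mathcal{X}$. By item~3 of \fref{Proposition}{prop:alt_char_balanced_set} the vector $\Phi(p,\,M)$ lies in $\mathcal{B}$, and by item~3 of \fref{Proposition}{prop:chara_im_N} the vector $\Psi(\wavecoord,\,w)$ lies in $\mathcal{W}$, so $\mathcal{X} = \Phi(p,\,M) + \Psi(\wavecoord,\,w)$ is a balanced-plus-wave splitting of $\mathcal{X}$, and similarly for the primed tuple. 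The uniqueness clause of \fref{Theorem}{theorem:pre_decomp} then forces $\Phi(p,\,M) = \Phi(p',\,M')$ and $\Psi(\wavecoord,\,w) = \Psi(\wavecoord',\,w')$, and the uniqueness statements internal to \fref{Propositions}{prop:alt_char_balanced_set} and \ref{prop:chara_im_N} upgrade these to $p = p'$, $M = M'$, $\wavecoord = \wavecoord'$, and $w = w'$.

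I do not anticipate a real obstacle: every substantive ingredient — solvability and uniqueness of nonlinear $PV$-and-$M$ inversion, and the Helmholtz-type characterisation of the wave set — is already packaged inside \fref{Theorem}{theorem:pre_decomp} and the two preceding propositions, so the corollary amounts to assembling explicit formulas and bookkeeping which parameter encodes which component. The only points deserving a line of care are the function spaces, namely that for a general $L^2$ state the wave coordinate $\wavecoord$ is only in ${(L^2)}^3_\sigma$ (consistent with $\mathcal{W} \subseteq \mathbb{L}^2_\sigma$) rather than in $H^1_\sigma$, and the verification that the solenoidal constraint is exactly what is inherited, since $\nabla\cdot(\wavecoord_h^\perp + we_3) = 0$ is equivalent to $\nabla_h^\perp\cdot\wavecoord_h = \partial_3 w$.
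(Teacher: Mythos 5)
Your proposal is correct and follows essentially the same route as the paper: apply \fref{Theorem}{theorem:pre_decomp} to split $\mathcal{X}$ into its balanced and wave components and then invoke the unique parametrisations $\Phi$ and $\Psi$ from \fref{Propositions}{prop:alt_char_balanced_set} and \ref{prop:chara_im_N}. Your additional remarks on the uniqueness direction and on the solenoidal constraint $\nabla_h^\perp\cdot\wavecoord_h = \partial_3 w$ simply spell out bookkeeping the paper leaves implicit.
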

	\begin{proof}
		\fref{Theorem}{theorem:pre_decomp} tells us that any $ \mathcal{X}\in \mathbb{L}^2_\sigma $ may be uniquely decomposed as
		\begin{equation*}
			\mathcal{X} = \mathcal{X}_B + \mathcal{X}_W \in \mathcal{B} + \mathcal{W}.
		\end{equation*}
		Then \fref{Propositions}{prop:alt_char_balanced_set} and \ref{prop:chara_im_N} tell us that
		$ \mathcal{X}_B = \Phi (p,\,M)$ and $ \mathcal{X}_W = \Psi(\wavecoord,\,w)$ for some uniquely determined $p$, $M$, $\wavecoord$, and $w$.
		So indeed
		\begin{equation*}
			\mathcal{X} = \Phi (p,\, M) + \Psi (\wavecoord,\, w)
		\end{equation*}
		as desired.
	\end{proof}

	We do not even have to stop there:
	both the balanced set and the wave set had two \emph{dual} descriptions.
	They were described using a parametrisation, which produced \fref{Corollary}{cor:decomposition} above.
	But they were also described using the measurements that annihilate them.
	This suggests that using the full set of measurements, namely both those characterising the balanced set and those characterising the wave set,
	would yield a complete description of the state space.
	We verify this in the result below.

	\begin{prop}[Global nonlinear change of coordinates]
	\label{prop:global_chg_coord}
		Consider the \emph{measurement space} 
		\begin{equation*}
			\mathfrak{M} \vcentcolon= H^{-1} \times L^2 \times { \left( H^{-1} \right) }^2 \times L^2 \times \mathbb{R}^3.
		\end{equation*}
		The map $\mathcal{M} : \mathbb{L}^2_\sigma \to \mathfrak{M} $ given by
		\begin{equation*}
			\mathcal{M} (u,\,\theta,\, q)
			= \begin{pmatrix}
				\nabla_h^\perp \cdot u_h + \partial_3 \theta\\
				\theta + q\\
				\partial_3 u_h - \nabla_h^\perp (\theta - {\min}_0\, q)\\
				u_3\\
				- \fint u_h^\perp + \fint (\theta - {\min}_0\, q) e_3
			\end{pmatrix}
			= \begin{pmatrix}
				PV \\ M \\ j \\ w \\ a
			\end{pmatrix}
		\end{equation*}
		is invertible.
	\end{prop}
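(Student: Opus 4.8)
The plan is to trade the unknowns $(u,\,\theta,\,q)$ for the ``good unknown'' $v \vcentcolon= -u_h^\perp + (\theta - {\min}_0\, q)\,e_3$ together with $w \vcentcolon= u_3$ and $M \vcentcolon= \theta + q$. By \fref{Lemma}{lemma:invert_b_and_M} the pair $(v_3,\,M)$ determines $(\theta,\,q)$ uniquely --- explicitly $\theta = v_3 + \tfrac12{\min}_0\,(M - v_3)$ --- while $u_h = v_h^\perp$, so $\mathcal{X}\mapsto(v,\,w,\,M)$ is a bijection from $\mathbb{L}^2_\sigma$ onto $\bigl\{(v,\,w,\,M): v\in{(L^2)}^3,\ w,\,M\in L^2,\ \nabla_h^\perp\cdot v_h = \partial_3 w\bigr\}$, the constraint being exactly incompressibility. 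In these variables the five components of $\mathcal{M}$ become transparent: \fref{Lemma}{lemma:div_and_curl_of_good_unknown} identifies $j$ and $a$ as the horizontal part of $\nabla\times v$ and as $\fint v$ respectively (with $(\nabla\times v)_3 = \partial_3 w$), while a one-line computation from $\theta = v_3 + \tfrac12{\min}_0\,(M-v_3)$ gives $PV = \nabla\cdot v + \tfrac12\partial_3{\min}_0\,(M - v_3)$. So, up to the change of variables, inverting $\mathcal{M}$ amounts to recovering $(v,\,w,\,M)$ from the divergence-like quantity $\nabla\cdot v + \tfrac12\partial_3{\min}_0\,(M-v_3)$, the curl $\nabla\times v$, the mean $\fint v$, and $M$ itself.

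For \emph{injectivity}, suppose two states share the same image, with good unknowns $v$ and $v'$. Then $M=M'$, $w=w'$, $(\nabla\times v)_h = (\nabla\times v')_h$, $\fint v = \fint v'$, and the incompressibility constraint forces $(\nabla\times v)_3 = \partial_3 w = (\nabla\times v')_3$; hence $v-v'$ is curl-free with zero mean, so $v-v' = \nabla g$ for a unique $g\in\mathring{H}^1$ (\fref{Corollary}{cor:Helmholtz}). Taking Helmholtz splittings $v = \nabla\phi + S + c$ and $v' = \nabla\phi' + S' + c'$, the solenoidal parts ($S = S'$) and constant parts ($c = c'$) coincide, so $g = \phi-\phi'$ and $v_3 = \partial_3\phi + S_3 + c_3$ with $S_3 + c_3$ common to both. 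Writing $\widetilde M \vcentcolon= M - S_3 - c_3$, equality of the $PV$-components then says that $\phi$ and $\phi'$ both solve $\Delta\phi + \tfrac12\partial_3{\min}_0\,(\widetilde M - \partial_3\phi) = PV$, i.e.\ the nonlinear ${PV}$-and-$M$ inversion \eqref{eq:PV_M_inversion_pre_decomp_statement}. Uniqueness for that equation \cite{remond2024nonlinear} gives $\phi = \phi'$, hence $v = v'$, hence the two states agree.

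For \emph{surjectivity}, note first that every state satisfies the compatibility relation $\nabla_h\cdot j + \partial_3^2 w = 0$ (it follows from incompressibility, in the same way that $\nabla\cdot(\nabla\times\,\cdot) = 0$), so the natural target space is the closed subspace of $\mathfrak{M}$ cut out by this relation, and it is onto this subspace that $\mathcal{M}$ is a bijection. Given such data, keep $w$ and $M$, solve $-\Delta S = \nabla\times(j,\,\partial_3 w)$ for the unique solenoidal mean-zero $S$ with $\nabla\times S = (j,\,\partial_3 w)$ (solvable precisely because $\nabla_h\cdot j + \partial_3^2 w = 0$), set $c\vcentcolon= a$, and let $\phi\in\mathring{H}^1$ be the solution --- furnished by the existence theory of \cite{remond2024nonlinear} --- of $\Delta\phi + \tfrac12\partial_3{\min}_0\,(\widetilde M - \partial_3\phi) = PV$ with $\widetilde M \vcentcolon= M - S_3 - c_3$. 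Then $v\vcentcolon= \nabla\phi + S + c$ has the prescribed curl and mean, satisfies $\nabla\cdot v + \tfrac12\partial_3{\min}_0\,(M - v_3) = PV$, and --- since $\nabla_h^\perp\cdot\nabla_h\phi = 0$ --- obeys $\nabla_h^\perp\cdot v_h = (\nabla\times S)_3 = \partial_3 w$, so $(v,\,w,\,M)$ lies in the domain; the corresponding state is the desired preimage.

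The heart of the argument is the well-posedness of the nonlinear elliptic equation \eqref{eq:PV_M_inversion_pre_decomp_statement} proved in \cite{remond2024nonlinear}: its uniqueness drives injectivity and its existence drives surjectivity. Everything else --- the change of variables, the reading off of $PV$, $j$, $a$ in the good-unknown coordinates, and the three-dimensional Helmholtz decomposition --- is routine. The one place that calls for care is the constraint bookkeeping: one must verify that the domain constraint $\nabla_h^\perp\cdot v_h = \partial_3 w$ is recovered automatically by the construction, and one must correctly identify the image of $\mathcal{M}$ as the subspace of $\mathfrak{M}$ determined by $\nabla_h\cdot j + \partial_3^2 w = 0$.
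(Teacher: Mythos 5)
Your route is essentially the paper's own: you pass to the good unknown $v=-u_h^\perp+(\theta-{\min}_0\,q)\,e_3$ together with $w$ and $M$, split $v$ via Helmholtz, read off $j$, $a$, $PV$ in those variables, and then let the well-posedness of nonlinear $PV$-and-$M$ inversion from \cite{remond2024nonlinear} do the work --- uniqueness for injectivity, existence for surjectivity. This is exactly the paper's proof in different clothing: its coordinates $(\pi,\,M,\,\altwavecoord,\,w)$ are your $(\phi,\,M,\,S+c,\,w)$, its system \eqref{eq:global_chg_coord_inv_1} is your curl problem for the solenoidal part, and its \eqref{eq:global_chg_coord_inv_2} is your inversion with the shifted variable $\widetilde M = M-\altwavecoord_3$. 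Your injectivity argument is correct and, if anything, spelled out more explicitly than in the paper.

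There is, however, one concrete inaccuracy, in your identification of the image. The relation $\nabla_h\cdot j+\partial_3^2 w=0$ is necessary but not sufficient for solvability of $\nabla\times S=(j,\,\partial_3 w)$: on the torus the curl of a periodic field also has vanishing mean, so you additionally need $\fint j=0$. This does hold for every $j$ in the image of $\mathcal{M}$ (it is a sum of derivatives of periodic $L^2$ fields), but it is not implied by your divergence relation --- take $j$ a nonzero constant and $w=0$, which satisfies $\nabla_h\cdot j+\partial_3^2 w=0$ yet admits no $S$ with $\nabla\times S=(j,\,0)$. So the parenthetical ``solvable precisely because $\nabla_h\cdot j+\partial_3^2 w=0$'' is wrong as stated, and the subspace onto which $\mathcal{M}$ is a bijection must be cut out by \emph{both} conditions; with $\fint j=0$ added, your surjectivity construction goes through as written. (To be fair, the paper's literal claim of invertibility onto all of $\mathfrak{M}$ glosses over the same compatibility; its proof implicitly leans on \fref{Proposition}{prop:solvability_div_curl_system}, where the curl equation is posed against divergence-free mean-zero test fields, i.e. with data in ${(\mathring{H}^1_\sigma)}^*$, which absorbs exactly these constraints.)
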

	\begin{proof}
		To verify the invertibility of this change of coordinates we will use the coordinates $\pi$, $M$, $\altwavecoord$, and $w$ defined implicitly by
		\begin{equation}
			\left\{
			\begin{aligned}
				u &= \nabla_h^\perp \pi + \altwavecoord_h^\perp + we_3,\\
				\theta &= \partial_3 \pi + \altwavecoord_3 + \frac{1}{2} {\min}_0\, (M - \partial_3 \pi - \altwavecoord_3), \text{ and } \\
				q &= M - \partial_3 \pi - \altwavecoord_3 - \frac{1}{2} {\min}_0\, (M - \partial_3 \pi - \altwavecoord_3)
			\end{aligned}
			\right.
		\label{eq:global_chg_coord_param}
		\end{equation}
		where $\nabla_h^\perp \cdot \altwavecoord_h = \partial_3 w$ and $\nabla\cdot\altwavecoord = 0$.
		Note that the coordinates $(\pi,\,M,\,\altwavecoord,\,w)$ differ from the coordinates $(p,\,M,\,\wavecoord,\,w)$ used in \fref{Corollary}{cor:decomposition}.
		The key advantage of the coordinates used here is that
		\begin{equation}
			\sign q = \sign (M - \partial_3 \pi - \altwavecoord_3),
		\label{eq:global_chg_coord_sign_q}
		\end{equation}
		which is why they are particularly useful here when it comes to inverting $\mathcal{M}$.

		However these coordinates have the less advantageous property that $\pi$ depends on $PV$, $M$, \emph{and} $\altwavecoord$, as we will see below.
		This is different from $p$ (from \fref{Corollary}{cor:decomposition}), which only depends on $PV$ and $M$.
		That is the reason why these $\pi$ and $\altwavecoord$ coordinates are \emph{not} used for the decomposition (instead of $p$ and $\wavecoord$, respectively):
		the resulting balanced component computed from $\pi$ and $M$ would \emph{not} be slow.

		We close this parenthesis regarding the choice of coordinates and go back to the matter of invertibility.
		In light of \eqref{eq:global_chg_coord_sign_q}, \fref{Lemma}{lemma:invert_b_and_M} and \fref{Lemma}{lemma:identity_lin_buoyancy_and_xi}
		tell us that \eqref{eq:global_chg_coord_param} is equivalent to
		\begin{equation}
			\left\{
			\begin{aligned}
				&- u_h^\perp + (\theta - {\min}_0\, q) e_3 = \nabla\pi + \altwavecoord,\\
				&u_3 = w, \text{ and } \\
				&\theta + q = M.
			\end{aligned}
			\right.
		\label{eq:global_chg_coord_param_equiv}
		\end{equation}
		In particular it follows from the uniqueness of the Helmholtz decomposition that these coordinates are uniquely determined by $u$, $\theta$, and $q$.

		So now it only remains to show that, for any tuple $(PV,\,M,\,j,\,w,\,a)$, corresponding coordinates $(\pi,\,M,\,\altwavecoord,\,w)$ exist.
		Plugging \eqref{eq:global_chg_coord_param} into the definition of the measurement map $ \mathcal{M} $ we see, using \eqref{eq:global_chg_coord_param_equiv}, that
		\begin{equation*}
			\left\{
			\begin{aligned}
				PV &= \nabla\pi + \frac{1}{2} \partial_3 {\min}_0\, (M - \partial_3 \pi - \altwavecoord_3),\\
				M &= M,\\
				j &= \partial_3 \altwavecoord_h^\perp - \nabla_h^\perp \altwavecoord_3,\\
				w &= w, \text{ and } \\
				a &= \fint \altwavecoord.
			\end{aligned}
			\right.
		\end{equation*}
		Since $\nabla_h^\perp\cdot\altwavecoord_h = \partial_3 w$ we see using \fref{Lemma}{lemma:horizontal_vertical_decomposition_of_the_curl} that
		\begin{equation*}
			\nabla\times\altwavecoord = j + (\partial_3 w) e_3.
		\end{equation*}
		We are thus guaranteed a (unique) solution $\altwavecoord$ of
		\begin{equation}
			\left\{
			\begin{aligned}
				\nabla\times\altwavecoord &= j + (\partial_3 w)e_3,\\
				\nabla\cdot\altwavecoord &= 0, \text{ and } \\
				\fint \altwavecoord &= a.
			\end{aligned}
			\right.
		\label{eq:global_chg_coord_inv_1}
		\end{equation}
		The solvability of nonlinear $PV$-and-$M$ inversion \cite{remond2024nonlinear} then guarantees a unique solution $\pi$ in $ \mathring{H}^1 $ of
		\begin{equation}
			\Delta\pi + \frac{1}{2} \partial_3 {\min}_0\, \left[ \left( M - \altwavecoord_3 \right) - \partial_3 \pi \right] = PV.
		\label{eq:global_chg_coord_inv_2}
		\end{equation}

		In other words, if we define
		\begin{equation*}
			\widetilde{\mathfrak{C}} \vcentcolon= \left\{
				(\pi,\,M,\,\altwavecoord,\,w) \in \mathring{H}^1 \times L^2 \times { \left( L^2 \right) }_\sigma^2 \times L^2 :
				\nabla_h^\perp\cdot\altwavecoord_h = \partial_3 w
			\right\}
		\end{equation*}
		to be the space where the coordinates $(\pi,\,M,\,\altwavecoord,\,w)$ reside
		and denote by $(u,\,\theta,\,q) = \mathcal{C} (\pi,\,M,\,\altwavecoord,\,w)$ the map defined by \eqref{eq:global_chg_coord_param}, we have shown that
		for any $(PV,\,M,\,j,\,w,\,a)\in \mathfrak{M} $ there exists a unique $(\pi,\,M,\,\altwavecoord,\,w) \in \widetilde{\mathfrak{C}}$
		such that $( \mathcal{M} \circ \mathcal{C} ) (\pi,\,M,\,\altwavecoord,\,w) = (PV,\,M,\,j,\,w,\,a)$.
		This means that we can take $(u,\,\theta,\,q) = \mathcal{C} (\pi,\,M,\,\altwavecoord,\,w)$ to satisfy $\mathcal{M}(u,\,\theta,\,q) = (PV,\,M,\,j,\,w,\,a)$.
		Moreover, since we know that this state $(u,\,\theta,q)$ is uniquely determined by its coordinates $(\pi,\,M,\,\altwavecoord,\,w)$ this proves that there is a \emph{unique}
		state $(u,\,\theta,\,q)$ satisfying $\mathcal{M}(u,\,\theta,\,q) = (PV,\,M,\,j,\,w,\,a)$.
		This proves that $\mathcal{M}$ is invertible, as desired.
		Moreover the inverse may be computed by solving, in order, \eqref{eq:global_chg_coord_inv_1} then \eqref{eq:global_chg_coord_inv_2}
		and then plugging the solutions into \eqref{eq:global_chg_coord_param}.
	\end{proof}

	\begin{remark}[The global change of coordinates in the dry case]
	\label{rmk:global_chg_coord_dry_case}
		The proof of \fref{Proposition}{prop:global_chg_coord} above is a bit of a mouthful.
		In order to shed some light on the underlying ideas, without having to worry about the details of the proof itself,
		we discuss here what an analog global change of coordinates would look like in the dry case.

		In that case we could break down the change of coordinates as a composition of invertible pieces,
		thus illustrating more clearly what the moving pieces are.
		Indeed, in the dry case we have the following chain of inversions:
		\begin{align*}
			\begin{pmatrix}
				u \\ \theta
			\end{pmatrix} \leftrightarrow \begin{pmatrix}
				u_h \\ u_3 \\ \theta
			\end{pmatrix} \leftrightarrow \begin{pmatrix}
				-u_h^\perp + \theta e_3 \\ u_3
			\end{pmatrix} =\vcentcolon \begin{pmatrix}
				g \\ w
			\end{pmatrix} \overset{(1)}{\leftrightarrow} \begin{pmatrix}
				\nabla\cdot g \\ \nabla\times g \\ \fint g \\ w
			\end{pmatrix}
            &\overset{(2)}{=} \begin{pmatrix}
				\nabla_h\cdot u_h + \partial_3 \theta \\ \partial_3 u_h - \nabla_h^\perp \theta + (\partial_3 w) e_3 \\ - \fint u_h^\perp + \fint \theta e_3 \\ w
			\end{pmatrix}
        \\
            &=\vcentcolon \begin{pmatrix}
				PV \\ j + (\partial_3 w) e_3 \\ a \\ w
			\end{pmatrix} \leftrightarrow \begin{pmatrix}
				PV \\ j \\ a \\ w
			\end{pmatrix}
		\end{align*}
		where (1) follows from the Helmholtz decomposition (recorded in \fref{Theorem}{theorem:Helmholtz}) and (2) follows from \fref{Lemma}{lemma:div_and_curl_of_good_unknown} 
		and the fact that the velocity field $u$ is divergence-free.

		It is therefore tempting to assume that a similar chain of invertible maps produces the global nonlinear change of coordinates for the \emph{moist} case recorded
		in \fref{Proposition}{prop:global_chg_coord}.
		This is not the case.
		Of course, we would need to perform the following inversions initially in the moist case, by virtue of \fref{Lemma}{lemma:invert_b_and_M}:
		\begin{equation*}
			\begin{pmatrix}
				u \\ \theta \\ q
			\end{pmatrix} \leftrightarrow \begin{pmatrix}
				u_h \\ u_3 \\ \theta + q \\ \theta - {\min}_0\, q
			\end{pmatrix} =\vcentcolon \begin{pmatrix}
				u_h \\ w \\ M \\ b
			\end{pmatrix}.
		\end{equation*}
		This is not the source of the issue.
		The source of the issue is that, in the moist case, the potential vorticity $PV$ and the thermal wind imbalance $j$ come
		from \emph{different} good unknowns.
		Indeed:
		\begin{equation*}
			PV = \nabla\cdot \left( -u_h^\perp + \theta e_3 \right)
			\text{ while }
			j = { \left[ \nabla\times \left( -u_h^\perp + \left( \theta - {\min}_0\, q \right) e_3 \right) \right] }_h.
		\end{equation*}
		The good unknowns
		\begin{equation*}
			-u_h^\perp + \theta e_3
			\text{ and } 
			-u_h^\perp + \left( \theta - {\min}_0\, q \right) e_3
		\end{equation*}
		are not the same.
		Note that this is \emph{not} a technical issue.
		This comes from the dynamics in the moist case being fundamentally different.
		In particular, the operator in the moist case loses the skew-symmetry present in the dry case,
		and this is precisely the source of the discrepancy between the good unknowns used for $PV$ and $j$.
		Indeed, the $(\theta,\,q)$--dependence in the equation governing $ \partial_t u$ is of the form
		\begin{equation*}
			\partial_t u = \dots + (\theta- {\min}_0\, q) e_3
			= \dots + (\theta -q H)e_3
		\end{equation*}
		for $H \vcentcolon= \mathds{1} (q<0)$
		while the $u$--dependence in the equation governing $ \partial_t \theta$ and $\partial_t q$ is of the form
		\begin{align*}
			&\partial_t \theta = \dots - u_3 \text{ and } \\
			&\partial_t q = \dots + u_3.
		\end{align*}
		In matrix form this reads
		\begin{equation*}
			\partial_t \begin{pmatrix}
				u \\ \theta \\ q
			\end{pmatrix} = \begin{pmatrix}
				\dots	& e_3	& -He_3	\\
				-e_3^T	& \dots	& \dots	\\
				e_3^T	& \dots	& \dots
			\end{pmatrix} \begin{pmatrix}
				u \\ \theta \\ q
			\end{pmatrix}.
		\end{equation*}
		This means that the good unknown for $PV$ comes from the first column, while the good unknown for $j$ comes from the first row.
		Said another way: $PV$ comes from the \emph{image} of the operator, whereas $j$ comes from the \emph{null set} of the operator.
		Since that matrix is \emph{not} skew-symmetric, i.e. since the operator is \emph{not} skew-adjoint,
		it follows that $PV$ and $j$ come from \emph{different} good unknowns.
	\end{remark}

    \begin{remark}  
    \label{rmk:three_descriptions}
    	To conclude this section, we note that we now have \emph{three} descriptions of the state space available to us:
    	\begin{itemize}
    		\item	the ``primitive'' description in terms of the unknowns $u$, $\theta$, and $q$,
    		\item	the description in terms of the coordinates $p$, $M$, $\wavecoord$, and $w$ (see \fref{Corollary}{cor:decomposition} ), and
    		\item	the description in terms of the measurements $PV$, $M$, $j$, $a$, and $w$ (see \fref{Proposition}{prop:global_chg_coord}).
    	\end{itemize}
    	Crucially: the latter two descriptions are compatible with the decomposition!
    	We can summarize this with the following diagram, where we give a name to the maps converting between each of these descriptions.
    
        \begin{center}
    	\begin{tikzcd}
    		&
    			(u,\,\theta,\,q) \in \mathbb{L}^2_\sigma
    			\arrow[start anchor = east, dr, "\mathcal{M}"]
    		\\
    			(p,\,M,\,\wavecoord,\,w) \in \mathfrak{C}
    			\arrow[ur, end anchor = west, "\mathcal{S}"]
    		&&	(PV,\,M,\,j,\,a,\,w) \in \mathfrak{M}
    			\arrow[ll, "\mathcal{C}"]
    	\end{tikzcd}
        \end{center}
    
        Recall that $ \mathbb{L}^2_\sigma $ is the \emph{state space} introduced in \fref{Definition}{def:state_space} and
        that $\mathfrak{M}$ is the \emph{measurement space} defined in \fref{Proposition}{prop:global_chg_coord}. Here $\mathfrak{C}$ is the \emph{coordinate space} defined as follows:
        \begin{equation*}
        \label{eq:def_mathfrak_C}
    			\mathfrak{C} = \left\{ (p,\,M,\,\wavecoord,\,w) \in \mathring{H}^1 \times L^2 \times { \left( L^2 \right) }_\sigma^2 \times L^2 : \nabla_h^\perp\cdot\wavecoord_h = \partial_3 w \right\}
        \end{equation*}
        The state-to-measurement map $\mathcal{M}$ is then as defined in \fref{Proposition}{prop:global_chg_coord}, where it is shown to be invertible, the coordinate-to-state map $\mathcal{S}$ corresponds to the parametrisation recorded in \fref{Corollary}{cor:decomposition}, which is invertible since \fref{Theorem}{theorem:pre_decomp} tells us how to compute its inverse, and the measurement-to-coordinate map $\mathcal{C}$ is well-defined by virtue of the invertibility of the other two maps.
    \end{remark}

	\begin{remark}[Vertically Sheared Horizontal Flows]
	\label{rmk:VSHF}
		This remark discusses vertically sheared horizontal flows, which are particular solutions to the Boussinesq system characterized by vanishing thermodynamic variables
		(namely $\theta$ and $q$ in the moist case) and by a purely $z$--dependent and horizontal velocity $u = u_h(z)$.

		The purpose of this remark is two-fold.
		First it shows what the global change of coordinates of \fref{Corollary}{cor:decomposition} and \fref{Proposition}{prop:global_chg_coord} looks like in a specific case,
		namely when it is applied to these vertically sheared horizontal flow.
		Second it shows one benefit of the decomposition presented here: these flows are accounted for \emph{naturally}.

		First, we recall that combining \fref{Corollary}{cor:decomposition} and \fref{Proposition}{prop:global_chg_coord} gives us \emph{three} descriptions of any given state.
		\begin{itemize}
			\item	The description in physical space in terms of $(u,\,\theta,\,q)$.
			\item	The description in coordinate space in terms of $(p,\,M,\,\wavecoord,\,w)$.
			\item	The description in measurement space in terms of $(PV,M,\,j,\,w,\,a)$.
		\end{itemize}
		Let us now see what vertically sheared horizontal flows look like in each of these three descriptions.
		\begin{itemize}
			\item	In physical space a vertically sheared horizontal flow corresponds to any state $(u,\,\theta,\,q)$
				where $\theta$ and $q$ vanish and where $u=u_h(z)$, i.e. $u_3 = 0$ and $u$ only depends on $z$.
			\item	In coordinate space a vertically sheared horizontal flow corresponds to any state described by $(p,\,M,\,\wavecoord,\,w)$
				where $p$, $M$, and $w$ vanish and where $\wavecoord = \wavecoord_h (z)$.
			\item	In measurement space a vertically sheared horizontal flow corresponds to any any state described by $(PV,\,M,\,j,\,w,\,a)$
				where $PV$, $M$, and $w$ vanish, where $j = j(z)$, and where $a = (a_1,\,a_2,\,0)$ is any purely horizontal vector in $ \mathbb{R}^3$.
		\end{itemize}
		In other words, since a vertically sheared horizontal flows is fully characterized by a map $\gamma : \mathbb{T}^1 \to \mathbb{R}^2$, here are three descriptions
		of the same such flow.
		\begin{align*}
			&\text{Physical space}	&&\text{Coordinate space}	&&\text{Measurement space}		\\
			&u = \gamma_h (z)	&&p = 0				&&PV = 0				\\
			&\theta = 0		&&M = 0				&&M = 0					\\
			&q = 0			&&\wavecoord = -\gamma^\perp_h (z)	&&j = \gamma'(z)			\\
			&			&&w = 0				&&w = 0					\\
			&			&&				&&a = -\fint \gamma_h^\perp (z) dz	\\
		\end{align*}

		Second, we note that as described above our decomposition accounts for these vertically sheared horizontal flows very naturally.
		This comes from the fact that our decomposition leverages a \emph{three-dimensional} Helmholtz decomposition of the \emph{mixed vector} fields $-u^\perp_h + \theta e_3$ (for $PV$)
		and $-u_h^\perp + (\theta - {\min}_0\, q)e_3$ (for $j$).
		This is different from what is often done where a \emph{two-dimensional} Helmholtz decomposition is performed \emph{solely on} the horizontal velocity $u_h$.
		This two-dimensional Helmholtz decomposition then leaves a two-dimensional average term of the form $u_h(z)$ to be determined, 
		which is precisely the term accounting for these vertically sheared horizontal flows.
		In other words: our decomposition accounts for these intrinsically, instead of having to treat them as a separate add-on after the fact.
	\end{remark}

	\begin{remark}
	\label{rmk:decomposition_is_always_valid}
		It is important to note that the decomposition discussed in this section is \emph{always} valid,
    	even when the state in $ \mathbb{L}^2_\sigma $ thus decomposed is \emph{not} a solution of the moist Boussinesq dynamical system.
        In other words: the decomposition of \fref{Theorem}{theorem:pre_decomp} is valid for any state
        in the state space $\mathbb{L}^2_\sigma$, similarly to how the Helmholtz decomposition is valid for any three-dimensional vector field.

        However, as discussed in the next section, the decomposition we introduced in this context does also have additional meaning in the context of the \emph{dynamics} of the moist Boussinesq system \eqref{eq:moist_Boussinesq_u}--\eqref{eq:moist_Boussinesq_q}.
  \end{remark}

\section{Decomposition and slow-fast dynamics}
\label{sec:slow_fast_decomp}

	In \fref{Section}{sec:decomposition} above we have introduced a preliminary decomposition of the leading-order dynamics.
	In particular we have discussed how to then make sure that the balanced component is slow.
    The key step in doing so was ensuring that the pressure, which acts as a streamfunction for the balanced component of the velocity, be slow.
    This differs from previous work \cite{zhang_smith_stechmann_20_asymptotics,Zhang_Smith_Stechmann_2021_JFM,wetzel2019balanced,wetzel2020potential} where the 
    streamfunction constructed was approximately slow but not precisely slow.
 
	A question remains: ``did we extract \emph{all} of the slow components of a state in this manner?''.
	In the dry case, that question may be answered in the affirmative since the fast dynamics are that of linear oscillations which do not feedback into the slow component.
	In the moist case, however, the fast oscillations are nonlinear and so they \emph{may} feed back into the slow component.
	To make this idea precise we appeal to the framework of fast-wave averaging.
	Fast-wave averaging is a technique used to identify the limiting behaviour of certain dynamical systems which admit two timescales.
	The prototypical case in which two timescales would be present is the following dynamical system in $\mathbb{R}^{n_1 + n_2}$:
	\begin{equation*}
		\partial_t \begin{pmatrix}
			x_1 \\ x_2
		\end{pmatrix} = \begin{pmatrix}
			f_1 (x_1) \\ \frac{1}{\varepsilon} f_2 (x_2)
		\end{pmatrix}
	\end{equation*}
	for $(x_1,\, x_2) \in \mathbb{R}^{n_1 + n_2}$, where $\varepsilon > 0$ is a small parameter. 
	In this case we would say that the dynamics of $x_1$ are slow since they are of order one, whereas the dynamics of $x_2$ are fast since they are of order $1 / \varepsilon$.
	In general, dynamical systems will not admit an \emph{exact} two timescale decomposition as in the prototypical example above.
	In general, that decomposition is only approximately valid for finite $\varepsilon$ and we are then interested in the limiting behaviour of the slow and fast components as $\varepsilon \to 0$.

	Where does fast-wave averaging fit in all this? Well: possessing two timescales is a feature a dynamical system may or may not manifest,
	whereas fast-wave averaging is a technique used to identify the limiting behaviour of the slow and fast timescales (for such dynamical systems).
	Fast-wave averaging happens in four steps.
	\begin{enumerate}
		\item	Assume a two-timescale asymptotic expansion of the unknowns (which requires a sub-linear growth condition).
		\item	Plug this expansion into the equations of motion to identify the resulting equations at leading-order $1/\varepsilon$ and at next-order.
		\item	Solve the leading-order equations over the fast timescale in terms of a yet-to-be-determined ``background profile'' at the slow timescale.
		\item	Determine this background profile at the slow timescale
			by using a solvability condition for the next-order equations.
	\end{enumerate}

	We will now run through these four steps for the moist Boussinesq system \eqref{eq:moist_Boussinesq_u}--\eqref{eq:moist_Boussinesq_q}.
	At each step we will highlight the differences with the dry case. This is because the dry case is well-understood
	however in the moist case we only provide here a sketch of the fast-wave averaging process, leaving a rigorous treatment for future work.

	First we assume a two-timescale expansion of the unknowns. This is done exactly as in the dry case and we assume that, for $\mathcal{X} = (u,\,\theta,\,q)$,
	\begin{equation}
		\mathcal{X} = \mathcal{X}^0 (t,\,\tau) \vert_{\tau = t/\varepsilon} + \varepsilon \mathcal{X}^1 (t,\,\tau) \vert_{\tau = t/\varepsilon} + O(\varepsilon^2)
	\label{eq:two_timescale_assumption}
	\end{equation}
	for some $\mathcal{X}^0$ and $\mathcal{X}^1$ to be determined where $t$ is the slow timescale and $\tau = t/\varepsilon$ is the fast timescale,
	subject to the \emph{sub-linear growth condition}
	\begin{equation}
		\mathcal{X}^1 (t,\,\tau) = o(\tau) \text{ in $L^2$ uniformly in $t$ as } \tau \to \infty,
	\label{eq:sublinear_growth_condition}
	\end{equation}
	or equivalently for every $\delta > 0$ there exists $\tau_* > 0$ such that if $\tau > \tau_*$ then
	\begin{equation*}
		\frac{ \norm{\mathcal{X}^1 (t,\,\tau) }{L^2} }{\tau} < \delta
		\text{ for all } t.
	\end{equation*}
	Note that this sub-linearity condition is imposed for the following reason.
	If the next-order term $\mathcal{X}^1$ is allowed to grow linearly in $\tau$ then we would have that
	\begin{equation*}
		\varepsilon \mathcal{X}^1 (t,\,\tau) \vert_{\tau = t/\varepsilon}
		\sim \varepsilon \left( \mathcal{X}^1 (t,\,0) + \frac{t}{\varepsilon} \right)
		= \varepsilon \mathcal{X}^1 (t,\,0) + t,
	\end{equation*}
	i.e. the term $t$ of order one would appear in an expression that ought to be of order $\varepsilon$.
	
	The next step is to insert the two-timescale assumption \eqref{eq:two_timescale_assumption} into the moist Boussinesq system \eqref{eq:moist_Boussinesq_u}--\eqref{eq:moist_Boussinesq_q}
	and read off the leading-order and next-order equations that come out.
	In order to carry this out we record here two preliminary computations.
	First we note that
	\begin{equation*}
		\partial_t \left[ f(t,\,\tau) \middle\vert_{\tau = t/\varepsilon} \right]
		= \left[ (\partial_t f)(t,\,\tau) + \frac{1}{\varepsilon} (\partial_\tau f) (t,\,\tau) \middle] \right\vert_{\tau = t/\varepsilon}.
	\end{equation*}
	Second we note that, introducing the notation $H \vcentcolon= \mathds{1} (q^0 + \varepsilon q^1 < 0)$ and $H^0 \vcentcolon= \mathds{1} (q^0 < 0)$,
	\begin{equation*}
		{\min}_0\, (q_0 + \varepsilon q^1)
		= (q^0 + \varepsilon q^1) H
		= \underbrace{q^0 H^0}_{ {\min}_0\, q^0 } + q^0 (H - H^0) + \varepsilon q^1 H
	\end{equation*}
	where
	\begin{align*}
		q^0 (H - H^0)
		&= q^0 \left[ \mathds{1}(q^0 + \varepsilon q^1 < 0) - \mathds{1} (q^0 < 0) \right]\\
		&= q^0 \left[ \mathds{1}(q^0 + \varepsilon q^1 < 0,\, q^0 \geqslant 0) - \mathds{1} (q^0 < 0,\, q^0 + \varepsilon q^1 \geqslant 0 \right]\\
		&= q^0 \left[ \mathds{1}(0\leqslant q^0 < -\varepsilon q^1) - \mathds{1} (-\varepsilon q^1 \leqslant q^0 < 0) \right]\\
		&= \varepsilon \cdot \frac{q^0}{\varepsilon} \left[
			\mathds{1} \left( 0\leqslant \frac{q^0}{\varepsilon} < - q^1 \right)
			- \mathds{1} \left( -q^1 \leqslant \frac{q^0}{\varepsilon} < 0 \right)
		\right]\\
		& =\vcentcolon \varepsilon R(q^0 ;\, q^1).
	\end{align*}
	In particular we note that $R(q^0;\,q^1) \in O(1)$ provided that $q^1$ is bounded.
 
	Using these two observations we plug \eqref{eq:two_timescale_assumption} into \eqref{eq:moist_Boussinesq_u}--\eqref{eq:moist_Boussinesq_q} and read off the leading-order equations,
	at order $1/\varepsilon$, to be
	\begin{subnumcases}{}
		\partial_\tau u^0 + e_3\times u^0 + \nabla p^0 - \theta^0 e_3 + {\min}_0\, q^0 e_3 = 0,\\
		\partial_\tau \theta^0 + u_3^0 = 0, \text{ and } \\
		\partial_\tau q^0 - u_3^0 = 0
	\end{subnumcases}
	while the next-order equations, at order one (i.e. order $\varepsilon^0$) to be
	\begin{subnumcases}{}
		\partial_\tau u^1 + e_3\times u^1 + \nabla p^1 - \theta^1 e_3 + q^1 H e_3 = -\partial_t u^0 - u^0\cdot\nabla u^0 + R(q^0;\,q^1),\\
		\partial_\tau \theta^1 + u_3^1 = - \partial_t \theta^0 - u^0\cdot\nabla \theta^0, \text{ and } \\
		\partial_\tau q^1 - u_3^1 = -\partial_t q^0 - u^0 \cdot \nabla q^0.
	\end{subnumcases}
	We can also write these equations more compactly as
	\begin{subnumcases}{}
		\partial_\tau \mathcal{X}^0 + \mathcal{N} (\mathcal{X}^0) \text{ and }
		\label{eq:moist_Boussinesq_lead_order_abstract}\\
		\partial_\tau \mathcal{X}^1 + \mathcal{L}_H (\mathcal{X}^1) = - \partial_t \mathcal{X}^0 - u^0\cdot\nabla \mathcal{X}^0 + \mathcal{R} ( \mathcal{X}^0;\, \mathcal{X}^1),
		\label{eq:moist_Boussinesq_next_order_abstract}
	\end{subnumcases}
	for $\mathcal{N}$ as defined in \eqref{eq:def_N}, $\mathcal{L}_H$ its formal linearisation given by
	\begin{equation}
		\mathcal{L}_H \begin{pmatrix}
			u \\ \theta \\ q
		\end{pmatrix} = \begin{pmatrix}
			\mathbb{P}_L \left[ e_3 \times u - (\theta - qH)e_3 \right] \\ u_3 \\ -u_3
		\end{pmatrix},
	\label{eq:def_L_H}
	\end{equation}
	and the remainder term $\mathcal{R}$ given by $\mathcal{R} (u^0,\,\theta^0,\,q^0;\, u^1,\,\theta^1,\,q^1) = (R(q^0;\,q^1),\,0,\,0)$.

	Crucially: we see that the dynamics at leading-order are governed by $\partial_\tau + \mathcal{N} = 0$ for $\mathcal{N}$ as introduced in \eqref{eq:def_N}.
	The sets $\mathcal{B}$ and $\mathcal{W}$ characterised in \fref{Propositions}{prop:alt_char_balanced_set} and \ref{prop:chara_im_N} thus play an essential role since they come from that same operator $\mathcal{N}$.
	Indeed, as discussed in \fref{Proposition}{prop:dyn_pdt_plus_N}, a solution $\mathcal{X}$ of $\partial_\tau \mathcal{X} + \mathcal{N} ( \mathcal{X} ) = 0$ may be written as
	\begin{equation}
		\mathcal{X} (\tau) = \mathcal{X}_B (0) + e^{-\tau \mathcal{L}_{W,\,H_B}} \mathcal{X}_W (0).
	\label{eq:sol_pdtau_plus_N}
	\end{equation}
	Note that here we abused notation a little bit in order to be reminiscent of what happens in the dry case.
	In the dry case the leading order dynamics are $\partial_\tau + \mathcal{L}$ for $\mathcal{L}$ given by
    \begin{equation}
    \label{eq:def_L}
        \mathcal{L} (u,\,\theta) = \begin{pmatrix}
            \mathbb{P}_L \left( u_h^\perp -\theta e_3 \right) \\ u_3
        \end{pmatrix},
    \end{equation}
    as per \eqref{eq:dry_Boussinesq_u}--\eqref{eq:dry_Boussinesq_theta}.
	Since the operator $\mathcal{L}$ is linear and constant, it follows that the solution map of $\partial_\tau + \mathcal{L}$ is precisely
	given by the semigroup $e^{-t\mathcal{L}}$. The operator $\mathcal{L}_{W,\,H_B}$ in the moist case is linear but with \emph{variable} coefficients (namely through its dependence on $H_B$),
	yet we use the same notation to represent the solution map,
	keeping it mind that it is slightly abusive notation since the solution map is \emph{not} a semigroup.

	We are thus ready to carry out the third step of the fast-wave averaging process.
	We deduce from \eqref{eq:sol_pdtau_plus_N} that the solution of the leading-order equations \eqref{eq:moist_Boussinesq_lead_order_abstract} is given by
	\begin{equation*}
		\underline{\mathcal{X}}^0 (t,\,\tau) = \underline{\mathcal{X}}^0_B (t) + e^{- \tau \mathcal{L}_{W,\,H_B}} \underline{\mathcal{X}}_W^0 (t)
	\end{equation*}
	where the background slow profile $\underline{\mathcal{X}}^0 (t) = \underline{\mathcal{X}}_B^0 (t) + \underline{\mathcal{X}}_W^0 (t)$ is to be determined
	(this is done in the next step of the fast-wave averaging process).
	This is where the first essential difference with the dry case manifests itself.
	In the dry case, the leading-order solution takes the form
	\begin{equation}
		\underline{\mathcal{X}}^0 (t,\,\tau) = \underline{ \mathcal{X} }^0_B (t) + e^{-\tau\mathcal{L}_W} \underline{ \mathcal{X} }^0_W (t),
	\label{eq:lead_order_sol_in_dry_case}
	\end{equation}
	where $\mathcal{L}_W$ denotes the restriction of $\mathcal{L}$ from \eqref{eq:def_L} onto its image.
	Crucially: since the operator $\mathcal{L}_W$ is skew-adjoint in $L^2$ the oscillations in \eqref{eq:lead_order_sol_in_dry_case} are linear oscillations with constant frequencies.
	Here, in the moist case, \emph{the frequencies depend on the phase}.

	To see why this matters, suppose for now that we have completed the fourth and last step of the fast-wave averaging process.
	That is, suppose we have determined the dynamics of the slow part $\underline{\mathcal{X}}^0 (t)$.
	Since we are interested in the limiting behaviour of solutions, we then wish to study the limiting behaviour of $\mathcal{X}^0 (t,\,\tau)$ as $\varepsilon\to 0$.
	In the \emph{dry} case this means studying the limiting behaviour of
	\begin{equation*}
		e^{-\tau\mathcal{L}_W} \underline{\mathcal{X}}_W^0 (t).
	\end{equation*}
	Crucially: since the dry oscillations have constant frequencies, these oscillations have vanishing time averages,
	and so the weak limit of this term is simply zero!

	In the moist case we must by contrast study the limiting behaviour of
	\begin{equation*}
		e^{-\tau\mathcal{L}_{W,\,H_B}} \underline{\mathcal{X}}_W^0 (t).
	\end{equation*}
	But now, as shown in an example first recorded in \cite{Zhang_Smith_Stechmann_2021_JFM}
	(see \fref{Corollary}{cor:moist_Bouss_has_wave_sols_with_nonvanish_averages}, and the simple example it arises from in \fref{Proposition}{prop:second_order_ODE_with_nonzero_average_solutions}),
	phase-dependent oscillations do \emph{not} have vanishing time averages.
	This wave term therefore converges weakly to a \emph{non-zero} time average!

	This tells us exactly how to augment our ``pre-decomposition'' above, in \fref{Theorem}{theorem:pre_decomp}, to now make sure we include \emph{all} of the slow components:
	we take the slow balanced piece from the previous section and add to it the time-average of the wave piece.

	This produces the following slow-fast decomposition, where for any space $S$, $L^1 S$ is used to denote $L^1 ([0,\,\infty);\, S)$.
	\begin{theorem}[Slow-fast decomposition]
	\label{theorem:slow_fast_decomposition}
		We have the decomposition
		\begin{equation*}
			L^1 \mathbb{L}^2_\sigma = L^1 \mathcal{B} + \overline{\mathcal{W}} + \widetilde{\mathcal{W}}
		\end{equation*}
		where
		\begin{equation*}
			\overline{\mathcal{W}} \vcentcolon= \left\{ \mathcal{X} \in L^1 \mathcal{W} : \mathcal{X} \text{ is independent of time $\tau$} \right\}
			\text{ and } 
			\widetilde{\mathcal{W}} \vcentcolon= \left\{ \mathcal{X}\in L^1 \mathcal{W} : \lim_{\tau\to\infty} \fint_0^\tau \mathcal{X} (\tau') d\tau' = 0 \right\},
		\end{equation*}
		in the sense that, for every $\mathcal{X}\in L^1\mathbb{L}^2_\sigma $ there exist unique $\mathcal{X}_B \in L^1\mathcal{B}$, $\overline{\mathcal{X}}_W \in \overline{\mathcal{W}}$,
		and $\widetilde{\mathcal{X}}_W \in \widetilde{\mathcal{W}}$ such that
		\begin{enumerate}
			\item	$ \mathcal{X} = \mathcal{X}_B + \overline{ \mathcal{X} }_W + \widetilde{ \mathcal{X} }_W$ and
			\item	$PV ( \mathcal{X}) = PV	( \mathcal{X}_B)$ and $\mathcal{M} ( \mathcal{X}) = \mathcal{M} ( \mathcal{X}_B)$ for $PV$ and $\mathcal{M}$ as in \fref{Theorem}{theorem:pre_decomp}.
		\end{enumerate}
		Moreover $ \mathcal{X}_B (\tau)$ and $ \mathcal{X}_W (\tau)$ may be computed explicitly at every instant $\tau$ as in \fref{Theorem}{theorem:pre_decomp} such that then
		$\overline{\mathcal{X}}_W \vcentcolon= \lim_{\tau\to\infty} \fint_0^\tau \mathcal{X}_W (\tau') d\tau'$ and $\widetilde{\mathcal{X}} \vcentcolon= \mathcal{X} - \overline{ \mathcal{X} }_W$.
	\end{theorem}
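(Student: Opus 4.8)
The plan is to lift \fref{Theorem}{theorem:pre_decomp} from the fixed-time setting to the $\tau$-dependent setting and then split the resulting wave piece into its long-time mean and a remainder, organising the argument around two transversality facts: the one between $\mathcal{B}$ and $\mathcal{W}$ already in \fref{Theorem}{theorem:pre_decomp}, and a new, ``one level down'' one between $\overline{\mathcal{W}}$ and $\widetilde{\mathcal{W}}$. For uniqueness, suppose $\mathcal{X} = \mathcal{X}_B + \overline{\mathcal{X}}_W + \widetilde{\mathcal{X}}_W$ satisfies items 1 and 2. At a.e.\ fixed $\tau$ the sum $\overline{\mathcal{X}}_W + \widetilde{\mathcal{X}}_W(\tau)$ lies in $\mathcal{W}$ (a linear subspace of $\mathbb{L}^2_\sigma$), so $\mathcal{X}(\tau)$ is exhibited as a balanced state plus a wave state with $PV(\mathcal{X}(\tau)) = PV(\mathcal{X}_B(\tau))$ and $\mathcal{M}(\mathcal{X}(\tau)) = \mathcal{M}(\mathcal{X}_B(\tau))$; the uniqueness part of \fref{Theorem}{theorem:pre_decomp} forces $\mathcal{X}_B(\tau)$ to be the balanced component of $\mathcal{X}(\tau)$, which pins down $\mathcal{X}_B$ and hence $\mathcal{X}_W := \mathcal{X} - \mathcal{X}_B$. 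It then remains to see that the splitting $\mathcal{X}_W = \overline{\mathcal{X}}_W + \widetilde{\mathcal{X}}_W \in \overline{\mathcal{W}} + \widetilde{\mathcal{W}}$ is unique, i.e.\ that $\overline{\mathcal{W}} \cap \widetilde{\mathcal{W}} = \{0\}$; this is immediate, since a $\tau$-independent element of $\mathcal{W}$ equals its own Ces\`aro mean and so vanishes if that mean does.

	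For existence, given $\mathcal{X}$, apply \fref{Theorem}{theorem:pre_decomp} at each $\tau$ to obtain $\mathcal{X}_B(\tau) \in \mathcal{B}$ and $\mathcal{X}_W(\tau) = \mathcal{X}(\tau) - \mathcal{X}_B(\tau) \in \mathcal{W}$, and set $\overline{\mathcal{X}}_W := \lim_{\tau\to\infty} \fint_0^\tau \mathcal{X}_W(\tau')\,d\tau'$ and $\widetilde{\mathcal{X}}_W := \mathcal{X}_W - \overline{\mathcal{X}}_W$. Three checks remain. First, that $\tau \mapsto \mathcal{X}_B(\tau)$ and $\tau \mapsto \mathcal{X}_W(\tau)$ are measurable and inherit the integrability of $\mathcal{X}$: since $\mathcal{X}_B = \Phi(p, M)$ with $M = \theta + q$, with $\Phi$ Lipschitz (as ${\min}_0$ is $1$-Lipschitz), and with $p$ the solution of the nonlinear $PV$-and-$M$ inversion \eqref{eq:PV_M_inversion_pre_decomp_statement}, this reduces to the continuity and stability estimates for that elliptic PDE from \cite{remond2024nonlinear}, which make $\mathcal{X} \mapsto (\mathcal{X}_B, \mathcal{X}_W)$ a bounded continuous self-map of $\mathbb{L}^2_\sigma$. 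Second, that $\overline{\mathcal{X}}_W \in \overline{\mathcal{W}}$: it is manifestly $\tau$-independent, and it lies in $\mathcal{W}$ because $\mathcal{W}$ is a closed (hence weakly closed) subspace, being the common zero set of the bounded linear measurements $PV$ and $\mathcal{M}$ by \fref{Proposition}{prop:chara_im_N}; that $\widetilde{\mathcal{X}}_W \in \widetilde{\mathcal{W}}$ is then the computation $\lim_\tau \fint_0^\tau (\mathcal{X}_W - \overline{\mathcal{X}}_W) = 0$. Third, item 2 is automatic: $\overline{\mathcal{X}}_W$ and $\widetilde{\mathcal{X}}_W$ are $\mathcal{W}$-valued and $PV$, $\mathcal{M}$ vanish on $\mathcal{W}$, so $PV(\mathcal{X}) = PV(\mathcal{X}_B)$ and $\mathcal{M}(\mathcal{X}) = \mathcal{M}(\mathcal{X}_B)$ as in \fref{Theorem}{theorem:pre_decomp}.

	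The hard part will be the existence of the long-time Ces\`aro mean $\overline{\mathcal{X}}_W$ together with the choice of the ambient space of $\tau$-profiles on which it lives: for a genuinely $L^1$ wave piece this mean is trivially zero, so the content of the statement sits in the regime of bounded (or almost periodic) $\tau$-dependence --- exactly the regime produced by the leading-order fast-wave dynamics \eqref{eq:sol_pdtau_plus_N}, where \fref{Corollary}{cor:moist_Bouss_has_wave_sols_with_nonvanish_averages} exhibits a nonzero time average for $e^{-\tau \mathcal{L}_{W,\,H_B}} \mathcal{X}_W(0)$. I expect most of the effort to go into (i) fixing that space of profiles and showing the averaging operator is well defined and continuous on it, and (ii) verifying that the pointwise-in-$\tau$ use of \fref{Theorem}{theorem:pre_decomp} is compatible with it (Bochner measurability and uniform-in-$\tau$ bounds), both of which ultimately rest on the quantitative well-posedness of nonlinear $PV$-and-$M$ inversion established in \cite{remond2024nonlinear}.
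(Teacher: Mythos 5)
Your proposal is correct and takes essentially the same route as the paper: the paper's proof is a one-liner that applies \fref{Theorem}{theorem:pre_decomp} at every instant $\tau$ and then forms $\overline{\mathcal{X}}_W$ as the time average and $\widetilde{\mathcal{X}}_W$ as the remainder, exactly as you do, explicitly remarking that all technical content was already handled in the pre-decomposition. Your closing concern --- that in $L^1([0,\infty);\mathbb{L}^2_\sigma)$ the Ces\`aro mean is forced to vanish, so the substance lives in a bounded/almost-periodic regime of $\tau$-dependence --- is a fair observation about the statement itself, but the paper does not address it and treats the averaging step as immediate.
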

	\begin{proof}
		This follows immediately from applying \fref{Theorem}{theorem:pre_decomp} to $ \mathcal{X} (\tau)$ at every instant $\tau \geqslant 0$
		and then constructing $\overline{\mathcal{X}}_W$ and $\widetilde{\mathcal{X}}_W$ as indicated.
	\end{proof}

	Note that the technical aspects of this theorem were already taken care of in the proof of the ``pre-decomposition'' of \fref{Theorem}{theorem:pre_decomp}.
	So there is nothing new to prove here.
	Nonetheless, as explained in this section in the lead-up to \fref{Theorem}{theorem:slow_fast_decomposition},
	this statement now has an \emph{interpretation} that \fref{Theorem}{theorem:pre_decomp} did not.
	Once viewed through the lens of fast-wave averaging, we can indeed interpret \fref{Theorem}{theorem:slow_fast_decomposition} as a \emph{slow-fast} decomposition:
	\begin{equation*}
		\mathcal{X} = \underbrace{\mathcal{X}_B + \overline{ \mathcal{X} }_W}_{ \mathcal{X}_\text{slow}} + \underbrace{\widetilde{ \mathcal{X} }_W}_{ \mathcal{X}_\text{fast}}.
	\end{equation*}
	In other words: the fast-wave averaging framework now, at least formally, ensures that this decomposition accounts for the entirety of the slow components of the solution!

	Finally, for completeness we record the fourth and final step of fast-wave averaging.
	The idea is that we can solve \eqref{eq:moist_Boussinesq_next_order_abstract} for $\mathcal{X}^1$,
	but we are not particularly interested in what that solution looks like.
	What we care about is making sure that this next-order solution satisfies the sub-linear growth condition \eqref{eq:sublinear_growth_condition}.
	Ultimately, this comes down to a condition on the right-hand side forcing terms of \eqref{eq:moist_Boussinesq_next_order_abstract}.
	Since these forcing terms depend on the leading-order solution $\mathcal{X}^0$, enforcing the sub-linear growth condition on $\mathcal{X}^1$
	turns into a condition on $\mathcal{X}^0$.

	In the dry case, this condition takes the form
	\begin{equation*}
		\fint_0^\tau \partial_t \underline{\mathcal{X}}^0 + e^{\tau' \mathcal{L}} \left( e^{-\tau' \mathcal{L}} \bar{u}^0 \cdot \nabla e^{-\tau'\mathcal{L}} \underline{\mathcal{X}}_0 \right) d\tau'
		= 0
	\end{equation*}
	since we may use the fact that the semigroup $e^{- \tau \mathcal{L}}$ is norm-preserving.
	In the moist case however the solution map $e^{-\tau\mathcal{L}_{W,\,H_B}}$ is \emph{not} norm-preserving (see \fref{Proposition}{prop:L_H_energy} )
	and so we cannot perform the same trick.
	Identifying what the solvability condition looks like in the moist case is therefore left open for future work.

	\begin{remark}[Slow-fast interpretation in different limiting regimes]
	\label{rmk:slow_fast_interpretation_different_regimes}
		In this section we have then described how, once in the limiting regime where both the Froude and Rossby numbers approach zero,
		the decomposition introduced in this paper has an additional interpretation as a slow-fast decomposition.
		Nonetheless, this same decomposition may still be of use in other limiting regimes, provided the interpretation of each of its components as slow or fast is
		updated appropriately.

		Consider for example the case where the Froude number is still taken to approach zero but the Rossby number is kept constant.
		In that case the vertically sheared horizontal flows discussed in \fref{Remark}{rmk:VSHF} are no longer fast, but now become slow!
	\end{remark}

\section{Decomposition and energy}
\label{sec:energy_and_decomposition}

	In the dry case, the decomposition of states into their slow and fast components
	has the additional properties that it is an orthogonal decomposition,
	where the inner product used is the one induced by the conserved energy.

	In the moist case, which is the focus of this paper, we therefore ask the following question:
	how does our decomposition of \fref{Theorem}{theorem:pre_decomp} relate to the energy conserved by the moist Boussinesq system?
	The energy in question is given by
	\begin{equation}
		E = \frac{1}{2} \int {\lvert u \rvert}^2 + \theta^2 + {\min}_0^2\, q + M^2,
	\label{eq:conserved_nonlinear_energy}
	\end{equation}
where an arbitrary coefficient of the $M^2$ term has been
set to 1 for simplicity.
For a demonstration that this energy is conserved,
see \fref{Proposition}{prop:cons_en}.

	In the dry case, since the decomposition comes down to linear projections, and since the energy gives rise to an inner product,
	the relation between the decomposition and the energy can be phrased as a simple question: are these projections orthogonal with respect to this inner product?
	In the moist case, both the decomposition and the energy are \emph{nonlinear} and so it is not clear how one would go about even \emph{asking} if the two are related.
	We offer several options in \fref{Sections}{sec:PDE_centric_approach}-\ref{sec:metric_less_approach}, summarizing and contrasting each option, along with their benefits and inconveniences,
	in \fref{Section}{sec:energy_summary}.

	Note that, in each of the three cases discussed below, the strategy is the same.
	Namely we seek to identify a way to quantify how far apart two states $(u_1,\,\theta_1,\,q_1)$ and $(u_2,\,\theta_2,\,q_2)$ are in such a manner that
	\begin{enumerate}
		\item	this agrees with the decomposition of \fref{Theorem}{theorem:pre_decomp}, and gives it further meaning as a projection, in the sense that the closest balanced state to any given state must be its balanced component and
		\item	this agrees with the energy (when the two are comparable).
	\end{enumerate}

	Finally, note that the authors believe that the quantification provided in \fref{Section}{sec:metric_less_approach} is really the right one to use for describing the decomposition from a projection perspective.
	However, it may appear a bit odd at first.
	To better motivate it we therefore discuss, in \fref{Sections}{sec:PDE_centric_approach} and \ref{sec:energy_centric_approach},
	alternative approaches which are less conclusive but help us understand why the \emph{good} approach, discussed in \fref{Section}{sec:metric_less_approach}, must take the form it does.
    Also, the alternative approaches in 
    \fref{Sections}{sec:PDE_centric_approach} and \ref{sec:energy_centric_approach}
    have value in their own right, for other applications
    beyond the focus here on a nonlinear eigenspace decomposition.

	Before we dive into the details of each of these three approaches, we must elucidate the following question:
	what does it mean for two states to be comparable using the energy?

	\begin{remark}[Comparing states using the energy]
	\label{rmk:comparing_states_using_the_energy}
		Recall that the energy of a state $ \mathcal{X}=(u,\,\theta,\,q)$ is 
		\begin{equation*}
			E ( \mathcal{X} ) = \int {\lvert u \rvert}^2 + \theta^2 + {\min}_0^2\, q + {(\theta + q)}^2.
		\end{equation*}
		The question is: given two states $ \mathcal{X}_1$ and $ \mathcal{X}_2$, \emph{when} can we use the energy to quantify the discrepancy
		between $ \mathcal{X}_1$ and $ \mathcal{X}_2$?

		A naive attempt would be to use the energy of their \emph{difference}, i.e.
		\begin{equation}
		\label{eq:diff_E}
			E ( \mathcal{X}_1 - \mathcal{X}_2) = \int {\lvert u_1 - u_2 \rvert}^2 + {(\theta_1 - \theta_2)}^2 + {\min}_0^2\, (q_1 - q_2) + {( (\theta_1+q_1) - (\theta_2+q_2))}^2.
		\end{equation}
		This is not physically meaningful.
		Why?
		Because of the nonlinearity.
		When computing $ E ( \mathcal{X}_1 - \mathcal{X}_2)$ the nonlinear term $ {\min}_0^2\, (q_1 - q_2)$ only turns on when $q_1 - q_2 > 0$.
		However this nonlinearity is not supposed to care about the sign of $q_1-q_2$.
		Instead the nonlinearity is there to account for the fact that, depending on the signs of $q_1$ and $q_2$, the energetic weights of $q_1$ and $q_2$ change.

		What is the takeaway?
		The only time we can directly compare two states $ \mathcal{X}_1$ and $ \mathcal{X}_2$ using the energy is when the signs of $q_1$ and $q_2$ agree everywhere in the spatial domain.
		The sign of $q$ is a mathematical proxy used to tell us which \emph{phase} the state is in at any point in the spatial domain.
		We may thus rephrase the observation above as follows.
		\vspace{1em}
		\begin{center}
			The only time we can directly compare two states $ \mathcal{X}_1$ and $ \mathcal{X}_2$\\
			is when their phases coincide everywhere in the spatial domain.
		\end{center}
		\vspace{1em}
		It will therefore be convenient to encode the spatial location of a state's phases.
		This is done by using the indicator function
		\begin{equation*}
			H \vcentcolon= \mathds{1} (q<0).
		\end{equation*}
		Note that since the water content $q = q(t,\,x)$ is a function of $t$ and $x$, then the indicator $H$ also depends on $t$ and $x$. As above, we will often use notation which suppresses the $(t,x)$-dependence in an effort to improve readability.

		So let us consider two states $ \mathcal{X}_1$ and $ \mathcal{X}_2$ whose phases agree,
		i.e. for which $H_1 = H_2 =\vcentcolon H$. On one hand
		\begin{align*}
			E ( \mathcal{X}_1 )
			&= \int {\lvert u_1 \rvert}^2 + \theta_1^2 + {\min}_0^2\, q_1 + {(\theta_1 + q_1)}^2\\
			&= \int {\lvert u_1 \rvert}^2 + \theta_1^2 + q_1^2 H + {(\theta_1 + q_1)}^2
		\end{align*}
		while on the other hand
		\begin{align*}
			E ( \mathcal{X}_2 )
			&= \int {\lvert u_2 \rvert}^2 + \theta_2^2 + {\min}_0^2\, q_2 + {(\theta_2 + q_2)}^2\\
			&= \int {\lvert u_2 \rvert}^2 + \theta_2^2 + q_2^2 H + {(\theta_2 + q_2)}^2.
		\end{align*}
	A natural way to compare $ \mathcal{X}_1$ and $ \mathcal{X}_2$ using the energy is then to treat the indicator function $H$ as \emph{fixed} in
		the expressions above for both $ E( \mathcal{X}_1)$ and $ E( \mathcal{X}_2)$
		and to consider
		\begin{equation}
		\label{eq:motivate_E_H}
			\int {\lvert u_1 - u_2 \rvert}^2 + {(\theta_1 - \theta_2)}^2 + {(q_1 - q_2)}^2 H + {((\theta_1 + q_1) - (\theta_2 + q_2))}^2.
		\end{equation}
		This is \emph{not} the same as $ E ( \mathcal{X}_1 - \mathcal{X}_2 )$, as can be seen by comparing \eqref{eq:diff_E} with \eqref{eq:motivate_E_H}.
		Nonetheless the quantity appearing in \eqref{eq:motivate_E_H} is, so far, the only way we have of using the energy to compare two states, as long as their phases agree everywhere.
	\end{remark}

    So we give this quantity a name.
    \begin{definition}
    \label{def:energy_fixed_Heaviside_1}
        For \emph{any} indicator function $H$ and \emph{any} state $ \mathcal{X} = (u,\,\theta,\,q) \in \mathbb{L}^2_\sigma $ let
        \begin{equation*}
            E_H ( \mathcal{X} ) \vcentcolon= \int {\lvert u \rvert}^2 + \theta^2 + q^2 H + {(\theta + q)}^2.
        \end{equation*}
    \end{definition}
    Note that this notation may then be used even when (especially when!) the sign of $q$ does not agree with $H$.
    That is precisely what happens in \eqref{eq:motivate_E_H}: $H$ encodes the signs of $q_1$ and $q_2$, \emph{not} the sign of $q_1 - q_2$
    (that was precisely the naive nonphysical approach we sought to avoid!)

    Here we have motivated this ``fixed $H$'' energy by talking about restricting our attention to a subset of states
    whose phases agree.
    There is another reason to look at this energy: it is the quadratic approximation of the conserved energy about a state whose phase are encoded in $H$.
    So this energy also appears in situations where linearisation are used,
    since linearising the dynamical equations is essentially the same as working with the quadratic approximation of the conserved energy.
    This interpretation of the ``fixed $H$'' energy will be more prevalent in later sections, such as \fref{Section}{sec:geometry} where
    geometrical notions relying on tangent spaces (a.k.a. linearisations) are discussed.
    Linearisations and the ``fixed $H$'' energy are
    also relevant for applications such as adjoint models,
    sensitivity analysis, and variational data assimilation,
    which rely on tangent linear approximations to the
    dynamics
    \cite{le1986variational,park1997validity,errico1997adjoint,errico1999examination,mahfouf1999influence,barkmeijer2001tropical,amerault2008tests,doyle2014initial}.

\subsection{PDE-centric approach}
\label{sec:PDE_centric_approach}

	We first discuss a PDE-centric approach to relating the decomposition and the energy.
	Why start here?
	There are two reasons to begin with a PDE-centric approach.

	The first reason to take such an approach is that elliptic PDEs underpin the decomposition in both the case of the Helmholtz decomposition
	and in the dry case. Since a nonlinear elliptic PDE is one of the few structures that have already been identified to be present in the moist case,
	this PDE therefore presents a natural starting point.

	The second reason to take a PDE-centric approach is that a similar approach was fruitful in identifying the slow--fast decomposition.
	Indeed, the slow--fast decomposition was obtained by recognizing the important role played by the slow measurements $PV$ and $M$,
	and then inverting the nonlinear $PV$-and-$M$ inversion PDE \eqref{eq:PV_M_inversion_pre_decomp_statement}
	in order to obtain $p$. The variables $p$ and $M$ were then used to characterise the balanced part of a state.

	Equipped with \fref{Propositions}{prop:alt_char_balanced_set} and \ref{prop:chara_im_N} we can actually go a step further:
	the slow measurements $PV$ and $M$ characterise the balanced component of a state while the fast measurements $j$, $w$, and $a$ characterise its wave component.
	Indeed, we have the following.
	\begin{itemize}
		\item	On one hand \fref{Proposition}{prop:alt_char_balanced_set} tells us that the measurements
			\begin{align*}
				j(u,\,\theta,\,q) = \partial_3 u_h - \nabla_h^\perp (\theta - {\min}_0\, q),\,
				w(u,\,\theta,\,q) = u_3,
				\\
				\text{ and }
				a(u,\,\theta,\,q) = - \fint u_h^\perp + \fint (\theta - {\min}_0\, \theta) e_3
			\end{align*}
			are transversal to the balanced set $\mathcal{B}$, and so we may expect them to characterize elements of the wave set $\mathcal{W}$.
		\item	On the other hand \fref{Proposition}{prop:chara_im_N} tells us that the measurements
			\begin{equation*}
				PV(u,\,\theta,\,q) = \nabla_h^\perp \cdot u_h + \partial_3 \theta \text{ and } \\
				M(u,\,\theta,\,q) = \theta + q
			\end{equation*}
			are transversal to the wave set $\mathcal{W}$, and so we may similarly expect them to characterize elements of the balanced set $\mathcal{B}$.
	\end{itemize}
	Both of these assertions turn out to be true, and thus give rise to the global change of coordinates recorded in \fref{Proposition}{prop:global_chg_coord} which allows us to translate between a state $(u,\,\theta,\,q)$
	and the tuple of measurements $(PV,\,M,\,j,\,w,\,a)$.

	We will now use this change of variable from \fref{Proposition}{prop:global_chg_coord} to define a distance on the state space $ \mathbb{L}^2_\sigma $.
	We will do this by using the change of variables to pull-back a metric on the measurement space $ \mathcal{M} $ and thus produce a metric on the state space $ \mathbb{L}^2_\sigma $
	(see \fref{Proposition}{prop:pullback_metric}).
	The execution of this strategy requires a few steps.
	In order to have a clearer idea of where we are headed, we record below what would happen if we tried doing the same thing in the \emph{dry} case.
	In the dry case we can write the energy in terms of the measurements (namely $PV$, $j$, $w$, and $a$).
	In the language that will be used in this section, we can therefore view the energy, and the $L^2$ metric it corresponds to, as the pullback of a metric on the (dry) measurement space!
	Since we will ultimately, in this section, define the distance of interest in terms of the \emph{coordinates} (think $p$, $M$, $\wavecoord$, and $w$),
	we also record in \fref{Proposition}{prop:dry_Parseval} below how the energy may be written in a third way, in terms of the coordinates (which are $p$, $\wavecoord$, and $w$ in the dry case).

	\begin{prop}[Parseval identity in the dry case]
	\label{prop:dry_Parseval}
		For any $u \in { \left( L^2 \right) }_\sigma^3$ and $\theta\in L^2$ we have that
		\begin{equation*}
			\underbrace{
				\int {\lvert u \rvert}^2 + \theta^2
			}_{E}
			= \int {\lvert \nabla\Delta^{-1} PV \rvert}^2 + {\left\lvert {(\nabla\times)}^{-1} \left( j + (\partial_3 w) e_3,\, a \right) \right\rvert}^2
			= \underbrace{
				\int {\lvert \nabla p \rvert}^2 + {\lvert \wavecoord \rvert}^2 + w^2
			}_{\sim d^2}
		\end{equation*}
		where $p$ solves
		\begin{equation*}
			\Delta p = \underbrace{
				\nabla_h^\perp \cdot u_h + \partial_3 \theta
			}_{PV}
		\end{equation*}
		while $\wavecoord$ solves
		\begin{equation*}
			\nabla\times\wavecoord = \underbrace{
				\partial_3 u_h - \nabla_h^\perp \theta
			}_{j}
			+ (\partial_3 w) e_3
			\text{ subject to }
			\fint \wavecoord = \underbrace{\fint -u_h^\perp + \fint \theta e_3}_{a},
		\end{equation*}
		and where $w \vcentcolon= u_3$.
		Here we write $\wavecoord = {(\nabla\times)}^{-1} (\omega,\,a)$ to mean that $\sigma$ solves
		\begin{equation*}
			\nabla\times\wavecoord = \omega \text{ subject to } \nabla\cdot\wavecoord = 0 \text{ and } \fint \sigma = a.
		\end{equation*}
	\end{prop}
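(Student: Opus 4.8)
The plan is to reduce this Parseval identity to the $L^2$-orthogonality of the Helmholtz decomposition (\fref{Corollary}{cor:Helmholtz}), applied to the dry ``good unknown'' $g \vcentcolon= -u_h^\perp + \theta e_3$ from \fref{Remark}{rmk:good_unknown}. The first observation is that the energy density splits pointwise into a good-unknown part and a vertical-velocity part: since $-u_h^\perp$ is horizontal and $\theta e_3$ is vertical they are orthogonal at every point, so $|u_h|^2 + \theta^2 = |u_h^\perp|^2 + \theta^2 = |g|^2$; together with $|u|^2 = |u_h|^2 + u_3^2$ and $w \vcentcolon= u_3$ this gives
\[
  E = \int |u|^2 + \theta^2 = \int |g|^2 + \int w^2 .
\]
Thus the whole statement reduces to decomposing $\int|g|^2$ correctly.

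The second step is to Helmholtz-decompose $g = \nabla p + \wavecoord$ with $p \in \mathring{H}^1$ and $\wavecoord$ divergence-free, and to use that the potential and solenoidal subspaces are $L^2$-orthogonal (\fref{Corollary}{cor:Helmholtz}), so that $\int|g|^2 = \int|\nabla p|^2 + \int|\wavecoord|^2$. It then remains to identify the two pieces with the measurement-side quantities. By \fref{Lemma}{lemma:div_and_curl_of_good_unknown} together with $\nabla\cdot u = 0$, one computes $\nabla\cdot g = PV$ and $\nabla\times g = j + (\partial_3 w)e_3$ (up to the universal signs, which are irrelevant once squared); since $\nabla\times\nabla p = 0$ and $\wavecoord$ is solenoidal, this forces $\Delta p = PV$, i.e. $\nabla p = \nabla\Delta^{-1}PV$, and $\nabla\times\wavecoord = j + (\partial_3 w)e_3$ with $\fint\wavecoord = \fint g = -\fint u_h^\perp + \fint\theta e_3 = a$, i.e. $\wavecoord = (\nabla\times)^{-1}(j+(\partial_3 w)e_3,\,a)$ in the notation of the statement. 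Substituting these back into $\int|g|^2 + \int w^2 = \int|\nabla p|^2 + \int|\wavecoord|^2 + \int w^2$ yields the stated chain of equalities, the rightmost being the squared coordinate-space norm $d^2$.

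I do not expect a genuine obstacle here: the Helmholtz decomposition with its orthogonality, and the div/curl formulas for the good unknown, are all in hand from earlier in the paper, so what remains is bookkeeping. The one place to be careful is the vertical and average data. One must check that the vertical component of $\nabla\times g$ is exactly $\partial_3 w$ — this is where incompressibility re-enters, through $\nabla_h\cdot u_h = -\partial_3 u_3$ — and that $\fint g = a$, so that the data $(j + (\partial_3 w)e_3,\,a)$ satisfies the compatibility conditions (prescribed curl with zero divergence and correct average) making $(\nabla\times)^{-1}$ well-defined and its output equal to the $\wavecoord$ produced by the Helmholtz decomposition. An entirely equivalent alternative would be a mode-by-mode Fourier computation on $\mathbb{T}^3$: Plancherel turns each $\int|\cdot|^2$ into $\sum_k|\widehat{\cdot}(k)|^2$, $\Delta^{-1}$ and $(\nabla\times)^{-1}$ become explicit multipliers in $k$, and the identity reduces to checking that, on each Fourier mode, the map sending $(\hat u(k),\hat\theta(k))$ to the three $k$-indexed vectors is a linear isometry; but the good-unknown argument is shorter and reuses the paper's own machinery.
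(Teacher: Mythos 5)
Your proof is correct and is essentially the paper's own argument: both rest on the $L^2$-orthogonality $\int \nabla p \cdot \wavecoord = 0$ (valid since $\wavecoord$ is divergence-free) together with \fref{Lemma}{lemma:div_and_curl_of_good_unknown} and incompressibility to identify $\Delta p = PV$, $\nabla\times\wavecoord = j + (\partial_3 w) e_3$, and $\fint \wavecoord = a$. The only difference is the direction of travel --- you Helmholtz-decompose the good unknown $-u_h^\perp + \theta e_3$ via \fref{Theorem}{theorem:Helmholtz} and thereby rederive the dry vortical--wave decomposition, whereas the paper's proof takes that decomposition as given and expands the energy --- and your bookkeeping correctly keeps the $\int w^2$ contribution explicit, which the displayed middle expression of the proposition omits (it should read $\int {\lvert \nabla\Delta^{-1} PV \rvert}^2 + {\lvert {(\nabla\times)}^{-1} ( j + (\partial_3 w) e_3,\, a ) \rvert}^2 + w^2$).
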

	\begin{proof}
		We recall that the decomposition in the dry case is
		\begin{equation*}
			\begin{pmatrix}
				u \\ \theta
			\end{pmatrix} = \begin{pmatrix}
				\nabla_h^\perp p \\ \partial_3 p
			\end{pmatrix} + \begin{pmatrix}
				\wavecoord_h^\perp + we_3 \\ \wavecoord_3
			\end{pmatrix}
		\end{equation*}
		where $\wavecoord$ is divergence-free and satisfies $\nabla_h^\perp\cdot\wavecoord_h = \partial_3 w$.
		We may then compute that
		\begin{equation*}
			PV
			= \nabla_h^\perp\cdot u_h + \partial_3 \theta
			= \Delta p
		\end{equation*}
		while \fref{Lemma}{lemma:horizontal_vertical_decomposition_of_the_curl} tells us that
		\begin{equation*}
			\nabla\times\wavecoord
			= \partial_3 \wavecoord_h^\perp - \nabla_h^\perp \wavecoord_3 + (\nabla_h^\perp \cdot \wavecoord_h) e_3
			= \partial_3 u_h - \nabla_h^\perp \theta + (\partial_3 w) e_3,
		\end{equation*}
		with moreover
		\begin{equation*}
			\fint \wavecoord = \fint \left( - u_h^\perp + \theta e_3 \right),
		\end{equation*}
		as claimed.
		Finally we may compute that, since $\wavecoord$ is divergence-free and hence $\int \wavecoord\cdot\nabla p = 0$,
		\begin{align*}
			E
			= \frac{1}{2} \int {\lvert u \rvert}^2 + \theta^2
			&= \frac{1}{2} \int {\lvert u_h \rvert}^2 + u_3^2 + \theta^2
			\\&= \frac{1}{2} \int {\lvert \nabla_h p + \wavecoord_h \rvert}^2 + w^2 + {(\partial_3 p + \wavecoord_3)}^2
			\\&= \frac{1}{2} \int {\lvert \nabla p \rvert}^2 + {\lvert \wavecoord \rvert}^2 + w^2,
		\end{align*}
		as desired.
	\end{proof}

	The benefit of the approach carried out in this section,
	where we pull back a metric from measurement space onto state space, is that it will immediately produce a metric which agrees with the decomposition.
	The drawback of this approach is that it does not, inherently, relate to the conserved energy in any way.
	To see why a metric produced in this way will automatically agree with the decomposition relies on the following observation:
	if two states have the same $PV$ and $M$, then they must have the same balanced component.
	This is proved in \fref{Lemma}{lemma:PV_and_M_characterize_balanced_states} below.
	Therefore projecting a state onto the balanced set is the same as keeping its $PV$ and $M$ measurements fixed and sending its remaining measurements, namely $j$, $w$, and $a$, to zero.
	This is why a metric pulled back from the measurement space $ \mathcal{M} $ will necessarily agree with the decomposition.
	As mentioned above, this relies on the following result.

	\begin{lemma}[$PV$ and $M$ characterize balanced components]
	\label{lemma:PV_and_M_characterize_balanced_states}
		Given any $ \mathcal{X}\in \mathbb{L}^2_\sigma $ the unique balanced state in $\mathcal{B}$ whose $PV$ and $M $ agree with the $PV$ and $M$ of $ \mathcal{X}$
		is precisely the balanced component $ \mathcal{X}_B$.
	\end{lemma}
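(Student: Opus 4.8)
The plan is to read this lemma off directly from the two results already established: the pre-decomposition of \fref{Theorem}{theorem:pre_decomp} and the uniqueness statement of \fref{Lemma}{lemma:PV_and_M_charac_balanced_states}.

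First I would apply \fref{Theorem}{theorem:pre_decomp} to the given state $\mathcal{X}\in\mathbb{L}^2_\sigma$, obtaining the decomposition $\mathcal{X} = \mathcal{X}_B + \mathcal{X}_W$ with $\mathcal{X}_B\in\mathcal{B}$, $\mathcal{X}_W\in\mathcal{W}$, and --- this being the key property carried over from that theorem --- $PV(\mathcal{X}) = PV(\mathcal{X}_B)$ together with $\mathcal{M}(\mathcal{X}) = \mathcal{M}(\mathcal{X}_B)$. Thus $\mathcal{X}_B$ is \emph{a} balanced state whose $PV$ and $M$ coincide with those of $\mathcal{X}$.

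Second, I would invoke \fref{Lemma}{lemma:PV_and_M_charac_balanced_states}: given the data $PV(\mathcal{X})\in H^{-1}$ and $\mathcal{M}(\mathcal{X})\in L^2$, there is a \emph{unique} element of $\mathcal{B}$ realising them. Since $\mathcal{X}_B$ is one such element, it must be that unique element, which is exactly the assertion of the lemma.

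There is no genuine obstacle here; the substance was already absorbed into \fref{Theorem}{theorem:pre_decomp} and \fref{Lemma}{lemma:PV_and_M_charac_balanced_states}, so the proof is essentially a one-line corollary. If one wished to avoid citing \fref{Lemma}{lemma:PV_and_M_charac_balanced_states} and argue from scratch, the only delicate ingredient would be uniqueness of the balanced state: using the parametrisation $\mathcal{X}_B = \Phi(p,M)$ from \fref{Proposition}{prop:alt_char_balanced_set}, this reduces to $M$ being determined by $\mathcal{M}(\mathcal{X}_B)$ (immediate) and $p$ being the unique solution of the nonlinear $PV$-and-$M$ inversion PDE \eqref{eq:PV_M_inversion_pre_decomp_statement}, which is the well-posedness result of \cite{remond2024nonlinear}.
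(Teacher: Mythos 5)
Your proposal is correct and takes essentially the same route as the paper: the paper's proof simply re-runs the argument of \fref{Lemma}{lemma:PV_and_M_charac_balanced_states} inline (parametrising an arbitrary balanced competitor as $\Phi(p,\,M)$ and invoking uniqueness of nonlinear $PV$-and-$M$ inversion), whereas you obtain the same conclusion by citing that lemma together with item 2 of \fref{Theorem}{theorem:pre_decomp}. Your ``from scratch'' fallback is precisely the paper's own proof.
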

	\begin{proof}
		Let $ \mathcal{X}\in \mathbb{L}^2_\sigma $ and let $ \mathcal{Y} \in \mathcal{B} $ be a balanced state whose $PV$ and $M$ agrees with those of $ \mathcal{X}$,
		i.e.
		\begin{equation*}
			PV ( \mathcal{X} ) = PV ( \mathcal{Y} ) \text{ and } 
			M ( \mathcal{X} ) = M ( \mathcal{Y} ).
		\end{equation*}
		We know from \fref{Proposition}{prop:alt_char_balanced_set} that since $ \mathcal{Y} $ is balanced it is uniquely characterised by $p\in \mathring{H}^1 $ and $M\in L^2$
		via $ \mathcal{Y}  = \Phi (p,\,M)$.
		Moreover we may compute that
		\begin{equation*}
			M(\Phi(p,\,M)) = M
			\text{ and } 
			PV (\Phi (p,\,M)) = \Delta p + \frac{1}{2} \partial_3 {\min}_0\, (M - \partial_3 p).
		\end{equation*}
		By the well-posedness of nonlinear $PV$-and-$M$ inversion, we know that there is a unique $p$ solving
		\begin{equation*}
			\Delta p + \frac{1}{2} \partial_3 {\min}_0\, (M - \partial_3 p) = PV ( \mathcal{Y} ) = PV ( \mathcal{X}),
		\end{equation*}
		and \fref{Proposition}{prop:alt_char_balanced_set} tells us that the resulting state is precisely $ \mathcal{X}_B$.
		This verifies that indeed $ \mathcal{Y} = \mathcal{X}_B$.
	\end{proof}

	Using \fref{Lemma}{lemma:PV_and_M_characterize_balanced_states} above we could then immediately deduce that a distance on $ \mathbb{L}^2_\sigma $ defined via
	\begin{align}
		&d \left( \mathcal{X}_1,\, \mathcal{X}_2 \right)^2
		\vcentcolon= \norm{ \mathcal{M} ( \mathcal{X}_1 ) - \mathcal{M} ( \mathcal{X}_2 ) }{ \mathfrak{M} }^2
	\nonumber\\
		&= \norm{{PV}_1 - {PV}_2}{H^{-1}}^2 + \norm{M_1 - M_2}{L^2}^2 + \norm{j_1 - j_2}{H^{-1}}^2 + \norm{w_1 - w_2}{L^2}^2 + {\lvert a_1 - a_2 \rvert}^2
	\label{eq:naive_distance}
	\end{align}
	would automatically agree with the decomposition.
	However this distance has no chance of agreeing with the energy!
	This is because its dimensions, or units, are all wrong.
	Indeed: this naive distance above treats $PV$ and $M$ on the same footing even though they do not have the same dimensions.
	In order to produce a metric on the state space $ \mathbb{L}^2_\sigma $ pulled back from the measurement space $ \mathcal{M} $ which has a chance of agreeing with the energy
	we must therefore \emph{dimensionalize} the naive distance from \eqref{eq:naive_distance}.

	In order to perform this dimensionalisation we will use the fact that the ``fixed $H$'' energy introduced in \fref{Definition}{def:energy_fixed_Heaviside_1}
	may sometimes be written not in terms of the state $(u,\,\theta,\,q)$ but in terms of its coordinates $(p,\,M,\,\wavecoord,\,w)$.
	To see this, we first define the following.

	\begin{definition}
	\label{def:energy_fixed_Heaviside_coord}
		Fix an indicator function $H$.
		We define, for any $(p,\,M,\,\wavecoord,\,w) \in \mathcal{C}$,
		\begin{equation*}
			\widetilde{E}_H (p,\,M,\,\wavecoord,\,w) \vcentcolon=  \frac{1}{2} \int \underbrace{A_H \nabla p \cdot \nabla p}_{ =\vcentcolon Q_H^+(p)} + \left( 1 + \frac{H}{2} \right) M^2
			+ \underbrace{A_H^{-1} \wavecoord \cdot \wavecoord}_{ =\vcentcolon Q_H^- (\wavecoord)} + w^2,
		\end{equation*}
		where $A_H = I - \frac{H}{2} e_3\otimes e_3$ and so $A_H^{-1} = I + H e_3\otimes e_3$.
	\end{definition}
	\begin{remark}[The elliptic matrix $A_H$ and its inverse]
	\label{rmk:A_H}
		The matrix $A_H$ introduced in \fref{Definition}{def:energy_fixed_Heaviside_coord} above will be encountered many times in the sequel.
		Where does it come from?
		The simplest explanation is that it comes from the linearisation of nonlinear $PV$-and-$M$ inversion.
		Indeed, upon linearising $PV$-and-$M$ inversion
		\begin{equation*}
			\Delta p + \frac{1}{2} \partial_3 {\min}_0\, (M - \partial_3 p) = PV
		\end{equation*}
		we obtain, upon differentiating,
		\begin{equation*}
			\Delta \partial_\alpha \pi + \frac{1}{2} \mathds{1} (M < \partial_3 p) ( \partial_\alpha M - \partial_\alpha \partial_3 \pi) = \partial_\alpha PV.
		\end{equation*}
		In other words $ \partial_\alpha p $ satisfies, for $H \vcentcolon= \mathds{1} (M < \partial_3 p)$,
		\begin{equation*}
			\nabla\cdot ( A_H \nabla \partial_\alpha p) = \partial_\alpha PV - \frac{1}{2} H \partial_\alpha M.
		\end{equation*}

		In particular we may readily verify that the identity
		\begin{equation*}
			\left( 1 + H \right) \left( 1 - \frac{H}{2} \right) = 1
		\end{equation*}
		holds for any $H$ satisfying $H^2 = H$ (which is the case if $H$ is an indicator function),
		and so the inverse of $A_H$ is indeed given by
		\begin{equation*}
			A_H^{-1} = I + H e_3 \otimes e_3.
		\end{equation*}
	\end{remark}
	In general the form of the fixed Heaviside energy provided in \fref{Definition}{def:energy_fixed_Heaviside_coord}
	does not agree with the form of the fixed Heaviside energy provided in \fref{Definition}{def:energy_fixed_Heaviside_1}.
	Sometimes, they do.
	\begin{prop}[Rewriting the energy]
	\label{prop:rewrite_the_energy}
		Let $E_H$ be as introduced in \fref{Definition}{def:energy_fixed_Heaviside_1}
		and $\widetilde{E}_H$ be as in \fref{Definition}{def:energy_fixed_Heaviside_coord}.
		Let $(u,\,\theta,q) \in \mathbb{L}^2_\sigma $ be corresponding coordinates $(p,\,M,\,\wavecoord,\,w)$ as per \fref{Corollary}{cor:decomposition}.
		In general we have that
		\begin{equation*}
			E_H (u,\,\theta,\,q) = \frac{1}{2} \int {\lvert u_h \rvert}^2 + u_3^2 + \left( 1 - \frac{H}{2} \right) {(\theta - qH)}^2 + \left( 1 + \frac{H}{2} \right) {(\theta + q)}^2.
		\end{equation*}
		Moreover, if $q$ and its balanced component $q_B$ share the same interface, i.e. if
		\begin{equation*}
			H \vcentcolon=  \mathds{1} (q < 0) = \mathds{1}(q_B < 0),
		\end{equation*}
		where note that $\mathds{1} (q_B < 0) = \mathds{1}(M < \partial_3 p)$, then
		\begin{equation*}
			E_H(u,\,\theta,\,q) = \widetilde{E}_H(p,\,M,\,\wavecoord,\,w).
		\end{equation*}
	\end{prop}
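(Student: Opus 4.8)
The plan is to prove the two displayed identities in turn: the first by a short pointwise algebraic computation, and the second by inserting the parametrisation of \fref{Corollary}{cor:decomposition} and exploiting the shared-interface hypothesis to linearise the nonlinearity. Throughout, the only nontrivial input is the idempotency $H^2 = H$ of an indicator function.

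For the general identity, I would start from $E_H(\mathcal{X}) = \int \lvert u\rvert^2 + \theta^2 + q^2 H + (\theta+q)^2$ (\fref{Definition}{def:energy_fixed_Heaviside_1}), split $\lvert u\rvert^2 = \lvert u_h\rvert^2 + u_3^2$, and verify the pointwise identity
\begin{equation*}
	\theta^2 + q^2 H = \Bigl(1 - \tfrac{H}{2}\Bigr)(\theta - qH)^2 + \tfrac{H}{2}(\theta+q)^2 .
\end{equation*}
Expanding the right-hand side, using $H^2 = H$ to replace $q^2 H^2$ by $q^2 H$ and to collapse the $\theta q H$ terms, reduces it to $\theta^2 + q^2 H$; adding one more copy of $(\theta+q)^2$ on both sides turns $\tfrac{H}{2}$ into $1 + \tfrac{H}{2}$ and yields the claimed formula (up to the overall normalisation convention for $E_H$). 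This first step is entirely routine.

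For the second identity, I would first note that the hypothesis $H = \mathds{1}(q<0) = \mathds{1}(q_B<0) = \mathds{1}(M < \partial_3 p)$ is exactly what is needed to write ${\min}_0\,(M - \partial_3 p) = H\,(M - \partial_3 p)$. Substituting this into the parametrisation of \fref{Corollary}{cor:decomposition} gives
\begin{equation*}
	\theta = \partial_3 p + \tfrac{H}{2}(M - \partial_3 p) + \wavecoord_3, \qquad
	q = \Bigl(1 - \tfrac{H}{2}\Bigr)(M - \partial_3 p) - \wavecoord_3 ,
\end{equation*}
whence immediately $\theta + q = M$ and, using $H^2 = H$ once more, $\theta - qH = \partial_3 p + (1+H)\wavecoord_3$. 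Plugging these and $u_h = \nabla_h^\perp p + \wavecoord_h^\perp$, $u_3 = w$ into the formula from the first part, the claim reduces to showing
\begin{equation*}
	\int \lvert \nabla_h^\perp p + \wavecoord_h^\perp \rvert^2 + \Bigl(1 - \tfrac{H}{2}\Bigr)\bigl(\partial_3 p + (1+H)\wavecoord_3\bigr)^2
	= \int A_H \nabla p\cdot\nabla p + A_H^{-1}\wavecoord\cdot\wavecoord .
\end{equation*}
I would then expand $\lvert \nabla_h^\perp p + \wavecoord_h^\perp\rvert^2 = \lvert\nabla_h p\rvert^2 + \lvert\wavecoord_h\rvert^2 + 2\,\nabla_h p\cdot\wavecoord_h$, integrate by parts using $\nabla\cdot\wavecoord = 0$ (on the torus) to rewrite $\int \nabla_h p\cdot\wavecoord_h = -\int \partial_3 p\,\wavecoord_3$, and expand the second square using the identities $(1-\tfrac{H}{2})(1+H)^2 = 1+H$ and $(1-\tfrac{H}{2})\cdot 2(1+H) = 2$ (both consequences of $H^2 = H$). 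The two $\partial_3 p\,\wavecoord_3$ cross terms then cancel exactly, and the remaining terms regroup as $\lvert\nabla_h p\rvert^2 + (1-\tfrac{H}{2})(\partial_3 p)^2 = A_H\nabla p\cdot\nabla p$ and $\lvert\wavecoord_h\rvert^2 + (1+H)\wavecoord_3^2 = A_H^{-1}\wavecoord\cdot\wavecoord$, with $u_3^2 = w^2$ and the $(1+\tfrac{H}{2})M^2$ term carried through unchanged; this is precisely $\widetilde{E}_H(p,M,\wavecoord,w)$ of \fref{Definition}{def:energy_fixed_Heaviside_coord}.

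There is no genuine obstacle here, but the step requiring the most care is the interplay between the nonlinearity and the $H^2 = H$ simplifications: the conceptual point is that the shared-interface hypothesis is exactly what permits the replacement ${\min}_0\,(M-\partial_3 p) = H(M-\partial_3 p)$, and that the two families of cross terms — one coming from the Helmholtz structure of $u_h$, the other from the expansion of $(\theta - qH)^2$ — annihilate each other after a single integration by parts. Everything else is forced, and one may alternatively bypass the case/substitution bookkeeping by invoking \fref{Lemmas}{lemma:invert_b_and_M} and \ref{lemma:identity_lin_buoyancy_and_xi} to obtain $\theta - qH = \partial_3 p + (1+H)\wavecoord_3$ directly.
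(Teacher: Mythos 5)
Your proposal is correct and follows essentially the same route as the paper: the same pointwise identity $\theta^2 + q^2H = (1-\tfrac{H}{2})(\theta-qH)^2 + \tfrac{H}{2}(\theta+q)^2$ via $H^2=H$, and the same substitution $\theta+q=M$, $\theta-qH=\partial_3 p+(1+H)\wavecoord_3$ (the paper simply cites \fref{Corollary}{cor:decomposition} and \fref{Lemma}{lemma:identity_lin_buoyancy_and_xi}, as you note one may). The only difference is cosmetic: you cancel the cross terms by a componentwise integration by parts, whereas the paper groups them as $A_H(\nabla p + A_H^{-1}\wavecoord)\cdot(\nabla p + A_H^{-1}\wavecoord)$ and uses $\int\nabla p\cdot\wavecoord=0$ directly.
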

	\begin{proof}
		The first identity follows from an immediate computation since $H^2 = H$ and hence
		\begin{equation*}
			\theta^2 + q^2 H
			= \left( 1 - \frac{H}{2} \right) {(\theta - qH)}^2 + \frac{H}{2} {(\theta + q)}^2.
		\end{equation*}
		To deduce the second identity we first note that \fref{Corollary}{cor:decomposition} and \fref{Lemma}{lemma:identity_lin_buoyancy_and_xi} tell us that,
		if $H = H_B \vcentcolon= \mathds{1}(q_B < 0) = \mathds{1} (M<\partial_3 p)$, then
		\begin{equation*}
			\left\{
			\begin{aligned}
				&u_h = \nabla_h^\perp p + \wavecoord_h^\perp,\\
				&u_3 = w,\\
				&\theta - qH = \partial_3 p + (1+H)\wavecoord_3, \text{ and } \\
				&\theta + q = M.
			\end{aligned}
			\right.
		\end{equation*}
		Therefore
		\begin{align*}
			&\int {\lvert u_h \rvert}^2 + u_3^2 + \left( 1 - \frac{H}{2} \right) {(\theta - qH)}^2 + \left( 1 + \frac{H}{2} \right) {(\theta + q)}^2
		\\
			&= \int {\lvert \nabla_h p + \wavecoord_h \rvert}^2 + w^2 + \left( 1 - \frac{H}{2} \right) { \left( \partial_3 p + (1+H) \wavecoord_3 \right)}^2 + \left( 1 + \frac{H}{2} \right) M^2.
		\end{align*}
		Finally, for $A_H$ and $A_H^{-1}$ as in \fref{Definition}{def:energy_fixed_Heaviside_coord} we compute that, since $A_H$ and $A_H^{-1}$ are symmetric and since $\wavecoord$ is divergence-free,
		\begin{align*}
			\int {\lvert \nabla_h p + \wavecoord_h \rvert}^2 + \left( 1 - \frac{H}{2} \right) { \left( \partial_3 p + (1+H) \wavecoord \right)}^2
			&= \int A_H (\nabla p + A_H^{-1} \wavecoord) \cdot ( \nabla p + A_H^{-1} \wavecoord)
		\\
			&= \int A_H \nabla p \cdot \nabla p + 2 \underbrace{\int \nabla p \cdot \wavecoord}_{=0} + \int A_H^{-1} \wavecoord \cdot \wavecoord.
		\end{align*}
		The second identity follows.
	\end{proof}
	This is a good point to take a stock.
	Recall that we seek to define a metric pulled back from the measurement space $ \mathcal{M} $ onto the state space $ \mathbb{L}^2_\sigma $.
	Such a metric will automatically agree with the decomposition but, in order to make it agree with the energy, we must dimensionalize it appropriately. 
	The identity recorded in \fref{Proposition}{prop:rewrite_the_energy} above tells us how to do this.
	The last remaining piece required to define our pullback metric is thus to identify how the measurements ($PV$, $M$, $j$, $w$, and $a$)
	will determine the coordinates ($p$, $M$, $\wavecoord$, and $w$).
	Note that such a map is already implicit in \fref{Corollary}{cor:decomposition}, however we will do something slightly different:
	since the ``fixed $H$'' energy already uses a fixed Heaviside, we will leverage this to define a map taking measurements to coordinates in a \emph{linear} way.
	This linearity will make the ensuing pullback distance easier to work with.

	\begin{lemma}[Definition and invertibility of the fixed Heaviside parametrisation $\mathcal{C}_H$]
	\label{lemma:def_and_invertibility_of_C_H}
		Fix an indicator function $H$ and define $A_H \vcentcolon= I - \frac{H}{2} e_3\otimes e_3$, such that $A_H^{-1} = I + H e_3\otimes e_3$
		(since $(1-H/2)(1+H) \equiv 1$).
		We define $\mathcal{C}_H : \mathfrak{M} \to \mathfrak{C}$, where $\mathfrak{C}$ is the \emph{coordinate space} defined in \eqref{eq:def_mathfrak_C}.
		Given $(PV,\,M,\,j,\,w,\,a)\in \mathfrak{M} $, let $M \vcentcolon= M$, $w \vcentcolon= w$, and let $p$ and $\wavecoord$ solve
		\begin{equation}
			\nabla\cdot(A_H\nabla p) = PV - \frac{1}{2} \partial_3 (HM)
		\label{eq:def_C_H_inv_1}
		\end{equation}
		and
		\begin{equation}
			\nabla\times (A_H^{-1}\wavecoord) = j + (\partial_3 w) e_3
			\text{ subject to } \nabla\cdot \wavecoord = 0 \text{ and } \fint A_H^{-1} \wavecoord = a.
		\label{eq:def_C_H_inv_2}
		\end{equation}
		The map $ \mathcal{C}_H$ is well-defined and invertible.
	\end{lemma}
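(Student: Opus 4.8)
The plan is to follow the skeleton of the proof of \fref{Proposition}{prop:global_chg_coord}, but with the two nonlinear steps there (nonlinear $PV$-and-$M$ inversion, and recovery of a field from its curl) replaced by the \emph{linear} problems \eqref{eq:def_C_H_inv_1}--\eqref{eq:def_C_H_inv_2}. Since $M$ and $w$ are simply carried across, the content is: (i) produce $p$ and $\wavecoord$ from the remaining data; (ii) check that the resulting $(p,\,M,\,\wavecoord,\,w)$ lands in $\mathfrak{C}$; (iii) exhibit an explicit inverse and verify the two compositions are the identity.

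For $p$: the matrix $A_H = I - \tfrac{1}{2} H e_3\otimes e_3$ is a bounded, measurable, symmetric matrix field with $\tfrac{1}{2} I \leqslant A_H \leqslant I$, so $p\mapsto -\nabla\cdot(A_H\nabla p)$ is uniformly elliptic in divergence form. The right-hand side $PV - \tfrac{1}{2}\partial_3(HM)$ of \eqref{eq:def_C_H_inv_1} lies in $H^{-1}$ because $PV\in H^{-1}$ and $HM\in L^2$. Hence the bilinear form $(p,\,\varphi)\mapsto \int A_H\nabla p\cdot\nabla\varphi$ on $\mathring{H}^1$ is bounded and, by uniform ellipticity together with the Poincar\'e inequality on mean-zero fields, coercive, so Lax--Milgram gives a unique $p\in\mathring{H}^1$ solving \eqref{eq:def_C_H_inv_1}. (This is exactly the well-posedness of the linearisation of nonlinear $PV$-and-$M$ inversion from \fref{Remark}{rmk:A_H} and \cite{remond2024nonlinear}.)

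For $\wavecoord$: it is cleanest to solve first for $\eta \vcentcolon= A_H^{-1}\wavecoord$, so that \eqref{eq:def_C_H_inv_2} becomes the problem of finding $\eta\in L^2$ with $\nabla\times\eta = j + (\partial_3 w) e_3 =\vcentcolon \omega$, $\fint\eta = a$, and $\nabla\cdot(A_H\eta) = 0$, and then setting $\wavecoord \vcentcolon= A_H\eta$. Assuming the compatibility $\nabla\cdot\omega = \nabla_h\cdot j + \partial_3^2 w = 0$ carried by $\mathfrak{M}$ (this is what makes the curl equation solvable, and it holds precisely because these measurements arise from divergence-free velocities, exactly as in the computation inside the proof of \fref{Proposition}{prop:global_chg_coord}), the recovery of a periodic field from its curl and mean --- a Helmholtz-type argument, cf.\ \fref{Theorem}{theorem:Helmholtz}, as already used in the proof of \fref{Proposition}{prop:global_chg_coord} --- produces some $\eta_*\in L^2$ with $\nabla\times\eta_* = \omega$ and $\fint\eta_* = a$, and every other solution of these two conditions has the form $\eta_* + \nabla\phi$ with $\phi\in\mathring{H}^1$. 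The requirement $\nabla\cdot(A_H(\eta_* + \nabla\phi)) = 0$ is then the elliptic equation $\nabla\cdot(A_H\nabla\phi) = -\nabla\cdot(A_H\eta_*)$, whose right-hand side is a divergence, hence in $H^{-1}$, so Lax--Milgram yields a unique $\phi\in\mathring{H}^1$; one sets $\eta \vcentcolon= \eta_* + \nabla\phi$ and $\wavecoord \vcentcolon= A_H\eta$. Then $\wavecoord\in { \left( L^2 \right) }_\sigma^3$ by construction, and since $A_H$ alters only the vertical component one has $\wavecoord_h = \eta_h$, so $\nabla_h^\perp\cdot\wavecoord_h = \partial_1\wavecoord_2 - \partial_2\wavecoord_1 = { \left[ \nabla\times\eta \right] }_3 = \omega_3 = \partial_3 w$ and therefore $(p,\,M,\,\wavecoord,\,w)\in\mathfrak{C}$. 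Uniqueness of $\wavecoord$ (hence well-definedness of $\mathcal{C}_H$) follows because any two solutions differ by $A_H\nabla\psi$ with $\psi\in\mathring{H}^1$ solving $\nabla\cdot(A_H\nabla\psi) = 0$, which forces $\psi = 0$.

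For invertibility I would write down $\mathcal{C}_H^{-1}:\mathfrak{C}\to\mathfrak{M}$ explicitly: given $(p,\,M,\,\wavecoord,\,w)\in\mathfrak{C}$, set $PV \vcentcolon= \nabla\cdot(A_H\nabla p) + \tfrac{1}{2}\partial_3(HM)\in H^{-1}$, $j \vcentcolon= { \left[ \nabla\times(A_H^{-1}\wavecoord) \right] }_h\in { \left( H^{-1} \right) }^2$, $a \vcentcolon= \fint A_H^{-1}\wavecoord\in\mathbb{R}^3$, and keep $M$ and $w$; the vertical component of $\nabla\times(A_H^{-1}\wavecoord)$ equals $\nabla_h^\perp\cdot\wavecoord_h = \partial_3 w$ because $(p,\,M,\,\wavecoord,\,w)\in\mathfrak{C}$, so $\nabla\times(A_H^{-1}\wavecoord) = j + (\partial_3 w)e_3$ and the compatibility condition holds automatically for this tuple. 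That $\mathcal{C}_H\circ\mathcal{C}_H^{-1}$ and $\mathcal{C}_H^{-1}\circ\mathcal{C}_H$ are the identity is then immediate from the uniqueness parts of the $p$- and $\wavecoord$-steps. I expect the $\wavecoord$-step to be the main obstacle: making the three constraints (prescribed curl, prescribed mean $a$, and $A_H$-weighted divergence-free) interlock correctly, handling the compatibility condition $\nabla\cdot(j + (\partial_3 w)e_3) = 0$ with care, and keeping the function-space bookkeeping ($H^{-1}$ data versus $L^2$ fields) straight. The $p$-step is a routine Lax--Milgram argument, and invertibility is pure bookkeeping once well-definedness is in hand.
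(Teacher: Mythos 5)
Your proposal is correct and takes essentially the same route as the paper: the $p$-step is the same uniform-ellipticity/Lax--Milgram argument, and your inline construction of $\wavecoord$ (Helmholtz curl-inversion for $\eta = A_H^{-1}\wavecoord$ followed by the gradient correction solving $\nabla\cdot(A_H\nabla\phi) = -\nabla\cdot(A_H\eta_*)$) is precisely the proof of \fref{Proposition}{prop:solvability_div_curl_system}, which the paper simply invokes with $B = A_H^{-1}$. The explicit inverse formula, the check that the output satisfies $\nabla_h^\perp\cdot\wavecoord_h = \partial_3 w$ and hence lands in $\mathfrak{C}$, and your caveat about the compatibility $\nabla_h\cdot j + \partial_3^2 w = 0$ are all consistent with the paper, which handles the latter by posing the curl data in ${\left( \mathring{H}^1_\sigma \right)}^*$.
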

	Note that if $H = H_B \vcentcolon= \mathds{1}(q_B < 0) = \mathds{1}(M<\partial_3 p)$ then $\mathcal{C}_H$ is precisely the map implicit in \fref{Corollary}{cor:decomposition}
	which assigns the coordinates $p$, $M$, $\wavecoord$, and $w$ to a state $(u,\,\theta,\,q)$ through its measurements $PV$, $M$, $j$, $w$, and $a$.
	\begin{proof}[Proof of \fref{Lemma}{lemma:def_and_invertibility_of_C_H}]
		It suffices to show that both \eqref{eq:def_C_H_inv_1} and \eqref{eq:def_C_H_inv_2} have a unique solution once $PV$, $M$, $j$, $w$, and $a$ are fixed.
		The well-posedness of \eqref{eq:def_C_H_inv_1} follows from the fact that $A_H$ is uniformly elliptic since $A_H(x) \geqslant \frac{1}{2} I$ for all $x$.
		The well-posedness of \eqref{eq:def_C_H_inv_2} follows from the fact that $A_H^{-1}$ is uniformly elliptic, since $A_H^{-1}(x) \geqslant I$ for all $x$,
		and so \fref{Proposition}{prop:solvability_div_curl_system} applies to \eqref{eq:def_C_H_inv_2}.
	\end{proof}

	\begin{remark}
	\label{rmk:def_C_H}
		In the definition of $ \mathcal{C}_H$ we are linearising the coordinates, but \emph{not} the measurements.
		This is because it is the full nonlinear measurements which are well-adapted to the balanced $\mathcal{B}$ and to the wave set $\mathcal{W}$,
		so changing those would mean that the resulting distance (in \fref{Definition}{def:dimensionalized_Parseval_distance} below) would fail to be compatible with the energy.
		It should not yet be clear why using linearised versions of the coordinates is particularly helpful.
		This will be discussed in \fref{Section}{sec:metric_less_approach} below.
	\end{remark}

	We are now ready to define a dimensionalized version of the naive pullback metric of \eqref{eq:naive_distance}.

	\begin{definition}[Dimensionalized Parseval distance]
	\label{def:dimensionalized_Parseval_distance}
		Fix an indicator function $H$.
		Define $d_H : \mathbb{L}^2_\sigma \times \mathbb{L}^2_\sigma \to [0,\,\infty)$ as follows:
		\begin{equation*}
			\frac{1}{2} {d_H ( \mathcal{X}_1,\, \mathcal{X}_2 ) }^2
			= \widetilde{E}_H \left( \mathcal{C}_H \left[ \mathcal{M} ( \mathcal{X}_1 ) - \mathcal{M} ( \mathcal{X}_2 ) \right] \right).
		\end{equation*}
		In other words, given $ \mathcal{X}_1,\,\mathcal{X}_2 \in \mathbb{L}^2_\sigma $, let
		\begin{equation*}
			M \vcentcolon= M(\mathcal{X}_1) - M(\mathcal{X}_2),\,
			w \vcentcolon= w(\mathcal{X}_1) - w(\mathcal{X}_2), \text{ and } 
			a \vcentcolon= a(\mathcal{X}_1) - a(\mathcal{X}_2),
		\end{equation*}
		let $p$ solve, for $PV \vcentcolon= {PV}(\mathcal{X}_1) - {PV}(\mathcal{X}_2)$,
		\begin{equation*}
			\nabla\cdot ( A_H \nabla p) = PV - \frac{1}{2} \partial_3 (HM)
		\end{equation*}
		and let $\wavecoord$ solve, for $j \vcentcolon= j(\mathcal{X}_1) - j(\mathcal{X}_2)$,
		\begin{equation*}
			\nabla\times (A_H^{-1}\wavecoord) = j + (\partial_3 w) e_3
			\text{ subject to } \fint A_H^{-1} \wavecoord = a \text{ and } \nabla\cdot\wavecoord = 0.
		\end{equation*}
		Then
		\begin{equation*}
			\frac{1}{2} { d_H( \mathcal{X}_1,\, \mathcal{X}_2 )}^2
			= \int Q_H^+ (p) + \left( 1 + \frac{H}{2} \right) M^2 + Q_H^- (\wavecoord) + w^2
		\end{equation*}
		for $Q_H^\pm$ as introduced in \fref{Definition}{def:energy_fixed_Heaviside_coord}.
	\end{definition}
	\begin{prop}
	\label{prop:dimensionalized_Parseval_distance}
		For any indicator function $H$ the function $d_H$ defined in \fref{Definition}{def:dimensionalized_Parseval_distance} is a metric
		which agrees with the extraction of a balanced component.
		Moreover this distance is compatible with the energy when $H = H_1 = H_{B,\,1} = H_2 = H_{B,\,2}$, i.e. in that case
		\begin{align*}
			\frac{1}{2} {d_H \left( \mathcal{X}_1,\, \mathcal{X}_2 \right)}^2
			&= E_H ( \mathcal{X}_1 - \mathcal{X}_2)
		\\
			&= \int {\lvert u_1 - u_2 \rvert}^2 + {(\theta_1 - \theta_2)}^2 + {(q_1 - q_2)}^2 H + {(M_1 - M_2)}^2
		\end{align*}
		where $H = \mathds{1} (q_1 < 0) = \mathds{1} (q_2 < 0)$.
	\end{prop}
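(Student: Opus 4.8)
The plan is to handle the three assertions in turn — that $d_H$ is a metric, that it singles out the balanced component, and that it reduces to $E_H$ of the difference when the two phases coincide — the unifying point being that once the indicator $H$ is frozen the map $\mathcal{C}_H$ is \emph{linear}, so $d_H$ is the pullback, along the invertible measurement map $\mathcal{M}$, of a quadratic norm. For the metric property I would first observe that, by \fref{Definition}{def:energy_fixed_Heaviside_coord} together with the pointwise bounds $A_H \geq \tfrac{1}{2} I$, $A_H^{-1} \geq I$ and $1 + \tfrac{H}{2} \geq 1$ (valid since $H^2 = H$), the functional $\widetilde{E}_H$ is a positive-definite quadratic form on the coordinate space $\mathfrak{C}$, equivalent to its natural norm; hence $c \mapsto \sqrt{2\widetilde{E}_H(c)}$ is a norm on $\mathfrak{C}$. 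Since $\mathcal{C}_H \colon \mathfrak{M} \to \mathfrak{C}$ is linear (by inspection of its definition) and invertible (\fref{Lemma}{lemma:def_and_invertibility_of_C_H}), the map $m \mapsto \sqrt{2\widetilde{E}_H(\mathcal{C}_H m)}$ is a norm on $\mathfrak{M}$, so $d_H(\mathcal{X}_1,\mathcal{X}_2) = \sqrt{2\widetilde{E}_H(\mathcal{C}_H[\mathcal{M}(\mathcal{X}_1) - \mathcal{M}(\mathcal{X}_2)])}$ is the pullback of the associated distance along $\mathcal{M}$; as $\mathcal{M}$ is injective (\fref{Proposition}{prop:global_chg_coord}), symmetry, nonnegativity, the triangle inequality, and the vanishing-forces-equality axiom all transfer at once.

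For compatibility with the extraction of the balanced component, I would fix $\mathcal{X} \in \mathbb{L}^2_\sigma$ and let $\mathcal{Y}$ range over $\mathcal{B}$. By \fref{Proposition}{prop:alt_char_balanced_set} the measurements $j$, $w$, $a$ vanish on $\mathcal{B}$, so the $(j,w,a)$-part of $\mathcal{M}(\mathcal{X}) - \mathcal{M}(\mathcal{Y})$ equals $(j(\mathcal{X}), w(\mathcal{X}), a(\mathcal{X}))$ independently of $\mathcal{Y}$; reading off \fref{Definition}{def:dimensionalized_Parseval_distance}, the $Q_H^-(\wavecoord)$ and $w^2$ contributions to $\tfrac{1}{2} d_H(\mathcal{X},\mathcal{Y})^2$ are therefore $\mathcal{Y}$-independent, while the remaining part is $\int Q_H^+(p) + (1 + \tfrac{H}{2})(M(\mathcal{X}) - M(\mathcal{Y}))^2$, with $p$ solving $\nabla\cdot(A_H\nabla p) = (PV(\mathcal{X}) - PV(\mathcal{Y})) - \tfrac{1}{2}\partial_3(H(M(\mathcal{X}) - M(\mathcal{Y})))$. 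Since $1 + \tfrac{H}{2} \geq 1$ and $A_H$ is uniformly elliptic, this quantity is nonnegative and vanishes exactly when $M(\mathcal{Y}) = M(\mathcal{X})$ and $\nabla p = 0$, hence — as $p$ has zero average — also when $PV(\mathcal{Y}) = PV(\mathcal{X})$. By \fref{Lemma}{lemma:PV_and_M_charac_balanced_states} there is a unique such $\mathcal{Y}$, and \fref{Lemma}{lemma:PV_and_M_characterize_balanced_states} identifies it as $\mathcal{X}_B$, so $\mathcal{X}_B$ is the unique minimizer of $d_H(\mathcal{X},\cdot)$ over $\mathcal{B}$.

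For the energy identity, the hypothesis $H = \mathds{1}(q_1 < 0) = \mathds{1}(q_2 < 0)$ gives $\min_0 q_i = q_i H$ for $i = 1,2$, so the nonlinearity in $\mathcal{M}$ linearizes along the segment from $\mathcal{X}_2$ to $\mathcal{X}_1$ and $\mathcal{M}(\mathcal{X}_1) - \mathcal{M}(\mathcal{X}_2) = \mathcal{M}_H(\widehat{\mathcal{X}})$, where $\widehat{\mathcal{X}} \vcentcolon= \mathcal{X}_1 - \mathcal{X}_2 = (\widehat u, \widehat\theta, \widehat q)$ and $\mathcal{M}_H$ is the linear map obtained from $\mathcal{M}$ by replacing $\min_0(\cdot)$ with $H(\cdot)$. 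I would then verify — this being the frozen-$H$, fully linear analogue of the change-of-coordinates statements (\fref{Proposition}{prop:global_chg_coord} and \fref{Corollary}{cor:decomposition}), with $\min_0$ replaced throughout by multiplication by $H$ and all inversions remaining unique because $A_H$ and $A_H^{-1}$ are uniformly elliptic — that $\mathcal{C}_H \circ \mathcal{M}_H$ recovers from $\widehat{\mathcal{X}}$ exactly the coordinates $(p, M, \wavecoord, w)$ that reconstruct it through $\widehat u_h = \nabla_h^\perp p + \wavecoord_h^\perp$, $\widehat u_3 = w$, $\widehat\theta - \widehat q\, H = \partial_3 p + (1+H)\wavecoord_3$, $\widehat\theta + \widehat q = M$. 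Substituting this parametrization into $E_H(\widehat{\mathcal{X}})$ and running the algebra from the proof of \fref{Proposition}{prop:rewrite_the_energy} — rewrite $\widehat\theta^2 + \widehat q^2 H$ as $(1 - \tfrac{H}{2})(\widehat\theta - \widehat q H)^2 + \tfrac{H}{2}(\widehat\theta + \widehat q)^2$, recognize $\int |\nabla_h p + \wavecoord_h|^2 + (1 - \tfrac{H}{2})(\partial_3 p + (1+H)\wavecoord_3)^2$ as $\int A_H(\nabla p + A_H^{-1}\wavecoord)\cdot(\nabla p + A_H^{-1}\wavecoord)$, and use $\int \nabla p \cdot \wavecoord = 0$ since $\wavecoord$ is divergence-free — turns $E_H(\widehat{\mathcal{X}})$ into $\int Q_H^+(p) + (1 + \tfrac{H}{2})M^2 + Q_H^-(\wavecoord) + w^2$, which is $\tfrac{1}{2} d_H(\mathcal{X}_1,\mathcal{X}_2)^2$ by \fref{Definition}{def:dimensionalized_Parseval_distance}; the displayed formula then follows since $M(\widehat{\mathcal{X}}) = M_1 - M_2$. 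The one step requiring genuine care is this last bookkeeping for $\mathcal{C}_H \circ \mathcal{M}_H$ — re-establishing, in the frozen-$H$ linear setting, that the measurement-to-coordinate inversion reproduces the reconstruction parametrization; everything else is a direct consequence of the constructions already in place.
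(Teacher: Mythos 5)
Your proposal is correct and follows essentially the same route as the paper: the metric property via pulling back the positive-definite quadratic form $\widetilde{E}_H$ through the linear invertible $\mathcal{C}_H$ and the injective $\mathcal{M}$, the balanced-component agreement by noting the $(j,w,a)$-part is $\mathcal{Y}$-independent and positive-definiteness forces $p=0$, $M=0$, hence $PV(\mathcal{Y})=PV(\mathcal{X})$ and $M(\mathcal{Y})=M(\mathcal{X})$ so $\mathcal{Y}=\mathcal{X}_B$, and the energy identity via the frozen-$H$ linearisation together with the algebra of \fref{Proposition}{prop:rewrite_the_energy}. The only cosmetic difference is that in the last part you linearise $\mathcal{M}$ on the difference state $\mathcal{X}_1-\mathcal{X}_2$ and check $\mathcal{C}_H\circ\mathcal{M}_H$ reproduces its frozen-$H$ coordinates, whereas the paper verifies that $\mathcal{C}_H\circ\mathcal{M}$ is exact on each $\mathcal{X}_i$ separately and then subtracts using linearity of $\mathcal{C}_H$; the two computations are equivalent.
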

	\begin{proof}
		First we show that $d_H$ is a metric.
		Since $\widetilde{E}_H$ is positive-definite and viewed here as taking arguments in $ \mathfrak{C}$ it is a quadratic form which induces an inner product,
		and hence a metric which we denote $d_E$, on $ \mathfrak{C}$.
		Since $ \mathcal{C}_H$ is linear and since both $ \mathcal{C}_H$ and $ \mathcal{M} $ are invertible
		(see \fref{Proposition}{prop:global_chg_coord} and \fref{Lemma}{lemma:def_and_invertibility_of_C_H} )
		and hence injective it follows from \fref{Proposition}{prop:pullback_metric} that $d_H$ is the pullback metric of $d_E$ under $ \mathcal{C}_H \circ \mathcal{M} $.
		So $d_H$ is indeed a metric on $ \mathbb{L}^2_\sigma $.

		Now we show that $d_H$ conditionally agrees with the energy.
		Suppose that $H = H_1 = H_{B,\,1} = H_2 = H_{B,\,2}$.
		The argument is easier to follow if we rewrite $d_H$ as follows, using the fact that $ \mathcal{C}_H $ is linear:
		\begin{equation*}
			\frac{1}{2} { d_H ( \mathcal{X}_1,\, \mathcal{X}_2 )}^2
			= \widetilde{E}_H \left[
				\mathcal{C}_H \left( \mathcal{M} \left( \mathcal{X}_1 \right) \right)
				- \mathcal{C}_H \left( \mathcal{M} \left( \mathcal{X}_2 \right) \right)
			\right].
		\end{equation*}
		Let $(p_1,\,M_1,\,\wavecoord_1,\,w_1) \vcentcolon= \mathcal{C}_H \left( \mathcal{M} \left( \mathcal{X}_1 \right) \right)$.
		Since $H = H_{B,\,1} = \mathds{1} (M_1 < \partial_3 p_1)$ we see that
		\begin{align}
			&\nabla\cdot \left( A_H \nabla p_1 \right) = {PV}_1 - \frac{1}{2} \partial_3 ( HM_1)
		\nonumber\\
			&\iff \Delta p_1 + \frac{1}{2} \partial_3 {\min}_0\, (M_1 - \partial_3 p_1) = {PV}_1.
		\label{eq:dimensionalize_Parseval_distance_PDE_p}
		\end{align}
		Similarly, since $H_1 = H_{B,\,1}$ we have that
		\begin{align*}
			j_1
			&= \partial_3 u_{h,\,1} - \nabla_h^\perp (\theta_1 - {\min}_0\, q_1)
		\\
			&= \partial_3 u_{h,\,1} - \nabla_h^\perp (\theta_1 - H_{B,\,1}  q_1)
		\end{align*}
		and so $\wavecoord_1$ solves
		\begin{equation}
			\nabla\times (A_H^{-1} \wavecoord_1)
			= \partial_3 u_{h,\,1} - \nabla_h^\perp ( \theta_1 - H_{B,\,1} q_1 ) + (\partial_3 w_1) e_3.
		\label{eq:dimensionalize_Parseval_distance_PDE_xi}
		\end{equation}
		We deduce from \eqref{eq:dimensionalize_Parseval_distance_PDE_p} and \eqref{eq:dimensionalize_Parseval_distance_PDE_xi}
		that $p_1$ and $\wavecoord_1$ are precisely characterized as in \fref{Theorem}{theorem:pre_decomp} by
		\begin{equation*}
			\mathcal{X}_1 = \Phi(p_1,\,M_1) + \Psi(\wavecoord_1,\,w_1).
		\end{equation*}
		In other words: the approximate inversion $ \mathcal{C}_H \circ \mathcal{M} $ is \emph{exact} when $H = H_1 = H_{B,\,1}$.
		Moreover \fref{Proposition}{prop:rewrite_the_energy} tells us that $\widetilde{E}_H$ agrees with $E_H$ and with the nonlinear energy when $H = H_1 = H_{B,\,1}$
		and so we conclude that indeed if $H = H_1 = H_{B,\,1} = H_2 = H_{B,\,2}$ then
		\begin{align*}
			\frac{1}{2} { d_H ( \mathcal{X}_1,\, \mathcal{X}_2 )}^2
			&= E_H ( \mathcal{X}_1 - \mathcal{X}_2 )
		\\
			&= \int {\lvert u_1 - u_2 \rvert}^2 + {(\theta_1 - \theta_2)}^2 + {(q_1 - q_2)}^2 H + {(M_1 - M_2)}^2,
		\end{align*}
		where $H = \mathds{1} (q_1 < 0) = \mathds{1} (q_2 < 0)$.

		Finally we show that $d_H$ agrees with the extraction of a balanced component.
		We fix $ \mathcal{X}\in \mathbb{L}^2_\sigma $ and let $ \mathcal{Y} \in \mathcal{B} $.
		Since $ \mathcal{Y} \in \mathcal{B} $ we know from \fref{Proposition}{prop:alt_char_balanced_set} that
		\begin{equation*}
			j ( \mathcal{Y} ) = 0,\,
			w ( \mathcal{Y} ) = 0, \text{ and }
			a ( \mathcal{Y} ) = 0.
		\end{equation*}
		Let us define
		\begin{equation}
			M \vcentcolon= M ( \mathcal{X} ) - M ( \mathcal{Y} ) \text{ and } 
			PV \vcentcolon= PV ( \mathcal{X} ) - PV ( \mathcal{Y} ),
		\label{eq:dimensionalize_Parseval_distance_energy_agreement_1}
		\end{equation}
		and let $p$ be the solution of
		\begin{equation}
			\nabla\cdot (A_H\nabla p) = PV - \frac{1}{2} \partial_3 (HM)
		\label{eq:dimensionalize_Parseval_distance_energy_agreement_2}
		\end{equation}
		while $\wavecoord$ is the solution of
		\begin{equation*}
			\nabla\times ( A_H^{-1} \wavecoord) = j( \mathcal{X} ) + \left[ \partial_3 w( \mathcal{X} ) \right] e_3
			\text{ subject to } \fint A_H^{-1} \wavecoord = a ( \mathcal{X} ) \text{ and } \nabla\cdot\wavecoord = 0.
		\end{equation*}
		Then we may write the distance between $ \mathcal{X} $ and $ \mathcal{Y} $ via
		\begin{equation*}
			\frac{1}{2} { d_H ( \mathcal{X},\, \mathcal{Y} )}^2
			= \int Q_H^+ (p) + \left( 1 + \frac{H}{2} \right) M^2 + Q_H^- (\wavecoord) + w^2
		\end{equation*}
		for $Q_H^\pm$ as in \fref{Definition}{def:energy_fixed_Heaviside_coord}.
		Crucially: $\wavecoord$ and $w$ only depend on $ \mathcal{X}$!
		So we may simplify the minimization problem as follows:
		\begin{equation*}
			\argmin_{ \mathcal{Y} \in \mathcal{B} } d_H ( \mathcal{X},\, \mathcal{Y} )
			= \argmin_{ \mathcal{Y} \in \mathcal{B} } \int Q_H^+ (p) + \left( 1+\frac{H}{2} \right) M^2.
		\end{equation*}
		In particular, since $Q_H^+ (v) = A_H v\cdot v \geqslant \frac{1}{2} {\lvert v \rvert}^2 $, i.e. since $Q_H^+$ is positive-definite,
		the unique minimizer corresponds to
		\begin{equation*}
			p = 0 \text{ and } M = 0.
		\end{equation*}
		Plugging this into \eqref{eq:dimensionalize_Parseval_distance_energy_agreement_1}, and then into \eqref{eq:dimensionalize_Parseval_distance_energy_agreement_2},
		this means that $M ( \mathcal{X}) = M( \mathcal{Y} )$, and hence that $ PV ( \mathcal{X}) = PV ( \mathcal{Y} )$.
		In other words the unique minimizer is the balanced state $ \mathcal{Y} $ whose $PV$ and $M$ agree with $ \mathcal{X}$.
		By \fref{Lemma}{lemma:PV_and_M_characterize_balanced_states}, this means precisely that $ \mathcal{Y}  = \mathcal{X}_B$, as desired.
	\end{proof}

	This concludes our first attempt at quantifying how far apart two states are in a way that agrees with
	\begin{enumerate}
		\item	the extraction of a balanced component and
		\item	the conserved energy.
	\end{enumerate}
	We see that the first of these two requirements was met by the distance introduced in \fref{Definition}{def:dimensionalized_Parseval_distance} above.
	However the second requirement was only met in the restricted setting where the fixed Heaviside
	agreed with the Heavisides arising from both states being compared.

	In some sense, that only the first requirement is satisfied is not surprising.
	Indeed, the two distances introduced in this section are built on top of the measurements from the global change of coordinates of \fref{Proposition}{prop:global_chg_coord} 
	and these measurements have been chosen precisely to be compatible with the decomposition.
	However, these measurements have, a priori, little to do with the conserved energy and so it is actually quite positive that the distance built in this way
	is in any way comparable to the energy.

	This motivates an alternate approach: what if, instead of using decomposition--amenable measurements as a starting point,
	we used the energy itself as a starting point to define a distance?
	This is the approach we take in the following section.

\subsection{Energy-centric approach}
\label{sec:energy_centric_approach}

	Recall that we are after a way to measure how different two states are.
	Moreover we wish to do so in a manner which is compatible with both the balanced-unbalanced decomposition underpinning \fref{Theorem}{theorem:pre_decomp}
	\emph{and} the conserved energy.
	In the previous section we approached this problem with the first requirement in mind, and eventually discovered that this approach failed to (generically) satisfy the second requirement.
	In this section we proceed the other way around: we enforce the second requirement by putting the conserved energy front and center in the construction of a new distance.
	The question will then be to identify whether or not this metric agrees with the decomposition.

Also, recall the difficulty that arises in trying to define a
distance or metric or inner product that agrees with the energy in \eqref{eq:conserved_nonlinear_energy}:
the nonlinear switch from the minimum function or Heaviside function. 
This nonlinear switch is activated based on the total water $q$,
and it is not clear how to define the nonlinear switch
when two different total water variables, $q_1$ and $q_2$,
are involved in quantifying the distance between two different states.
This difficulty and others are discussed in more detail
in the text preceding \fref{Section}{sec:PDE_centric_approach}.

A key idea is helpful for overcoming the difficulties of the
nonlinear switch: in defining a metric, begin by working
\emph{pointwise in space}. 
Then, depending on which phase that point belongs to, the conserved energy \eqref{eq:conserved_nonlinear_energy} determines a natural inner product to use.

    More precisely, this section proceeds as follows.
    We begin by defining a metric \emph{pointwise} which measures the discrepancy between two states $\mathcal{X}_1 (x)$ and $\mathcal{X}_2 (x)$ when evaluated at a specific point $x$ of the domain $\mathbb{T}^3$. This pointwise metric is defined to agree with the conserved energy density and so it is defined piecewise, taking different forms depending on the sign of $q(x)$.
    We then record the properties of this pointwise metric: it is well-defined, a metric, and agrees with the energy unconditionally. We also record how to compute this distance explicitly in a special case.
    We then make the key observation of this section: this pointwise metric does \emph{not} agree with the extraction of a balanced component, i.e. it does not agree with the decomposition.
    Finally we conclude this section by defining the induced metric on the state space $\mathbb{L}^2_\sigma$ obtained by integrating the pointwise metric over the domain and explain why we do not expect this metric to agree with the decomposition either.

	As mentioned above, a key idea in the definition of this metric is to work \emph{pointwise in space}.
	In particular, depending on which phase that point belongs to, the conserved energy \eqref{eq:conserved_nonlinear_energy} determines a natural inner product to use.
	We introduce appropriate notation with this in mind.
 
	\begin{definition}
	\label{def:pointwise_states_and_ips}
		For any $S_1 = (u_1,\,\theta_1,\,q_1) \in \mathbb{R}^5$ and $S_2 = (u_2,\,\theta_2,\,q_2) \in \mathbb{R}^5$,
		where $u_1,\,u_2 \in \mathbb{R}^3$, we define the inner products
		\begin{equation*}
			{\langle S_1 ,\, S_2 \rangle }_u \vcentcolon= u_1 \cdot u_2 + \theta_1 \theta_2 + q_1 q_2 + (\theta_1 + q_1)(\theta_2 + q_2)
		\end{equation*}
		and
		\begin{equation*}
			{\langle S_1 ,\, S_2 \rangle }_s \vcentcolon= u_1 \cdot u_2 + \theta_1 \theta_2 + (\theta_1 + q_1)(\theta_2 + q_2). 
		\end{equation*}
		We denote by $ \norm{ \,\cdot\, }{u} $ and $ \norm{ \,\cdot\, }{s} $ the corresponding norms.
	\end{definition}

    The inner products introduced in \fref{Definition}{def:pointwise_states_and_ips} are chosen precisely so that the ensuing norms $ \norm{ \,\cdot\, }{u} $ and $ \norm{ \,\cdot\, }{s} $ agree with the conserved energy density. Indeed, when $q < 0$ we have that
    \begin{equation*}
        \norm{(u,\,\theta,\,q)}{u}^2 = {\lvert u \rvert}^2 + \theta^2 + q^2 + {(\theta + q)}^2
    \end{equation*}
    and when $q \geqslant 0$ we have that
    \begin{equation*}
        \norm{(u,\,\theta,\,q)}{s }^2 = {\lvert u \rvert}^2 + \theta^2 + {(\theta + q)}^2,
    \end{equation*}
    such that indeed they agree with the conserved energy density as recorded in \eqref{eq:conserved_nonlinear_energy}.

	\begin{figure}
		\centering
		\captionsetup{width=0.85\textwidth}
        \begin{tikzpicture}[scale=0.9]
        	\draw[thick]		(-4, 0) --	( 6.5, 0);
        	\node[thick, right] at			( 6.5, 0) {$q = 0$};
        
        	\draw[thick, ->]	(6, -2.25) --	( 7, -2.25);
        	\node[thick, right] at	       		( 7, -2.25) {$\theta$};
        	\draw[thick, ->]	(6, -2.25) --	( 6, -1.25);
        	\node[thick, right] at	     	  	( 6, -1.25) {$q$};
        
        
        	\draw[thick, BrickRed] (-4, -4) --	( 0, 0);
        	\draw[thick, BrickRed] ( 0,  0) --	( 0, 5.5);
        	\node[thick, above, BrickRed] at	( 0, 5.5) {$b=0$};
        
        	\draw[thick, BurntOrange] (-4  , -1  ) --	(-3, 0);
        	\draw[thick, BurntOrange] (-3  ,  0  ) --	(-3, 5.5);
        	\node[thick, above, BurntOrange] at		(-3, 5.5) {$b=-1$};
        
        	\draw[thick, BurntOrange] (-2, -5) --	( 3, 0);
        	\draw[thick, BurntOrange] ( 3,  0) --	( 3, 5.5);
        	\node[thick, above, BurntOrange] at	( 3, 5.5) {$b=1$};
        
        	\draw[thick, BurntOrange] ( 1, -5) --	( 6, 0);
        	\draw[thick, BurntOrange] ( 6,  0) --	( 6, 5.5);
        	\node[thick, above, BurntOrange] at	( 6, 5.5) {$b=2$};
        
        
        	\draw[dashed, thick, Cerulean] ( 3.5,  5.5) --		( 6.5,  2.5);
        	\node[thick, Cerulean, below right] at		( 6.5,  2.5) {$M = 3$};
        
        	\draw[dashed, thick, Cerulean] ( 0.5,  5.5) -- 		( 6.5, -0.5);
        	\node[thick, Cerulean, below right] at		( 6.5, -0.5) {$M = 2$};
        
        	\draw[dashed, thick, Cerulean] (-2.5,  5.5) -- 		( 6.5, -3.5);
        	\node[thick, Cerulean, below right] at		( 6.5, -3.5) {$M = 1$};
        
        	\draw[dashed, thick, Cerulean] (-4  ,  4  ) --		( 5  , -5  );
        	\node[thick, Cerulean, below right] at		( 5  , -5  ) {$M = 0$};
        
        	\draw[dashed, thick, Cerulean] (-4  ,  1  ) --		( 2  , -5  );
        	\node[thick, Cerulean, below right] at		( 2  , -5  ) {$M = -1$};
        
        	\draw[dashed, thick, Cerulean] (-4  , -2  ) -- 		(-1  , -5  );
        	\node[thick, Cerulean, below right] at		(-1  , -5  ) {$M = -2$};
        
        	\draw[dashed, thick, ForestGreen, ->]	( 9  , 4  ) --	(10  , 4  );
        	\node[thick, ForestGreen, right] at		(10  , 4  ) {$\widehat{M}_s$};
        	\draw[thick, ForestGreen, ->]	( 9  , 4  ) --	(10  , 3  );
        	\node[thick, ForestGreen, below right] at	(10  , 3  ) {$\widehat{b}_s$};
        
        	\draw[dashed, thick, ForestGreen, ->]	( 9  ,-3.5) --	( 9.408,-3.092);
        	\node[thick, ForestGreen, above right] at	( 9.408,-3.092) {$\widehat{M}_u$};
        	\draw[thick, ForestGreen, ->]	( 9  ,-3.5) --	( 9.707,-4.207);
        	\node[thick, ForestGreen, below right] at	( 9.707,-4.297) {$\widehat{b}_u$};
        
        \end{tikzpicture}
		\caption{\small
			The vectors $ \left\{ \widehat{M}_s,\, \widehat{b}_s \right\} $ and $ \left\{ \widehat{M}_u,\, \widehat{b}_u \right\} $
			form orthonormal bases with respect to the inner products $ {\langle \,\cdot\,,\,\cdot\, \rangle}_{s} $ and $ {\langle \,\cdot\,,\,\cdot\, \rangle}_{u} $,
			respectively, as they were introduced in \fref{Definition}{def:pointwise_states_and_ips},
			namely $\widehat{M}_s = (0,\,1)$, $\widehat{b}_s = (1,\,-1)$, $ \widehat{M}_u = \frac{1}{\sqrt{6}} (1,\,1)$, and $\widehat{b}_u = \frac{1}{\sqrt{2}} (1,\,-1)$.
			Even though, especially in the saturated region $ \left\{ q > 0 \right\} $, labelling these vectors with the moist variable $M$ and the buoyancy $b$ may at first appear
			perplexing, this is justified by the fact that their corresponding \emph{dual} vectors are precisely (multiples of) $M$ and $b$.
			Indeed: $ {\langle \widehat{M}_s ,\, (\theta,\,q) \rangle }_{s} = \theta + q = M $, $ {\langle \widehat{b}_s ,\, (\theta,\,q) \rangle }_{s} = \theta = b_s$,
			$ {\langle \widehat{M}_u ,\, (\theta,\,q) \rangle }_{u} = \sqrt{\frac{3}{2}} (\theta + q) = \sqrt{\frac{3}{2}} M$,
			and $ {\langle \widehat{b}_u ,\, (\theta,\,q) \rangle }_{u} = \frac{1}{\sqrt{2}} \theta - q = \frac{1}{\sqrt{2}} b_u$.
		}
		\label{fig:snell}
	\end{figure}

    Next we wish to define the \emph{pointwise} metric. As alluded to at the start of this section, this metric will be defined \emph{piecewise} (depending on the sign of $q$) and so we introduce notation which describes each of these pieces.

	\begin{definition}
	\label{def:pointwise_state_space}
		We consider the following subsets of $ \mathbb{R}^5$.
		\begin{itemize}
			\item	$ \mathcal{U} \vcentcolon= \left\{ (u,\,\theta,\,q) \in \mathbb{R}^5 : q < 0 \right\}$ is the set of unsaturated states,
			\item	$ \mathcal{S} \vcentcolon= \left\{ (u,\,\theta,\,q) \in \mathbb{R}^5 : q \geqslant 0 \right\}$ is the set of saturated states, and
			\item	$ \mathcal{I} \vcentcolon= \left\{ (u,\,\theta,\,q) \in \mathbb{R}^5 : q = 0 \right\}$ is the state-space interface between unsaturated states and saturated states.
		\end{itemize}
	\end{definition}

	We have the necessary notation in hand to define the following \emph{Snell-type} metric.
	We refer to this metric as being of Snell-type since shortest paths between states in different phases exhibit the piecewise affine behaviour observed by light passing across a boundary
	(see also \fref{Figure}{fig:snell_refraction}).

	\begin{definition}[Snell-type metric, pointwise]
	\label{def:Snell_type_metric_pointwise}
		We define $d_\text{Snell} : \mathbb{R}^5 \to \mathbb{R}^5 \to [0,\,\infty)$ as follows.
		\begin{itemize}
			\item	If $S_1,\,S_2 \in \mathcal{U} $ then $ d_\text{Snell} (S_1,\,S_2) \vcentcolon= \norm{S_1 - S_2}{u}$.
			\item	If $S_1,\,S_2 \in \mathcal{S} $ then $ d_\text{Snell} (S_1,\,S_2) \vcentcolon= \norm{S_1 - S_2}{s}$.
			\item	If $S_1 \in \mathcal{S} $ and $S_2 \in \mathcal{U}$ then
				\begin{equation*}
					d_\text{Snell} \vcentcolon= \min_{I\in \mathcal{I} }\; \norm{S_1 - I}{s} + \norm{S_2 - I}{u}.
				\end{equation*}
			\item	If $S_1 \in \mathcal{U} $ and $S_2 \in \mathcal{S}$ then $ d_\text{Snell} (S_1,\,S_2) \vcentcolon= d_\text{Snell} (S_2,\,S_1)$.
		\end{itemize}
	\end{definition}

	\begin{figure}
		\centering
		\captionsetup{width=0.85\textwidth}
        \begin{tikzpicture}
        	\draw[thick]					(-4  , 0  ) --	( 2  , 0  );
        	\node[thick, left] at				(-4  , 0  ) {$q = 0$};
        	
        	\draw[thick, ForestGreen]			(-3  , 1  ) --	(-1.4, 0  );
        	\draw[thick, ForestGreen]			(-1.4, 0  ) --	( 1  ,-1  );
        
        	\filldraw[ForestGreen]				(-3  , 1  ) circle (0.07);
        	\node[thick, ForestGreen, above right] at	(-3  , 1  ) {$\theta=-3,\,q=1$};
        
        	\filldraw[ForestGreen]				( 1  ,-1  ) circle (0.07);
        	\node[thick, ForestGreen, above right] at	( 1  ,-1  ) {$\theta=1,\,q=-1$};
        
        	\filldraw[ForestGreen]				(-1.4, 0  ) circle (0.07);
        	\node[thick, ForestGreen, above right] at	(-1.4, 0  ) {$\theta_*=-4+\frac{3}{2}\sqrt{2} \approx -1.4$};
        
        \end{tikzpicture}	
		\caption{\small
			As introduced in \fref{Definition}{def:Snell_type_metric_pointwise} we see that here that the ``Snell-type'' metric warrants its name.
			Shown here is the geodesic between two points in the $(\theta,\,q)$ plane which live on different sides of the state-space interface $ \left\{ q = 0 \right\} $.
			As is familiar from Snell's refraction law of optics, the geodesic intersects the interface at an angle.
			Note that we may use \fref{Lemma}{lemma:explicit_computation_of_intersection_point} in order to compute the intersection point and the value of $\theta_*$ exactly.
		}
		\label{fig:snell_refraction}
	\end{figure}

	First we verify that this distance function is well-defined.

	\begin{lemma}[The Snell-type distance is well-defined]
	\label{lemma:d_snell_is_well_defined}
		For any $S_1\in \mathcal{S} $ and any $S_2\in \mathcal{U} $ there is a unique minimizer of
		\begin{equation*}
			\min_{I\in \mathcal{I} } \norm{S_1 - I}{s} + \norm{S_2 - I}{u}.
		\end{equation*}
	\end{lemma}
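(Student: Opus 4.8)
The plan is to view the quantity to be minimised as a convex function on the affine subspace $\mathcal{I}$, so that existence follows from coercivity and uniqueness from strict convexity. Set $f(I) \vcentcolon= \norm{S_1 - I}{s} + \norm{S_2 - I}{u}$ for $I \in \mathcal{I} = \left\{ (u,\,\theta,\,q) \in \mathbb{R}^5 : q = 0 \right\}$. Since $\mathcal{I}$ is a closed (linear) subspace of $\mathbb{R}^5$, since $f$ is continuous, and since $f(I) \geqslant \norm{S_2 - I}{u} \to \infty$ as $\lvert I \rvert \to \infty$ (all norms on $\mathbb{R}^5$ being equivalent), the sublevel sets of $f|_{\mathcal{I}}$ are compact and a minimiser exists by the Weierstrass theorem. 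The real work is in the uniqueness.

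First I would observe that $f$ is convex on $\mathcal{I}$: both $\norm{\,\cdot\,}{s}$ and $\norm{\,\cdot\,}{u}$ are norms arising from the inner products of \fref{Definition}{def:pointwise_states_and_ips}, hence convex, and precomposing with the affine maps $I \mapsto S_1 - I$ and $I \mapsto S_2 - I$ preserves convexity. To upgrade this to strict convexity along segments of $\mathcal{I}$, fix distinct $I_a,\, I_b \in \mathcal{I}$, write $w \vcentcolon= I_b - I_a \neq 0$ (which has vanishing $q$-component, since $I_a,\,I_b \in \mathcal{I}$) and $I_m \vcentcolon= \tfrac{1}{2}(I_a + I_b) \in \mathcal{I}$. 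Because $f$ is a sum of two convex functions, $f(I_m) = \tfrac{1}{2} f(I_a) + \tfrac{1}{2} f(I_b)$ can hold only if the midpoint identity holds separately for $\norm{S_1 - \,\cdot\,}{s}$ and for $\norm{S_2 - \,\cdot\,}{u}$. With $a \vcentcolon= S_1 - I_a$ and $b \vcentcolon= S_1 - I_b$ the first of these reads $\norm{a + b}{s} = \norm{a}{s} + \norm{b}{s}$; because $\norm{\,\cdot\,}{s}$ is an inner-product norm, equality in the triangle inequality forces one of $a,\,b$ to be a non-negative multiple of the other, and combined with $a - b = w \neq 0$ this forces $a$ (and hence $b$) to be a scalar multiple of $w$ in every subcase. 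Applying the same argument to $S_2$ gives that $S_2 - I_a$ is also a scalar multiple of $w$.

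Subtracting, $S_1 - S_2 = (S_1 - I_a) - (S_2 - I_a)$ is then a scalar multiple of $w$ and hence has vanishing $q$-component. But $S_1 \in \mathcal{S}$ and $S_2 \in \mathcal{U}$ give $q_1 \geqslant 0 > q_2$, so the $q$-component of $S_1 - S_2$ equals $q_1 - q_2 > 0$, a contradiction. Hence the midpoint inequality for $f$ is always strict on $\mathcal{I}$, so $f$ is strictly convex there and the minimiser is unique. The one delicate point is the equality case of the triangle inequality together with its degenerate subcases ($a = 0$, $b = 0$, or $a = \mu b$ with $\mu = 1$), each of which must be checked to either yield $a \parallel w$ or to contradict $w \neq 0$; everything else — coercivity, convexity, and the final sign comparison on the $q$-component — is routine. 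An alternative route would first optimise out the $u \in \mathbb{R}^3$ block, reducing to a two-dimensional Fermat-type problem in the $(\theta,\,q)$-plane, but the convexity argument above sidesteps that case analysis, immediately gives uniqueness, and explains geometrically why the optimal $I$ is the single refraction point depicted in \fref{Figure}{fig:snell_refraction}.
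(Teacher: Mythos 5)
Your proof is correct, but it takes a genuinely different route from the paper. The paper fixes $S \notin \mathcal{I}$, computes the Hessian of $I \mapsto \norm{S-I}{}$ on $\mathcal{I}$, and shows it is positive definite (via the rescaling by $D$ and the observation that the coordinate directions span $\mathcal{I}$ but not $\mathbb{R}^5$, so $B < \norm{z}{}^2 I$), obtaining strict convexity of each term and hence of the sum; existence of the minimizer is left implicit. You instead argue at first order: existence from coercivity and Weierstrass, and uniqueness from midpoint strict convexity, where equality at a midpoint forces equality in the triangle inequality for each inner-product norm separately, hence forces $S_1 - I_a$, $S_1 - I_b$, $S_2 - I_a$, $S_2 - I_b$ all to be scalar multiples of $w = I_b - I_a \in \left\{ q = 0 \right\}$, so that $S_1 - S_2$ would have vanishing $q$-component, contradicting $q_1 \geqslant 0 > q_2$. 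Your argument is more elementary (no differentiability is invoked, so the case $q_1 = 0$, i.e. $S_1 \in \mathcal{I}$, where the first term is not smooth and not strictly convex, is handled with no caveat — in the paper's proof one must instead note that strict convexity of the $S_2$-term alone suffices since $S_2 \in \mathcal{U}$ forces $S_2 \notin \mathcal{I}$), and it explicitly settles existence, which the statement also asserts. What the paper's computation buys in exchange is the explicit first- and second-derivative formulas for $I \mapsto \norm{S - I}{}$, which are reused essentially verbatim in \fref{Lemma}{lemma:explicit_computation_of_intersection_point} to derive the quartic equation for the intersection point; your route does not produce those formulas, but as a proof of this lemma it is complete and sound.
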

	\begin{proof}
		To verify that $ d_\text{Snell} $ is well-defined it suffices to verify that, for any $S_1\in \mathcal{S} $ and any $S_2\in \mathcal{U} $,
		there exists a unique minimizer of
		\begin{equation}
			\min_{I\in \mathcal{I} } \norm{S_1 - I}{s} + \norm{S_2 - I}{u}.
		\label{eq:s_snell_well_def_min}
		\end{equation}
		To do so, fix $S\in \mathbb{R}^5 \setminus \mathcal{I} $ and consider $f : \mathbb{R}^4 \to \mathbb{R}^5$ defined by, for every $I = (u_h,\,w,\,\theta,\,0) \in \mathcal{I} $,
		\begin{equation*}
			f(u_h,\,w,\,\theta) = \norm{S-I}{}
		\end{equation*}
		where $ \norm{ \,\cdot\, }{}$ denotes $ \norm{ \,\cdot\, }{s} $ or $ \norm{ \,\cdot\, }{u}$.
		To verify that \eqref{eq:s_snell_well_def_min} has a unique minimizer it suffices to show that $f$ is strictly convex.
		Since $f$ is smooth (as $S$ is not in $ \mathcal{I} $, and $f$ and its derivatives would blow up precisely when $S = I$, but $I\in \mathcal{I} $)
		it suffices to verify that
		\begin{equation*}
			\nabla^2 f > 0.
		\end{equation*}
		So we compute:
		\begin{equation*}
			\partial_i f
			= \partial_i \norm{S-I}{}
			= \frac{ \langle I-S,\, \partial_i I \rangle}{ \norm{S-I}{}}
		\end{equation*}
		and hence
		\begin{equation*}
			\partial_i \partial_j f
			= \partial_j \left(
				\frac{ \langle I-S,\, \partial_i I \rangle}{ \norm{S-I}{}}
			\right)
			= \frac{
				\langle \partial_i I,\, \partial_j I \rangle {\norm{S-I}{}}^2 - \langle \partial_i I,\, S-I \rangle \langle \partial_j I,\, S-I \rangle
			}{
				\norm{S-I}{}^3
			}.
		\end{equation*}

		To make the remaining computation easier to follow, let us write
		\begin{equation*}
			x^i \vcentcolon= \partial_i I
			\text{ and } 
			z \vcentcolon= S-I.
		\end{equation*}
		Then
		\begin{equation*}
			\norm{z}{}^3 \nabla^2 f
			= \langle x^i,\, x^j \rangle \norm{z}{}^2 - \langle x^i,\, z \rangle \langle x^j,\, z \rangle
			=\vcentcolon A
		\end{equation*}
		and so it suffices to show that $A > 0$ (since $ \norm{z}{} \neq 0$ as long as $S\notin \mathcal{I} $).
		To simplify further, let us introduce
		\begin{equation*}
			D \vcentcolon= \text{diag}\, { \left( \norm{x^i}{} \right) }_{i=1}^4.
		\end{equation*}
		Then
		\begin{equation*}
			A > 0 \iff D^T A D > 0
		\end{equation*}
		where, since ${ \left( x^i \right)}_{i=1}^4$ forms an orthogonal set with respect to both $ \norm{ \,\cdot\, }{s} $ and $ \norm{ \,\cdot\, }{u} $,
		we have that $D^T A D = I \norm{z}{}^2 - B$
		for
		\begin{equation*}
			B_{ij} \vcentcolon= \langle \hat{x}^i,\, z \rangle \langle \hat{x}^j,\, z \rangle.
		\end{equation*}
		To conclude the proof it thus suffices to prove that $B < \norm{z}{}^2 I$.
		First we show that $B \leqslant \norm{z}{}^2 I$.
		Well, for any $y\in \mathbb{R}^4$, 
		\begin{equation*}
			By\cdot y
			= y_i \langle \hat{x}^i,\, z \rangle y_j \langle x^j,\, x \rangle
			= y \cdot { \left( \langle \hat{x}^i,\, z \rangle \right)}_{i=1}^4
			\leqslant {\lvert y \rvert}^2 \left( \sum_{i=1}^4 { \langle \hat{x}^i,\, z \rangle}^2 \right).
		\end{equation*}
		Since $ {\left( \hat{x}^i \right)}_{i=1}^4$ is an orthonormal set, we have that
		\begin{equation*}
			\sum_{i=1}^4 { \langle \hat{x}^i,\, z \rangle}^2 \leqslant \norm{z}{}^2,
		\end{equation*}
		which verifies that $B \leqslant \norm{z}{}^2 I$.
		Finally we show that $B < \norm{z}{}^2 I$.
		This follows from the fact that $ { \left( \hat{x}^i \right)}_{i=1}^4$ span $ \mathcal{I} $, but not $ \mathbb{R}^5$, and
		so
		\begin{equation*}
			\sum_{i=1}^4 { \langle \hat{x}^i,\, z \rangle}^2
			= \norm{ \text{proj}_{ \mathcal{I} } z}{}^2
			< \norm{z}{}^2
		\end{equation*}
		since $z = S-I \notin \mathcal{I} $ since $I\in \mathcal{I} $ but $S\notin \mathcal{I} $.
		In other words: since $S\notin \mathcal{I} $ we have that $B < \norm{z}{}^2 I$, and so $\nabla^2 f > 0$.
	\end{proof}

	Now we verify that the distance function introduced in \fref{Definition}{def:Snell_type_metric_pointwise} is indeed a metric.
	We will also show that this Snell-type distance also agrees with the energy, when they are comparable.

	\begin{lemma}[The Snell-type distance is a metric]
	\label{lemma:d_snell_is_metric_and_compatible_with_energy}
		As introduced in \fref{Definition}{def:Snell_type_metric_pointwise}, $ d_\text{Snell} $ is a metric on $ \mathbb{R}^5$.
		Moreover it agrees with the energy density in the sense that, for any $S_1,\,S_2\in \mathbb{R}^5$, if the signs of $q_1$ and $q_2$ agree
		(i.e. if the states $S_1$ and $S_2$ are in the same phase) then
		\begin{equation*}
			{d_\text{Snell} (S_1,\,S_2)}^2 = {\lvert u_1 - u_2 \rvert}^2 + {(\theta_1 - \theta_2)}^2 + {(q_1 - q_2)}^2 h + {((\theta_1 + q_1) - (\theta_2 - q_2))}^2
		\end{equation*}
		where
		\begin{equation*}
			h = \left\{
			\begin{aligned}
				&1 &&\text{ if } q_1,\,q_2 < 0 \text{ and } \\
				&0 &&\text{ if } q_1,\,q_2 \geqslant 0.
			\end{aligned}
			\right.
		\end{equation*}
	\end{lemma}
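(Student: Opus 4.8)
The plan is to verify the metric axioms for $d_\text{Snell}$ directly, treating the triangle inequality as the only substantive point, and then to observe that the energy-density identity is essentially a restatement of \fref{Definitions}{def:pointwise_states_and_ips} and \ref{def:Snell_type_metric_pointwise}. For the latter: if $q_1,q_2<0$ then $S_1,S_2\in\mathcal{U}$ and $d_\text{Snell}(S_1,S_2)^2=\norm{S_1-S_2}{u}^2={\lvert u_1-u_2\rvert}^2+(\theta_1-\theta_2)^2+(q_1-q_2)^2+((\theta_1+q_1)-(\theta_2+q_2))^2$, i.e. the claimed formula with $h=1$; and if $q_1,q_2\geqslant 0$ then $S_1,S_2\in\mathcal{S}$ and $d_\text{Snell}(S_1,S_2)^2=\norm{S_1-S_2}{s}^2$, the claimed formula with $h=0$, so there is nothing more to prove there. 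Non-negativity is immediate, symmetry is built into \fref{Definition}{def:Snell_type_metric_pointwise} for cross-phase pairs and obvious otherwise, and for the identity of indiscernibles one notes that $\norm{\,\cdot\,}{u}$ and $\norm{\,\cdot\,}{s}$ come from positive-definite inner products (so same-phase pairs at distance zero coincide), while a cross-phase pair $S_1\in\mathcal{S}$, $S_2\in\mathcal{U}$ already has $S_1\neq S_2$ and satisfies $d_\text{Snell}(S_1,S_2)=\norm{S_1-I}{s}+\norm{S_2-I}{u}>0$ at the minimizing interface point $I$ furnished by \fref{Lemma}{lemma:d_snell_is_well_defined}, since $S_2\notin\mathcal{I}$ forces $\norm{S_2-I}{u}>0$.

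The real work is the triangle inequality $d_\text{Snell}(S_1,S_3)\leqslant d_\text{Snell}(S_1,S_2)+d_\text{Snell}(S_2,S_3)$, and I would first isolate two facts that make the bookkeeping tractable: (i) each of $\norm{\,\cdot\,}{u}$ and $\norm{\,\cdot\,}{s}$ is a genuine norm on $\mathbb{R}^5$, hence satisfies the triangle inequality; and (ii) $\norm{v}{u}=\norm{v}{s}$ for every $v\in\mathbb{R}^5$ with vanishing $q$-component — in particular $\norm{I-J}{u}=\norm{I-J}{s}$ for all $I,J\in\mathcal{I}$ — because the inner products of \fref{Definition}{def:pointwise_states_and_ips} differ only by the term $q_1q_2$. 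I would also use repeatedly that the infimum in the cross-phase definition of $d_\text{Snell}$ is attained (\fref{Lemma}{lemma:d_snell_is_well_defined}), so that the optimal crossing point $I\in\mathcal{I}$ can be inserted as a competitor elsewhere.

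With these in hand I would argue by cases on the phases of $(S_1,S_2,S_3)$; using the $S_1\leftrightarrow S_3$ symmetry of $d_\text{Snell}$ and the $\mathcal{U}\leftrightarrow\mathcal{S}$ symmetry of the setup, it suffices to treat three patterns. If all three points are in one phase, the inequality is the ordinary triangle inequality for the relevant norm. If one of the two endpoints is alone in the opposite phase — say $S_1,S_2\in\mathcal{U}$ and $S_3\in\mathcal{S}$ — I would take $J\in\mathcal{I}$ attaining $d_\text{Snell}(S_2,S_3)$, use $J$ as a competitor for $d_\text{Snell}(S_1,S_3)$, and then split $\norm{S_1-J}{u}\leqslant\norm{S_1-S_2}{u}+\norm{S_2-J}{u}$, which gives $d_\text{Snell}(S_1,S_3)\leqslant\norm{S_3-J}{s}+\norm{S_1-S_2}{u}+\norm{S_2-J}{u}=d_\text{Snell}(S_1,S_2)+d_\text{Snell}(S_2,S_3)$. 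The harder pattern — the one I expect to be the main obstacle — is when the \emph{middle} point is alone, say $S_1,S_3\in\mathcal{U}$ and $S_2\in\mathcal{S}$: here I would take interface points $I$ attaining $d_\text{Snell}(S_1,S_2)$ and $J$ attaining $d_\text{Snell}(S_2,S_3)$, so the right-hand side is $\norm{S_2-I}{s}+\norm{S_1-I}{u}+\norm{S_2-J}{s}+\norm{S_3-J}{u}$, then bridge the two $\norm{\,\cdot\,}{s}$-terms by $\norm{S_2-I}{s}+\norm{S_2-J}{s}\geqslant\norm{I-J}{s}=\norm{I-J}{u}$ — this is exactly where fact (ii) enters — and finally apply the triangle inequality for $\norm{\,\cdot\,}{u}$ twice to reach $\norm{S_1-I}{u}+\norm{I-J}{u}+\norm{J-S_3}{u}\geqslant\norm{S_1-S_3}{u}=d_\text{Snell}(S_1,S_3)$. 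The mirror case swaps the roles of the two norms, and these cases exhaust the possibilities, so $d_\text{Snell}$ is a metric.

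Conceptually this is just the statement that $d_\text{Snell}$ is the gluing metric obtained by joining the half-space $\{q\leqslant 0\}$, carrying the Euclidean metric of $\langle\,\cdot\,,\,\cdot\,\rangle_u$, to the half-space $\{q\geqslant 0\}$, carrying that of $\langle\,\cdot\,,\,\cdot\,\rangle_s$, along their common interface $\mathcal{I}$, on which the two metrics agree by (ii); the case analysis above is the hands-on verification that such a gluing is a metric. I do not anticipate any genuine difficulty beyond carefully setting up the middle-point-isolated case and invoking (ii) there to chain the two half-space estimates — everything else is routine.
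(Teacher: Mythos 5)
Your proposal is correct and follows essentially the same route as the paper's proof: the energy identity and the first two metric axioms are read off from the definitions, and the triangle inequality is handled by the same two non-trivial phase patterns (middle point alone vs.\ endpoint alone), inserting the optimal interface crossing points and chaining the within-phase triangle inequalities with the minimization in the cross-phase definition. The only difference is cosmetic: you make explicit the fact that $\norm{\,\cdot\,}{u}$ and $\norm{\,\cdot\,}{s}$ coincide on differences of interface points, which the paper's chain of inequalities uses implicitly.
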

	\begin{proof}
		Positive-definiteness and symmetry are immediate so we only need to prove that the triangle inequality holds.
		Indeed: the agreement with the energy follows immediately from the definition of $ \norm{ \,\cdot\, }{u} $ and $ \norm{ \,\cdot\, }{s} $ in \fref{Definition}{def:pointwise_states_and_ips}.

		For $S_1,\,S_2,\,S_3 \in \mathbb{R}^5$ we therefore wish to show that
		\begin{equation*}
			d_\text{Snell} (S_1,\, S_2) \leqslant d_\text{Snell} (S_1,\,S_3) + d_\text{Snell} (S_3,\, S_2).
		\end{equation*}
		Clearly the triangle inequality holds when all three states belong to the same phase (either $ \mathcal{U} $ or $ \mathcal{S} $).
		So we need only consider the case where two of the states $S_1$, $S_2$, and $S_3$ are in one phase
		and the last state is in the other.
		Since all that matters is whether or not the intermediate point $S_3$ is alone in its phase, there are without loss of generality two cases to consider.
		\begin{itemize}
			\item	Case 1: $S_1,\,S_2 \in \mathcal{S} $ and $S_3 \in \mathcal{U} $.
				Let us denote by $I_{13}$ the point intersecting the geodesic between $S_1$ and $S_3$ and the state-space interface $ \mathcal{I} $,
				and similarly $I_{23}$ for the point arising from the geodesic between $S_2$ and $S_3$.
				We use the triangle inequality in $ \mathcal{U} $, then the triangle inequality in $S$:
				\begin{align*}
					d_\text{Snell} (S_1,\,S_3) \,+ &\,d_\text{Snell} (S_3,\, S_2)
				\\
					&= d_\text{Snell} (S_1,\,I_{13}) + d_\text{Snell} (I_{13},\, S_3) + d_\text{Snell} (S_3,\, I_{23}) + d_\text{Snell} (I_{23},\, S_2)
				\\
					&\geqslant d_\text{Snell} (S_1,\,I_{13}) + d_\text{Snell} (I_{13},\, I_{23}) + d_\text{Snell} (I_{23},\, S_2)
				\\
					&\geqslant d_\text{Snell} (S_1,\,S_2).
				\end{align*}

				\begin{figure}
					\centering
					\captionsetup{width=0.85\textwidth}
					\begin{subfigure}{0.35\textwidth}
                        \begin{tikzpicture}[scale=0.95]
                        	\draw[thick]				(-3  , 0  ) --	( 3  , 0  );
                        
                        	\draw[thick, ForestGreen]		(-2  , 2  ) --	( 2  , 1.5);
                        
                        	\draw[thick, dashed, Thistle]			(-2  , 2  ) --	(-1.5, 0  );
                        
                        	\draw[thick, dashed, Thistle]			(-1.5, 0  ) --	( 0  ,-1.5);
                        
                        	\draw[thick, dashed, Thistle]			( 0  ,-1.5) --	( 1.5, 0  );
                        
                        	\draw[thick, dashed, Thistle]			( 1.5, 0  ) --	( 2  , 1.5);

                        	\filldraw				(-2  , 2  ) circle (0.07);
                        	\node[thick, above left] (S1) at	(-2  , 2  ) {$S_1$};
                        
                        	\filldraw				( 2  , 1.5) circle (0.07);
                        	\node[thick, above right] (S2) at	( 2  , 1.5) {$S_2$};
                        
                        	\filldraw				( 0  ,-1.5) circle (0.07);
                        	\node[thick, below] (S3) at		( 0  ,-1.5) {$S_3$};
                        
                        	\filldraw				(-1.5, 0  ) circle (0.07);
                        	\node[thick, below left] (I13) at	(-1.5, 0  ) {$I_{13}$};
                        
                        	\filldraw				( 1.5, 0  ) circle (0.07);
                        	\node[thick, below right] (I23) at	( 1.5, 0  ) {$I_{23}$};
                        
                        \end{tikzpicture}
					\end{subfigure}
					\begin{subfigure}{0.35\textwidth}
                        \begin{tikzpicture}[scale=0.95]
                        	\draw[thick]				(-3  , 0  ) --	( 3  , 0  );
                        
                        	\draw[thick, ForestGreen]		(-2  , 2  ) --	( 2  , 1.5);
                        
                        	\draw[thick, dashed, Cerulean]			(-2  , 2  ) --	(-1.5, 0  );
                        
                        	\draw[thick, dashed, Cerulean, dotted]		(-1.5, 0  ) --	( 0  ,-1.5);
                        
                        	\draw[thick, dashed, Cerulean, dotted]		( 0  ,-1.5) --	( 1.5, 0  );
                        
                        	\draw[thick, dashed, Cerulean]			( 1.5, 0  ) --	( 2  , 1.5);

                        	\draw[thick, dashed, Cerulean]			(-1.5, 0  ) --	( 1.5, 0  );

                        	\filldraw				(-2  , 2  ) circle (0.07);
                        	\node[thick, above left] (S1) at	(-2  , 2  ) {$S_1$};
                        
                        	\filldraw				( 2  , 1.5) circle (0.07);
                        	\node[thick, above right] (S2) at	( 2  , 1.5) {$S_2$};
                        
                        	\filldraw				( 0  ,-1.5) circle (0.07);
                        	\node[thick, below] (S3) at		( 0  ,-1.5) {$S_3$};
                        
                        	\filldraw				(-1.5, 0  ) circle (0.07);
                        	\node[thick, below left] (I13) at	(-1.5, 0  ) {$I_{13}$};
                        
                        	\filldraw				( 1.5, 0  ) circle (0.07);
                        	\node[thick, below right] (I23) at	( 1.5, 0  ) {$I_{23}$};
                        
                        \end{tikzpicture}
					\end{subfigure}
					\caption{\small A pictorial depiction of the two steps used in case 1 of the proof of \fref{Lemma}{lemma:d_snell_is_metric_and_compatible_with_energy}.}
					\label{fig:snell_proof_case1}
				\end{figure}

			\item	Case 2: $S_1,\,S_3 \in \mathcal{S} $ and $S_2 \in \mathcal{U} $.
				Let $I_{12}$ and $I_{23}$ be defined similarly to in case 1.
				We use the triangle inequality in $ \mathcal{S} $ and the definition of $ d_\text{Snell} $ as a minimization problem when crossing the interface to deduce that
				\begin{align*}
					d_\text{Snell} (S_1,\, S_3) + d_\text{Snell} (S_3,\, S_2)
					&= d_\text{Snell} (S_1,\,\ S_3) + d_\text{Snell} (S_3,\, I_{23}) + d_\text{Snell} (I_{23},\, S_2)
				\\
					&\geqslant d_\text{Snell} (S_1,\, I_{23}) + d_\text{Snell} (I_{23},\, S_2)
				\\
					&\geqslant d_\text{Snell} (S_1,\,S_2).
				\end{align*}

				\begin{figure}
					\centering
					\captionsetup{width=0.85\textwidth}
					\begin{subfigure}{0.35\textwidth}
                        \begin{tikzpicture}[scale=0.95]
                        	\draw[thick]				(-3  , 0  ) --	( 3  , 0  );
                        
                        	\draw[thick, dashed, Thistle]			(-2  , 2  ) --	( 2  , 1.5);
                        
                        	\draw[thick, ForestGreen]		(-2  , 2  ) --	(-1.5, 0  );
                        
                        	\draw[thick, ForestGreen]		(-1.5, 0  ) --	( 0  ,-1.5);
                        
                        	\draw[thick, dashed, Thistle]			( 0  ,-1.5) --	( 1.5, 0  );
                        
                        	\draw[thick, dashed, Thistle]			( 1.5, 0  ) --	( 2  , 1.5);

                        	\filldraw				(-2  , 2  ) circle (0.07);
                        	\node[thick, above left] (S1) at	(-2  , 2  ) {$S_1$};
                        
                        	\filldraw				( 2  , 1.5) circle (0.07);
                        	\node[thick, above right] (S2) at	( 2  , 1.5) {$S_3$};
                        
                        	\filldraw				( 0  ,-1.5) circle (0.07);
                        	\node[thick, below] (S3) at		( 0  ,-1.5) {$S_2$};
                        
                        	\filldraw				(-1.5, 0  ) circle (0.07);
                        	\node[thick, below left] (I13) at	(-1.5, 0  ) {$I_{12}$};
                        
                        	\filldraw				( 1.5, 0  ) circle (0.07);
                        	\node[thick, below right] (I23) at	( 1.5, 0  ) {$I_{23}$};
                        
                        \end{tikzpicture}
					\end{subfigure}
					\begin{subfigure}{0.35\textwidth}
                        \begin{tikzpicture}[scale=0.95]
                        	\draw[thick]				(-3  , 0  ) --	( 3  , 0  );
                        
                        	\draw[thick, dashed, Cerulean, dotted]		(-2  , 2  ) --	( 2  , 1.5);
                        
                        	\draw[thick, ForestGreen]		(-2  , 2  ) --	(-1.5, 0  );
                        
                        	\draw[thick, ForestGreen]		(-1.5, 0  ) --	( 0  ,-1.5);
                        
                        	\draw[thick, dashed, Cerulean]			( 0  ,-1.5) --	( 1.5, 0  );
                        
                        	\draw[thick, dashed, Cerulean, dotted]		( 1.5, 0  ) --	( 2  , 1.5);

                        	\draw[thick, dashed, Cerulean]			(-2  , 2  ) --	( 1.5, 0  );

                        	\filldraw				(-2  , 2  ) circle (0.07);
                        	\node[thick, above left] (S1) at	(-2  , 2  ) {$S_1$};
                        
                        	\filldraw				( 2  , 1.5) circle (0.07);
                        	\node[thick, above right] (S2) at	( 2  , 1.5) {$S_3$};
                        
                        	\filldraw				( 0  ,-1.5) circle (0.07);
                        	\node[thick, below] (S3) at		( 0  ,-1.5) {$S_2$};
                        
                        	\filldraw				(-1.5, 0  ) circle (0.07);
                        	\node[thick, below left] (I13) at	(-1.5, 0  ) {$I_{12}$};
                        
                        	\filldraw				( 1.5, 0  ) circle (0.07);
                        	\node[thick, below right] (I23) at	( 1.5, 0  ) {$I_{23}$};
                        
                        \end{tikzpicture}
					\end{subfigure}
					\caption{\small A pictorial depiction of the two steps used in case 2 of the proof of \fref{Lemma}{lemma:d_snell_is_metric_and_compatible_with_energy}.}
					\label{fig:snell_proof_case2}
				\end{figure}

		\end{itemize}
	\end{proof}

	We now seek to determine whether or not this metric, once ported over to measure the distance between spatially-varying states,
	agrees with the balanced-unbalanced decomposition of \fref{Theorem}{theorem:pre_decomp}.

	To do so we will need to understand this Snell-type metric in the pointwise case (which we have discussed so far) in more detail.
	In particular we will rely on the explicit computation of geodesic trajectories in the case where $u_1 = u_2 = 0$.
	In that case, the states live in the $(\theta,\,q)$--plane.
	We consider two states in that plane living on either side of the state-space interface $ \mathcal{I} $, i.e. where one state is saturated and the other is not.
	The geodesic between these two points is then piecewise affine, changing direction when it reaches the interface.
	Crucially: in this simplified setting where $u_1 = u_2 = 0$ we can compute \emph{explicitly} where the geodesic intersects the interface.

	\begin{lemma}[Explicit computation of the intersection point]
	\label{lemma:explicit_computation_of_intersection_point}
		For any $S_1 \in \mathcal{S} $ and any $S_2 \in \mathcal{U} $, if $u_1 = u_2 = 0$ then, for $I(\theta) \vcentcolon= (0,\,0,\,0,\,\theta,\,0) \in \mathcal{I} \subseteq \mathbb{R}^5$,
		\begin{equation*}
			\argmin_{\theta\in \mathbb{R}} \norm{S_1 - I(\theta)}{s} + \norm{S_2 - I(\theta)}{u}
			\in \left\{ \frac{\alpha - \sqrt{\delta}}{\beta},\, \frac{\alpha + \sqrt{\delta}}{\beta} \right\}
		\end{equation*}
		for
		\begin{align*}
			\alpha &= 4q_1^2 (2\theta_2 + q_2) - 6q_2^2 (2\theta_1 + q_1),\\
			\beta &= 8q_1^2 - 24q_2^2, \text{ and } \\
			\delta &= 12q_1^2 q_2^2 {(2\theta_1 + q_1 + 4\theta_2 + 2q_2)}^2.
		\end{align*}
	\end{lemma}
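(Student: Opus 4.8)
The plan is to convert the constrained minimization into an unconstrained one–variable problem, identify the minimizer as the unique critical point of a smooth strictly convex function, and then observe that clearing the square roots from the critical-point equation collapses an a priori quartic relation to a quadratic, whose two roots are the candidates in the statement.

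First I would record the explicit quadratics
\begin{equation*}
	A(\theta) \vcentcolon= \norm{S_1 - I(\theta)}{s}^2 = (\theta_1 - \theta)^2 + (\theta_1 - \theta + q_1)^2, \qquad B(\theta) \vcentcolon= \norm{S_2 - I(\theta)}{u}^2 = (\theta_2 - \theta)^2 + q_2^2 + (\theta_2 - \theta + q_2)^2,
\end{equation*}
together with their completed-square forms $A(\theta) = 2(\theta - c_1)^2 + \tfrac12 q_1^2$ and $B(\theta) = 2(\theta - c_2)^2 + \tfrac32 q_2^2$, where $c_1 \vcentcolon= \theta_1 + \tfrac12 q_1$ and $c_2 \vcentcolon= \theta_2 + \tfrac12 q_2$. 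The objective is $f(\theta) = \sqrt{A(\theta)} + \sqrt{B(\theta)}$. From the completed-square form of $B$ one reads off $(\sqrt{B})'' = q_2^2 / B^{3/2} > 0$ (using $q_2 < 0$), so $f$ is strictly convex; it is clearly coercive, hence has a unique minimizer $\theta_*$, which — since $A,B>0$ when $q_1>0$, $q_2<0$, so that $f$ is smooth — is the unique solution of $f'(\theta)=0$. This is the one-dimensional restriction of \fref{Lemma}{lemma:d_snell_is_well_defined}; the interface-degenerate cases $q_1=0$ or $q_2=0$ play no role since the pointwise metric is later integrated in space and $\{q=0\}$ has measure zero.

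Next I would differentiate: $f'(\theta) = \tfrac{A'(\theta)}{2\sqrt{A(\theta)}} + \tfrac{B'(\theta)}{2\sqrt{B(\theta)}}$, so $f'(\theta_*)=0$ reads $A'(\theta_*)\sqrt{B(\theta_*)} = -\,B'(\theta_*)\sqrt{A(\theta_*)}$, and squaring gives $A'(\theta_*)^2\,B(\theta_*) = B'(\theta_*)^2\,A(\theta_*)$. The only step that does genuine work is the pair of identities
\begin{equation*}
	A'(\theta)^2 = 4\bigl(2A(\theta) - q_1^2\bigr), \qquad B'(\theta)^2 = 4\bigl(2B(\theta) - 3 q_2^2\bigr),
\end{equation*}
which follow at once from the completed-square forms (both $A$ and $B$ have leading coefficient $2$). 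Substituting them, the term $8\,A(\theta_*)B(\theta_*)$ cancels on both sides and one is left with the genuinely \emph{quadratic} equation $q_1^2\,B(\theta_*) = 3 q_2^2\,A(\theta_*)$, i.e. $q_1^2 (\theta_* - c_2)^2 = 3 q_2^2 (\theta_* - c_1)^2$. Expanding to $(q_1^2 - 3q_2^2)\theta^2 - 2(q_1^2 c_2 - 3 q_2^2 c_1)\theta + (q_1^2 c_2^2 - 3 q_2^2 c_1^2) = 0$ and applying the quadratic formula, the discriminant simplifies to the perfect square $12\, q_1^2 q_2^2 (c_1 - c_2)^2$; multiplying numerator and denominator by $4$ so the denominator reads $\beta = 8q_1^2 - 24 q_2^2$, and substituting back $2c_1 = 2\theta_1 + q_1$, $2c_2 = 2\theta_2 + q_2$, then puts the two roots in the claimed form $\tfrac{\alpha \pm \sqrt{\delta}}{\beta}$. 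Since $\theta_*$ solves this quadratic it is one of the two roots; as squaring the critical-point equation may have created one spurious solution, one cannot in general say which, which is precisely why the conclusion is stated as membership in a two-element set.

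The proof is thus almost entirely mechanical, and the single genuine obstacle is spotting that the a priori quartic identity $A'^2 B = B'^2 A$ degenerates to a quadratic — equivalently, that the ``energy-density'' quadratics $A$ and $B$ share the same leading coefficient, so that their product cancels. Everything else (expanding, computing the discriminant, and matching $\alpha$, $\beta$, $\delta$) is routine bookkeeping. One caveat worth flagging: carrying the computation through with $c_1 = \tfrac12(2\theta_1 + q_1)$, $c_2 = \tfrac12(2\theta_2 + q_2)$ appears to give $\alpha = 4q_1^2(2\theta_2 + q_2) - 12\, q_2^2(2\theta_1 + q_1)$ and $\delta = 48\, q_1^2 q_2^2\bigl((2\theta_1 + q_1) - (2\theta_2 + q_2)\bigr)^2$, so the coefficients in the displayed statement may need a small correction; for instance, the case $\theta_1 = \theta_2 = 0$, $q_1 = 1$, $q_2 = -1$ has minimizer $1 - \tfrac{\sqrt3}{2}$, which the corrected formulas reproduce.
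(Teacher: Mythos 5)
Your proof is correct and follows the same skeleton as the paper's: parametrise the interface by $\theta$, set $f'(\theta)=0$, and square the first-order condition. The difference is in how the resulting polynomial relation is handled. The paper stops at the quartic equation and delegates its solution to symbolic computation software, whereas you observe that, since $A$ and $B$ are quadratics with the same leading coefficient — equivalently $A'^2 = 4(2A - q_1^2)$ and $B'^2 = 4(2B - 3q_2^2)$ — the a priori quartic identity $A'^2 B = B'^2 A$ collapses to the quadratic $q_1^2 B = 3q_2^2 A$, i.e. $q_1^2(\theta-c_2)^2 = 3q_2^2(\theta-c_1)^2$, which is then solvable by hand (indeed it splits into two linear equations). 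You also make the reduction to the critical-point equation self-contained via strict convexity and coercivity of $f$, rather than citing the separate well-definedness lemma. This buys a fully elementary, software-free derivation of the same statement.

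Your caveat about the constants is well founded, and your corrected values are the right ones. With $\beta = 8q_1^2 - 24q_2^2$ the consistent numerator data are $\alpha = 4q_1^2(2\theta_2+q_2) - 12\,q_2^2(2\theta_1+q_1)$ and $\delta = 48\,q_1^2q_2^2\bigl((2\theta_1+q_1)-(2\theta_2+q_2)\bigr)^2$, exactly as you computed. A check against the paper's own worked example $(\theta_1,q_1)=(-3,1)$, $(\theta_2,q_2)=(1,-1)$ gives the minimizer $\theta_* = -4 + \tfrac{3}{2}\sqrt{3}\approx -1.40$ (the value $\approx -1.4$ shown in the figures; the $\sqrt{2}$ printed there is a separate typo), which your corrected formulas reproduce and the displayed $\alpha,\,\delta$ do not; the corrected formulas are also the ones consistent with the value $\theta_1 = -3 + \tfrac{\sqrt{3}}{2}$ used later in the proof of \fref{Proposition}{prop:snell_type_dist_does_not_agree_with_pointwise_decomp}. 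Two microscopic quibbles: $(\sqrt{B})'' = 3q_2^2/B^{3/2}$, not $q_2^2/B^{3/2}$ (harmless, only positivity is used), and the dismissal of $q_1=0$ via a measure-zero argument is beside the point since the lemma is a pointwise statement on $\mathbb{R}^5$; better to note either that the paper's own well-definedness lemma already excludes interface states, or that for $q_1=0$ the minimizer is trivially $\theta_1$ because $|(\sqrt{B})'|<\sqrt{2}$, which is still consistent with the (corrected) formula.
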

	\begin{proof}
		The key idea is that the first derivative condition for minimizers may be squared to produce a fourth-order polynomial equation.
		Solving this equation using symbolic computation software then produces the claim.
		More precisely, let us define $f: \mathbb{R}\to \mathbb{R}$ via
		\begin{equation*}
			f(\theta) \vcentcolon= \norm{S_1 - I(\theta)}{s} + \norm{S_2 - I(\theta)}{u}.
		\end{equation*}
		Then, proceeding as in the proof of \fref{Lemma}{lemma:d_snell_is_well_defined} we see that
		\begin{equation*}
			f'(\theta)
			= \frac{ {\langle I - S_1,\, \partial_\theta I \rangle}_s }{ \norm{S_1 - I}{s} }
			+ \frac{ {\langle I - S_2,\, \partial_\theta I \rangle}_u }{ \norm{S_2 - I}{u} }
		\end{equation*}
		and so, since $\partial_\theta I = e_\theta$, we see that if $f'(\theta) = 0$ then
		\begin{equation}
			{\langle I - S_2,\, e_\theta \rangle }_{u}^2 \norm{S_1 - I}{s}^2
			= {\langle I - S_1,\, e_\theta \rangle }_{s}^2 \norm{S_2 - I}{u}^2.
		\label{eq:explicity_computation_of_intersection_point_key_eq}
		\end{equation}
		Since
		\begin{equation*}
			{\langle I - S_i,\, e_\theta \rangle}_u
			= {\langle I - S_i,\, e_\theta \rangle}_u
			= 2\theta - 2\theta_i - q_i
		\end{equation*}
		while
		\begin{align*}
			\norm{S_1 - I}{s}^2 &= {(\theta - \theta_1)}^2 + {(\theta - \theta_1 - q_1)}^2 \text{ and } \\
			\norm{S_2 - I}{u}^2 &= \frac{1}{2}  {(\theta - \theta_2 + q_2)}^2 + \frac{3}{2} {(\theta - \theta_2 - q_2)}^2,
		\end{align*}
		where we have used the identity
		\begin{equation*}
			\theta^2 +q^2 h = (1-h/2) {(\theta - qh})^2 + (h/2) {(\theta + q)}^2,
		\end{equation*}
		the equation \eqref{eq:explicity_computation_of_intersection_point_key_eq} comes down to a polynomial equation of order four for $\theta$.
		We may solve this equation explicitly using symbolic computation software, thus producing the claim.
	\end{proof}

	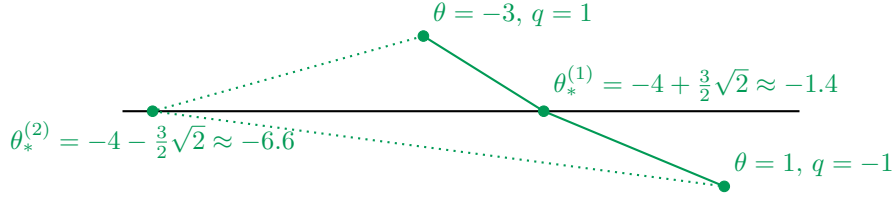
\begin{figure}
		\centering
		\captionsetup{width=0.85\textwidth}
        \begin{tikzpicture}
        	\draw[thick]					(-7  , 0  ) --	( 2  , 0  );
        	
        	\draw[thick, ForestGreen]			(-3  , 1  ) --	(-1.4, 0  );
        	\draw[thick, ForestGreen]			(-1.4, 0  ) --	( 1  ,-1  );
        
        	\draw[thick, ForestGreen, dotted]		(-3  , 1  ) --	(-6.6, 0  );
        	\draw[thick, ForestGreen, dotted]		(-6.6, 0  ) --	( 1  ,-1  );
        
        	\filldraw[ForestGreen]				(-3  , 1  ) circle (0.07);
        	\node[thick, ForestGreen, above right] at	(-3  , 1  ) {$\theta=-3,\,q=1$};
        
        	\filldraw[ForestGreen]				( 1  ,-1  ) circle (0.07);
        	\node[thick, ForestGreen, above right] at	( 1  ,-1  ) {$\theta=1,\,q=-1$};
        
        	\filldraw[ForestGreen]				(-1.4, 0  ) circle (0.07);
        	\node[thick, ForestGreen, above right] at	(-1.4, 0  ) {$\theta_*^{(1)} = -4 + \frac{3}{2}\sqrt{2} \approx -1.4$};
        
        	\filldraw[ForestGreen]				(-6.6, 0  ) circle (0.07);
        	\node[thick, ForestGreen, below] at	(-6.6, 0  ) {$\theta_*^{(2)} = -4 - \frac{3}{2}\sqrt{2} \approx -6.6$};
        
        \end{tikzpicture}
		\caption{\small
			This shows an example where the two possible intersection points are computed from \fref{Lemma}{lemma:explicit_computation_of_intersection_point}.
			In practice, it is very straightforward to identify the correct intersection point:
			the lengths of both paths may readily be computed, and the intersection point corresponding to the shortest path is then the correct one.
		}
		\label{fig:snell_possible_intersection_points}
	\end{figure}

	We now discuss how to extract balanced components in this pointwise setting where the state space is $ \mathbb{R}^5$ (and not $ \mathbb{L}^2_\sigma $).
	The system of interest is
	\begin{subnumcases}{}
		\frac{du_h}{dt} + \frac{1}{\varepsilon} u_h^\perp = 0,\\
		\frac{dw}{dt} + \frac{1}{\varepsilon} (\theta - {\min}_0\, q) = 0,\\
		\frac{d\theta}{dt} + \frac{1}{\varepsilon} w = 0, \text{ and } \\
		\frac{dq}{dt} - \frac{1}{\varepsilon} w = 0,
	\end{subnumcases}
	which fits into our generic framework as it can be written in the form, for $S = (u_h,\,w,\,\theta,\,q)$, as
	\begin{equation*}
		\frac{dS}{dt} + \frac{1}{\varepsilon} N(S) = 0
		\text{ for } N(S) = \begin{pmatrix}
			u_h^\perp \\ \theta - {\min}_0\, q \\ w \\ -w
		\end{pmatrix}.
	\end{equation*}
	As detailed in \fref{Section}{sec:decomposition}, the first step in identifying an appropriate decomposition for this system is to study the kernel / zero set of $N$ and the image of $N$.
	We see that the kernel of $N$ may be characterized (akin to \fref{Proposition}{prop:alt_char_balanced_set}) as a zero set of appropriate measurements or may be explictly parameterised:
	\begin{align*}
		N(S) = 0
		&\iff S \in \ker N =\vcentcolon B
	\\
		&\iff u_h = 0,\, w= 0, \text{ and } b(S) \vcentcolon= \theta - {\min}_0\, q = 0
	\\
		&\iff S = \begin{pmatrix}
			0 \\ 0 \\ \frac{1}{2} {\min}_0\, M \\ M - \frac{1}{2} {\min}_0\, M
		\end{pmatrix}
	\end{align*}
	while, akin to \fref{Proposition}{prop:chara_im_N}, the image of $N$ may be characterised as follows:
	\begin{equation*}
		S \in \im N =\vcentcolon W
		\iff M(S) \vcentcolon= \theta + q = 0
		\iff S = \begin{pmatrix}
			u_h \\ w \\ \eta \\ -\eta
		\end{pmatrix}.
	\end{equation*}
	This leads to the following decomposition of $ \mathbb{R}^5$:
	\begin{equation}
		S = \begin{pmatrix}
			u_h \\ w \\ \theta \\ q
		\end{pmatrix} = \underbrace{\begin{pmatrix}
			0 \\ 0 \\ \frac{1}{2} {\min}_0\, M \\ M - \frac{1}{2} {\min}_0\, M
		\end{pmatrix}}_{\in \ker N = B} + \underbrace{\begin{pmatrix}
			u_h \\ w \\ \eta \\ -\eta
		\end{pmatrix}}_{\in \im N = W}.
	\label{eq:ODE_decomp}
	\end{equation}

	Now let us restrict our attention once again to the $(\theta,\,q)$--plane.
	We seek to determine whether or not the Snell-type metric agrees with the decomposition \eqref{eq:ODE_decomp}.
	In the $(\theta,\,q)$--plane, the balanced component takes the form
	\begin{equation*}
		\begin{pmatrix}
			\theta_B \\ q_B
		\end{pmatrix} = \begin{pmatrix}
			\frac{1}{2} {\min}_0\, M \\ M - \frac{1}{2} {\min}_0\, M
		\end{pmatrix}.
	\end{equation*}
	It turns out that this decomposition \emph{fails to agree} with the Snell-type energy.

	\begin{prop}[The Snell-type distance does not agree with the pointwise decomposition]
	\label{prop:snell_type_dist_does_not_agree_with_pointwise_decomp}
		As introduced in \fref{Definition}{def:Snell_type_metric_pointwise}, $ d_\text{Snell} $ does \emph{not} agree with the extraction of a balanced component,
		in the sense that there exists $ \mathcal{S} \in \mathbb{R}^5$ and $T_* \in B$ such that
		\begin{equation*}
			d_\text{Snell} (S,\,T_*) < d_\text{Snell} (S,\, S_B).
		\end{equation*}
		In other words:
		\begin{equation*}
			S_B \neq \argmin_{T\in B} d_\text{Snell} (S,\,T).
		\end{equation*}
	\end{prop}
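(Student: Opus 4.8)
The plan is to produce, in the pointwise state space $\mathbb{R}^5$, an explicit state $S$ with vanishing velocity and $w$-component---so that the whole construction lives in the $(\theta,q)$-plane---together with a competitor $T_*\in B$ which is strictly $d_\text{Snell}$-closer to $S$ than the balanced component $S_B$ is. Recall from \eqref{eq:ODE_decomp} that, in the $(\theta,q)$-plane, the balanced set $B$ is the bent ray $\{(0,q):q\geq 0\}\cup\{(q,q):q\leq 0\}$ and that $S_B$ is the unique balanced state carrying the same moist variable $M=\theta+q$ as $S$. A short computation shows that within a single phase this ``same $M$'' rule coincides with $\norm{\cdot}{s}$- (resp.\ $\norm{\cdot}{u}$-) orthogonal projection onto $B$, so $S_B$ is the closest balanced state \emph{of its own phase}; the failure of the proposition must therefore be sought across the interface. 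The conceptual point is that once $S$ and $S_B$ lie in different phases the relevant distance to $B$ is a \emph{refracted} (Snell) distance, for which orthogonal projection---hence the ``same $M$'' rule---is the wrong notion.

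Concretely I would take $S$ to be a strictly saturated state with $M(S)<0$, so that $S_B$ has $(\theta,q)=(M(S)/2,\,M(S)/2)$ and therefore lies on the \emph{unsaturated} arm of $B$, i.e.\ in the opposite phase. Since $S_B$ then belongs to the one-parameter family $T(m)=(m/2,\,m/2)$, $m\leq 0$, it suffices to show that $m\mapsto d_\text{Snell}(S,T(m))$ is not minimised at $m=M(S)$. For this I would pass to coordinates $(\theta,M)$, in which $\norm{\cdot}{s}$ becomes the Euclidean metric on the half-plane $\{\theta\leq M\}$, $\norm{\cdot}{u}$ becomes the metric with matrix $\left(\begin{smallmatrix}2&-1\\-1&2\end{smallmatrix}\right)$ on $\{\theta>M\}$, the interface is $\{\theta=M\}$, and the unsaturated balanced arm is the ray $\{2\theta=M,\ M\leq 0\}$. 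A shortest path from $S=(\theta_0,M_0)$ to this ray consists of a Euclidean segment to an interface point $(\tau,\tau)$ followed by the $\norm{\cdot}{u}$-normal to the ray (which is horizontal, ending at $(\tau/2,\tau)$), so one must minimise $L(\tau)=\sqrt{(\theta_0-\tau)^2+(M_0-\tau)^2}+|\tau|/\sqrt2$ over $\tau\leq 0$. Squaring the first-order condition---exactly as in the proof of \fref{Lemma}{lemma:explicit_computation_of_intersection_point}---reduces it to a quadratic whose admissible root is $\tau_*=\big(\theta_0+(2+\sqrt3)M_0\big)/(3+\sqrt3)$, and the endpoint $T(\tau_*)$ of this geodesic has moist variable $M=\tau_*$. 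A one-line check gives $\tau_*=M_0$ precisely when $\theta_0=M_0$, i.e.\ only when $S$ sits on the interface; hence for every strictly saturated $S$ with $M(S)<0$ the minimiser $T_*:=T(\tau_*)\in B$ differs from $S_B$, so $d_\text{Snell}(S,T_*)=d_\text{Snell}(S,B)<d_\text{Snell}(S,S_B)$, which is the assertion of the proposition.

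To make everything fully explicit one can take $S\in\mathbb{R}^5$ with vanishing velocity and $w$ and $(\theta,q)=(-3,2)$; then $M(S)=-1$, $S_B$ has $(\theta,q)=(-\tfrac12,-\tfrac12)$, $\tau_*=\tfrac{1}{\sqrt3}-2$, and $T_*$ has $(\theta,q)=(\tfrac{1}{2\sqrt3}-1,\,\tfrac{1}{2\sqrt3}-1)$. Evaluating the two refracted distances---each a one-dimensional minimisation, solvable exactly by \fref{Lemma}{lemma:explicit_computation_of_intersection_point}---yields $d_\text{Snell}(S,T_*)<d_\text{Snell}(S,S_B)$. I expect the main difficulty to be bookkeeping rather than conceptual: one must check (i) that the extremal path really has the ``segment $+$ normal'' form, i.e.\ that the unique interior critical point of $L$ is its global minimum on $\tau\leq 0$ (true, since $L\to\infty$ as $\tau\to-\infty$ while $L'(0^-)>0$); (ii) that one selects the non-spurious root of the quadratic (the other root violates the sign of the first-order condition and is an artifact of squaring); and (iii) that $T_*$ lies on the admissible part $m\leq 0$ of the balanced ray, which holds because $\tau_*<M_0<0$. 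Once these are verified the strict inequality follows, symbolically in general or numerically for the explicit $S$ above.
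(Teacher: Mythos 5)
Your proposal is correct, and it reaches the conclusion by a genuinely stronger route than the paper. The paper's proof is a single explicit counterexample: it takes $S=(-3,\,1)$, hence $S_B=(-1,\,-1)$, picks the ad hoc competitor $T_*=(-1.1,\,-1.1)$ on the unsaturated balanced arm, computes both refracted distances exactly via \fref{Lemma}{lemma:explicit_computation_of_intersection_point}, and compares the two numbers ($>2.394$ versus $<2.381$). You instead characterise the actual Snell-closest point of the unsaturated arm: the change of variables to $(\theta,\,M)$, in which $\norm{\,\cdot\,}{s}$ is Euclidean and $\norm{\,\cdot\,}{u}$ has matrix $\left(\begin{smallmatrix}2&-1\\-1&2\end{smallmatrix}\right)$, together with the observation that the horizontal direction is $\norm{\,\cdot\,}{u}$-orthogonal to the ray $\{2\theta=M\}$, reduces the distance-to-arm computation to the one-dimensional minimisation of $L(\tau)$, whose admissible root $\tau_*=\bigl(\theta_0+(2+\sqrt3)M_0\bigr)/(3+\sqrt3)$ I have checked (it equals $M_0$ only when $\theta_0=M_0$, and your worked example $(\theta,q)=(-3,2)$ does give $d_\text{Snell}(S,T_*)<d_\text{Snell}(S,S_B)$ numerically, roughly $2.64<2.69$). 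This buys a structural statement — the Snell projection onto $\mathcal{B}$ disagrees with the balanced component for \emph{every} strictly saturated state with $M<0$, with the refraction mechanism made explicit — whereas the paper buys brevity. Two small points to nail down in a full write-up: (i) to pass from ``$\tau_*\neq M_0$'' to the \emph{strict} inequality $d_\text{Snell}(S,T(\tau_*))<d_\text{Snell}(S,S_B)$ you need either the uniqueness of the minimiser of $L$ combined with uniqueness of the $\norm{\,\cdot\,}{u}$-projection onto the ray (your sketch of (i)--(iii) covers most of this), or simply the direct evaluation on the explicit example, which is exactly what the paper does; (ii) when bounding $d_\text{Snell}(S,S_B)$ from below by the distance to the arm you minimise only over interface points with $\tau\leqslant 0$, so you should also note that crossing points with $\tau>0$ are never optimal (immediate, since there the objective is increasing because $\theta_0\leqslant M_0<0<\tau$). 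Neither point is a real obstruction.
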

	\begin{proof}
		Choose $S = (-3,\,1)$ such that $M = -3+1 = -2$ and so its balanced component is $S_B = (-1,\,-1)$
		and choose $T_* = (-1.1,\,-1.1)$.
		Note that $S$ is saturated whereas $S_B$ and $T_*$ are unsaturated. And, of course, both $S_B$ and $T_*$ belong to the balanced set $B = \ker N$ since they have vanishing $M$.

		On the one hand, equipped with \fref{Lemma}{lemma:explicit_computation_of_intersection_point} we see that
		\begin{equation*}
			d_\text{Snell} (S,\, S_B)
			= \norm{S-I_1}{s} + \norm{S_B - I_1}{u}
		\end{equation*}
		where
		\begin{equation*}
			I_1 \vcentcolon= I(\theta_1)
			\text{ for }
			\theta_1 = -3 + \frac{\sqrt{3}}{2}
		\end{equation*}
		and hence
		\begin{equation*}
			d_\text{Snell} (S,\, S_B)
			= \sqrt{\frac{5}{2} - \sqrt{3}} (1 +\sqrt{3}) > 2.394.
		\end{equation*}
		On the other hand, for $I_2 \vcentcolon= I(\theta_2)$ where $\theta_2 = \frac{11}{526} ( -135 + 17\sqrt{3})$,
		\begin{equation*}
			d_\text{Snell} (S,\,T_*)
			= \norm{S-I_2}{s} + \norm{T_*-I_2}{u}
			= (\sqrt{101488 - 31790 \sqrt{3}} (10 + 11 \sqrt{3}))/2630
			< 2.381.
		\end{equation*}
		Since
		\begin{equation*}
			d_\text{Snell} (S,\,T_*) < d_\text{Snell} (S,\,S_B)
		\end{equation*}
		this proves the claim.
	\end{proof}

	\begin{figure}
		\centering
		\captionsetup{width=0.85\textwidth}
        \begin{tikzpicture}[scale=2]
        	\draw[thick]				(-3.5    , 0    ) --	( 1.5  , 0     );
        	\node[thick, right] at			         		( 1.5  , 0     ) {$q=0$};
        
        	\draw[thick, Cerulean]			(-2.475,-2.475) --	( 0    , 0     );
        	\draw[thick, Cerulean]			( 0    , 0    ) --	( 0    , 1.5   );
        	\node[thick, Cerulean, above right] at	                 	( 0    , 1.5   ) {$b=0$};

        	\draw[thick, dashed, Thistle]			(-3    , 1    ) --	(-2.333, 0    );
        
        	\draw[thick, dashed, Thistle]			(-2.333, 0    ) --	(-1    ,-1    );

        	\draw[thick, ForestGreen]		(-3    , 1    ) --	(-2.667, 0    );
        
        	\draw[thick, ForestGreen]		(-2.667, 0    ) --	(-1.5  ,-1.5  );

        	\draw[thick, Cerulean, dotted]		(-3    , 1    ) --	(-1    ,-1    );

        	\filldraw				(-3    , 1    ) circle (0.04);
        	\node[thick, above left] at		(-3    , 1    ) {$S$};

        	\filldraw				(-1    ,-1    ) circle (0.04);
        	\node[thick, right] at			(-1    ,-1    ) {$S_B$};
        
        	\filldraw				(-1.5  ,-1.5  ) circle (0.04);
        	\node[thick, below] at			(-1.5  ,-1.5  ) {$T_*$};

        	\filldraw				(-2.333, 0    ) circle (0.04);
        	\node[thick, above right] at		(-2.333, 0    ) {$I_1$};
        
        	\filldraw				(-2.667, 0    ) circle (0.04);
        	\node[thick, below left] at		(-2.667, 0    ) {$I_2$};
        
        \end{tikzpicture}
		\caption{\small
			A graphical depiction of \fref{Proposition}{prop:snell_type_dist_does_not_agree_with_pointwise_decomp}.
			Note that the location of the points is not exact: this is an artistic rendering of the setup of the proof of
			\fref{Proposition}{prop:snell_type_dist_does_not_agree_with_pointwise_decomp} used because it makes it easier to see what is going on.
		}
		\label{fig:snell_counterexample}
	\end{figure}

	We conclude this section by extending the Snell-type distance discussed so far \emph{pointwise} to a distance between spatially-varying states.

	\begin{definition}[Snell-type metric, spatially-varying states]
	\label{def:Snell_type_metric_spatially_varying_states}
		We define $d_S : \mathbb{L}^2_\sigma \times \mathbb{L}^2_\sigma \to [0,\,\infty)$ as follows.
		For any $ \mathcal{X},\, \mathcal{Y} \in \mathbb{L}^2_\sigma $,
		\begin{equation*}
			d_S ( \mathcal{X},\, \mathcal{Y} )
			\vcentcolon= \norm{ d_\text{Snell} ( \mathcal{X},\, \mathcal{Y} )}{L^2}
			= { \left( \int d_\text{Snell}^2 ( \mathcal{X}(x),\, \mathcal{Y}(x) ) dx \right)}^{1/2}.
		\end{equation*}
	\end{definition}

	This distance is indeed a metric, and it agrees with the conserved energy when comparable to it.

	\begin{prop}
	\label{prop:snell_type_dist_is_metric_comparable_to_energy}
		As introduced in \fref{Definition}{def:Snell_type_metric_spatially_varying_states}, $d_S$ is a metric on $ \mathbb{L}^2_\sigma $ which agrees with the energy in the sense that,
		if $ \mathcal{X}_1,\, \mathcal{X}_2 \in \mathbb{L}^2_\sigma $ have the same phases, meaning that the signs of $q_1$ and $q_2$ agree everywhere,
		then
		\begin{align*}
			{( d_S ( \mathcal{X}_1,\, \mathcal{X}_2 ) )}^2
			&= E_H ( \mathcal{X}_1 - \mathcal{X}_2 )
		\\
			&= \int {\lvert u_1 - u_2 \rvert}^2 + {(\theta_1 - \theta_2)}^2 + {(q_1 - q_2)}^2 H + {((\theta_1 + q_1) - (\theta_2 + q_2))}^2
		\end{align*}
		for $H \vcentcolon= \mathds{1}(q_1<0) = \mathds{1}(q_2<0)$.
	\end{prop}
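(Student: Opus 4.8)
The plan is to obtain both assertions by transferring the pointwise results of \fref{Lemmas}{lemma:d_snell_is_well_defined} and \ref{lemma:d_snell_is_metric_and_compatible_with_energy} through the $L^2$ norm, using only elementary properties of $L^2(\mathbb{T}^3)$. Accordingly the proof breaks into three parts: first one checks that $d_S$ is finite and well-defined on $\mathbb{L}^2_\sigma\times\mathbb{L}^2_\sigma$, then one verifies the metric axioms, and finally one proves the energy identity.

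For the well-definedness, I would show that $x\mapsto d_\text{Snell}(\mathcal{X}(x),\,\mathcal{Y}(x))$ is measurable and lies in $L^2(\mathbb{T}^3)$ whenever $\mathcal{X},\,\mathcal{Y}\in\mathbb{L}^2_\sigma$. Measurability follows because $d_\text{Snell}:\mathbb{R}^5\times\mathbb{R}^5\to[0,\infty)$ is Borel: on $\mathcal{U}\times\mathcal{U}$ and on $\mathcal{S}\times\mathcal{S}$ it equals a norm of a difference and is therefore continuous, while on $\mathcal{S}\times\mathcal{U}$ (and symmetrically on $\mathcal{U}\times\mathcal{S}$) it equals $\min_{I\in\mathcal{I}}\big(\norm{S_1-I}{s}+\norm{S_2-I}{u}\big)$, which — the objective being jointly continuous, so that the infimum over $\mathcal{I}$ equals the infimum over a countable dense subset — is an infimum of countably many continuous functions of $(S_1,\,S_2)$ and hence Borel; composing with the measurable maps $\mathcal{X},\,\mathcal{Y}$ then gives a measurable function of $x$. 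For integrability I would record the linear bound $d_\text{Snell}(S_1,\,S_2)\lesssim|S_1-S_2|$: when $S_1$ and $S_2$ lie in the same phase this is just norm equivalence on $\mathbb{R}^5$, and when they lie in opposite phases one takes $I_0$ to be the point at which the Euclidean segment from $S_1$ to $S_2$ crosses $\{q=0\}$, so that $d_\text{Snell}(S_1,\,S_2)\le\norm{S_1-I_0}{s}+\norm{S_2-I_0}{u}\lesssim|S_1-I_0|+|S_2-I_0|\le 2|S_1-S_2|$. Hence $d_\text{Snell}(\mathcal{X}(\cdot),\,\mathcal{Y}(\cdot))\lesssim|\mathcal{X}(\cdot)-\mathcal{Y}(\cdot)|\in L^2$, so $d_S$ takes finite values.

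Next I would verify the metric axioms. Non-negativity and symmetry are immediate from the pointwise statements. Since $d_\text{Snell}$ is a metric on $\mathbb{R}^5$ by \fref{Lemma}{lemma:d_snell_is_metric_and_compatible_with_energy}, $d_S(\mathcal{X},\,\mathcal{Y})=0$ forces $d_\text{Snell}(\mathcal{X}(x),\,\mathcal{Y}(x))=0$ for a.e.\ $x$, hence $\mathcal{X}(x)=\mathcal{Y}(x)$ a.e., i.e.\ $\mathcal{X}=\mathcal{Y}$ in $\mathbb{L}^2_\sigma$. For the triangle inequality, the pointwise triangle inequality gives, at a.e.\ $x$,
\begin{equation*}
	d_\text{Snell}(\mathcal{X}(x),\,\mathcal{Z}(x))\le d_\text{Snell}(\mathcal{X}(x),\,\mathcal{Y}(x))+d_\text{Snell}(\mathcal{Y}(x),\,\mathcal{Z}(x)),
\end{equation*}
so taking $L^2(\mathbb{T}^3)$ norms (monotone on non-negative functions) and then applying Minkowski's inequality yields $d_S(\mathcal{X},\,\mathcal{Z})\le d_S(\mathcal{X},\,\mathcal{Y})+d_S(\mathcal{Y},\,\mathcal{Z})$.

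Finally, for the energy identity, suppose $\mathds{1}(q_1<0)=\mathds{1}(q_2<0)=:H$ pointwise a.e.\ on $\mathbb{T}^3$. Then at a.e.\ $x$ the states $\mathcal{X}_1(x)$ and $\mathcal{X}_2(x)$ lie in the same phase, so the compatibility clause of \fref{Lemma}{lemma:d_snell_is_metric_and_compatible_with_energy} gives, at $x$,
\begin{equation*}
	d_\text{Snell}^2(\mathcal{X}_1(x),\,\mathcal{X}_2(x))=|u_1-u_2|^2+(\theta_1-\theta_2)^2+(q_1-q_2)^2 H+\big((\theta_1+q_1)-(\theta_2+q_2)\big)^2,
\end{equation*}
the weight $h$ from that lemma being equal to $H$ in either phase. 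Integrating over $\mathbb{T}^3$ and recognising the right-hand side as $E_H(\mathcal{X}_1-\mathcal{X}_2)$ via \fref{Definition}{def:energy_fixed_Heaviside_1} (with $u=u_1-u_2$, $\theta=\theta_1-\theta_2$, $q=q_1-q_2$, so $\theta+q=(\theta_1+q_1)-(\theta_2+q_2)$) gives $d_S^2(\mathcal{X}_1,\,\mathcal{X}_2)=E_H(\mathcal{X}_1-\mathcal{X}_2)$, as claimed. The only genuinely non-routine step is the well-definedness — that $d_\text{Snell}$ is Borel and controlled by a multiple of the Euclidean distance; given that, both the metric axioms and the energy identity are a direct passage to $L^2$ of the already-established pointwise facts.
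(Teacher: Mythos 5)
Your proposal is correct and follows essentially the same route as the paper's proof: positive-definiteness and symmetry are inherited pointwise from $d_\text{Snell}$, the triangle inequality follows from the pointwise triangle inequality together with Minkowski's inequality in $L^2$, and the energy identity comes from integrating the pointwise compatibility clause of \fref{Lemma}{lemma:d_snell_is_metric_and_compatible_with_energy} with $h=H$. The only addition is your preliminary check that $x\mapsto d_\text{Snell}(\mathcal{X}(x),\,\mathcal{Y}(x))$ is Borel and dominated by a multiple of $\lvert \mathcal{X}(x)-\mathcal{Y}(x)\rvert$, a well-definedness point the paper leaves implicit; it is a harmless and correct refinement rather than a different argument.
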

	\begin{proof}
		First we verify that $d_S$ is a metric.
		The positive-definiteness and symmetry follow immediately from the fact that $ d_\text{Snell} $ is a metric,
		since the $L^2$ norm is also positive-definite.

		The triangle inequality of $d_S$ then follows from the triangle inequalities of $ d_\text{Snell} $ and of $L^2$
		and from the fact that $s\mapsto \sqrt{s}$ and $s\mapsto s^2$ are increasing:
		\begin{align*}
			d_S ( \mathcal{X}_1,\, \mathcal{X}_2 )
			&= { \left( \int d_\text{Snell}^2 ( \mathcal{X}_1,\, \mathcal{X}_2 ) \right)}^{1/2}
			\\&\leqslant { \left( \int { \left( d_\text{Snell} ( \mathcal{X}_1,\, \mathcal{X}_3) + d_\text{Snell} ( \mathcal{X}_3,\, \mathcal{X}_2) \right)}^2 \right)}^{1/2}
			\\&= \norm{ d_\text{Snell} ( \mathcal{X}_1,\, \mathcal{X}_3 ) + d_\text{Snell} ( \mathcal{X}_3,\, \mathcal{X}_2 ) }{L^2}
			\\&\leqslant \norm{ d_\text{Snell} ( \mathcal{X}_1,\, \mathcal{X}_3 )}{L^2} + \norm{ d_\text{Snell} ( \mathcal{X}_3,\, \mathcal{X}_2 )}{L^2}
			\\&= d_S ( \mathcal{X}_1,\, \mathcal{X}_3) + d_S ( \mathcal{X}_3,\, \mathcal{X}_2).
		\end{align*}

		Finally we prove that $d_S$ agrees with the energy, when comparable.
		So suppose that $q_1(x)$ and $q_2(x)$ have the same phases, i.e. their signs agree everywhere in the spatial domain.
		Then, by virtue of \fref{Lemma}{lemma:d_snell_is_metric_and_compatible_with_energy}, we have that
		\begin{align*}
			d_S^2 ( \mathcal{X}_1,\, \mathcal{X}_2 )
			&= \int d_\text{Snell}^2 \left( \mathcal{X}_1 (x),\, \mathcal{X}_2 (x) \right) dx
			\\&= \int {\lvert u_1 - u_2 \rvert}^2  + {(\theta_1 - \theta_2)}^2 + {(q_1-q_2)}^2 H + {((\theta_1 + q_1) - (\theta_2 + q_2))}^2
			\\&= E_H ( \mathcal{X}_1 - \mathcal{X}_2 )
		\end{align*}
		for $H \vcentcolon= \mathds{1}(q_1<0) = \mathds{1}(q_2<0)$.
	\end{proof}

	\begin{remark}
	\label{rmk:suggest_snell_fails_in_PDE_case}
		Finally we note that, although this distance agrees with the energy, the evidence from the restricted case of spatially constant cases
		strongly suggests that this Snell-type distance does \emph{not} agree with the extraction of a balanced component, especially in light of \fref{Lemma}{lemma:ODE_sols_are_PDE_sols_general} which tells us that spatially constant solutions are honest-to-goodness solutions of the moist Boussinesq system considered here.

        While this Snell-type distance may not agree with
        the balanced--unbalanced decomposition, 
        it could be useful for other applications such as
        data-driven methods that identify nonlinear features 
        in dynamical systems and utilize a metric, distance function,
        and/or kernel function
        \cite{coifman2006diffusion,giannakis2012nonlinear,berry2016variable,das2019delay,giannakis2019data}.
	\end{remark}

\subsection{Metric-less approach}
\label{sec:metric_less_approach}

	To make the approach of this section easy to follow we must be clear about what the last two sections, \fref{Sections}{sec:PDE_centric_approach} and \ref{sec:energy_centric_approach}, achieved.
	\begin{itemize}
		\item	In \fref{Section}{sec:PDE_centric_approach} we devised a metric which agreed with the decomposition, but which failed to agree with the energy (it agreed only \emph{conditionally}).
		\item	In \fref{Section}{sec:energy_centric_approach} we devised a metric which agreed with the energy, but which failed to agree with the decomposition.
	\end{itemize}
	In this section we will see that it is possible to agree with both the decomposition and the energy, at a price.
	That price is that we lose the triangle inequality.
	Instead of a metric, we will therefore be dealing with a statistical divergence.
	\begin{definition}[Statistical divergence]
	\label{def:stat_div}
		Let $\mathcal{H}$ be a Hilbert space.
		A \emph{statistical divergence} on $\mathcal{H}$ is a map $D: \mathcal{H} \times \mathcal{H} \to \mathbb{R} $ satisfying the following properties.
		\begin{enumerate}
			\item	Non-negative: $D(x,\,y) \geqslant 0$ for every $x,\,y\in \mathcal{H}$.
			\item	Definite: $D(x,\,y) = 0$ if and only if $x=y$.
			\item	Limiting quadratic: for every $x\in \mathcal{H}$ there is a positive-definite quadratic form $\mathcal{Q}_x$ such that, for every $y\in \mathcal{H}$,
				\begin{equation*}
					\frac{1}{\varepsilon^2} D(x,\,x+\varepsilon y) \to \mathcal{Q}_x (y)
				\end{equation*}
				as $\varepsilon\downarrow 0$.
				Since this quadratic form gives rise to a positive-definite inner product, $\mathcal{Q}_x$ is often referred to as the \emph{Riemann metric induced by the divergence}.
		\end{enumerate}
	\end{definition}

The prototypical example of a statistical divergence is the relative entropy, 
or Kullback–Leibler divergence, from information theory
\cite{cover2012elements,branicki2012quantifying,chen2014information}.
Another reminiscent notion from the PDE literature is the \emph{relative energy}
\cite{dafermos1979second,feireisl2012relative,feireisl2016singular,breit2017compressible}.
 
	The statistical divergence here is built in a simple way: we use the metric $d_H$ introduced in \fref{Section}{sec:PDE_centric_approach},
	but now let $H$ vary depending on the states whose divergence is being measured.
	\begin{definition}
	\label{def:our_stat_div}
		For any $ \mathcal{X}_1,\, \mathcal{X}_2 \in \mathbb{L}^2_\sigma $ we define
		\begin{equation*}
			D ( \mathcal{X}_1;\, \mathcal{X}_2)
			\vcentcolon= \frac{1}{2} { d_{H_{B_1}} ( \mathcal{X}_1,\, \mathcal{X}_2 ) }^2
		\end{equation*}
		where $d_H$ is as in \fref{Definition}{def:dimensionalized_Parseval_distance},
		where $H_{B_1} \vcentcolon= \mathds{1}(q_{B,\,1} < 0)$ denotes the \emph{balance} interface of the first state $ \mathcal{X}_1 $.
	\end{definition}

    \begin{remark}
    \label{rmk:detailed_computation_stat_div}
        We detail how to compute the statistical divergence $D$ introduced in \fref{Definition}{def:our_stat_div} (similarly to how we detailed the computation of the metric $d_H$ in \fref{Definition}{def:dimensionalized_Parseval_distance}).
        So let $\mathcal{X}_1 = (u_1,\,\theta_1,\,q_1)$ and $\mathcal{X}_2 = (u_2,\,\theta_2,\,q_2)$ be states in $ \mathbb{L}^2_\sigma $.
        
        First we compute $\mathcal{X}_{B,\,1}$ in order to determine the indicator $H_{B_1}$.
        This is done by setting
        \begin{equation*}
            M_1 \vcentcolon= M(\mathcal{X}_1) = \theta_1 + q_1
            \text{ and } 
            {PV}_1 = PV ( \mathcal{X}_1 ) = \nabla^\perp_h\cdot u_{h,\,1} + \partial_3 \theta_1
        \end{equation*}
        and finding the solution $p_1 \in \mathring{H}^1$ of
        \begin{equation*}
            \Delta p_1 + \frac{1}{2} \partial_3 {\min}_0\, ( M_1 - \partial_3 p_1) = {PV}_1.
        \end{equation*}
        The phases of the balanced component of the first state are then encoded in
        \begin{equation*}
            H_{B_1} \vcentcolon= \mathds{1} (q_{B,\,1} < 0) = \mathds{1} (\partial_3 p_1 < M_1).
        \end{equation*}

        Second we proceed as in \fref{Defintion}{def:dimensionalized_Parseval_distance} and compute $d_{H_{B_1}} (\mathcal{X}_1,\, \mathcal{X}_2)$.
        This means that we let
		\begin{align*}
			&PV \vcentcolon= PV(\mathcal{X}_1) - PV(\mathcal{X}_2),\,
			M \vcentcolon= M(\mathcal{X}_1) - M(\mathcal{X}_2),\\
			&j \vcentcolon= j(\mathcal{X}_1) - j(\mathcal{X}_2),\,
			w \vcentcolon= w(\mathcal{X}_1) - w(\mathcal{X}_2), \text{ and }
			a \vcentcolon= a(\mathcal{X}_1) - a(\mathcal{X}_2),
		\end{align*}
      and that we let $p\in \mathring{H}^1$ and $\wavecoord\in {(L^2)}^3_\sigma$ solve
		\begin{equation*}
			\nabla\cdot ( A_{H_{B_1}} \nabla p) = PV - \frac{1}{2} \partial_3 (HM)
		\end{equation*}
		and
		\begin{equation*}
			\nabla\times (A_{H_{B_1}}^{-1}\wavecoord) = j + (\partial_3 w) e_3
			\text{ subject to } \fint A_{H_{B_1}}^{-1} \wavecoord = a \text{ and } \nabla\cdot\wavecoord = 0
		\end{equation*}
        for $A$ and $A^{-1}$ as in \fref{Definition}{def:energy_fixed_Heaviside_coord}.

        So finally we have that
		\begin{equation*}
            D ( \mathcal{X}_1,\, \mathcal{X}_2 )
			= \frac{1}{2} { d_{H_{B_1}}( \mathcal{X}_1,\, \mathcal{X}_2 )}^2
			= \int Q_H^+ (p) + \left( 1 + \frac{H}{2} \right) M^2 + Q_H^- (\wavecoord) + w^2
		\end{equation*}
		for $Q_H^\pm$ as introduced in \fref{Definition}{def:energy_fixed_Heaviside_coord}.
        We note that, in the more concise notation of \fref{Definition}{def:dimensionalized_Parseval_distance} we may write this as
        \begin{equation*}
            D ( \mathcal{X}_1,\, \mathcal{X}_2 ) 
            = \widetilde{E}_{H_{B_{1}}} ( \mathcal{C}_{H_{B_1}} [ \mathcal{M} ( \mathcal{X}_1 ) - \mathcal{M} (\mathcal{X}_2) ] )
        \end{equation*}
    \end{remark}

    \begin{remark}
        We saw in \fref{Remark}{rmk:detailed_computation_stat_div} that we may write the statistical divergence in terms of $\mathcal{C}_H$. This map is an analog of $\mathcal{C}$, which is the map from measurement space to coordinate space which associates to a state's measurements its corresponding coordinates (see \fref{Remark}{rmk:three_descriptions}). The analogous map $\mathcal{C}_H$ proceeds similarly, except that now the three transformations (between state space, measurement space, and coordinate space) are \emph{linear}. The linearity comes from treating a particular indicator function $H$ as fixed.

        More precisely, given an indicator function $H$, we have the following.
        \begin{itemize}
            \item   The ``fixed $H$'' coordinate-to-state map $\mathcal{S}_H : \mathfrak{C} \to \mathbb{L}^2_\sigma$ is given by
            \begin{equation*}
                \mathcal{S}_H (p,\,M,\,\wavecoord,\,w) = (u,\,\theta,\,q)
                \iff \left\{
                    \begin{aligned}
                        u &= \nabla_h^\perp p + \wavecoord_h^\perp\\
                        \theta &= \partial_3 p + \frac{1}{2} H (M - \partial_3 p) + \wavecoord_3\\
                        q &= M - \partial_3 p - \frac{1}{2} H (M - \partial_3 p) - \wavecoord_3,
                    \end{aligned}
                \right.
            \end{equation*}
            where recall that $\mathfrak{C}$ defined in \fref{Remark}{rmk:three_descriptions} is the coordinate space.
            \item   The ``fixed $H$'' state-to-measurement map $\mathcal{M}_H : \mathbb{L}^2_\sigma \to \mathfrak{M}$, where $\mathfrak{M}$ denotes the measurement space as in \fref{Proposition}{prop:global_chg_coord}, is given by
            \begin{equation*}
                \mathcal{M}_H (u,\,\theta,\,q) = (PV,\,M,\,j_H,\,w,\,a_H)
            \end{equation*}
            where
            \begin{equation*}
                j_H (u,\,\theta,\,q) = \partial_3 u_h - \nabla_h^\perp ( \theta - qH)
                \text{ and }
                a_H = - \fint u_h^\perp + \fint (\theta - qH) e_3.
            \end{equation*}
            This means that only $j$ and $a$ are modified into $j_H$ and $a_H$, respectively, but the remaining measurements are unchanged. This is because they are the only nonlinear measurements in the first place since they are the only measurements involving the buoyancy.
            \item   The ``fixed $H$'' measurement-to-coordinate map $\mathcal{C}_H : \mathfrak{M} \to \mathfrak{C}$ is as defined in \fref{Lemma}{lemma:def_and_invertibility_of_C_H}.
        \end{itemize}
        These maps will all be used below to compute the Riemann metric induced by the statistical divergence introduced in \fref{Definition}{def:our_stat_div}.
    \end{remark}
    
	\begin{prop}
	\label{prop:stat_div}
		The map $D : \mathbb{L}^2_\sigma \times \mathbb{L}^2_\sigma \to [0,\,\infty)$ introduced in \fref{Definition}{def:our_stat_div} is a statistical divergence
		in the sense of \fref{Definition}{def:stat_div}.
		In particular, for any $ \mathcal{X},\, \mathcal{Y} \in \mathbb{L}^2_\sigma $,
		\begin{equation*}
			\frac{1}{\varepsilon^2} D( \mathcal{X};\, \mathcal{X} + \varepsilon \mathcal{Y} )
			\to E_B \left( { \left[ \mathcal{Y}  \right] }_{ \mathcal{X} } \right)
		\end{equation*}
		where $ { \left[ \mathcal{Y}  \right] }_{ \mathcal{X} } \vcentcolon= \left( \mathcal{S}_B \circ \mathcal{C}_B \circ \mathcal{M}_H \right) (\mathcal{Y}) $
		is known as the $ \mathcal{X}$--representative of $ \mathcal{Y} $
		(such that if $H_{ \mathcal{X}} = H_{ \mathcal{Y} }$ and $H_{B,\, \mathcal{X}} = H_{B,\, \mathcal{Y} }$ then $ { \left[ \mathcal{Y} \right] }_{ \mathcal{X}} = \mathcal{Y} $).
		Moreover
		\begin{enumerate}
			\item	this statistical divergence agrees with the conserved energy in the sense that, if $H_1 = H_2 = H_{B,\,1} = H_{B,\,2}$ then
				\begin{equation*}
					D( \mathcal{X}_1;\, \mathcal{X}_2 ) = E_H ( \mathcal{X}_1 - \mathcal{X}_2)
				\end{equation*}
				where $H = \mathds{1}(q_1 < 0) = \mathds{1} (q_2 < 0)$ and
			\item	this statistical divergence agrees with the extraction of a balanced component in the sense that
				\begin{equation*}
					\mathcal{X}_B = \argmin_{ \mathcal{Y} \in \mathcal{B} } D ( \mathcal{X};\, \mathcal{Y} ).
				\end{equation*}
		\end{enumerate}
	\end{prop}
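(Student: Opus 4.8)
The plan is to reduce the bulk of the statement to properties of the fixed‑indicator metric $d_H$ already established in \fref{Proposition}{prop:dimensionalized_Parseval_distance}, and to treat only the limiting‑quadratic clause of \fref{Definition}{def:stat_div} by a direct linearisation of $\mathcal{M}$.

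First, non‑negativity and definiteness are immediate: by \fref{Definition}{def:our_stat_div} we have $D(\mathcal{X}_1;\mathcal{X}_2)=\tfrac12 d_{H_{B_\mathcal{X}_1}}(\mathcal{X}_1,\mathcal{X}_2)^2$ with $d_{H_{B_\mathcal{X}_1}}$ a genuine metric (\fref{Proposition}{prop:dimensionalized_Parseval_distance}), so $D\geq 0$ with equality iff $\mathcal{X}_1=\mathcal{X}_2$. Property~1 is also immediate: when $H_1=H_2=H_{B,1}=H_{B,2}=:H$ the indicator used in $D$ equals $H$, and the conditional energy‑agreement clause of \fref{Proposition}{prop:dimensionalized_Parseval_distance} gives $D(\mathcal{X}_1;\mathcal{X}_2)=\tfrac12 d_H(\mathcal{X}_1,\mathcal{X}_2)^2=E_H(\mathcal{X}_1-\mathcal{X}_2)$. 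For property~2, observe that once $\mathcal{X}$ is fixed the indicator $H_{B_\mathcal{X}}$ is fixed, so minimising $\mathcal{Y}\mapsto D(\mathcal{X};\mathcal{Y})$ over $\mathcal{B}$ is the same as minimising $\mathcal{Y}\mapsto d_{H_{B_\mathcal{X}}}(\mathcal{X},\mathcal{Y})$ over $\mathcal{B}$, whose unique minimiser is $\mathcal{X}_B$ by the balanced‑extraction clause of \fref{Proposition}{prop:dimensionalized_Parseval_distance}.

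It remains to prove the limiting‑quadratic property. Starting from the expression in \fref{Remark}{rmk:detailed_computation_stat_div}, write $D(\mathcal{X};\mathcal{X}+\varepsilon\mathcal{Y})=\widetilde{E}_{H_{B_\mathcal{X}}}\big(\mathcal{C}_{H_{B_\mathcal{X}}}[\mathcal{M}(\mathcal{X})-\mathcal{M}(\mathcal{X}+\varepsilon\mathcal{Y})]\big)$; since $\mathcal{C}_{H_{B_\mathcal{X}}}$ is linear and $\widetilde{E}_{H_{B_\mathcal{X}}}$ is a continuous quadratic form on $\mathfrak{C}$, this equals $\varepsilon^2\,\widetilde{E}_{H_{B_\mathcal{X}}}\big(\mathcal{C}_{H_{B_\mathcal{X}}}[\tfrac1\varepsilon(\mathcal{M}(\mathcal{X})-\mathcal{M}(\mathcal{X}+\varepsilon\mathcal{Y}))]\big)$, so everything hinges on computing $\lim_{\varepsilon\downarrow0}\tfrac1\varepsilon[\mathcal{M}(\mathcal{X}+\varepsilon\mathcal{Y})-\mathcal{M}(\mathcal{X})]$ in $\mathfrak{M}$. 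All components of $\mathcal{M}$ other than $j$ and $a$ are linear in $(u,\theta,q)$, so their difference quotients are exact; in $j$ and $a$ the sole nonlinearity is $q\mapsto{\min}_0 q$, whose difference quotient $\tfrac1\varepsilon[{\min}_0(q_\mathcal{X}+\varepsilon q_\mathcal{Y})-{\min}_0 q_\mathcal{X}]$ is bounded by $|q_\mathcal{Y}|$ and converges pointwise to $q_\mathcal{Y}\,\mathds{1}(q_\mathcal{X}<0)$ off the set $\{q_\mathcal{X}=0\}$, hence converges in $L^2$ to $q_\mathcal{Y}H_\mathcal{X}$ once that set is negligible. This gives $\tfrac1\varepsilon[\mathcal{M}(\mathcal{X}+\varepsilon\mathcal{Y})-\mathcal{M}(\mathcal{X})]\to\mathcal{M}_{H_\mathcal{X}}(\mathcal{Y})$ and therefore
\begin{equation*}
\tfrac1{\varepsilon^2}D(\mathcal{X};\mathcal{X}+\varepsilon\mathcal{Y})\longrightarrow \widetilde{E}_{H_{B_\mathcal{X}}}\big((\mathcal{C}_{H_{B_\mathcal{X}}}\circ\mathcal{M}_{H_\mathcal{X}})(\mathcal{Y})\big).
\end{equation*}
To match this with $E_B([\mathcal{Y}]_\mathcal{X})$ I would record the fixed‑indicator analogue of the second identity in \fref{Proposition}{prop:rewrite_the_energy}, namely $\widetilde{E}_H=E_H\circ\mathcal{S}_H$ for \emph{every} indicator $H$ — the proof is the same computation, since by construction $\mathcal{S}_H$ makes $\theta-qH=\partial_3 p+(1+H)\wavecoord_3$, $\theta+q=M$ and $\wavecoord$ divergence‑free — and apply it with $H=H_{B_\mathcal{X}}$; this rewrites the right‑hand side as $E_{H_{B_\mathcal{X}}}\big((\mathcal{S}_{H_{B_\mathcal{X}}}\circ\mathcal{C}_{H_{B_\mathcal{X}}}\circ\mathcal{M}_{H_\mathcal{X}})(\mathcal{Y})\big)=E_B([\mathcal{Y}]_\mathcal{X})$. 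That this limit is a positive‑definite quadratic form in $\mathcal{Y}$ then follows because $[\,\cdot\,]_\mathcal{X}=\mathcal{S}_{H_{B_\mathcal{X}}}\circ\mathcal{C}_{H_{B_\mathcal{X}}}\circ\mathcal{M}_{H_\mathcal{X}}$ is an injective linear map — the fixed‑indicator analogue of the invertible change of coordinates of \fref{Remark}{rmk:three_descriptions}, with $\mathcal{C}_H$ invertible by \fref{Lemma}{lemma:def_and_invertibility_of_C_H} — while $E_H$ is positive‑definite on $\mathbb{L}^2_\sigma$ (the combination $\theta^2+(\theta+q)^2$ controls $q$ even where $H\equiv 0$). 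The parenthetical sanity check is then clear: if $H_\mathcal{X}=H_\mathcal{Y}$ then $\mathcal{M}_{H_\mathcal{X}}(\mathcal{Y})=\mathcal{M}(\mathcal{Y})$, and if moreover $H_{B,\mathcal{X}}=H_{B,\mathcal{Y}}$ then $\mathcal{C}_{H_{B_\mathcal{X}}}$ returns the true coordinates of $\mathcal{Y}$ and $\mathcal{S}_{H_{B_\mathcal{X}}}$ inverts the parametrisation of \fref{Corollary}{cor:decomposition}, so $[\mathcal{Y}]_\mathcal{X}=\mathcal{Y}$.

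The main obstacle is exactly the non‑smoothness of $q\mapsto{\min}_0 q$ at $q=0$: the difference quotient converges to the claimed \emph{linear} expression $q_\mathcal{Y}H_\mathcal{X}$ only after discarding $\{q_\mathcal{X}=0\}$, and on that set the genuine one‑sided limit is $\mathds{1}(q_\mathcal{X}=0)\,{\min}_0 q_\mathcal{Y}$, which is not linear in $\mathcal{Y}$; consequently the limiting‑quadratic clause should be read with the (generic) understanding that $\{q_\mathcal{X}=0\}$ has zero measure, or stated on that subclass of states. Once this point is handled the remainder is bookkeeping resting on \fref{Proposition}{prop:dimensionalized_Parseval_distance} and the fixed‑indicator energy identity $\widetilde{E}_H=E_H\circ\mathcal{S}_H$.
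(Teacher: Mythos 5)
Your proposal is correct and follows essentially the same route as the paper's proof: non-negativity, definiteness, the energy agreement, and the balanced-extraction property are all delegated to \fref{Proposition}{prop:dimensionalized_Parseval_distance}, and the limiting-quadratic clause is obtained by linearising $\mathcal{M}$ about $\mathcal{X}$ — the paper isolates the remainder $R_\varepsilon(q;r)=q\left[\mathds{1}(q+\varepsilon r<0)-\mathds{1}(q<0)\right]$, shows $\varepsilon^{-1}R_\varepsilon\to 0$, and keeps the $\varepsilon$-dependent indicator in the leading term before letting $H(\mathcal{X}+\varepsilon\mathcal{Y})\to H(\mathcal{X})$, which is your dominated-convergence computation in slightly different clothing, concluded via the same fixed-indicator identity $\widetilde{E}_H=E_H\circ\mathcal{S}_H$. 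Your two extra remarks — that $a$ is nonlinear as well as $j$, and that on $\{q_\mathcal{X}=0\}$ the one-sided derivative of ${\min}_0$ is ${\min}_0 q_\mathcal{Y}$ rather than $q_\mathcal{Y}H_\mathcal{X}$, so the limit is genuinely quadratic only when that set is negligible — are points the paper's proof passes over silently (its assertion $H(\mathcal{X}+\varepsilon\mathcal{Y})\to H(\mathcal{X})$ likewise fails exactly on $\{q_\mathcal{X}=0,\,q_\mathcal{Y}<0\}$), so they sharpen rather than weaken the argument.
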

	\begin{proof}
		The fact that $D$ is non-negative and definite follows immediately from the positive-definiteness of $d_H$, which is established in \fref{Proposition}{prop:dimensionalized_Parseval_distance}.
		The fact that $D$ agrees with the energy and is compatible with the extraction of a balanced component also follows from the proof of \fref{Proposition}{prop:dimensionalized_Parseval_distance}.
		So finally we turn our attention to the Riemann metric induced by this statistical divergence.
		Let $ \mathcal{X},\, \mathcal{Y} \in \mathbb{L}^2_\sigma$,
		which we write as $ \mathcal{X} = (u,\,\theta,\,q)$ and $ \mathcal{Y} = (v,\,\phi,\,r)$,
		and let $\varepsilon > 0$.

		We note that
		\begin{align*}
			j ( \mathcal{X} + \varepsilon \mathcal{Y} ) - j ( \mathcal{X} )
			&= \partial_3 ( u_h + \varepsilon v_h )
				- \nabla_h^\perp ( \theta + \varepsilon \phi )
				+ \nabla_h^\perp ( {\min}_0\, (q + \varepsilon r))
		\\&\quad
				- \partial_3 u_h + \nabla_h^\perp \theta - \nabla_h^\perp {\min}_0\, q
		\\
			&= \varepsilon \partial_3 v_h - \varepsilon \nabla_h^\perp \phi
				+ \nabla_h^\perp \left[ {\min}_0\, (q + \varepsilon r) - {\min}_0\, q \right]
		\\
			&= \varepsilon \partial_3 v_h - \varepsilon \nabla_h^\perp \phi
				+ \nabla_h^\perp \left[ \varepsilon r \mathds{1} (q+\varepsilon r < 0) + R_\varepsilon (q;\,r) \right]
		\\
			&= \varepsilon j_{H( \mathcal{X}+\varepsilon \mathcal{Y} )} ( \mathcal{Y} ) + \nabla_h^\perp R_\varepsilon (q;\,r)
		\end{align*}
		for
		\begin{align*}
			R_\varepsilon (q;r)
			&\vcentcolon= q \left[ \mathds{1} (q + \varepsilon r<0) - \mathds{1} (q<0) \right]
		\\
			&= \varepsilon \cdot \underbrace{
				\frac{q}{\varepsilon} \left[ \mathds{1} \left( \frac{q}{\varepsilon} + r < 0 \right) - \mathds{1} \left( \frac{q}{\varepsilon} < 0 \right) \right]
			}_{
				\to q (\delta_0 \circ q) r
			}
		\end{align*}
		where $\delta_0$ denotes the Dirac delta distribution at zero such that,
		since $q (\delta \circ q) = 0$, $\frac{1}{\varepsilon} R_\varepsilon (q;\,r) \to 0$ as $\varepsilon\downarrow 0$.

		Therefore, since all other measurements in $ \mathcal{M} $ besides $j$ are linear,
		\begin{align*}
			\mathcal{M} ( \mathcal{X} + \varepsilon \mathcal{Y} ) - \mathcal{M} ( \mathcal{X} )
			= \begin{pmatrix}
				\varepsilon PV ( \mathcal{Y} )\\
				\varepsilon M ( \mathcal{Y} )\\
				\varepsilon j_{H ( \mathcal{X} + \varepsilon \mathcal{Y} )} ( \mathcal{Y} ) + \nabla_h^\perp R_\varepsilon (q;\,r)\\
				\varepsilon w ( \mathcal{Y} )\\
				\varepsilon a ( \mathcal{Y} )
			\end{pmatrix}
			= \varepsilon \begin{pmatrix}
				PV ( \mathcal{Y} )\\
				M ( \mathcal{Y} )\\
				j_{H ( \mathcal{X} + \varepsilon \mathcal{Y} )} ( \mathcal{Y} )\\
				w ( \mathcal{Y} )\\
				a ( \mathcal{Y} )
			\end{pmatrix} + \varepsilon \begin{pmatrix}
				0\\
				0\\
				\nabla_h^\perp \frac{1}{\varepsilon} R\\
				0\\
				0
			\end{pmatrix}
		\\
			=\vcentcolon \varepsilon \mathcal{M}_{H_\varepsilon} ( \mathcal{Y} ) + \varepsilon \mathcal{R},
		\end{align*}
		where as noted above $ \mathcal{R} \to 0$ when $\varepsilon\downarrow 0$.

		So finally we have that
		\begin{align*}
			D ( \mathcal{X};\, \mathcal{X} + \varepsilon \mathcal{Y} )
			&= E_B \left( \mathcal{C}_B \left[ \mathcal{M} ( \mathcal{X} + \varepsilon \mathcal{Y} ) - \mathcal{M} ( \mathcal{X} ) \right] \right)
		\\
			&= E_B \left( \mathcal{C}_B \left[ \varepsilon \mathcal{M}_{H_\varepsilon} ( \mathcal{Y} ) + \varepsilon \mathcal{R} \right] \right)
		\\
			&= \varepsilon^2 E_B \left({ \left[ Y \right] }_{ \mathcal{X}+\varepsilon Y} + \mathcal{C}_B \mathcal{R} \right)
		\end{align*}
		such that indeed, since $H( \mathcal{X} + \varepsilon \mathcal{Y} ) \to H ( \mathcal{X}) = H$,
		\begin{equation*}
			\frac{1}{\varepsilon^2} D ( \mathcal{X};\, \mathcal{X} + \varepsilon \mathcal{Y} )
			\to E_B \left( { \left[ \mathcal{Y}  \right] }_{ \mathcal{X} } \right),
		\end{equation*}
		as claimed.
	\end{proof}

\subsection{Summary}
\label{sec:energy_summary}

    The takeaway message from this section is the following: if we wish to quantify the discrepancy between two states in $ \mathbb{L}^2_\sigma $ in a manner which is both consistent with the decomposition and with the conserved energy then we must give up on the triangle inequality. As \fref{Sections}{sec:PDE_centric_approach} and \ref{sec:energy_centric_approach} show, we are unable to obtain a metric which has both of these properties. We therefore identify, in \fref{Section}{sec:metric_less_approach}, a \emph{statistical divergence} which fails to satisfy the triangle inequality but which is consistent with both the decomposition and the conserved energy. This is summarized in the table below.

	\begin{center}
	\begin{tabular}{l|ccc}
		Approach		& Agrees with			& Agrees with			& Satisfies the\\
					&the decomposition		& the energy			& triangle inequality\\
		\hline
		Parseval-type metric	& \textcolor{ForestGreen}{Yes}	& \textcolor{BurntOrange}{No}	& \textcolor{ForestGreen}{Yes}\\
		Snell-type metric	& \textcolor{BurntOrange}{No}	& \textcolor{ForestGreen}{Yes}	& \textcolor{ForestGreen}{Yes}\\
		Statistical divergence	& \textcolor{ForestGreen}{Yes}	& \textcolor{ForestGreen}{Yes}	& \textcolor{BurntOrange}{No}
	\end{tabular}
	\end{center}

\section{Iterative methods for the decomposition and their geometry}
\label{sec:iterative_methods}

	In this section we discuss how to compute the decomposition of a state into its slow and fast components in \emph{practice}.
	The only computationally non-trivial step in that process is to extract the balanced component, since this requires the inversion of the nonlinear elliptic PDE
	\begin{equation*}
		\Delta p + \frac{1}{2} \partial_3 {\min}_0\, (M - \partial_3 p) = PV
	\end{equation*}
	where $PV$ and $M$ are given and we solve for $p$.

	Our ability to solve this nonlinear elliptic PDE comes from identifying a \emph{variational form} of this equation.
	That is the observation underpinning \cite{remond2024nonlinear}.
	We recall this variational formulation here.

	\begin{theorem}[Variational formulation of $PV$-and-$M$ inversion]
	\label{thm:var_form_PV_and_M_inversion}
		For any $M\in L^2$, $PV\in H^{-1}$, and $p\in \mathring{H}^1 $, $p$ is a global minimizer of $E_\text{var}$, defined as
		\begin{equation*}
			E_\text{var} \vcentcolon= \int_{\mathbb{T}^3} \frac{1}{2} {\lvert \nabla p \rvert}^2  - \frac{1}{4} {{\min}_0\, (M - \partial_3 p)}^2
			+ {\langle PV,\, p \rangle }_{H^{-1}\times \mathring{H}^1 },
		\end{equation*}
		if and only if $p$ is an $H^1$--weak solution of
		\begin{equation*}
			\Delta p + \frac{1}{2} \partial_3 {\min}_0\, (M- \partial_3 p) = PV
		\end{equation*}
		in the sense that
		\begin{equation}
            \label{eq:PV_M_inv_weak}
			\int_{\mathbb{T}^3} \nabla p \cdot \nabla\phi + \frac{1}{2} {\min}_0\, (M-\partial_3 p) \partial_3 \phi = - {\langle PV,\, \phi \rangle }_{H^{-1} \times \mathring{H}^1 } 
			\text{ for every } \phi\in \mathring{H}^1 .
		\end{equation}
	\end{theorem}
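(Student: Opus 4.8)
The plan is to recognise \eqref{eq:PV_M_inv_weak} as the Euler--Lagrange equation of $E_{\text{var}}$ and to show that $E_{\text{var}}$ is convex, so that ``critical point'' and ``global minimiser'' are synonymous; the asserted equivalence then follows at once. Throughout, $M\in L^2$, $PV\in H^{-1}$ are fixed and $p$ ranges over $\mathring{H}^1$.

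First I would note that $E_{\text{var}}$ is finite on all of $\mathring{H}^1$, since $0\le{{\min}_0\,(M-\partial_3 p)}^2\le (M-\partial_3 p)^2\in L^1(\mathbb{T}^3)$ and $PV\in H^{-1}$ pairs with $p\in\mathring{H}^1$. Then I would compute the Gateaux derivative of $E_{\text{var}}$ at $p$ in a direction $\phi\in\mathring{H}^1$. The quadratic and linear terms are routine; the only delicate term is $-\tfrac14\int_{\mathbb{T}^3}{{\min}_0\,(M-\partial_3 p-t\,\partial_3\phi)}^2$. Here one uses that $s\mapsto({\min}_0\, s)^2$ is $C^1$ on $\mathbb{R}$ with $\tfrac{d}{ds}({\min}_0\, s)^2=2\,{\min}_0\, s$ --- this is where the piecewise nature of the switch must be handled carefully --- so that the $t$-derivative of the integrand at $t=0$ equals $\tfrac12\,{\min}_0\,(M-\partial_3 p)\,\partial_3\phi$, and one differentiates under the integral by dominated convergence, with a dominating function that is an $L^1$ multiple of $\big(|M|+|\partial_3 p|+|\partial_3\phi|\big)\,|\partial_3\phi|$. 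This yields
\begin{equation*}
  \langle DE_{\text{var}}(p),\,\phi\rangle
  =\int_{\mathbb{T}^3}\nabla p\cdot\nabla\phi+\tfrac12\,{\min}_0\,(M-\partial_3 p)\,\partial_3\phi+{\langle PV,\,\phi\rangle}_{H^{-1}\times\mathring{H}^1},
\end{equation*}
so that the condition $DE_{\text{var}}(p)=0$ is exactly the weak formulation \eqref{eq:PV_M_inv_weak}.

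Next I would establish convexity of $E_{\text{var}}$ on $\mathring{H}^1$. Writing
\begin{equation*}
  \int_{\mathbb{T}^3}\tfrac12|\nabla p|^2-\tfrac14\,{{\min}_0\,(M-\partial_3 p)}^2
  =\int_{\mathbb{T}^3}\tfrac12|\nabla_h p|^2+\Phi\big(M(x),\,\partial_3 p(x)\big)\,dx,
  \qquad \Phi(m,a):=\tfrac12 a^2-\tfrac14\,{{\min}_0\,(m-a)}^2,
\end{equation*}
it suffices to check that $a\mapsto\Phi(m,a)$ is convex for each $m\in\mathbb{R}$: on $\{a<m\}$ it equals $\tfrac12 a^2$, on $\{a>m\}$ it equals $\tfrac14 a^2+\tfrac12 ma-\tfrac14 m^2$, both smooth and convex with second derivative $\ge\tfrac12$, and the two pieces agree to first order at $a=m$, so $\Phi(m,\cdot)$ is $C^1$ and convex. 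Since $p\mapsto\partial_3 p$ is linear, integration preserves convexity, and $\langle PV,\,p\rangle$ is linear, $E_{\text{var}}$ is convex; combining $\Phi''\ge\tfrac12$ with the $|\nabla_h p|^2$ term and the Poincar\'e inequality on $\mathring{H}^1$ in fact shows it is strictly --- indeed uniformly --- convex and coercive, which recovers the existence and uniqueness of the minimiser proved in \cite{remond2024nonlinear}, although only convexity is needed for the stated equivalence.

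Finally, for a convex Gateaux-differentiable functional on a Hilbert space, $p$ is a global minimiser if and only if $DE_{\text{var}}(p)=0$: the forward direction is the vanishing of all directional derivatives at a minimum, and the reverse follows from the convexity inequality $E_{\text{var}}(q)\ge E_{\text{var}}(p)+\langle DE_{\text{var}}(p),\,q-p\rangle=E_{\text{var}}(p)$, valid for every $q\in\mathring{H}^1$. Together with the identification of $\{DE_{\text{var}}(p)=0\}$ with \eqref{eq:PV_M_inv_weak} from the second step, this proves the theorem. The main obstacle is the low regularity of the switch ${\min}_0$: it is only Lipschitz, so one cannot differentiate it twice and convexity must be read directly off the explicit piecewise form of $\Phi(m,\cdot)$, while Gateaux differentiability of the nonlinear term requires the dominated-convergence argument above rather than a naive interchange of limit and integral.
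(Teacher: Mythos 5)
Your proposal is correct, and it is essentially the same argument as the one the paper relies on: the paper simply cites Lemma 4.9 of \cite{remond2024nonlinear}, and the ingredients it later imports from that reference in \fref{Theorem}{thm:prop_var_en} (the G\^{a}teaux derivative formula, which matches the one you compute, and $\tfrac12$-strong convexity, which follows from your pointwise bound $\Phi''\geqslant\tfrac12$ on each piece together with the $C^1$ matching at $a=m$) are exactly the two steps of your Euler--Lagrange-plus-convexity proof. Nothing further is needed.
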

	\begin{proof}
		See Lemma 4.9 in \cite{remond2024nonlinear}.
	\end{proof}

	With this variational formulation of $PV$-and-$M$ inversion in hand, we can now leverage iterative methods from optimization
	in order to find computational methods.
	We discuss such methods in \fref{Section}{sec:convergence_guarantees} below, obtaining convergence guarantees.
	In \fref{Section}{sec:geometry} we then relate one of these iterative methods (arising from Newton's method) to the geometry of the balanced set $\mathcal{B}$,
	obtaining a geometric interpretation of this descent method.

\subsection{Convergence guarantees}
\label{sec:convergence_guarantees}

		In this section we introduce and compare several gradient descent methods that can be employed to minimise the variational energy introduced in \fref{Theorem}{thm:var_form_PV_and_M_inversion},
	and thus find solutions of $PV$-and-$M$ inversion.
	We will discuss three descent methods: $\mathring{H}^1$-gradient descent, $L^2$-gradient descent, and Newton descent (which, as we will discuss, can be interpreted locally as a gradient descent),
	recording corresponding convergence rates for each method.
	Ultimately, the convergence rates of interest are those obtained for \emph{backtracking line search} (see \fref{Definition}{def:backtracking})
	which are recorded in \fref{Theorem}{thm:conv_rates_our_en_backtracking}
	but we also record analogous convergence rates in \fref{Theorem}{thm:conv_rates_our_en_exact} for \emph{exact} line search.
	This is done to make the rates obtained in the main result of \fref{Theorem}{thm:conv_rates_our_en_backtracking} easier to contextualise,
	by comparing them with what can be obtained in the idealized scenario of exact line search.

	We then conclude this section with a consistency result that guarantees that minimizers of the variational energy restricted to an (appropriate) finite-dimensional subspace of $\mathring{H}^1$
	do converge to the true minimizer of the energy.
	Put together, the convergence rates of \fref{Theorem}{thm:conv_rates_our_en_backtracking} and the consistency result of \fref{Theorem}{thm:conv_min_approx_prob_to_true_min}
	tell us that we have a ready-to-implement method for finding approximate minimizers which are guaranteed to converge to the true minimizer.

    Before we get started we recall results from \cite{remond2024nonlinear} that will be used in this section.

    \begin{theorem}[Properties of the variational energy]
    \label{thm:prop_var_en}
        The variational energy $E_{var}$ defined in \fref{Theorem}{thm:var_form_PV_and_M_inversion} has the following properties.
        \begin{enumerate}
            \item   $E_{var}$ is $\frac{1}{2}$-strongly convex,
                    weakly lower semi-continuous over $\mathring{H}^1$, and
                    coercive over $\mathring{H}^1$ in the sense that 
                	$
                		E_{var}(p) \geqslant \frac{1}{16}\norm{p}{ \mathring{H}^1 }^2 - \frac{3}{4} \norm{M}{L^2}^2 - 4 \norm{PV}{H^{-1}}^2
                	$
                	for every $p\in \mathring{H}^1 $.
            \item   We may write
                	\begin{equation}
                	\label{eq:write_E_using_e_M}
                		E_{var}(p) = \int_{\mathbb{T}^3} e_M \left( x,\, \nabla p(x) \right) dx + \langle PV,\,p \rangle
                	\end{equation}
                	for $e_M(x,\,u) \vcentcolon= \frac{1}{2} {\lvert u \rvert}^2 - \frac{1}{4} {\min \left( M(x) - u_3,\, 0 \right)}^2$.
            \item   $E_{var}$ is G\^{a}teaux differentiable on $\mathring{H}^1$ with G\^{a}teaux derivative at any $p\in\mathring{H}^1$,
                	denoted by $DE_{var}(p)$, given by
                	\begin{equation}
                	\label{eq:Gateaux_diff_en}
                		DE_{var}(p) \phi = \int_{\mathbb{T}^3} \nabla p \cdot \nabla\phi + \frac{1}{2} \min ( M-\partial_3 p,\, 0) \partial_3 \phi
                			+ \langle PV,\, \phi \rangle \text{ for every } \phi\in\mathring{H}^1.
                	\end{equation}
            \item   The G\^{a}teaux derivative $E_{var}$ is Lipschitz, with specifically the estimate
                	$
                		\norm{ DE_{var}(p_1) - DE_{var}(p_2) }{H^{-1}} \leqslant \frac{3}{2} \norm{p_1 - p_2}{\mathring{H}^1}
                	$
                	for every $p_1,\,p_2\in \mathring{H}^1 $,
                	and so $E_{var}$ is Fr\'{e}chet differentiable.
            \item   Suppose that $E_{var}$ has a minimizer $p^* \in \mathring{H}^1 $.
                	For any $p\in \mathring{H}^1$ the following estimate holds:
                	$
                		\norm{p - p^*}{\mathring{H}^1}^2
                		\leqslant 4 \left( E_{var}(p) - E_{var}(p^*) \right)
                		= 4 \left( E_{var}(p) - \min E_{var} \right).
                	$
        \end{enumerate}
    \end{theorem}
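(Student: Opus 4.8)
The plan is to establish the five properties in the order listed, since each one feeds into the next; as the statement indicates, these are recollected from \cite{remond2024nonlinear}, so the proof can ultimately be a pointer to Section~4 there, but here is the route I would take to prove them directly. I would begin with item~2, which is pure bookkeeping: the integrand $\frac{1}{2}\lvert\nabla p\rvert^2 - \frac{1}{4}{\min}_0^2\,(M - \partial_3 p)$ appearing in the definition of $E_{var}$ in \fref{Theorem}{thm:var_form_PV_and_M_inversion} is literally $e_M(x,\,\nabla p(x))$, so \eqref{eq:write_E_using_e_M} holds by inspection. The payoff is that convexity and differentiability of $E_{var}$ reduce to pointwise statements about the density $u\mapsto e_M(x,\,u)$ on $\mathbb{R}^3$, plus the continuous linear pairing term $\langle PV,\,p\rangle$.

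For item~1, the heart of the matter is to show that $u\mapsto e_M(x,\,u) - \frac14\lvert u\rvert^2$ is convex for a.e.\ $x$: this gives $\frac12$-strong convexity of $p\mapsto\int e_M(x,\,\nabla p)$ with respect to $\norm{p}{\mathring H^1}^2 = \norm{\nabla p}{L^2}^2$, hence of $E_{var}$ after adding the linear term. The only non-smooth contribution is the one-variable function $g(t) \vcentcolon= {\min}_0^2\,(M-t)$, which is $C^1$ with $g'(t) = -2{\min}_0\,(M-t)$ and distributional second derivative $g'' = 2\,\mathds{1}(t > M) \in [0,\,2]$; hence $t\mapsto \frac14 t^2 - \frac14 g(t)$ has nonnegative distributional second derivative and is therefore convex, and adding the convex $\frac14(u_1^2+u_2^2)$ completes the pointwise claim. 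Weak lower semicontinuity then follows since $E_{var}$ is convex and (by item~4) continuous. Coercivity I would obtain by writing $E_{var}(p) \geqslant E_{var}(0) + DE_{var}(0)(p) + \frac14\norm{p}{\mathring H^1}^2$ from strong convexity, then bounding $DE_{var}(0)(p) = \frac12\int {\min}_0\,(M)\,\partial_3 p + \langle PV,\,p\rangle$ by Cauchy--Schwarz and Young's inequality so as to absorb part of $\frac14\norm{p}{\mathring H^1}^2$ and produce the stated constants $\frac{1}{16}$, $\frac34$, and $4$.

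For items~3 and~4, differentiability again passes through the density. Since $t\mapsto {\min}_0^2\,(M-t)$ is $C^1$ with Lipschitz derivative, the difference quotients $s^{-1}\left[E_{var}(p+s\phi) - E_{var}(p)\right]$ converge by dominated convergence (the quotients are dominated, uniformly in small $s$, by $L^2$ functions) to the expression in \eqref{eq:Gateaux_diff_en}, giving G\^ateaux differentiability. For the Lipschitz bound I would estimate $\left(DE_{var}(p_1) - DE_{var}(p_2)\right)\phi$ termwise: the gradient term contributes at most $\norm{p_1-p_2}{\mathring H^1}\norm{\phi}{\mathring H^1}$, and since $t\mapsto {\min}_0\,(M-t)$ is $1$-Lipschitz the buoyancy term contributes at most $\frac12\norm{\partial_3(p_1-p_2)}{L^2}\norm{\partial_3\phi}{L^2}\leqslant \frac12\norm{p_1-p_2}{\mathring H^1}\norm{\phi}{\mathring H^1}$, for a total Lipschitz constant $\frac32$; a G\^ateaux-differentiable functional with globally Lipschitz derivative is Fr\'echet differentiable. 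Finally, item~5 is immediate from item~1: if $p^*$ is a minimizer then $DE_{var}(p^*) = 0$, and the strong convexity inequality $E_{var}(p)\geqslant E_{var}(p^*) + DE_{var}(p^*)(p - p^*) + \frac14\norm{p - p^*}{\mathring H^1}^2$ rearranges to the claimed bound, with $E_{var}(p^*) = \min E_{var}$.

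The main obstacle is the convexity analysis in item~1: one must treat the merely-$C^1$ term ${\min}_0^2\,(M-\partial_3 p)$ with care --- either via the distributional-second-derivative argument sketched above (justified since $g'$ is monotone nondecreasing) or by a direct midpoint-convexity verification --- and then keep track of constants through the coercivity estimate. Everything else is routine once the density $e_M$ has been isolated.
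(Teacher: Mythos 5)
Your proposal is correct, but note that the paper itself does not reprove these facts: its ``proof'' of \fref{Theorem}{thm:prop_var_en} is a citation to Proposition 4.4, Lemma 4.6, Corollary 4.7, and Lemma 4.8 of \cite{remond2024nonlinear}, so there is no in-paper argument to compare against, and what you supply is a self-contained derivation. Your route is sound and the constants do come out as stated: the one-variable computation (the density equals $\frac14 t^2$ for $t\leqslant M$ and its tangent line for $t>M$, so the distributional second derivative of $\frac14 t^2-\frac14{\min}_0^2(M-t)$ is $\frac12\mathds{1}(t\leqslant M)\geqslant 0$) gives pointwise convexity of $e_M(x,u)-\frac14\lvert u\rvert^2$ and hence $\frac12$-strong convexity; the coercivity bound follows from $E_{var}(p)\geqslant E_{var}(0)+DE_{var}(0)(p)+\frac14\norm{p}{\mathring{H}^1}^2$ together with $E_{var}(0)\geqslant-\frac14\norm{M}{L^2}^2$ and the Young estimates $\frac12\norm{M}{L^2}\norm{p}{\mathring{H}^1}\leqslant\frac18\norm{p}{\mathring{H}^1}^2+\frac12\norm{M}{L^2}^2$ and $\norm{PV}{H^{-1}}\norm{p}{\mathring{H}^1}\leqslant\frac1{16}\norm{p}{\mathring{H}^1}^2+4\norm{PV}{H^{-1}}^2$, which reproduce exactly $\frac1{16}$, $\frac34$, $4$; the $1$-Lipschitzness of ${\min}_0$ gives the $\frac32$ constant in item 4; and item 5 is strong convexity at the minimizer where $DE_{var}(p^*)=0$. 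The only presentational caveat is that your item 1 invokes items 3--4 (the derivative formula for the coercivity estimate, continuity for weak lower semicontinuity), so you should reorder the write-up (prove items 2--4 first, then 1 and 5); this is harmless since the differentiability and Lipschitz arguments nowhere use convexity or coercivity, but as written the stated order would look circular.
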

    \begin{proof}
        See Proposition 4.4, Lemma 4.6, Corollary 4.7, and Lemma 4.8 of \cite{remond2024nonlinear}.
    \end{proof}

    In particular \cite{remond2024nonlinear} establishes the well-posedness of nonlinear $PV$-and-$M$ inversion.

    \begin{theorem}[Existence and uniqueness]
    \label{thm:exist_and_unique}
    	For every $M\in L^2$, and $PV\in H^{-1}$ the variational energy $E_{var}$ introduced in \fref{Theorem}{thm:var_form_PV_and_M_inversion}
    	has a unique global minimizer in $\mathring{H}^1$ which is also the unique weak solution of \eqref{eq:PV_M_inv_weak}.

        Moreover we have the following $H^2$ regularity result (\cite{remond2024nonlinear} proves higher-order regularity, but this $H^2$ is all that is required below).
        Let $M\in H^1$ and $PV\in L^2$.
    	Any $\mathring{H}^1$-weak solution $p$ of \eqref{eq:PV_M_inv_weak} belongs to $H^2$ and satisfies
    	$
    		\norm{p}{\dot{H}^2} \leqslant \norm{M}{\dot{H}^1} + 2 \norm{PV}{L^2}.
    	$
    \end{theorem}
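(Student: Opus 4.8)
For the first assertion I would run the direct method of the calculus of variations, invoking only \fref{Theorem}{thm:prop_var_en}. Since $E_{var}$ is coercive on $\mathring{H}^1$, any minimizing sequence is bounded there, and by reflexivity it has a weakly convergent subsequence; since $E_{var}$ is weakly lower semi-continuous, the weak limit is a global minimizer. Uniqueness follows from $\tfrac12$-strong convexity, which in particular forces strict convexity. Finally, \fref{Theorem}{thm:var_form_PV_and_M_inversion} identifies the global minimizers of $E_{var}$ with precisely the $\mathring{H}^1$-weak solutions of \eqref{eq:PV_M_inv_weak}, so the unique minimizer is the unique weak solution.

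\textbf{$H^2$ regularity via difference quotients.} Given $M\in H^1$, $PV\in L^2$, and an $\mathring{H}^1$-weak solution $p$, I would use Nirenberg's translation method, which is well adapted to the monotone, $1$-Lipschitz structure of the nonlinearity. Fix $k\in\{1,2,3\}$ and let $D_h^k$ be the difference quotient in direction $e_k$. Translating the weak formulation \eqref{eq:PV_M_inv_weak} by $h e_k$, subtracting, and dividing by $h$ gives, for every $\phi\in\mathring{H}^1$,
\begin{equation*}
\int_{\mathbb{T}^3}\nabla D_h^k p\cdot\nabla\phi + \tfrac12\, D_h^k\big[{\min}_0(M-\partial_3 p)\big]\,\partial_3\phi = -\langle D_h^k PV,\,\phi\rangle.
\end{equation*}
The crucial structural point is that $s\mapsto{\min}_0(M(x)-s)$ is non-increasing and $1$-Lipschitz, so a discrete chain rule yields $D_h^k[{\min}_0(M-\partial_3 p)] = b_h\,\partial_3 D_h^k p + r_h$ with $b_h\in[-1,0]$ pointwise and $|r_h|\le|D_h^k M|$, the latter from Lipschitz dependence on $M$. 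Testing with $\phi=D_h^k p$ (admissible, being mean-zero and in $H^1$), the contribution $\int_{\mathbb{T}^3} b_h|\partial_3 D_h^k p|^2 \ge -\int_{\mathbb{T}^3}|\nabla D_h^k p|^2$, so an effective ellipticity constant $\tfrac12$ survives; combining this with the discrete integration-by-parts identity $\langle D_h^k PV,D_h^k p\rangle = -\langle PV, D_{-h}^k D_h^k p\rangle$ together with $\norm{D_{-h}^k D_h^k p}{L^2}\le\norm{\nabla D_h^k p}{L^2}$ (this is where $PV\in L^2$ is used) and $\norm{D_h^k M}{L^2}\le\norm{\partial_k M}{L^2}$ (this is where $M\in H^1$ is used), one obtains $\norm{\nabla D_h^k p}{L^2}\le 2\norm{PV}{L^2}+\norm{\partial_k M}{L^2}$ uniformly in $h$. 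Hence $\partial_k\nabla p\in L^2$ for each $k$, i.e.\ $p\in H^2$.

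\textbf{The sharp constant, and the main obstacle.} Once $p\in H^2$ is known, $M-\partial_3 p\in H^1$ and Stampacchia's chain rule applies: $\partial_3{\min}_0(M-\partial_3 p)=\mathds{1}(M<\partial_3 p)\,(\partial_3 M-\partial_3^2 p)$ a.e. The equation then holds a.e.\ in the form $\Delta p = PV-\tfrac12\partial_3{\min}_0(M-\partial_3 p)$, so
\begin{equation*}
\norm{\Delta p}{L^2}\le\norm{PV}{L^2}+\tfrac12\big(\norm{\partial_3 M}{L^2}+\norm{\partial_3^2 p}{L^2}\big)\le\norm{PV}{L^2}+\tfrac12\norm{M}{\dot{H}^1}+\tfrac12\norm{\Delta p}{L^2},
\end{equation*}
using that $\norm{\partial_3^2 p}{L^2}\le\norm{\Delta p}{L^2}$ for a mean-zero periodic function (indeed $\norm{D^2 p}{L^2}=\norm{\Delta p}{L^2}$ on $\mathbb{T}^3$, by Fourier series or two integrations by parts). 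Absorbing gives $\norm{\Delta p}{L^2}\le 2\norm{PV}{L^2}+\norm{M}{\dot{H}^1}$, which is the asserted bound since $\norm{p}{\dot{H}^2}=\norm{\Delta p}{L^2}$. The genuinely hard input — coercivity of $E_{var}$, i.e.\ that the sign-indefinite term $-\tfrac14{\min}_0^2(M-\partial_3 p)$ does not destroy boundedness below — is already supplied by \fref{Theorem}{thm:prop_var_en}, and this is where \cite{remond2024nonlinear} does the real work. Within the present argument the delicate point is the chicken-and-egg issue in the regularity proof: one cannot differentiate ${\min}_0(M-\partial_3 p)$ before knowing $p\in H^2$, which forces the two-stage scheme above (difference quotients first, then a clean re-derivation), and one must verify carefully that the discrete chain rule really produces a coefficient in $[-1,0]$, since exactly that monotonicity preserves the ellipticity needed to close the estimate.
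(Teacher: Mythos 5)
Your proof is correct; note, though, that the paper itself does not prove this theorem in-house — its ``proof'' is a citation of Theorem 5.1 and Lemma 5.2 of \cite{remond2024nonlinear} — so the comparison is with that reference rather than with an argument in the paper. Your first part (direct method: coercivity gives a bounded minimizing sequence, reflexivity of $\mathring{H}^1$ gives a weakly convergent subsequence, weak lower semicontinuity gives a minimizer, $\tfrac12$-strong convexity gives uniqueness, and \fref{Theorem}{thm:var_form_PV_and_M_inversion} identifies minimizers with weak solutions) uses exactly the ingredients the paper collects in \fref{Theorem}{thm:prop_var_en}, and you are right that the genuinely hard input — coercivity despite the sign-indefinite term $-\tfrac14 {\min}_0^2(M-\partial_3 p)$ — is imported there. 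Your regularity argument is a sound reconstruction: on the torus the translation method has no boundary issues; since $s\mapsto{\min}_0(M(x)-s)$ is non-increasing and $1$-Lipschitz the discrete chain rule indeed produces a coefficient $b_h\in[-1,0]$ and a remainder bounded by $\lvert D_h^k M\rvert$, so after testing with $D_h^k p$ the effective ellipticity constant $\tfrac12$ survives and the uniform bound $\norm{\nabla D_h^k p}{L^2}\leqslant 2\norm{PV}{L^2}+\norm{\partial_k M}{L^2}$ follows as you state; and your second pass — once $p\in H^2$ is known, Stampacchia's chain rule, the pointwise form of the equation, $\norm{\partial_3^2 p}{L^2}\leqslant\norm{\Delta p}{L^2}$ on $\mathbb{T}^3$, and absorption — recovers exactly the stated constant $\norm{p}{\dot{H}^2}\leqslant\norm{M}{\dot{H}^1}+2\norm{PV}{L^2}$, and you correctly recognize that the two-stage structure (difference quotients first, clean re-derivation second) is needed to get that sharp constant. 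What your route buys is a proof of both assertions that is self-contained within this paper modulo \fref{Theorem}{thm:prop_var_en}; what the paper's citation buys is brevity and access to the higher-order regularity established in \cite{remond2024nonlinear}, of which the $H^2$ statement quoted here is only the special case used later.
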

    \begin{proof}
        This is Theorem 5.1 and Lemma 5.2 in \cite{remond2024nonlinear}.
    \end{proof}
    
	Now to kick things off in earnest we define the gradient of a functional, taking particular care to highlight the role payed by the choice of the underlying inner product.

	\begin{definition}[Gradient]
	\label{def:grad}
		Consider a functional $ \mathcal{F} : B \to \mathbb{R}$ on a Banach space $B$ equipped with an inner product $ \left( \,\cdot\,,\,\cdot\, \right) $ which is not necessarily complete
		and suppose that there exists a map $D : B\to B$ such that
		$
			D \mathcal{F} (u) \phi = \left( D(u),\, \phi \right)
		$
		for every $u,\,\phi\in B$.
		We call $D$ the \emph{gradient} of $ \mathcal{F} $ with respect to the inner product $ \left( \,\cdot\,,\,\cdot\, \right) $.
		Moreover, if $ \left( \,\cdot\,,\,\cdot\, \right) $ is known as the ``$X$--inner product'' then we write $D = \nabla_X \mathcal{F} $
		and call it the \emph{$X$-gradient}  of $ \mathcal{F} $.
	\end{definition}

	With \fref{Definition}{def:grad} in hand we may now turn our attention towards two gradients of the variational energy that we will use throughout: the $\mathring{H}^1$ and $L^2$-gradients.
	We record expressions for both of these below.

	\begin{lemma}[$\mathring{H}^1$ and $L^2$-gradients for our variational energy]
	\label{lemma:L2_and_H1_grad_of_our_energy}
		Consider the variational energy $E_{var}$ introduced in \fref{Theorem}{thm:var_form_PV_and_M_inversion}.
		The \hyperref[def:grad]{$\mathring{H}^1$-gradient} of $E_{var}$ is given by
		\begin{equation*}
			\left( \nabla_{\mathring{H}^1} E_{var} \right) (p)
			= { \left( -\Delta \right) }^{-1} \left[
				-\Delta p - \frac{1}{2} \partial_3 \left( \min \left( M - \partial_3 p,\, 0 \right) \right) + PV
			\right],
		\end{equation*}
		meaning that $ u = \left( \nabla_{\mathring{H}^1} E_{var} \right) (p)$ is the unique weak solution of
		\begin{equation}
		\label{eq:strong_form_H1_grad_our_energy}
			-\Delta u = -\Delta p - \frac{1}{2} \partial_3 \left( \min \left( M - \partial_3 p,\, 0 \right) \right) + PV
		\end{equation}
		satisfying $\fint_{\mathbb{T}^3} u = 0$, i.e. $u\in\mathring{H}^1$ such that, for every $\phi\in\mathring{H}^1$,
		\begin{equation}
		\label{eq:weak_form_H1_grad_our_energy}
			\int_{\mathbb{T}^3} \nabla u\cdot\nabla \phi
			= \int_{\mathbb{T}^3} \nabla p \cdot \nabla \phi + \frac{1}{2} \min \left( M - \partial_3 p,\, 0 \right) \partial_3 \phi
			+ \langle PV,\, \phi \rangle 
		\end{equation}
		Moreover, if $p\in H^2$, $M\in H^1$, and $PV\in L^2$ then the $L^2$-gradient of $E_{var}$ is given by
		\begin{equation}
		\label{eq:strong_form_L2_grad_our_energy}
			\left( \nabla_{L^2} E_{var} \right) (p) = -\Delta p - \frac{1}{2} \partial_3 \left( \min \left( M - \partial_3 p,\, 0 \right) \right) + PV.
		\end{equation}
	\end{lemma}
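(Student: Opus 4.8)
The plan is to take as input the Gâteaux derivative of $E_{var}$ already computed in \fref{Theorem}{thm:prop_var_en}, namely $DE_{var}(p)\phi = \int_{\mathbb{T}^3} \nabla p\cdot\nabla\phi + \frac{1}{2}\min(M-\partial_3 p,\,0)\partial_3\phi + \langle PV,\,\phi\rangle$, and to re-express this linear functional in each of the two inner products dictated by \fref{Definition}{def:grad}. No new variational input is needed; the work is entirely in identifying the Riesz representative and, in the $L^2$ case, in justifying the integrations by parts.

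First I would treat the $\mathring{H}^1$-gradient. On the zero-mean torus the $\mathring{H}^1$-inner product is $(u,\phi)_{\mathring{H}^1}=\int_{\mathbb{T}^3}\nabla u\cdot\nabla\phi$, so I need $u\in\mathring{H}^1$ with $\int_{\mathbb{T}^3}\nabla u\cdot\nabla\phi = DE_{var}(p)\phi$ for every $\phi\in\mathring{H}^1$, which is precisely the weak form \eqref{eq:weak_form_H1_grad_our_energy}. I would first check that $\phi\mapsto DE_{var}(p)\phi$ is a bounded linear functional on $\mathring{H}^1$: the first term is bounded by $\|\nabla p\|_{L^2}\|\nabla\phi\|_{L^2}$, the middle term by $\tfrac12(\|M\|_{L^2}+\|\nabla p\|_{L^2})\|\partial_3\phi\|_{L^2}$, and the last by $\|PV\|_{H^{-1}}\|\phi\|_{\mathring{H}^1}$. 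The Riesz representation theorem (equivalently Lax--Milgram applied to $-\Delta$) then produces a unique such $u$, which by definition is $(-\Delta)^{-1}$ applied to the $H^{-1}$ distribution $-\Delta p - \tfrac12\partial_3\min(M-\partial_3 p,0)+PV$; this yields the claimed formula together with its strong form \eqref{eq:strong_form_H1_grad_our_energy}.

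Next, under the extra assumptions $p\in H^2$, $M\in H^1$, $PV\in L^2$, I would identify the $L^2$-gradient. Here the relevant inner product is $(u,\phi)_{L^2}=\int_{\mathbb{T}^3}u\phi$, so the task is to integrate by parts in $DE_{var}(p)\phi$ until every derivative lands on terms not involving $\phi$. Since $p\in H^2$ we have $\int\nabla p\cdot\nabla\phi = -\int(\Delta p)\phi$. For the middle term I would first note that $M-\partial_3 p\in H^1$ (as $M\in H^1$ and $\partial_3 p\in H^1$) and invoke the chain rule for the composition of an $H^1$ function with the globally Lipschitz map $t\mapsto\min(t,0)$, giving $\min(M-\partial_3 p,0)\in H^1$ with $\partial_3\min(M-\partial_3 p,0)=\mathds{1}(M-\partial_3 p<0)\,\partial_3(M-\partial_3 p)$ almost everywhere; hence $\int\tfrac12\min(M-\partial_3 p,0)\partial_3\phi=-\int\tfrac12\partial_3\min(M-\partial_3 p,0)\,\phi$. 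The $PV$ term is already $\int PV\,\phi$ since $PV\in L^2$. Collecting the three contributions identifies $\left(\nabla_{L^2}E_{var}\right)(p) = -\Delta p-\tfrac12\partial_3\min(M-\partial_3 p,0)+PV$, which lies in $L^2$ by the regularity just established, proving \eqref{eq:strong_form_L2_grad_our_energy}.

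The only step that is not pure bookkeeping is the chain-rule claim that $\min(M-\partial_3 p,0)\in H^1$ with the stated a.e.\ derivative: this is the standard Stampacchia-type lemma on composing Sobolev functions with piecewise-$C^1$ Lipschitz functions, and I expect it to be the main --- though routine --- obstacle, alongside confirming that all the integrations by parts above are legitimate at the stated regularity (which they are, by density of smooth functions in $\mathring{H}^1$ and in $H^2$).
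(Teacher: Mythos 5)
Your proposal is correct and takes essentially the same route as the paper's proof: read the weak formulation of the $\mathring{H}^1$-gradient directly off the G\^{a}teaux derivative from item 3 of \fref{Theorem}{thm:prop_var_en}, then integrate by parts to identify the $L^2$-gradient under the extra regularity $p\in H^2$, $M\in H^1$, $PV\in L^2$. The additional details you supply (boundedness of the functional so Riesz/Lax--Milgram gives the representative, and the Lipschitz chain rule showing $\min(M-\partial_3 p,\,0)\in H^1$ to justify the integration by parts) are precisely the points the paper leaves implicit.
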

	\begin{proof}
		Item 3 of \fref{Theorem}{thm:prop_var_en} tells us that $ u = \nabla_{\mathring{H}^1} E_{var} (p)$ satisfies
		\begin{align*}
			\int_{\mathbb{T}^3} \nabla u \cdot \nabla \phi
			= { \left( u,\, \phi \right) }_{\mathring{H}^1} 
			= { \left( \nabla_{\mathring{H}^1} E_{var} (p) ,\, \phi \right) }_{\mathring{H}^1} 
			= DE_{var}(p) \phi
		\\
			= \int_{\mathbb{T}^3} \nabla p \cdot\nabla \phi + \frac{1}{2} \min \left( M - \partial_3 p,\, 0 \right) \partial_3 \phi
				+ \langle PV,\, \phi \rangle,
		\end{align*}
		which is precisely \eqref{eq:weak_form_H1_grad_our_energy}, the weak formulation of \eqref{eq:strong_form_H1_grad_our_energy}.
		In order to establish \eqref{eq:strong_form_L2_grad_our_energy} it suffices to integrate by parts the expression
		for the G\^{a}teaux derivative of $E_{var}$ obtained in item 3 of \fref{Theorem}{thm:prop_var_en}.
		This yields
		\begin{equation*}
			{ \left( \nabla_{L^2} E_{var} (p),\, \phi \right) }_{L^2} 
			= DE_{var} (p) \phi
			= \int_{\mathbb{T}^3} \left[ -\Delta p - \frac{1}{2} \partial_3 \left( \min \left( M - \partial_3 p,\, 0 \right) \right) + PV \right] \phi,
		\end{equation*}
		from which the claim follows.
	\end{proof}

	\begin{remark}[Comparing the $L^2$ and $ \mathring{H}^1 $ gradients]
	\label{rmk:compare_two_gradients}
		It is not a coincidence that in \fref{Lemma}{lemma:L2_and_H1_grad_of_our_energy} above the $L^2$ and $ \mathring{H}^1 $ gradients are related via
		$\nabla_{ \mathring{H}^1 } E_{var} = {(-\Delta)}^{-1} (\nabla_{L^2} E_{var})$.
		This identity actually follows from the following fact.
		For any Hilbert space $H\subseteq L^2$ and any Fr\'{e}chet differentiable $F:L^2\to\mathbb{R}$,
		if we denote by $R_H : R \to R^*$ the Riesz representation map on $H$ then
		$ \nabla_H F = R_H (\nabla_{L^2} F)$.
		Indeed, for any $u$, $\phi\in H$:
		$
			DF(u) \phi
			= { \left( \nabla_{L^2} F(u),\, \phi \right) }_{L^2} 
			= {\langle \nabla_{L^2} F(u),\, \phi \rangle }_{H^*\times H} 
			= { \left( R_H ( \nabla_{L^2} F )(u),\, \phi \right) }_{H}.
		$

		In our case we have that $R_{ \mathring{H}^1 } = {(-\Delta)}^{-1}$ in the sense that,
		for any $f\in H^{-1}$, $u=R_{ \mathring{H}^1 } F$ if and only if $u$ is the $ \mathring{H}^1 $-weak solution
		of $ -\Delta u = f$ in $\mathbb{T}^3$.
		The general fact above thus tells us that $\nabla_{ \mathring{H}^1 } E_{var} = R_{ \mathring{H}^1 } (\nabla_{L^2} E_{var}) = {( -\Delta)}^{-1} (\nabla_{L^2} E_{var})$,
		as observed in \fref{Lemma}{lemma:L2_and_H1_grad_of_our_energy}.
	\end{remark}

	Besides gradient descent methods using the $\mathring{H}^1$ and $L^2$ gradients,
	the third descent method we will discuss is \emph{Newton descent}.
	This comes from using, locally, the inner product generated by the \emph{Hessian} of the variational energy to define a gradient direction.
	This idea is now made precise.

	\begin{definition}[Newton descent]
	\label{def:Newton_desc}
		Let $ \mathcal{F} : H\to\mathbb{R}$ be a twice Fr\'{e}chet differentiable map over a Hilbert space $H$.
		Suppose there exists $C>0$ such that, for every $u\in H$, $D^2_u \mathcal{F} ( \,\cdot\,, \,\cdot\, )$ induces an inner product,
		written ${ \left( \,\cdot\, , \,\cdot\,  \right) }_{D_u^2 \mathcal{F} } $ and called the \emph{Hessian inner product}, such that
		$
			\frac{1}{C} \norm{\phi}{H} \leqslant \norm{\phi}{D^2_u \mathcal{F} } \leqslant C \norm{\phi}{H}
			\text{ for every } \phi\in H
		$
		(i.e. $\norm{ \,\cdot\, }{D_u^2 \mathcal{F} }$ is \emph{uniformly} comparable to $\norm{ \,\cdot\, }{H}$).
		The negative of the \hyperref[def:grad]{gradient} of $ \mathcal{F} $ with respect to the Hessian inner product is called the \emph{Newton descent direction}.
	\end{definition}

	Having defined Newton descent we now record its form for our variational energy of interest.

	\begin{prop}[Newton descent for our variational energy]
	\label{prop:Newt_desc}
		Let $E_{var}$ be the variational energy introduced in \fref{Theorem}{thm:var_form_PV_and_M_inversion}.
		The \hyperref[def:Newton_desc]{Hessian inner product} induced by $E_{var}$ is, for any $p\in\mathring{H}^1$ and $M\in L^2$,
		\begin{equation}
		\label{eq:Newt_desc_H}
			{ \left( \phi,\, \psi \right) }_{D_p^2 E_{var}}
			= \int_{\mathbb{T}^3} \nabla\phi \cdot \nabla\psi - \frac{1}{2} \mathds{1} (M<\partial_3 p) ( \partial_3 \phi) (\partial_3 \psi)
		\end{equation}
		and so the \hyperref[def:Newton_desc]{Newton descent direction} $\pi$ is the $\mathring{H}^1$-weak solution of
		\begin{equation}
		\label{eq:Newt_desc_S}
			-\nabla\cdot \left[
				\left( I - \frac{1}{2} \mathds{1} (M < \partial_3 p) e_3\otimes e_3 \right) \nabla \pi
			\right] = \Delta p + \frac{1}{2} \partial_3 \left( \min \left( M - \partial_3 p,\, 0 \right) \right) - PV,
		\end{equation}
		for $PV\in H^{-1}$, i.e., for all $\phi\in \mathring{H}^1 $,
		\begin{align}
			&\int_{\mathbb{T}^3} \nabla\pi \cdot\nabla \phi - \frac{1}{2} \mathds{1} (M<\partial_3 p) (\partial_3 \pi) \partial_3 \phi
		\nonumber\\
			&= \int_{\mathbb{T}^3} -\nabla p\cdot\nabla \phi - \frac{1}{2} \min \left( M - \partial_3 p,\, 0 \right) \partial_3 \phi
				- \langle PV,\, \phi \rangle.
		\label{eq:Newt_desc_W}
		\end{align}
	\end{prop}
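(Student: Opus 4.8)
The plan is to read off the Hessian bilinear form \eqref{eq:Newt_desc_H} by differentiating, in a second direction, the G\^{a}teaux derivative $DE_{var}(p)$ recorded in item 3 of \fref{Theorem}{thm:prop_var_en}; then to check that the resulting form is an inner product uniformly comparable to the $\mathring{H}^1$ one, so that \fref{Definitions}{def:grad} and \ref{def:Newton_desc} apply; and finally to identify the Newton direction as the negative of the associated gradient, which is exactly \eqref{eq:Newt_desc_W}, with the divergence form \eqref{eq:Newt_desc_S} being the strong reading of that weak formulation.

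First I would differentiate $DE_{var}(p)\phi = \int_{\mathbb{T}^3}\nabla p\cdot\nabla\phi + \frac12\min(M-\partial_3 p,0)\partial_3\phi + \langle PV,\phi\rangle$ in a direction $\psi$. The term $\langle PV,\phi\rangle$ does not depend on $p$ and drops, the quadratic term contributes $\int\nabla\psi\cdot\nabla\phi$, and for the nonlinear term I would use that $s\mapsto\min(s,0)$ is Lipschitz with a.e. derivative $\mathds{1}(s<0)$, so that the derivative at $t=0$ of $t\mapsto\min(M-\partial_3(p+t\psi),0)$ is $-\mathds{1}(M<\partial_3 p)\,\partial_3\psi$. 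This gives
\begin{equation*}
    (\phi,\,\psi)_{D_p^2 E_{var}} = \int_{\mathbb{T}^3}\nabla\phi\cdot\nabla\psi - \frac12\mathds{1}(M<\partial_3 p)(\partial_3\phi)(\partial_3\psi) = \int_{\mathbb{T}^3} A_H\nabla\phi\cdot\nabla\psi,
\end{equation*}
with $H\vcentcolon=\mathds{1}(M<\partial_3 p)$ and $A_H$ as in \fref{Definition}{def:energy_fixed_Heaviside_coord}, which is \eqref{eq:Newt_desc_H}.

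Next I would verify that this bilinear form is an inner product uniformly comparable to the $\mathring{H}^1$ one, as \fref{Definition}{def:Newton_desc} requires. Since $0\leqslant H\leqslant 1$ we have $\frac12 I\leqslant A_H(x)\leqslant I$ pointwise (the lower bound is already used in \fref{Remark}{rmk:A_H} and \fref{Lemma}{lemma:def_and_invertibility_of_C_H}), hence $\frac12\norm{\phi}{\mathring{H}^1}^2\leqslant(\phi,\phi)_{D_p^2 E_{var}}\leqslant\norm{\phi}{\mathring{H}^1}^2$, so $C=\sqrt{2}$ works. The Newton descent direction $\pi=-\nabla_{D_p^2 E_{var}}E_{var}(p)$ is then, by \fref{Definition}{def:grad}, the unique $\pi\in\mathring{H}^1$ with $(\pi,\phi)_{D_p^2 E_{var}}=-DE_{var}(p)\phi$ for every $\phi\in\mathring{H}^1$, which spelled out is precisely \eqref{eq:Newt_desc_W}; existence and uniqueness of $\pi$ follow from Lax--Milgram applied to the uniformly elliptic bilinear form $(\phi,\psi)\mapsto\int_{\mathbb{T}^3}A_H\nabla\phi\cdot\nabla\psi$ against the bounded linear functional $\phi\mapsto-DE_{var}(p)\phi$ (bounded since $\nabla p$ and $\min(M-\partial_3 p,0)$ lie in $L^2$ and $PV\in H^{-1}$), exactly as for \eqref{eq:def_C_H_inv_1} in \fref{Lemma}{lemma:def_and_invertibility_of_C_H}. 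Finally, writing the left-hand side of \eqref{eq:Newt_desc_W} as $\int A_H\nabla\pi\cdot\nabla\phi=-\int\nabla\cdot(A_H\nabla\pi)\phi$ and integrating by parts on the right ($-\int\nabla p\cdot\nabla\phi=\int\Delta p\,\phi$ and $-\frac12\int\min(M-\partial_3 p,0)\partial_3\phi=\frac12\int\partial_3\min(M-\partial_3 p,0)\phi$, with $\langle PV,\phi\rangle$ the $H^{-1}\times\mathring{H}^1$ pairing) produces \eqref{eq:Newt_desc_S}.

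The main obstacle is a regularity point: because of the kink in $\min(\cdot,0)$, item 4 of \fref{Theorem}{thm:prop_var_en} only yields a Lipschitz G\^{a}teaux derivative, so $E_{var}$ is $C^{1,1}$ rather than twice Fr\'{e}chet differentiable in the literal sense of \fref{Definition}{def:Newton_desc}. I would deal with this by observing that the construction of the Newton direction truly only needs a well-defined Hessian inner product that is uniformly comparable to $\norm{\cdot}{\mathring{H}^1}$, and the form $\int A_H\nabla\phi\cdot\nabla\psi$ has both features for every $p\in\mathring{H}^1$ regardless of smoothness; the differentiation in the second paragraph should be read as the directional derivative of $t\mapsto DE_{var}(p+t\psi)\phi$ at $t=0$, which is legitimate because the set $\{M=\partial_3 p\}$ contributes nothing to the integrals involved. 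Making this precise --- in particular showing the convergence is uniform enough for $\frac12(\psi,\psi)_{D_p^2 E_{var}}$ to be the genuine second-order Taylor coefficient of $E_{var}$ at $p$ --- is the only delicate step; everything else is Lax--Milgram and integration by parts.
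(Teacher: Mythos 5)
Your proposal is correct and follows essentially the same route as the paper: differentiate the G\^{a}teaux derivative of $E_{var}$ (using that $\min(\,\cdot\,,0)$ is Lipschitz with derivative $\mathds{1}(\,\cdot\,<0)$) to obtain \eqref{eq:Newt_desc_H}, characterise the Newton direction by $(\pi,\,\phi)_{D_p^2 E_{var}} = -DE_{var}(p)\phi$ to get \eqref{eq:Newt_desc_W}, and integrate by parts for \eqref{eq:Newt_desc_S}. The additional points you supply --- the uniform ellipticity bound $\tfrac{1}{2}I\leqslant A_H\leqslant I$, Lax--Milgram for existence and uniqueness of $\pi$, and the caveat that $E_{var}$ is only $C^{1,1}$ so the Hessian form must be read as a directional derivative off the set $\{M=\partial_3 p\}$ --- are sound refinements of the paper's brief formal argument rather than a different approach.
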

	\begin{proof}
		Note that $ { \left( \phi,\, \psi \right) }_{D_p^2 E_{var}} = D^2 E_{var}(\phi,\,\psi)$ and
		so \eqref{eq:Newt_desc_H} follows from differentiating the expression for $DE_{var}$ recorded in item 1 of \fref{Theorem}{thm:prop_var_en},
		since $\min( \,\cdot\,, 0)$ is Lipschitz with derivative $\min ( \,\cdot\,,0 ) ' = \mathds{1} ( \,\cdot\, < 0 )$.
		Since the Newton descent direction $\pi$ is characterised by the fact that $ { \left( \pi,\, \phi \right) }_{D^2_p E_{var}} = - DE_{var}(p) \phi$ for every $\phi\in \mathring{H}^1 $,
		\eqref{eq:Newt_desc_W} then follows immediately from \eqref{eq:Newt_desc_H} and the expression for $DE_{var}$ of item 3 of \fref{Theorem}{thm:prop_var_en}.
		To see that \eqref{eq:Newt_desc_S} is indeed the strong form of \eqref{eq:Newt_desc_W} we integrate by parts.
	\end{proof}

	\begin{remark}[Interpretation of Newton descent]
	\label{rmk:interp_Newt_descent_early}
		\fref{Section}{sec:geometry} below details how the Newton descent iteration, recorded in \fref{Proposition}{prop:Newt_desc} above,
		may be interpreted geometrically.

		Another interpretation is already available to us now.
		First: note that once Newton descent enters the so-called ``quadratic regime'', backtracking line search will systematically select steps of unit size.
		Given a current iterate $p_k$, the next iterate $p_{k+1}$ will thus be given by $p_{k+1} = p_k + \pi$ where $\pi$ solves \eqref{eq:Newt_desc_S},
		which we may rewrite as
		\begin{equation*}
			- \nabla \cdot \left( A_{H_k} \nabla \pi \right)
			= \nabla \cdot \left( A_{H_k} \nabla p_k \right)
				+ \frac{1}{2} \partial_3 \left( H_k M \right)
				- PV
		\end{equation*}
		We may rearrange this equation and deduce that the new iterate $p_{k+1} = p_k + \pi$ solves
		\begin{equation*}
			\nabla \cdot \left( A_{H_k} \nabla p_{k+1} \right) = PV - \frac{1}{2} \partial_3 \left( H_k M \right).
		\end{equation*}
		In other words: the next iterate $p_{k+1}$ is obtained by solving the linearised $PV$-and-$M$ inversion
		where the Heaviside characterising the linearisation comes from the previous step.
		Crucially: this is precisely the method that was used in previous works \cite{hu_edwards_smith_stechmann_21,tzou2024} to numerically solve the nonlinear $PV$-and-$M$ inversion.
		This means that the convergence rates presented here for Newton descent are applicable to previous works.
		Moreover this means that the convergence \emph{guarantees} for Newton descent are also the first convergence guarantees of this numerical method which is already used in practice.
	\end{remark}

\vskip0.2in
	The remainder of this section is devoted to the obtention of convergence rates for each of the three descent methods discussed above.
	There are two key assumptions that must be verified about our variational energy in order for such convergence rates to be obtained.
	The first comes from convexity and takes the form of a quadratic lower bound on the variational energy.
	The second, which takes the form of a quadratic \emph{upper} bound on the variational energy, is that to which we turn our attention now.

	The basic ideas are that (1) we will introduce a simpler energy density for which, due to its piecewise quadratic nature, we can easily find such a quadratic upper bound,
	and (2) we will show that this simpler energy density can be turned into our true variational energy density up to some affine transformations.
	The result will then follow from the fact that such affine transformations will only interact with these quadratic upper bounds in predictable ways.
	In order to carry out this argument it is particularly convenient to introduce the notion of a \emph{quadratic remainder}, which we define below.

	\begin{definition}[Quadratic remainder]
	\label{def:quad_remainder}
		Consider a continuously differentiable function $f : \mathbb{R}^d\to\mathbb{R}$.
		For any $x\in\mathbb{R}^d$ we call function $T_x^f : \mathbb{R}^d \to \mathbb{R}$ defined as
		$
			T_x^f (y) \vcentcolon= f(y) - f(x) - D f(x) (y-x)
		$
		for every $y\in\mathbb{R}^d$ the \emph{quadratic remainder of $f$ at $x$}.
	\end{definition}

	We now show that the aforementioned ``simpler'' energy density -- introduced in the result below -- does indeed have a quadratic upper bound.

	\begin{lemma}[Quadratic upper bound for a piecewise quadratic function]
	\label{lemma:quad_up_bd_a_piece_quad_func}
		Consider $f : \mathbb{R}^3 \to \mathbb{R}$ defined by
		$
			f(x) = \frac{1}{2} {\lvert x \rvert}^2 - \frac{1}{4} \min {\left( x_3,\, 0 \right)}^2
		$
		for all $x\in\mathbb{R}^3$.
		For every $x\in\mathbb{R}^3$, 
		$
			f(y) \leqslant f(x) + Df(x) (y-x) + \frac{1}{2} {\lvert y-x \rvert}^2 .
		$
	\end{lemma}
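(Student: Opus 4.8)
The plan is to reduce the claimed quadratic upper bound to a statement about the quadratic remainder $T_x^f$ of $f$ at $x$ (in the sense of \fref{Definition}{def:quad_remainder}), namely that $T_x^f(y) \leqslant \frac{1}{2}\lvert y-x\rvert^2$ for all $x,\,y\in\mathbb{R}^3$. Since $f(x) = \frac12\lvert x\rvert^2 - \frac14 g(x_3)$ where $g(s)\vcentcolon= \min(s,\,0)^2$, and the quadratic part $\frac12\lvert x\rvert^2$ has vanishing quadratic remainder (it equals its own second-order Taylor polynomial), the quadratic remainder of $f$ is simply $-\frac14$ times the quadratic remainder of the one-variable function $s\mapsto g(s)$ evaluated in the third coordinate. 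So the whole problem collapses to a one-dimensional estimate: writing $s = x_3$, $t = y_3$, I need to show that the quadratic remainder of $g$ satisfies $T^g_s(t) = g(t) - g(s) - g'(s)(t-s) \geqslant -2(t-s)^2$, equivalently $-\frac14 T^g_s(t) \leqslant \frac12(t-s)^2 \leqslant \frac12\lvert y-x\rvert^2$.

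Next I would establish this one-dimensional bound by a short case analysis on the signs of $s$ and $t$, using that $g$ is $C^1$ with $g'(s) = 2\min(s,\,0)$ and that $g$ is convex and piecewise quadratic. If $s,\,t \leqslant 0$ then $g$ is exactly quadratic on $(-\infty,\,0]$ so $T^g_s(t) = (t-s)^2 \geqslant 0 \geqslant -2(t-s)^2$. If $s,\,t\geqslant 0$ then $g\equiv 0$ and $g'(s) = 0$ there, so $T^g_s(t) = 0$. The only genuinely mixed cases are $s < 0 \leqslant t$ and $t < 0 \leqslant s$. For $s<0\leqslant t$: $g(t) = 0$, $g(s) = s^2$, $g'(s) = 2s$, so $T^g_s(t) = -s^2 - 2s(t-s) = s^2 - 2st = s(s-2t)$; since $s<0$ and $s - 2t < 0$ this is positive, hence certainly $\geqslant -2(t-s)^2$. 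For $t<0\leqslant s$: $g(s) = 0 = g'(s)$, so $T^g_s(t) = g(t) = t^2 \geqslant 0$. In every case $T^g_s(t) \geqslant 0 \geqslant -2(t-s)^2$, which is more than enough.

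Combining, $T_x^f(y) = -\tfrac14 T^g_{x_3}(y_3) \leqslant 0 \leqslant \tfrac12\lvert y-x\rvert^2$, and unwinding \fref{Definition}{def:quad_remainder} this is exactly the assertion $f(y) \leqslant f(x) + Df(x)(y-x) + \tfrac12\lvert y-x\rvert^2$. Here $Df(x)(y-x) = x\cdot(y-x) - \tfrac14 g'(x_3)(y_3-x_3) = x\cdot(y-x) - \tfrac12\min(x_3,\,0)(y_3-x_3)$, which is the Fréchet derivative of the $C^1$ map $f$; no regularity subtlety arises since $g\in C^1(\mathbb{R})$. I do not anticipate a real obstacle here — the computation that $g$ is $C^1$ (the one-sided derivatives at $0$ both vanish) is the only thing to be careful about, and even the cruder bound $T^g_s(t)\geqslant 0$ suffices, so the constant $\frac12$ on the right is in fact not tight. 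The one point worth flagging in the writeup is that the minus sign in front of the $\min^2$ term is what makes this a one-sided (upper, not two-sided) bound: $f$ is concave in the $x_3$-direction on $\{x_3<0\}$ with second derivative $-\tfrac12$ there, so $f - \tfrac12\lvert\cdot\rvert^2$ is concave, which is the conceptual reason the quadratic upper bound with constant $1$ holds. I would include that remark but keep the proof itself to the two-line case check above.
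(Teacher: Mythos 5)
Your route is genuinely different from the paper's and, once one bookkeeping slip is repaired, it works; in fact it is cleaner. The paper writes $f(x)=\frac12 A(x)x\cdot x$ with $A(x)=I-\frac12\mathds{1}(x_3<0)\,e_3\otimes e_3$, expands the quadratic remainder $T^f_x(y)$ accordingly, and handles the mixed-sign case $x_3<0\leqslant y_3$ by a separate explicit estimate. You instead split off the pure quadratic and reduce everything to the one-dimensional function $g(s)=\min(s,0)^2$, whose convexity (your four-case check that $T^g_s(t)\geqslant 0$, with $g'(s)=2\min(s,0)$) is all that is needed; equivalently, your closing remark that $f-\frac12\lvert\cdot\rvert^2=-\frac14\min(x_3,0)^2$ is concave is by itself a complete one-line proof, since the inequality $f(y)\leqslant f(x)+Df(x)(y-x)+\frac12\lvert y-x\rvert^2$ for all $x,y$ is exactly concavity of $f-\frac12\lvert\cdot\rvert^2$. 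What your approach buys is that no case in $\mathbb{R}^3$ is ``troublesome''; what the paper's buys is that the same matrix $A$ and the same expansion of $T^f_x$ reappear elsewhere (e.g.\ in \fref{Definition}{def:energy_fixed_Heaviside_coord} and \fref{Remark}{rmk:A_H}), so its computation is reused.

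The slip: under \fref{Definition}{def:quad_remainder}, the quadratic remainder subtracts only the \emph{first}-order Taylor polynomial, so the remainder of $\frac12\lvert x\rvert^2$ is $\frac12\lvert y-x\rvert^2$, not zero. Hence the correct identity is
\begin{equation*}
	T^f_x(y) \;=\; \tfrac12\lvert y-x\rvert^2 \;-\; \tfrac14\, T^g_{x_3}(y_3),
\end{equation*}
not $T^f_x(y)=-\frac14 T^g_{x_3}(y_3)$. Consequently your intermediate claims that $T^f_x(y)\leqslant 0$ and that the constant $\frac12$ is not tight are false: taking $x_3,y_3\geqslant 0$ with $y\neq x$ gives $T^f_x(y)=\frac12\lvert y-x\rvert^2$, so the bound is attained. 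None of this damages the proof, because the estimate you actually establish, $T^g_{x_3}(y_3)\geqslant 0$, is precisely what the corrected identity requires; also note the small wording slip that it is $f-\frac12\lvert\cdot\rvert^2$, not $f$ itself, that is concave in the $x_3$-direction on $\{x_3<0\}$. With those corrections your argument is a valid, and arguably preferable, alternative to the paper's proof.
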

	\begin{proof}
		To make the computations easier we introduce the following function:
		$
			A(x) \vcentcolon= I - \frac{1}{2} \mathds{1} (x_3 < 0) e_3 \otimes e_3
		$,
		such that
		$
			f(x) = \frac{1}{2} A(x) x \cdot x,
		$
		and hence
		$
			\nabla f(x) = A(x) x
		$.
		In particular note that, for every $x$, the largest eigenvalue of $A(x)$ is equal to $1$, and so $A(x)\xi\cdot\xi \leqslant {\lvert \xi \rvert}^2$ for all $\xi\in\mathbb{R}^3$.

		In terms of the \hyperref[def:quad_remainder]{quadratic remainder} $T_x^f$, we may write the target inequality as $T_x^f(y) \leqslant \frac{1}{2} {\lvert y-x \rvert}^2 $,
		so we begin by using the fact that $A$ is pointwise symmetric to rewrite $T_x^f$ in a more convenient way. We obtain
		\begin{equation*}
		\label{eq:int_write_T_using_A}
			T_x^f (y)
			= \frac{1}{2} \left[ A(y) - A(x) \right] y \cdot y + \frac{1}{2} A(x) (y-x) \cdot (y-x).
		\end{equation*}
		Since $ A(y) - A(x) = \frac{1}{2} \left[ \mathds{1}(x_3 < 0 \leqslant y_3) - \mathds{1}(y_3 < 0 \leqslant x_3) \right] e_3\otimes e_3$
		we compute that
		\begin{equation*}
			T_x^f(y) = \left\{
				\begin{aligned}
					& - \frac{y_3^2}{4}	+ \frac{1}{2} A(x) (y-x) \cdot (y-x)		&&\text{if } y_3 < 0 \leqslant x_3,\\
					& \frac{x_3^2}{4} 	+ \frac{1}{2} A(x) (y-x) \cdot (y-x)		&&\text{if } x_3 < 0 \leqslant y_3, \text{ and }\\
					&			  \frac{1}{2} A(x) (y-x) \cdot (y-x)		&&\text{otherwise.}
				\end{aligned}
			\right.
		\end{equation*}
		Only one of these three cases is troublesome since, if $y_3 < 0 \leqslant x_3$ or if $x_3$ and $y_3$ have the same sign then
		$
			T_x^f(y)
			\leqslant \frac{1}{2} A(x) (y-x) \cdot (y-x)
			\leqslant \frac{1}{2} {\lvert y-x \rvert}^2 .
		$
		In the troublesome case, i.e. if $x_3 < 0 \leqslant y_3$, then we need to be more careful.
		In this case writing $z_h = \left( z_1,\, z_2 \right)$ for any $x\in\mathbb{R}^3$ tells us that
		\begin{align*}
			T_x^f(y)
			= \frac{1}{2} A(x) (y-x)\cdot (y-x) + \frac{1}{4} x_3^2
			= \frac{1}{2} {\lvert x_h - y_h \rvert}^2 + \frac{1}{4} {(x_3 - y_3)}^2 + \frac{1}{4} x_3^2.
		\end{align*}
		Crucially: if $x_3 < 0 \leqslant y_3$ then $x_3 y_3 \geqslant 0$ and hence
		$
			\frac{1}{4} {(x_3 - y_3)}^2 + \frac{1}{4} x_3^2 \leqslant \frac{1}{2} {(x_3 - y_3)}^2,
		$
		which proves that $T_x^f (y) \leqslant \frac{1}{2} {\lvert x-y \rvert}^2 $ even in this case, concluding the proof.
	\end{proof}

	With \fref{Lemma}{lemma:quad_up_bd_a_piece_quad_func} in hand we may now use the fact that quadratic upper bounds on \hyperref[def:quad_remainder]{quadratic remainders}
	behave nicely under transformations (see \fref{Lemmas}{lemma:lin_quad_remainder_map}--\ref{lemma:equiv_quad_rem_map_aff_trans}) to deduce that the true variational energy density of interest
	also has a quadratic upper bound.

	\begin{lemma}[Quadratic upper bound for our variational energy density]
	\label{lemma:quad_up_bd_our_en}
		Consider $e:\mathbb{R}^3\to\mathbb{R}$ defined by $e(x) \vcentcolon= \frac{1}{2} {\lvert x \rvert}^2 - \frac{1}{4} {\min(r - x_3,\,0)}^2$, where $r$ is fixed.
		For every $x,\,y\in\mathbb{R}^3$, $e(y) \leqslant e(x) + De(x)(y-x) + \frac{1}{2} {\lvert x-y \rvert}^2 $.
	\end{lemma}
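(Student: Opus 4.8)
The statement to prove, Lemma~\ref{lemma:quad_up_bd_our_en}, is essentially Lemma~\ref{lemma:quad_up_bd_a_piece_quad_func} transported along an affine change of variables in the $x_3$ variable. The plan is to reduce the quadratic-remainder inequality for $e$ to the one already proved for $f(x) = \frac12|x|^2 - \frac14\min(x_3,0)^2$ in Lemma~\ref{lemma:quad_up_bd_a_piece_quad_func}. Observe that $e(x) = f(L x + c)$ where $L$ is the diagonal orthogonal map $L = \mathrm{diag}(1,1,-1)$ and $c = (0,0,r)$: indeed $(Lx+c)_3 = r - x_3$, so $\min((Lx+c)_3,0)^2 = \min(r-x_3,0)^2$, and $|Lx+c|^2 = x_1^2 + x_2^2 + (r-x_3)^2$. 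This last term is not $|x|^2$, so $e$ is not literally $f$ composed with an affine isometry — there is an extra affine shift — but the crucial point is that $x \mapsto Lx+c$ has the \emph{linear part} $L$ which is orthogonal, so $|L(y-x)| = |y-x|$.

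Concretely, I would carry out the following steps. First, set $\phi(x) := Lx + c$ so that $e = f\circ\phi$ and $De(x) = L^T Df(\phi(x)) = L^T Df(\phi(x))$ (chain rule; $L^T = L$ here). Second, write the quadratic remainder of $e$ at $x$: by Definition~\ref{def:quad_remainder},
\begin{equation*}
    T_x^e(y) = e(y) - e(x) - De(x)(y-x) = f(\phi(y)) - f(\phi(x)) - Df(\phi(x))\,L(y-x) = f(\phi(y)) - f(\phi(x)) - Df(\phi(x))(\phi(y)-\phi(x)),
\end{equation*}
where the last equality uses $\phi(y) - \phi(x) = L(y-x)$. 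Hence $T_x^e(y) = T_{\phi(x)}^f(\phi(y))$. Third, apply Lemma~\ref{lemma:quad_up_bd_a_piece_quad_func} with the points $\phi(x)$ and $\phi(y)$: this gives $T_{\phi(x)}^f(\phi(y)) \leqslant \tfrac12 |\phi(y) - \phi(x)|^2$. Fourth, since $\phi(y)-\phi(x) = L(y-x)$ and $L$ is orthogonal, $|\phi(y)-\phi(x)|^2 = |y-x|^2$. Combining, $T_x^e(y) \leqslant \tfrac12|y-x|^2$, which is precisely the claimed inequality $e(y) \leqslant e(x) + De(x)(y-x) + \tfrac12|x-y|^2$.

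There is essentially no hard part here; the only thing to be careful about is the bookkeeping of the chain rule and the observation that the affine map $\phi$ has an orthogonal linear part, so that no constants are lost in passing $|\phi(y)-\phi(x)|^2$ back to $|y-x|^2$. (If the paper's Lemmas~\ref{lemma:lin_quad_remainder_map}--\ref{lemma:equiv_quad_rem_map_aff_trans} on the behavior of quadratic remainders under affine transformations are available, one could alternatively invoke them directly to get $T_x^e = T_{\phi(x)}^f \circ \phi$ without redoing the chain-rule computation; either route is short.) One minor point worth stating explicitly in the write-up: $e$ is continuously differentiable because $t \mapsto \min(t,0)^2$ is $C^1$ with derivative $2\min(t,0)$, so Definition~\ref{def:quad_remainder} applies to $e$ and the identity $De(x) = L^T Df(\phi(x))$ is legitimate.
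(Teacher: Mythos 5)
Your overall strategy --- transporting the inequality of \fref{Lemma}{lemma:quad_up_bd_a_piece_quad_func} through an affine change of variables whose linear part is orthogonal --- is the same as the paper's, but as written your concrete steps rest on a false identity. With $\phi(x) = Lx + c$, $L = \mathrm{diag}(1,1,-1)$, $c = re_3$, one has
\begin{equation*}
	f\left( \phi(x) \right)
	= \tfrac{1}{2}\left( x_1^2 + x_2^2 + (r-x_3)^2 \right) - \tfrac{1}{4}\, { \min (r - x_3,\, 0) }^2
	= e(x) - r x_3 + \tfrac{1}{2} r^2,
\end{equation*}
so $e \neq f\circ\phi$ unless $r=0$, and correspondingly $\nabla e(x) = L\,\nabla f(\phi(x)) + r e_3$, not $L\,\nabla f(\phi(x))$. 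Your plan paragraph even notices this (``there is an extra affine shift''), but the proof then asserts $e = f\circ\phi$ and applies the chain rule to that identity; consequently the first equality in your display, although true, is not justified by the reasons you give. (The same issue affects your parenthetical alternative: \fref{Lemma}{lemma:equiv_quad_rem_map_aff_trans} yields $T^{f\circ\phi}_x = T^f_{\phi(x)}\circ\phi$, which concerns $f\circ\phi$, not $e$, so you would still need to pass from $T^{f\circ\phi}_x$ to $T^e_x$.)

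The missing step is exactly the one the paper supplies: the discrepancy $a(x) = -rx_3 + \tfrac{1}{2}r^2$ is affine, affine functions have identically vanishing quadratic remainder (\fref{Lemma}{lemma:vanishing_quad_rem_charac_aff_func}), and the remainder map is linear (\fref{Lemma}{lemma:lin_quad_remainder_map}), so $T^e_x = T^{f\circ\phi}_x - T^a_x = T^{f\circ\phi}_x$. After that, your computation (or \fref{Lemma}{lemma:equiv_quad_rem_map_aff_trans}) gives $T^e_x(y) = T^f_{\phi(x)}(\phi(y)) \leqslant \tfrac{1}{2}{\lvert \phi(y)-\phi(x) \rvert}^2 = \tfrac{1}{2}{\lvert y-x \rvert}^2$ since $\phi(y)-\phi(x) = L(y-x)$ with $L$ orthogonal, which is the claim. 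So once you replace ``$e = f\circ\phi$'' by ``$e = f\circ\phi - a$ with $a$ affine'' and add the one-line remark that $a$ contributes nothing to the quadratic remainder, your argument coincides with the paper's proof (which uses $\Phi(x) = re_3 - x$, again with orthogonal linear part, and the same affine correction).
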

	\begin{proof}
		We note that, for $f$ as in \fref{Lemma}{lemma:quad_up_bd_a_piece_quad_func}, $f(ru_3 - x) = e(x) + a(x)$ for an affine function $a(x) = -rx_3 + \frac{1}{2} r^2$.
		Introducing the affine transformation $\Phi(x) = me_3 - x$ this means that $e = f\circ\Phi - a$.
		Therefore we may use \fref{Lemmas}{lemma:lin_quad_remainder_map}, \ref{lemma:vanishing_quad_rem_charac_aff_func}, and \ref{lemma:equiv_quad_rem_map_aff_trans} to see that
		the \hyperref[def:quad_remainder]{quadratic remainder} of $e$ is given by
		$
			T^e_x
			= T_x^{f\circ\Phi} - T_x^a
			= T^f_{\Phi(x)} \circ \Phi.
		$
		In particular we deduce from \fref{Lemma}{lemma:quad_up_bd_a_piece_quad_func} that
		$
			T_x^e (y)
			\leqslant \frac{1}{2} {\lvert \Phi (y) - \Phi (x) \rvert}^2 
			= \frac{1}{2} {\lvert y-x \rvert}^2 ,
		$
		which is precisely the desired inequality.
	\end{proof}

	Our search for a quadratic upper bound on the variational energy now concludes as we leverage the quadratic upper bound on the variational energy density obtained in \fref{Lemma}{lemma:quad_up_bd_our_en}
	to deduce that the variational energy itself does indeed have a comparable quadratic upper bound.

	\begin{lemma}[Quadratic upper model for our variational energy]
	\label{lemma:fund_upper_mod_ineq_for_our_energy}
		Let $E_{var}$ be the variational energy introduced in \fref{Theorem}{thm:var_form_PV_and_M_inversion}. For every $p,\,r\in \mathring{H}^1 $, the following inequality holds:
		$
		E_{var}(r) \leqslant E_{var}(p) + DE_{var}(p)(r-p) + \frac{1}{2} \norm{r-p}{ \mathring{H}^1 }^2.
		$
	\end{lemma}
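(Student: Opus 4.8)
The plan is to deduce this functional inequality from the pointwise quadratic upper bound of \fref{Lemma}{lemma:quad_up_bd_our_en} by an integration argument, after peeling off the linear term. First I would invoke the representation in item 2 of \fref{Theorem}{thm:prop_var_en}, namely
\begin{equation*}
	E_{var}(p) = \int_{\mathbb{T}^3} e_M \left( x,\, \nabla p(x) \right) dx + \langle PV,\, p \rangle,
	\qquad
	e_M(x,\,u) = \tfrac{1}{2} {\lvert u \rvert}^2 - \tfrac{1}{4} {\min \left( M(x) - u_3,\, 0 \right)}^2,
\end{equation*}
and observe that the term $p\mapsto\langle PV,\,p\rangle$ is affine and hence contributes exactly $\langle PV,\,r-p\rangle$ to both sides of the claimed inequality without affecting the quadratic remainder. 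So it suffices to prove the inequality for the density functional $\mathcal{E}(p) \vcentcolon= \int_{\mathbb{T}^3} e_M(x,\nabla p(x))\,dx$ and then add $\langle PV,\,r-p\rangle$ back in, using that $DE_{var}(p)\phi = D\mathcal{E}(p)\phi + \langle PV,\,\phi\rangle$ by item 3 of \fref{Theorem}{thm:prop_var_en}.

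Next I would apply \fref{Lemma}{lemma:quad_up_bd_our_en} pointwise in $x$: for almost every $x\in\mathbb{T}^3$ the lemma, taken with the scalar parameter equal to $M(x)$ and with the vectors $\nabla p(x)$ and $\nabla r(x)$ in $\mathbb{R}^3$, gives
\begin{equation*}
	e_M(x,\,\nabla r(x)) \leqslant e_M(x,\,\nabla p(x)) + D_u e_M(x,\,\nabla p(x)) \cdot \left( \nabla r(x) - \nabla p(x) \right) + \tfrac{1}{2} {\lvert \nabla r(x) - \nabla p(x) \rvert}^2,
\end{equation*}
where $D_u e_M(x,\,u)\cdot\xi = u\cdot\xi + \tfrac{1}{2} \min(M(x)-u_3,\,0)\,\xi_3$ is precisely the integrand appearing (besides $PV$) in the G\^{a}teaux derivative formula \eqref{eq:Gateaux_diff_en}. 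Integrating over $\mathbb{T}^3$ then yields
\begin{equation*}
	\mathcal{E}(r) \leqslant \mathcal{E}(p) + \int_{\mathbb{T}^3} D_u e_M(x,\,\nabla p) \cdot \nabla(r-p)\,dx + \tfrac{1}{2} \int_{\mathbb{T}^3} {\lvert \nabla(r-p) \rvert}^2\,dx,
\end{equation*}
in which the middle term is exactly $D\mathcal{E}(p)(r-p)$ and the last term is exactly $\tfrac{1}{2}\norm{r-p}{ \mathring{H}^1 }^2$ (the $\mathring{H}^1$-norm being the homogeneous one, $\norm{\,\cdot\,}{ \mathring{H}^1 }^2 = \int_{\mathbb{T}^3}{\lvert\nabla\,\cdot\,\rvert}^2$). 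Adding $\langle PV,\,r-p\rangle$ to both sides and recombining gives the claimed bound.

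The only things requiring (mild) care --- and the closest thing to an obstacle here --- are routine bookkeeping points: that all integrals are finite, which holds since $p,r\in \mathring{H}^1$ gives $\nabla p,\nabla r\in L^2$ and $e_M$, $D_u e_M$ grow at most quadratically, respectively linearly, in $u$ with the $M$-dependence controlled by $M\in L^2$ through the $\min$ term; that $x\mapsto\bigl(\nabla p(x),\nabla r(x),M(x)\bigr)$ is measurable and $e_M$ is Carath\'{e}odory, so the pointwise inequality integrates; and that the constant in front of ${\lvert\nabla(r-p)\rvert}^2$ is exactly $\tfrac12$, uniformly in the scalar parameter, which is exactly what \fref{Lemma}{lemma:quad_up_bd_our_en} delivers. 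None of these presents a genuine difficulty; the real content has already been packaged into \fref{Lemma}{lemma:quad_up_bd_our_en}.
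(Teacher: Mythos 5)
Your proposal is correct and follows essentially the same route as the paper: rewrite $E_{var}$ via the density $e_M$ from item 2 of \fref{Theorem}{thm:prop_var_en}, apply the pointwise quadratic bound of \fref{Lemma}{lemma:quad_up_bd_our_en} with parameter $M(x)$, integrate, and identify the middle term with the G\^{a}teaux derivative formula \eqref{eq:Gateaux_diff_en}. The extra bookkeeping you mention (measurability, integrability, the affine $PV$ term) is harmless and the paper simply absorbs it implicitly.
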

	\begin{proof}
		We write $E_{var}$ as in \eqref{eq:write_E_using_e_M}, abusively writing $e_M (\nabla p)$ to mean $e_M (x,\, \nabla p(x))$ in the sequel,
		and note that item 3 of \fref{Theorem}{thm:prop_var_en} tells us that
		\begin{equation}
		\label{eq:fund_upper_int_1}
			DE_{var}(p) \phi = \int_{\mathbb{T}^3} \nabla_u e_M (\nabla p) \cdot \nabla \phi + \langle PV,\, \phi \rangle.
		\end{equation}
		Similarly we may reinterpret \fref{Lemma}{lemma:quad_up_bd_our_en} as telling us that
		\begin{equation}
		\label{eq:fund_upper_int_2}
			e_M(\nabla r) \leqslant e_M (\nabla p) + \nabla_u e_M (\nabla p) \cdot \nabla (r-p) + \frac{1}{2} {\lvert \nabla (r-p) \rvert}^2.
		\end{equation}
		Putting \eqref{eq:write_E_using_e_M}, \eqref{eq:fund_upper_int_1}, and \eqref{eq:fund_upper_int_2} together then yields the claim.
	\end{proof}

	We now have all of the ingredients we need to prove the convergence rates for our three descent methods in the context of \emph{exact} line search.
	First we state a generic version of that result in \fref{Proposition}{prop:key_ineq_conv_rate_grad_desc_exact_Hilbert}
	before specialising it to our energy afterwards in \fref{Theorem}{thm:conv_rates_our_en_exact} .
	Note that the result below is classical -- see for example \cite{boyd_vandenberghe}. We record it here, along with its proof, in order to ensure that
	good care is taken of tracking the constants arising from comparable norms involved in the descent method.
	This is particularly important for us since ultimately it will lead to significant differences between the convergence rate of the $L^2$-gradient descent compared with the other descent methods.

	\begin{prop}[Key inequality for the convergence rate of gradient descent with exact line search in Hilbert spaces]
	\label{prop:key_ineq_conv_rate_grad_desc_exact_Hilbert}
		Let $ {\left( \,\cdot\,,\,\cdot\, \right)}_{X} $ be a complete inner product on a Hilbert space $H$ for which there exists a constant $C>0$ such that
		\begin{equation}
		\label{eq:key_ineq_exact_assum}
			\norm{u}{H} \leqslant C \norm{u}{X} \text{ for every } u\in H.
		\end{equation}
		Let $ \mathcal{F} :H\to \mathbb{R} $ be a continuously Fr\'{e}chet differentiable functional satisfying, for some constants $m,\,M > 0$,
		\begin{equation}
		\label{eq:key_ineq_exact_upper_and_lower}
			\frac{m}{2} \norm{v-u}{H}^2 \leqslant \mathcal{F} (v) - \mathcal{F} (u) - D \mathcal{F} (u) (v-u) \leqslant \frac{M}{2} \norm{v-u}{H}^2
		\end{equation}
		for every $u,\,v\in H$, and let $u^*$ denote the unique minimizer of $ \mathcal{F} $.
		Then
		\begin{equation}
		\label{eq:key_ineq_exact_necessary_decrease}
			\mathcal{F} \left( u - \frac{1}{C^2 M} \nabla_X \mathcal{F} (u) \right)
			\leqslant \mathcal{F} (u) - \frac{1}{2C^2 M} \norm{\nabla_X \mathcal{F} (u)}{X}^2
		\end{equation}
		and
		\begin{equation}
		\label{eq:key_ineq_exact_norm_near_min}
			\mathcal{F} (u) - \mathcal{F} (u^*) \leqslant \frac{1}{2m} \norm{\nabla_H \mathcal{F} (u)}{H}^2
		\end{equation}
		such that
		\begin{equation}
		\label{eq:key_ineq_exact}
			\mathcal{F} \left( u - \frac{1}{C^2 M} \nabla_X \mathcal{F} (u) \right) - \mathcal{F}  ( u^*)
			\leqslant \left( 1 - \frac{m}{C^4M} \right) \left( \mathcal{F} (u) - \mathcal{F} (u^*) \right).
		\end{equation}
	\end{prop}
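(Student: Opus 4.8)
The plan is to establish the three displayed inequalities \eqref{eq:key_ineq_exact_necessary_decrease}, \eqref{eq:key_ineq_exact_norm_near_min}, and \eqref{eq:key_ineq_exact} in that order, the last being a short algebraic combination of the first two. First I would prove the ``sufficient decrease'' bound \eqref{eq:key_ineq_exact_necessary_decrease} by plugging $v = u - \frac{1}{C^2M}\nabla_X\mathcal{F}(u)$ into the upper half of \eqref{eq:key_ineq_exact_upper_and_lower}. By the defining property of the $X$-gradient (\fref{Definition}{def:grad}) we have $D\mathcal{F}(u)(v-u) = -\frac{1}{C^2M}\norm{\nabla_X\mathcal{F}(u)}{X}^2$, while the norm comparison \eqref{eq:key_ineq_exact_assum} gives $\norm{v-u}{H}^2 \leqslant C^2\norm{v-u}{X}^2 = \frac{1}{C^2M^2}\norm{\nabla_X\mathcal{F}(u)}{X}^2$; substituting and simplifying the resulting coefficient $\frac{1}{C^2M} - \frac{M}{2}\cdot\frac{1}{C^2M^2} = \frac{1}{2C^2M}$ yields exactly \eqref{eq:key_ineq_exact_necessary_decrease}. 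This step is routine.

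Next I would prove the Polyak--\L{}ojasiewicz-type bound \eqref{eq:key_ineq_exact_norm_near_min} from the lower half of \eqref{eq:key_ineq_exact_upper_and_lower}. Writing $D\mathcal{F}(u)(v-u) = {\left(\nabla_H\mathcal{F}(u),\, v-u\right)}_H$ via \fref{Definition}{def:grad} and completing the square in the Hilbert space $H$ gives, for every $v\in H$, $\mathcal{F}(v) \geqslant \mathcal{F}(u) + \frac{m}{2}\norm{v - u + \tfrac{1}{m}\nabla_H\mathcal{F}(u)}{H}^2 - \frac{1}{2m}\norm{\nabla_H\mathcal{F}(u)}{H}^2 \geqslant \mathcal{F}(u) - \frac{1}{2m}\norm{\nabla_H\mathcal{F}(u)}{H}^2$. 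Taking $v = u^*$ and using that $u^*$ minimizes $\mathcal{F}$ produces \eqref{eq:key_ineq_exact_norm_near_min}. (Completeness of $H$ is what legitimizes identifying $D\mathcal{F}(u)$ with $\nabla_H\mathcal{F}(u)$ here, and completeness of $X$ does the same for $\nabla_X\mathcal{F}(u)$ in the first step.)

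Finally, for \eqref{eq:key_ineq_exact} I would subtract $\mathcal{F}(u^*)$ from \eqref{eq:key_ineq_exact_necessary_decrease}, obtaining $\mathcal{F}\bigl(u - \tfrac{1}{C^2M}\nabla_X\mathcal{F}(u)\bigr) - \mathcal{F}(u^*) \leqslant \bigl(\mathcal{F}(u) - \mathcal{F}(u^*)\bigr) - \frac{1}{2C^2M}\norm{\nabla_X\mathcal{F}(u)}{X}^2$, and then feed in the comparison $\norm{\nabla_H\mathcal{F}(u)}{H} \leqslant C\,\norm{\nabla_X\mathcal{F}(u)}{X}$ together with \eqref{eq:key_ineq_exact_norm_near_min}: this bounds $\frac{1}{2C^2M}\norm{\nabla_X\mathcal{F}(u)}{X}^2$ below by $\frac{1}{2C^4M}\norm{\nabla_H\mathcal{F}(u)}{H}^2 \geqslant \frac{m}{C^4M}\bigl(\mathcal{F}(u)-\mathcal{F}(u^*)\bigr)$, which is precisely the contraction \eqref{eq:key_ineq_exact}.

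The main obstacle is the gradient comparison used in the last step. The inequality that drops out of \eqref{eq:key_ineq_exact_assum} directly is the \emph{wrong} direction, $\norm{\nabla_X\mathcal{F}(u)}{X} \leqslant C\,\norm{\nabla_H\mathcal{F}(u)}{H}$, since in the dual pairing the unit $X$-ball sits inside $C$ times the unit $H$-ball. The direction actually needed, $\norm{\nabla_H\mathcal{F}(u)}{H} \leqslant C\,\norm{\nabla_X\mathcal{F}(u)}{X}$, follows from $\norm{\nabla_H\mathcal{F}(u)}{H}^2 = D\mathcal{F}(u)(\nabla_H\mathcal{F}(u)) = {\left(\nabla_X\mathcal{F}(u),\, \nabla_H\mathcal{F}(u)\right)}_X \leqslant \norm{\nabla_X\mathcal{F}(u)}{X}\,\norm{\nabla_H\mathcal{F}(u)}{X}$ once one also controls $\norm{\,\cdot\,}{X}$ by $C\,\norm{\,\cdot\,}{H}$ --- i.e.\ it uses the norm equivalence in \emph{both} directions with the same constant $C$, which is exactly what holds in each case of interest ($X = \mathring{H}^1$ trivially with $C=1$, and the Hessian inner product by \fref{Definition}{def:Newton_desc}). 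The care required is purely in tracking these powers of $C$ so that the final factor comes out as the stated $1 - \frac{m}{C^4M}$.
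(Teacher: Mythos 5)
Your treatment of \eqref{eq:key_ineq_exact_necessary_decrease} and \eqref{eq:key_ineq_exact_norm_near_min} is exactly the paper's argument: substitute $v = u - \tfrac{1}{C^2M}\nabla_X\mathcal{F}(u)$ into the upper half of \eqref{eq:key_ineq_exact_upper_and_lower}, using \eqref{eq:key_ineq_exact_assum} to convert $\norm{\nabla_X\mathcal{F}(u)}{H}$ into $C\norm{\nabla_X\mathcal{F}(u)}{X}$, and complete the square in the lower half with $v = u^*$. The value of your write-up is in the last paragraph, and the subtlety you flag there is genuine: the paper's proof simply asserts that \eqref{eq:key_ineq_exact_assum} and \eqref{eq:key_ineq_exact_norm_near_min} give $\norm{\nabla_X\mathcal{F}(u)}{X}^2 \geqslant \tfrac{2m}{C^2}\left(\mathcal{F}(u)-\mathcal{F}(u^*)\right)$, but this requires $\norm{\nabla_H\mathcal{F}(u)}{H} \leqslant C\norm{\nabla_X\mathcal{F}(u)}{X}$, whereas the one-sided hypothesis \eqref{eq:key_ineq_exact_assum} by itself only yields the opposite comparison $\norm{\nabla_X\mathcal{F}(u)}{X} \leqslant C\norm{\nabla_H\mathcal{F}(u)}{H}$ (dualizing reverses the inequality), exactly as you say. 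The gap is not cosmetic: with $H=\mathbb{R}$, $\mathcal{F}(u)=u^2/2$ (so $m=M=1$, $u^*=0$), $C=1$, and ${(u,v)}_X = \lambda uv$ with $\lambda=100$, hypothesis \eqref{eq:key_ineq_exact_assum} holds, yet $\nabla_X\mathcal{F}(u)=u/\lambda$ and one exact-constant step moves $u$ only to $0.99\,u$, while \eqref{eq:key_ineq_exact} with $1-\tfrac{m}{C^4M}=0$ would demand landing at the minimizer. So the statement is only correct under the two-sided (uniform) comparability of $\norm{\,\cdot\,}{X}$ and $\norm{\,\cdot\,}{H}$ that you invoke, and your chain $\norm{\nabla_H\mathcal{F}(u)}{H}^2 = {(\nabla_X\mathcal{F}(u),\,\nabla_H\mathcal{F}(u))}_X \leqslant \norm{\nabla_X\mathcal{F}(u)}{X}\,\norm{\nabla_H\mathcal{F}(u)}{X} \leqslant C\,\norm{\nabla_X\mathcal{F}(u)}{X}\,\norm{\nabla_H\mathcal{F}(u)}{H}$ is the correct repair. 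As you also note, this costs nothing in the applications: $C=1$ when $X=\mathring{H}^1$, the Hessian inner product is uniformly comparable by construction (\fref{Definition}{def:Newton_desc} and \fref{Proposition}{prop:Newt_desc}, since the subtracted term is nonnegative), and in the $L^2$ case the reverse bound $\norm{p}{L^2}\lesssim\norm{p}{\mathring{H}^1}$ is Poincar\'e's inequality, so the rates in \fref{Theorem}{thm:conv_rates_our_en_exact} stand. In short: same route as the paper, with a more careful (and needed) justification of the final gradient comparison.
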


	\begin{remark}[On the importance of the key inequality]
	\label{rmk:import_key_ineq}
		The so-called ``key inequality'' \eqref{eq:key_ineq_exact} warrants such a lofty name since the actual convergence rate follows from that inequality immediately.
		Indeed, if we denote let $x_0 \in \mathbb{T}^3$ and denote by $x_k$ the iterates obtained via gradient descent with exact line search, i.e.
		\begin{equation*}
			x_k \vcentcolon= \argmin_{t\in\mathbb{R}} \mathcal{F} \left( x_{k-1} - t\nabla \mathcal{F} (x_{k-1}) \right),
		\end{equation*}
		then inducting on the key inequality tells us that
		\begin{equation*}
			\mathcal{F} (x_{k-1}) - \mathcal{F} (x^*)
			\leqslant { \left( 1 - \frac{m}{C^4 M} \right) }^k \left( \mathcal{F} (x_0) - \mathcal{F} (x^*) \right),
		\end{equation*}
		which guarantees that indeed $ \mathcal{F} (x_k)$ converges to $ \mathcal{F} (x_0)$.
	\end{remark}

	\begin{proof}[Proof of \fref{Proposition}{prop:key_ineq_conv_rate_grad_desc_exact_Hilbert}]
		Plugging $v = u - t\nabla_X \mathcal{F} (u)$ into the second inequality in \eqref{eq:key_ineq_exact_upper_and_lower} tells us that
		\begin{align*}
			\mathcal{F} \left( u - t \nabla_X \mathcal{F} (u) \right)
			\leqslant \mathcal{F} (u) - t \norm{\nabla_X \mathcal{F} (u)}{X}^2 + \frac{Mt^2}{2} \norm{\nabla_X \mathcal{F} (u)}{H}^2
		\\
			\leqslant \mathcal{F} (u) + \left( \frac{C^2 M t^2}{2} - t \right) \norm{\nabla_X \mathcal{F} (u)}{X}^2.
		\end{align*}
		Since the quadratic $q(t) \vcentcolon= \frac{C^2 M t^2}{2} - t$ is minimized at $t^* = \frac{1}{C^2M}$, where it attains the value $q(t^*) = - \frac{1}{2C^2 M}$,
		we deduce \eqref{eq:key_ineq_exact_necessary_decrease} from the inequality above.

		To establish \eqref{eq:key_ineq_exact_norm_near_min} we use the first inequality in \eqref{eq:key_ineq_exact_upper_and_lower}.
		We begin by writing $D \mathcal{F} (u) (v-u) = { \left( \nabla_H \mathcal{F} (u),\, v-u \right) }_{H} $, which tells us that with respect to $v$ the right-hand side of that inequality
		is minimized at $v^* = u - \frac{1}{m} \nabla_H \mathcal{F} (u)$.
		Therefore, for any $u,\,v\in H$,
		\begin{equation*}
			F(v)
			\geqslant \left[ \mathcal{F} (u) + { \left( \nabla_H \mathcal{F} (u),\, v-u \right) }_{H} +\frac{m}{2} \norm{v-u}{H}^2  \right] \vert_{v=v^*}
			= \mathcal{F} (u) - \frac{1}{2m} \norm{\nabla_H \mathcal{F} (u)}{H}^2.
		\end{equation*}
		Setting $v = u^*$ produces \eqref{eq:key_ineq_exact_norm_near_min}.

		Finally we turn our attention to the key inequality \eqref{eq:key_ineq_exact}.
		\eqref{eq:key_ineq_exact_assum} and \eqref{eq:key_ineq_exact_norm_near_min} tell us that
		$
			\norm{\nabla_X \mathcal{F} (u)}{X}^2
			\geqslant \frac{2m}{C^2} \left( \mathcal{F} (u) - \mathcal{F} (u^*) \right)
		$
		and so combining this inequality with \eqref{eq:key_ineq_exact_necessary_decrease} yields \eqref{eq:key_ineq_exact}.
	\end{proof}

	We now specialise the generic result proved above to the case of our variational energy.

	\begin{theorem}[Convergence rates for our variational energy with exact line search]
	\label{thm:conv_rates_our_en_exact}
		Let $E_{var}$ be the variational energy introduced in \fref{Theorem}{thm:var_form_PV_and_M_inversion},
		let $V$ be a non-empty subspace of $ \mathring{H}^1 $ (possibly equal to $ \mathring{H}^1 $),
		and let $p^*$ be the unique minimizer of $E_{var}$ in $V$.
		\begin{enumerate}
			\item	Let $p\in V$ and let $\pi = - \nabla_{ \mathring{H}^1 } E_{var} \vert_V (p)$ denote the \hyperref[def:grad]{$ \mathring{H}^1 $-gradient} descent direction
				for $E_{var}\vert_V$ at $p$. Then
				$
					E_{var}( p + \pi ) - E_{var}(p^*) \leqslant \frac{1}{2} \left( E_{var}(p) - E_{var}(p^*) \right).
				$
			\item	Let $p\in V$ and let $\pi$ denote the \hyperref[def:Newton_desc]{Newton descent direction} for $E_{var}\vert_V$ at $p$. Then
				$
					E_{var} \left( p + \frac{1}{4} \pi \right) - E_{var}(p^*) \leqslant \frac{31}{32} \left( E_{var}(p) - E_{var}(p^*) \right).
				$
			\item	Suppose that $V\subseteq H^2$, suppose that there exist a constant $C_n > 0$ such that
				$
					\norm{p}{\mathring{H}^1} \leqslant C_n \norm{p}{L^2} 
				$
				for every $p\in V$, and suppose that $M\in H^1$ and $PV\in L^2$.
				Let $p\in V$ and let $\pi = -\nabla_{L^2} E_{var}\vert_V (p)$ denote the $L^2$-gradient descent direction for $E_{var}\vert_V$ at $p$. Then
				$
					E_{var} \left( p + \frac{1}{C_n^2} \pi \right) - E_{var}(p^*) \leqslant \left( 1 - \frac{1}{2C_n^4} \right) \left( E_{var}(p) - E_{var}(p^*) \right).
				$
		\end{enumerate}
	\end{theorem}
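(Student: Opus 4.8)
The plan is to obtain all three estimates as direct applications of \fref{Proposition}{prop:key_ineq_conv_rate_grad_desc_exact_Hilbert} to the functional $\mathcal{F} = E_{var}\vert_V$, which is continuously Fr\'echet differentiable on the Hilbert space $(V,\,\norm{\cdot}{\mathring{H}^1})$ and has a unique minimizer $p^*$ there. The two quantitative ingredients that proposition demands are a quadratic lower bound and a quadratic upper bound on the quadratic remainder of $E_{var}$ with respect to $\norm{\cdot}{\mathring{H}^1}$, and both are in hand already: \fref{Theorem}{thm:prop_var_en} gives $\tfrac12$-strong convexity (lower bound with $m=\tfrac12$), and \fref{Lemma}{lemma:fund_upper_mod_ineq_for_our_energy} gives the quadratic upper model (upper bound with $M=1$). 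Both inequalities hold verbatim on any subspace $V$, and restricting a gradient to $V$ is simply the orthogonal projection onto $V$ in the relevant inner product, so iterates stay in $V$ and the constants $m$, $M$ are unchanged. With $m$ and $M$ fixed, each case reduces to identifying the auxiliary inner product $X$ on $V$ and its comparability constant $C$ relative to $\norm{\cdot}{\mathring{H}^1}$, then reading the step size $1/(C^2M)$ and contraction factor $1-m/(C^4M)$ off of \eqref{eq:key_ineq_exact}.

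For part (1) the auxiliary inner product is the $\mathring{H}^1$ inner product itself, so $C=1$; the descent direction is $\pi=-\nabla_{\mathring{H}^1}E_{var}\vert_V(p)$, the step $1/(C^2M)=1$ makes the point in \eqref{eq:key_ineq_exact} equal to $p+\pi$, and the contraction factor is $1-m/(C^4M)=\tfrac12$. For part (2) the auxiliary inner product is the Hessian inner product $(\cdot,\cdot)_{D^2_p E_{var}}$ computed in \fref{Proposition}{prop:Newt_desc}; since $\mathds{1}(M<\partial_3 p)$ takes values in $\{0,1\}$, the quadratic form $\phi\mapsto\int|\nabla\phi|^2-\tfrac12\mathds{1}(M<\partial_3 p)(\partial_3\phi)^2$ lies pointwise between $\tfrac12|\nabla\phi|^2$ and $|\nabla\phi|^2$, hence $\tfrac12\norm{\phi}{\mathring{H}^1}^2\leqslant\norm{\phi}{D^2_p E_{var}}^2\leqslant\norm{\phi}{\mathring{H}^1}^2$, which shows it is a complete inner product on $V$ uniformly comparable to the $\mathring{H}^1$ inner product. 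Feeding this comparability into \fref{Proposition}{prop:key_ineq_conv_rate_grad_desc_exact_Hilbert} with $m=\tfrac12$, $M=1$ yields the step $\tfrac14$ and factor $\tfrac{31}{32}$ recorded in the statement; as observed in \fref{Remark}{rmk:interp_Newt_descent_early}, once unit steps are taken this iteration coincides with the linearized $PV$-and-$M$ inversion used in practice, so the bound doubles as a convergence guarantee for that scheme.

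For part (3) the auxiliary inner product is the $L^2$ inner product restricted to $V$, and this is where the extra hypotheses enter: $V\subseteq H^2$ together with $M\in H^1$ and $PV\in L^2$ ensures, via \fref{Lemma}{lemma:L2_and_H1_grad_of_our_energy}, that $\nabla_{L^2}E_{var}\vert_V(p)$ is well-defined and lies in $V$, while the assumed bound $\norm{p}{\mathring{H}^1}\leqslant C_n\norm{p}{L^2}$ for $p\in V$ (which holds, e.g., when $V$ is finite-dimensional) both makes the $L^2$ inner product complete on $V$ and supplies the comparability constant $C=C_n$. With $m=\tfrac12$, $M=1$, $C=C_n$, \fref{Proposition}{prop:key_ineq_conv_rate_grad_desc_exact_Hilbert} gives the step $1/C_n^2$ and factor $1-1/(2C_n^4)$, as claimed.

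The proof is essentially bookkeeping of the constants $m$, $M$, $C$ through \eqref{eq:key_ineq_exact}; the only step requiring genuine care is the verification in part (3) that the $L^2$ inner product is a legitimate \emph{complete} auxiliary inner product on $V$ on which $E_{var}\vert_V$ actually possesses an $L^2$-gradient, which is precisely why the regularity and Poincar\'e-type hypotheses are imposed there and not in parts (1)--(2). A secondary (routine) point is checking the uniform comparability of the Hessian inner product in part (2) and confirming that restricting $E_{var}$, its quadratic bounds, and its gradients to the subspace $V$ leaves the constants $m$ and $M$ intact.
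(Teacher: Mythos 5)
Your proposal is correct and follows essentially the same route as the paper: verify the hypotheses of \fref{Proposition}{prop:key_ineq_conv_rate_grad_desc_exact_Hilbert} with $m=\tfrac12$ (strong convexity) and $M=1$ (\fref{Lemma}{lemma:fund_upper_mod_ineq_for_our_energy}), then read off the constants with $C=1$ for the $\mathring{H}^1$-gradient, $C$ from the Hessian comparability for Newton descent, and $C=C_n$ for the $L^2$-gradient. The only bookkeeping note is that in case (2) the bound $\tfrac12\norm{\phi}{\mathring{H}^1}^2\leqslant\norm{\phi}{D^2_pE_{var}}^2$ gives comparability constant $\sqrt{2}$, so the stated step $\tfrac14$ and factor $\tfrac{31}{32}$ correspond to using the (valid but non-sharp) choice $C=2$, exactly as the paper does.
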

	\begin{proof}
		Since all of these descent directions are gradient descents, we need to verify that the hypotheses of \fref{Proposition}{prop:key_ineq_conv_rate_grad_desc_exact_Hilbert} are verified.
		Item 4 of \fref{Theorem}{thm:prop_var_en} guarantees that $E_{var}$ is continuously Fr\'{e}chet differentiable,
		item 1 of \fref{Theorem}{thm:prop_var_en}, \fref{Lemma}{lemma:fund_upper_mod_ineq_for_our_energy}, and \fref{Lemma}{lemma:equiv_charac_strg_conv} show that \eqref{eq:key_ineq_exact_upper_and_lower}
		is verified with $m = \frac{1}{2} $ and $M=1$, and \fref{Theorem}{thm:exist_and_unique} verifies that $E_{var}$ does indeed have a unique minimizer.
		The only hypothesis left to check is \eqref{eq:key_ineq_exact_assum},
		and the value of the constant $C$ varies from case to case since the inner product $ {\left( \,\cdot\,,\,\cdot\, \right)}_{X} $ used to define the gradient differs in each case.
		\begin{enumerate}
			\item	In this case $ {\left( \,\cdot\,,\,\cdot\, \right)}_{X} = {\left( \,\cdot\,,\,\cdot\, \right)}_{ \mathring{H}^1 } $ and so $C = 1$.
			\item	In this case we may use the \hyperref[def:Newton_desc]{Hessian inner product} recorded in \fref{Proposition}{prop:Newt_desc} to see that
				$
					\norm{\phi}{D^2_p E_{var}}^2
					= \int_{\mathbb{T}^3} {\lvert \nabla\phi \rvert}^2 - \frac{1}{2} \mathds{1}(M<\partial_3 p)
					\geqslant \frac{1}{2} \norm{\phi}{ \mathring{H}^1 }^2,
				$
				i.e. $C = 2$.
			\item	In this case $C = C_n$, by assumption.
		\end{enumerate}
	\end{proof}

	\begin{remark}[On the convergence rate of $L^2$-gradient descent]
	\label{rmk:conv_rate_of_L2_grad}
		We comment on the constant $C_n$ which appears in the third item of \fref{Theorem}{thm:conv_rates_our_en_exact} above.
		Note that in practice such a constant exists since we work with $V$ a finite-dimensional subspace of $ \mathring{H}^1 $, in which case all norms on $V$ are equivalent.
		Moreover, say that in such an instance the dimension of $V$ is given by $n$ (typically commensurate with the number of elements in a finite-element scheme).
		Then, as $n$ approaches infinity, the constant $C_n$ would necessarily grow unboundedly, which would in turn mean that
		$
			1 - \frac{1}{2 C_n^4} \uparrow 1
		$.
		This means that in the case of $L^2$-gradient descent the convergence rate remains linear but the constant involved worsens (i.e. approaches one) as the dimension of the subspace
		we work with increases (e.g. as the mesh is refined).

		Note that this remark also apply when backtracking is used instead of exact line search -- c.f. \fref{Theorem}{thm:conv_rates_our_en_backtracking} below.
	\end{remark}

	We now turn our attention towards the second, and arguably more important, set of convergence rates recorded in this section: those arising from \emph{backtracking} line search.
	For we define backtracking line search below. Recall that, by contrast with exact line search discussed so far, backtracking line search can be implemented immediately in practical situations
	(which is typically not the case for exact line search) while retaining enough favourable structure such that provable convergence rates are still in play with backtracking line search.

	\begin{definition}[Backtracking line search]
	\label{def:backtracking}
		Consider a functional $ \mathcal{F} : B\to \mathbb{R} $ on a Banach space $B$, let $x\in B$ be a current iterate, let $\Delta x\in B$ be a search direction,
		and consider two parameters $\alpha\in (0,\,1/2)$ and $\beta\in (0,\,1)$.
		Backtracking line search proceeds as follows: initiate the step size $t_0 = 1$ and, for $i\geqslant 0$, if
		\begin{equation*}
			f \left( x + t_i \Delta x \right) > f(x) + \alpha t D f(x) \Delta x
		\end{equation*}
		then set $t_{i+1} \vcentcolon= \beta t_i$, while if
		\begin{equation}
		\label{eq:backtracking_exit_condition}
			f \left( x + t_i \Delta x \right) \leqslant f(x) + \alpha t Df(x) \Delta x
		\end{equation}
		then set the step size $t_* \vcentcolon= t_i$ and terminate.
	\end{definition}

	As we did above in the context of exact line search when specialising from the generic result of \fref{Proposition}{prop:key_ineq_conv_rate_grad_desc_exact_Hilbert}
	to \fref{Theorem}{thm:conv_rates_our_en_exact} which applies to the variational energy of interest,
	we first record a generic result for the convergence rate of gradient descent methods with \emph{backtracking} line search.
	As with \fref{Proposition}{prop:key_ineq_conv_rate_grad_desc_exact_Hilbert} above, such a result is classical (see \cite{boyd_vandenberghe})
	and its proof is provided here in order to carefully track the
	constants that appear due to the presence of comparable norms.

	\begin{prop}[Key inequality for the convergence rate of gradient descent with exact line search in Hilbert spaces]
	\label{prop:key_ineq_conv_rate_grad_desc_backtracking_Hilbert}
		Consider the same notation and assumptions as \fref{Proposition}{prop:key_ineq_conv_rate_grad_desc_exact_Hilbert} and let $\alpha\in (0,\,1/2)$ and $\beta\in (0,\,1)$.
		For any $u\in H$ and any $t \in \left[ 0,\, \frac{1}{C^2 M} \right]$,
		\begin{equation}
		\label{eq:backtracking_exit_grad_desc}
			\mathcal{F} \left( u - t \nabla_X \mathcal{F} (u) \right)
			\leqslant \mathcal{F} (u) - \alpha t \norm{\nabla_X \mathcal{F} (u)}{X}^2,
		\end{equation}
		i.e. the \hyperref[def:backtracking]{backtracking} exit condition \eqref{eq:backtracking_exit_condition} holds.
		Moreover, for any $u\in H$ if $t_*$ denotes the step size obtained via backtracking line search with parameters $\alpha$ and $\beta$ and search direction $-\nabla_X \mathcal{F} (u)$
		then
		\begin{equation}
		\label{eq:lower_bound_step_size}
			t_* \geqslant \min \left( 1,\, \frac{\beta}{C^2 M} \right)
		\end{equation}
		and
		\begin{equation*}
			\mathcal{F} \left( u - t_* \nabla_X \mathcal{F} (u) \right) - \mathcal{F} ( u^* )
			\leqslant \left( 1 - \frac{2\alpha m}{C^2} \min \left( 1,\, \frac{\beta}{C^2 M} \right) \right) \left( \mathcal{F} (u) - \mathcal{F} (u^*) \right).
		\end{equation*}
	\end{prop}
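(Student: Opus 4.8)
The plan is to mirror, with the extra backtracking bookkeeping, the argument already carried out for exact line search in \fref{Proposition}{prop:key_ineq_conv_rate_grad_desc_exact_Hilbert}; this is the classical backtracking estimate of \cite{boyd_vandenberghe}, and the reason to reprove it here is solely to keep explicit track of the norm-comparability constant $C$ from \eqref{eq:key_ineq_exact_assum}, which is what ultimately produces the dimension-sensitive rate flagged in \fref{Remark}{rmk:conv_rate_of_L2_grad}. There are three steps: establish the sufficient-decrease inequality \eqref{eq:backtracking_exit_grad_desc}, deduce the lower bound \eqref{eq:lower_bound_step_size} on the accepted step size, and then combine the two with the facts already proved in \fref{Proposition}{prop:key_ineq_conv_rate_grad_desc_exact_Hilbert}.

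First I would prove \eqref{eq:backtracking_exit_grad_desc}. Plugging $v = u - t\nabla_X\mathcal{F}(u)$ into the upper inequality in \eqref{eq:key_ineq_exact_upper_and_lower}, using $D\mathcal{F}(u)(v-u) = -t\norm{\nabla_X\mathcal{F}(u)}{X}^2$ and bounding $\norm{v-u}{H}^2 \leqslant C^2 t^2 \norm{\nabla_X\mathcal{F}(u)}{X}^2$ via \eqref{eq:key_ineq_exact_assum}, gives
\[
\mathcal{F}\!\left(u - t\nabla_X\mathcal{F}(u)\right) \leqslant \mathcal{F}(u) + \left( \tfrac{C^2 M}{2} t^2 - t \right) \norm{\nabla_X\mathcal{F}(u)}{X}^2 .
\]
For $0 \leqslant t \leqslant \frac{1}{C^2 M}$ one has $\frac{C^2 M}{2} t - 1 \leqslant \frac12 - 1 = -\frac12 < -\alpha$ (using $\alpha < 1/2$), so the parenthesised factor is $\leqslant -\alpha t$, which is exactly \eqref{eq:backtracking_exit_grad_desc}; equivalently, the backtracking exit condition \eqref{eq:backtracking_exit_condition} of \fref{Definition}{def:backtracking} holds for the search direction $-\nabla_X\mathcal{F}(u)$ at any such $t$.

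Next I would deduce \eqref{eq:lower_bound_step_size}. The procedure of \fref{Definition}{def:backtracking} tests $t = 1, \beta, \beta^2, \dots$ and accepts the first value satisfying \eqref{eq:backtracking_exit_condition}. By the previous step that condition is guaranteed once $t \leqslant \frac{1}{C^2 M}$, so the procedure cannot descend below that threshold without stopping: either it accepts $t_* = 1$ on the first try (which certainly occurs when $1 \leqslant \frac{1}{C^2 M}$), or it stops at some $t_* = \beta^i$ with $i \geqslant 1$ whose predecessor $\beta^{i-1} = t_*/\beta$ was rejected, which forces $t_*/\beta > \frac{1}{C^2 M}$, i.e. $t_* > \frac{\beta}{C^2 M}$. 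In all cases $t_* \geqslant \min\!\left(1, \frac{\beta}{C^2 M}\right)$.

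Finally, at the accepted step \eqref{eq:backtracking_exit_grad_desc} reads $\mathcal{F}(u - t_*\nabla_X\mathcal{F}(u)) \leqslant \mathcal{F}(u) - \alpha t_* \norm{\nabla_X\mathcal{F}(u)}{X}^2$; inserting the lower bound $\norm{\nabla_X\mathcal{F}(u)}{X}^2 \geqslant \frac{2m}{C^2}\left(\mathcal{F}(u) - \mathcal{F}(u^*)\right)$ already obtained in the proof of \fref{Proposition}{prop:key_ineq_conv_rate_grad_desc_exact_Hilbert} from \eqref{eq:key_ineq_exact_assum} and \eqref{eq:key_ineq_exact_norm_near_min}, together with $t_* \geqslant \min\!\left(1, \frac{\beta}{C^2 M}\right)$, and subtracting $\mathcal{F}(u^*)$ from both sides yields
\[
\mathcal{F}\!\left(u - t_*\nabla_X\mathcal{F}(u)\right) - \mathcal{F}(u^*) \leqslant \left( 1 - \frac{2\alpha m}{C^2}\min\!\left(1, \frac{\beta}{C^2 M}\right)\right)\left(\mathcal{F}(u) - \mathcal{F}(u^*)\right),
\]
as claimed. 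I do not expect a genuine obstacle here, since the result is textbook; the only point requiring care is the middle step, namely arguing that backtracking either takes the full unit step or overshoots the ``safe'' range $\left[0, \frac{1}{C^2 M}\right]$ by at most the factor $\beta$, which is precisely what makes the $\min\!\left(1, \frac{\beta}{C^2 M}\right)$ appear and, downstream, drives the constant discussed in \fref{Remark}{rmk:conv_rate_of_L2_grad}. Everything else is bookkeeping inherited from the exact–line–search argument.
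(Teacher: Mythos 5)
Your proposal is correct and follows essentially the same route as the paper: the quadratic upper bound from \eqref{eq:key_ineq_exact_upper_and_lower} yields the sufficient-decrease inequality for $t\in[0,\,1/(C^2M)]$, the backtracking schedule then forces $t_*\geqslant\min(1,\,\beta/(C^2M))$, and combining with the gradient lower bound $\norm{\nabla_X\mathcal{F}(u)}{X}^2\geqslant\frac{2m}{C^2}(\mathcal{F}(u)-\mathcal{F}(u^*))$ from the exact-line-search proposition gives the contraction factor. Your write-up simply spells out the step-rejection bookkeeping that the paper leaves terse; no substantive difference.
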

	\begin{proof}
		Proceeding as in the proof of \fref{Proposition}{prop:key_ineq_conv_rate_grad_desc_exact_Hilbert} we deduce from \eqref{eq:key_ineq_exact_upper_and_lower} that
		\begin{equation*}
			\mathcal{F} \left( u - t \nabla_X \mathcal{F} (u) \right)
			\leqslant \mathcal{F} (u) + \left( \frac{C^2 M t^2}{2} - t \right) \norm{\nabla_X \mathcal{F} (u)}{X}^2.
		\end{equation*}
		In particular we note that if $t\in \left[ 0,\, \frac{1}{C^2 M} \right]$ then $\frac{C^2 M t^2}{2} - t \leqslant - \frac{t}{2} < -\alpha t$,
		and so indeed \eqref{eq:backtracking_exit_grad_desc} holds.
		As a consequence this means that either $t_* = t_0 = 1$ or $t_* = \beta t_i$ for some $t_i$ which did \emph{not} satisfy the backtracking exit condition,
		and so which must satisfy $t_i > \frac{1}{C^2 M}$. Putting these two options together establishes \eqref{eq:lower_bound_step_size}.
		Finally we note that $t_*$ must satisfy the backtracking exit condition \eqref{eq:backtracking_exit_grad_desc}
		and so combining that inequality with \eqref{eq:key_ineq_exact_upper_and_lower}, \eqref{eq:key_ineq_exact_norm_near_min},
		and \eqref{eq:lower_bound_step_size} allows us to obtain the main inequality.
	\end{proof}

	We now specialise \fref{Proposition}{prop:key_ineq_conv_rate_grad_desc_backtracking_Hilbert} above to our variational energy.

	\begin{theorem}[Convergence rates for our variational energy with backtracking line search]
	\label{thm:conv_rates_our_en_backtracking}
		Let $E_{var}$ be the variational energy introduced in \fref{Theorem}{thm:var_form_PV_and_M_inversion},
		let $V$ be a non-empty subspace of $ \mathring{H}^1 $ (possibly equal to $ \mathring{H}^1 $),
		let $p^*$ denote the unique minimizer of $E_{var}$ in $V$, and let $\alpha\in (0,\,1/2)$ and $\beta\in (0,\,1)$.
		In each case below let $t_*$ denote the step size obtained via \hyperref[def:backtracking]{backtracking} line search with direction $\pi$
		(chosen differently in each case) and parameters $\alpha$ and $\beta$.
		\begin{enumerate}
			\item	Let $p\in V$ and let $\pi = - \nabla_{ \mathring{H}^1 } E_{var}\vert_V (p)$ denote the \hyperref[def:grad]{ $ \mathring{H}^1 $-gradient} descent direction
				for $E_{var}\vert_V$ at $p$. Then
				\begin{equation*}
					t_* \geqslant \beta
					\text{ and } 
					E_{var} ( p + t_* \pi) - E_{var}(p^*) \leqslant (1-\alpha\beta) \left( E_{var}(p) - E_{var}(p^*) \right).
				\end{equation*}
			\item	Let $p\in V$ and let $\pi$ denote the \hyperref[def:Newton_desc]{Newton descent direction} for $E_{var}\vert_V$ at $p$.
				Then
				\begin{equation*}
					t_* \geqslant \frac{\beta}{4}
					\text{ and } 
					E_{var} ( p + t_* \pi ) - E_{var} (p^*) \leqslant \left( 1 - \frac{\alpha\beta}{16} \right) \left( E_{var}(p) - E_{var}(p^*) \right).
				\end{equation*}
			\item	Suppose that $V\subseteq H^2$, suppose that there exist a constant $C_n > 0$ such that
				$
					\norm{p}{\mathring{H}^1} \leqslant C_n \norm{p}{L^2}
				$
				for every $p\in V$, and suppose that $M\in H^1$ and $PV\in L^2$.
				Let $p\in V$ and let $\pi = -\nabla_{L^2} E_{var}\vert_V (p)$ denote the $L^2$-gradient descent direction for $E_{var}\vert_V$ at $p$.
				Then
				$
					t_* \geqslant \min \left( 1,\, \frac{\beta}{C_n^2} \right)
				$
				and
				\begin{equation*}
					E_{var} ( p + t_* \pi ) - E_{var} ( p^*) \leqslant \left( 1 - \frac{\alpha}{C_n^2} \min \left( 1,\, \frac{\beta}{C_n^2} \right) \right) \left( E_{var}(p) - E_{var}(p^*) \right).
				\end{equation*}
		\end{enumerate}
	\end{theorem}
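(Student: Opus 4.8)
The plan is to reduce the theorem, case by case, to the generic backtracking estimate of \fref{Proposition}{prop:key_ineq_conv_rate_grad_desc_backtracking_Hilbert}, in exactly the way that \fref{Theorem}{thm:conv_rates_our_en_exact} was reduced to its exact-line-search counterpart \fref{Proposition}{prop:key_ineq_conv_rate_grad_desc_exact_Hilbert}. Each of the three search directions is a \hyperref[def:grad]{gradient descent} of $E_{var}\vert_V$ with respect to a complete inner product on $V$, so once the hypotheses of \fref{Proposition}{prop:key_ineq_conv_rate_grad_desc_backtracking_Hilbert} are checked the stated step-size lower bounds and contraction factors will fall out by inserting the right constants.

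First I would record the ingredients common to all three cases, all of which are already available. Continuous Fr\'echet differentiability of $E_{var}$, hence of $E_{var}\vert_V$, is item 4 of \fref{Theorem}{thm:prop_var_en}; the two-sided model inequality \eqref{eq:key_ineq_exact_upper_and_lower} holds with $m=\tfrac12$ and $M=1$ upon combining the $\tfrac12$-strong convexity from item 1 of \fref{Theorem}{thm:prop_var_en} (through \fref{Lemma}{lemma:equiv_charac_strg_conv}) with the quadratic upper model of \fref{Lemma}{lemma:fund_upper_mod_ineq_for_our_energy}; and the unique minimizer $p^*$ exists by \fref{Theorem}{thm:exist_and_unique} (or, over a general subspace $V$, by the strong convexity and coercivity of $E_{var}$). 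Each of these statements restricts harmlessly from $\mathring{H}^1$ to the subspace $V$, since the relevant inner products and derivatives restrict.

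Next I would run the three cases, the only case-dependent datum being the constant $C$ in the norm comparison $\norm{u}{\mathring{H}^1}\leqslant C\norm{u}{X}$. For \hyperref[def:grad]{$\mathring{H}^1$-gradient descent} one has $X=\mathring{H}^1$ and $C=1$. For \hyperref[def:Newton_desc]{Newton descent} the inner product $X$ is the \hyperref[def:Newton_desc]{Hessian inner product} at the current iterate, and \fref{Proposition}{prop:Newt_desc} with $0\leqslant\mathds{1}(M<\partial_3 p)\leqslant 1$ gives $\norm{\phi}{D^2_p E_{var}}^2\geqslant\tfrac12\norm{\phi}{\mathring{H}^1}^2$ \emph{uniformly in} $p$, so $C=2$ is admissible. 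For $L^2$-gradient descent one takes $X=L^2$, so $C=C_n$ by hypothesis, after first using \fref{Lemma}{lemma:L2_and_H1_grad_of_our_energy} (together with $V\subseteq H^2$, $M\in H^1$, $PV\in L^2$) to know the $L^2$-gradient is defined. Feeding $m=\tfrac12$, $M=1$, and $C\in\{1,2,C_n\}$ into \fref{Proposition}{prop:key_ineq_conv_rate_grad_desc_backtracking_Hilbert} yields $t_*\geqslant\min(1,\beta/(C^2M))$ and contraction factor $1-(2\alpha m/C^2)\min(1,\beta/(C^2M))$; since $\beta\in(0,1)$ these become $t_*\geqslant\beta$ and factor $1-\alpha\beta$ in case 1, $t_*\geqslant\beta/4$ and factor $1-\alpha\beta/16$ in case 2, and $t_*\geqslant\min(1,\beta/C_n^2)$ and factor $1-(\alpha/C_n^2)\min(1,\beta/C_n^2)$ in case 3.

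The only step I expect to demand genuine care rather than a citation is the Newton case: since its geometry $X=D^2_pE_{var}$ varies with the iterate, one must confirm that \fref{Proposition}{prop:key_ineq_conv_rate_grad_desc_backtracking_Hilbert} may be applied at every step with a \emph{single} constant $C=2$, which is precisely what the uniform bound $\norm{\phi}{D^2_pE_{var}}^2\geqslant\tfrac12\norm{\phi}{\mathring{H}^1}^2$ provides, and that the Hessian inner product is complete on $V$, which holds because it is equivalent there to the $\mathring{H}^1$ inner product. Everything else is a direct transcription of the exact-line-search argument with the optimal step $\tfrac{1}{C^2M}$ replaced by the backtracking step-size bound.
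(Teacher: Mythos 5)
Your proposal is correct and follows essentially the same route as the paper: the paper's proof simply invokes \fref{Proposition}{prop:key_ineq_conv_rate_grad_desc_backtracking_Hilbert} with the constants $m=\tfrac12$, $M=1$, and $C\in\{1,2,C_n\}$ already verified in the proof of \fref{Theorem}{thm:conv_rates_our_en_exact}. Your extra check that the Newton case's iterate-dependent Hessian inner product admits the uniform constant $C=2$ is exactly the point implicit in that verification, so nothing is missing.
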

	\begin{proof}
		This follows from \fref{Proposition}{prop:key_ineq_conv_rate_grad_desc_backtracking_Hilbert}
		just as \fref{Theorem}{thm:conv_rates_our_en_exact} follows from \fref{Proposition}{prop:key_ineq_conv_rate_grad_desc_exact_Hilbert},
		noting that the appropriate values of the constant $C$ in each case were identified in the proof of \fref{Theorem}{thm:conv_rates_our_en_exact}.
	\end{proof}

	With the convergence rates in hand we change gear and turn our attention towards a consistency result.
	We seek to prove that minimizers of the variational energy restricted to (appropriate) finite-dimensional subspaces of $\mathring{H}^1$ do converge to the true minimizer.
	In order to do so we first prove the following elementary estimate.

	\begin{lemma}[Explicit local modulus of continuity of the variational energy]
	\label{lemma:explicit_local_mod_cty_en}
		Let $E_{var}$ be the variational energy introduced in \fref{Theorem}{thm:var_form_PV_and_M_inversion}.
		For any $p_1,\,p_2\in \mathring{H}^1$ the following estimate holds.
		\begin{equation}
		\label{eq:explicit_local_mod_cty_en}
			E_{var}(p_1) - E_{var}(p_2)
			\leqslant \left( \frac{1}{2} \norm{M}{L^2} + \frac{3}{4} \norm{\nabla p_1}{L^2} + \frac{3}{4} \norm{\nabla p_2}{L^2} \right) \norm{\nabla (p_1 - p_2)}{L^2}.
		\end{equation}
	\end{lemma}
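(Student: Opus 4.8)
The plan is to reduce the estimate to a pointwise bound on the energy density and then integrate. By item 2 of \fref{Theorem}{thm:prop_var_en} we have
\begin{equation*}
	E_{var}(p) = \int_{\mathbb{T}^3} e_M\big(x,\,\nabla p(x)\big)\,dx + \langle PV,\,p \rangle,
	\qquad
	e_M(x,\,u) = \tfrac12 {\lvert u \rvert}^2 - \tfrac14 \min\big(M(x) - u_3,\,0\big)^2 ;
\end{equation*}
since the linear term contributes only the difference $\langle PV,\,p_1 - p_2 \rangle$, it suffices to estimate the energy-density difference $\int_{\mathbb{T}^3} \big[ e_M(x,\,\nabla p_1) - e_M(x,\,\nabla p_2) \big]\,dx$.

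The key step is a pointwise gradient bound for $e_M$. The map $e_M(x,\,\cdot)$ is $C^1$ with $\nabla_u e_M(x,\,u) = u + \tfrac12 \min(M(x) - u_3,\,0)\,e_3$ (read off from item 3 of \fref{Theorem}{thm:prop_var_en}, or checked directly since $s \mapsto \min(s,\,0)^2$ is $C^{1,1}$ with derivative $2\min(s,\,0)$), and since $\lvert \min(M(x) - u_3,\,0) \rvert \leqslant \lvert M(x) \rvert + \lvert u_3 \rvert \leqslant \lvert M(x) \rvert + \lvert u \rvert$ this gives
\begin{equation*}
	\lvert \nabla_u e_M(x,\,u) \rvert \leqslant \tfrac32 \lvert u \rvert + \tfrac12 \lvert M(x) \rvert .
\end{equation*}
Because $\nabla_u e_M(x,\,\cdot)$ is Lipschitz (item 4 of \fref{Theorem}{thm:prop_var_en}), the fundamental theorem of calculus applies along the segment $u_t := (1-t)\,\nabla p_2(x) + t\,\nabla p_1(x)$:
\begin{equation*}
	e_M(x,\,\nabla p_1) - e_M(x,\,\nabla p_2) = \int_0^1 \nabla_u e_M(x,\,u_t)\cdot \big(\nabla p_1 - \nabla p_2\big)(x)\,dt .
\end{equation*}
Inserting the gradient bound and using $\int_0^1 \lvert u_t \rvert\,dt \leqslant \tfrac12 \big( \lvert \nabla p_1(x) \rvert + \lvert \nabla p_2(x) \rvert \big)$ then yields the pointwise estimate
\begin{equation*}
	\big\lvert e_M(x,\,\nabla p_1) - e_M(x,\,\nabla p_2) \big\rvert \leqslant \Big( \tfrac34 \lvert \nabla p_1(x) \rvert + \tfrac34 \lvert \nabla p_2(x) \rvert + \tfrac12 \lvert M(x) \rvert \Big)\, \lvert \nabla(p_1 - p_2)(x) \rvert .
\end{equation*}

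Finally I would integrate this inequality over $\mathbb{T}^3$ and apply the Cauchy--Schwarz inequality to each of the three resulting products, which delivers exactly the bound $\big( \tfrac12 \norm{M}{L^2} + \tfrac34 \norm{\nabla p_1}{L^2} + \tfrac34 \norm{\nabla p_2}{L^2} \big)\,\norm{\nabla(p_1 - p_2)}{L^2}$ asserted in \eqref{eq:explicit_local_mod_cty_en}.

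I do not anticipate a genuine obstacle. The two points that will need care are (i) the non-smoothness of $s \mapsto \min(s,\,0)^2$ at $s = 0$, which is innocuous since this function is globally $C^{1,1}$ (already encoded in items 3--4 of \fref{Theorem}{thm:prop_var_en}), so that the mean-value/integral representation along $u_t$ is legitimate; and (ii) bookkeeping of the numerical constants so as to land precisely on $\tfrac12,\,\tfrac34,\,\tfrac34$, which comes down to the elementary facts $\lvert u_3 \rvert \leqslant \lvert u \rvert$ and $\int_0^1 u_t\,dt = \tfrac12(\nabla p_1(x) + \nabla p_2(x))$.
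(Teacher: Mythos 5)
Your proposal is correct and follows essentially the paper's own route: the paper bounds the density difference pointwise by factoring $\lvert a\rvert^2-\lvert b\rvert^2=(a+b)\cdot(a-b)$ and using that ${\min}_0$ is $1$-Lipschitz, whereas you obtain the same pointwise bound (with the same constants $\tfrac12,\tfrac34,\tfrac34$) via the fundamental theorem of calculus along the segment $u_t$, and both then finish with Cauchy--Schwarz. One caveat you share with the paper's proof: after noting that the linear term contributes $\langle PV,\,p_1-p_2\rangle$ you silently discard it, exactly as the paper does, which is consistent with the stated bound \eqref{eq:explicit_local_mod_cty_en} (no $\norm{PV}{H^{-1}}$ factor appears) but means the estimate really concerns the integral part of $E_{var}$ rather than $E_{var}$ with its $\langle PV,\,\cdot\rangle$ term included.
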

	\begin{proof}
		We write $g_0 (s) \vcentcolon= \min(s,\,0)$ such that
		\begin{equation*}
			E_{var}(p_1) - E_{var}(p_2)
			= \int_{\mathbb{T}^3} \frac{1}{2} \left( {\lvert \nabla p_1 \rvert}^2 - {\lvert \nabla p_2 \rvert}^2 \right)
				- \frac{1}{4} \left[ g_0 (M-\partial_3 p_1) - g_0 (M-\partial_3 p_2) \right].
		\end{equation*}
		We then estimate
		\begin{equation*}
			\left\vert {\lvert \nabla p_1 \rvert}^2 - {\lvert \nabla p_2 \rvert}^2  \right\rvert 
			= \left\vert \nabla (p_1 + p_2) \right\rvert \; \left\vert \nabla (p_1 - p_2) \right\rvert 
			\leqslant \left( \lvert \nabla p_1 \rvert + \lvert \nabla p_2 \rvert \right) \lvert \nabla (p_1 - p_2) \rvert
		\end{equation*}
		and, since $g_0$ is $1$-Lipschitz, we may proceed in the same way to estimate
		\begin{equation*}
			\left\vert { g_0 ( M - \partial_3 p_1) }^2 - { g_0 ( M - \partial_3 p_2) }^2 \right\rvert 
			\leqslant \left( 2 \lvert M \rvert + \lvert \partial_3 p_1 \rvert + \lvert \partial_3 p_2 \rvert \right) \lvert \partial_3 (p_1 - p_2) \rvert.
		\end{equation*}
		Putting it all together produces the desired inequality.
	\end{proof}

	We are now equipped to prove the last main result of this section, which pertains to numerical consistency.

	\begin{theorem}[Convergence of the minimizers of the approximate variational energy to the minimizer of the true variational energy]
	\label{thm:conv_min_approx_prob_to_true_min}
		Let $V_n \subseteq \mathring{H}^1$ be a closed subspace of $\mathring{H}^1$, define $E_n \vcentcolon = E_{var} \vert_{V_n}$, and let
		$
			p^* \vcentcolon= \argmin E_{var} \text{ and } p^*_n \vcentcolon= \argmin E_n.
		$
		For any $p^\varepsilon_n ,\, p_n \in V_n$,
		\begin{align}
			&\norm{p_n^\varepsilon - p^*}{\mathring{H}^1}
		\label{eq:conv_min_approx_prob_to_true_min_1}\\
			&\leqslant 4 \left[ E_n ( p_n^\varepsilon ) - \min E_n \right]
				+ \left( 8 \norm{M}{L^2} + 12 \norm{PV}{H^{-1}} + 3 \norm{p_n - p^*}{\mathring{H}^1} \right) \norm{p_n - p^*}{\mathring{H}^1}.
		\nonumber
		\end{align}
		In particular, if $p^\varepsilon_n$ and $p_n$ are chosen to satisfy, for some $\alpha,\, C > 0$,
		\begin{equation}
		\label{eq:conv_min_approx_prob_to_true_min_2}
			E_n (p_n^\varepsilon ) - \min E_n \leqslant \varepsilon \text{ and } \norm{p_n - p^*}{\mathring{H}^1} \leqslant C n^{-\alpha} \text{ for all } n
		\end{equation}
		then, for $n$ sufficiently large to ensure $Cn^{-\alpha} \leqslant 1$,
		\begin{equation}
		\label{eq:conv_min_approx_prob_to_true_min_3}
			\norm{ p_n^\varepsilon - p^* }{\mathring{H}^1} \lesssim \varepsilon + \left( 1 + \norm{M}{L^2} + \norm{PV}{H^-1} \right) n^{-\alpha},
		\end{equation}
		i.e. $\varepsilon$-minimizers of the discretised variational energy $E_n$ can be made arbitrarily close to the minimizer of the true variational energy $E_{var}$.
	\end{theorem}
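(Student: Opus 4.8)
The plan is to establish \eqref{eq:conv_min_approx_prob_to_true_min_1} by an energy-comparison (C\'ea-type) argument adapted to the strongly convex but nonlinear energy $E_{var}$, and then to read off \eqref{eq:conv_min_approx_prob_to_true_min_3} by substituting the hypotheses \eqref{eq:conv_min_approx_prob_to_true_min_2}. First I would note that $E_n = E_{var}\vert_{V_n}$ inherits from \fref{Theorem}{thm:prop_var_en} the properties of $\tfrac12$-strong convexity, weak lower semicontinuity, and coercivity; since $V_n$ is closed (hence weakly closed) in $\mathring{H}^1$, the direct method produces the unique minimizer $p_n^* \in V_n$, so $\argmin E_n$ is well defined, and moreover $\min E_n \geqslant \min E_{var}$ because $V_n \subseteq \mathring{H}^1$.

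The heart of the argument is to apply item 5 of \fref{Theorem}{thm:prop_var_en} to $p = p_n^\varepsilon$, which controls $\norm{p_n^\varepsilon - p^*}{\mathring{H}^1}^2$ by $4\bigl(E_{var}(p_n^\varepsilon) - \min E_{var}\bigr)$. Since $p_n^\varepsilon \in V_n$ one has $E_{var}(p_n^\varepsilon) = E_n(p_n^\varepsilon)$, so I would split the energy gap as $E_{var}(p_n^\varepsilon) - \min E_{var} = \bigl(E_n(p_n^\varepsilon) - \min E_n\bigr) + \bigl(\min E_n - \min E_{var}\bigr)$, the first bracket being the optimization error and the second the (non-negative) discretization error. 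To bound the discretization error I would use the arbitrary competitor $p_n$: from $\min E_n \leqslant E_n(p_n) = E_{var}(p_n)$ and $\min E_{var} = E_{var}(p^*)$ we get $\min E_n - \min E_{var} \leqslant E_{var}(p_n) - E_{var}(p^*)$, and then invoke the explicit local modulus of continuity \fref{Lemma}{lemma:explicit_local_mod_cty_en} with $p_1 = p_n$ and $p_2 = p^*$.

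The only ingredient that is not pure bookkeeping is an a priori bound $\norm{\nabla p^*}{L^2} \leqslant \norm{M}{L^2} + 2\norm{PV}{H^{-1}}$, which is what lets me convert the $\norm{\nabla p_n}{L^2}$ and $\norm{\nabla p^*}{L^2}$ produced by \fref{Lemma}{lemma:explicit_local_mod_cty_en} into $\norm{M}{L^2}$ and $\norm{PV}{H^{-1}}$ (plus a $\norm{p_n - p^*}{\mathring{H}^1}$ contribution via $\norm{\nabla p_n}{L^2} \leqslant \norm{\nabla p^*}{L^2} + \norm{\nabla(p_n - p^*)}{L^2}$ and the fact that $\norm{\nabla v}{L^2} = \norm{v}{\mathring{H}^1}$). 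I would obtain this bound by testing the weak equation \eqref{eq:PV_M_inv_weak} for $p^*$ against $\phi = p^*$, using $\lvert {\min}_0 (M - \partial_3 p^*) \rvert \leqslant \lvert M \rvert + \lvert \partial_3 p^* \rvert$ and absorbing the resulting $\tfrac12 \norm{\partial_3 p^*}{L^2}^2$ into the left-hand side; this yields precisely the constants that, after the chain of triangle inequalities and multiplication by $4$, give the coefficients $8$, $12$, $3$ in \eqref{eq:conv_min_approx_prob_to_true_min_1}. (One could instead combine the coercivity of \fref{Theorem}{thm:prop_var_en} with $E_{var}(p^*) \leqslant E_{var}(0) \leqslant 0$, but at the cost of worse constants.)

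Finally, for the ``in particular'' claim I would substitute \eqref{eq:conv_min_approx_prob_to_true_min_2}: the first term is $\leqslant 4\varepsilon$, and once $n$ is large enough that $C n^{-\alpha} \leqslant 1$ we have $\norm{p_n - p^*}{\mathring{H}^1}^2 \leqslant \norm{p_n - p^*}{\mathring{H}^1} \leqslant C n^{-\alpha}$, so the second term is $\leqslant C\bigl(8\norm{M}{L^2} + 12\norm{PV}{H^{-1}} + 3\bigr) n^{-\alpha} \lesssim \bigl(1 + \norm{M}{L^2} + \norm{PV}{H^{-1}}\bigr) n^{-\alpha}$, which gives \eqref{eq:conv_min_approx_prob_to_true_min_3}. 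I do not expect a genuine obstacle in any of this: the argument is a quantitative version of the standard convergence-of-Galerkin-minimizers lemma, and the only place demanding a little care is the sharp a priori estimate on $p^*$ (and, if one insists on the stated constants, the accounting through the triangle inequalities).
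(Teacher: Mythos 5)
Your proposal is correct and takes essentially the same route as the paper's proof: it rests on the quadratic-growth estimate of item 5 of \fref{Theorem}{thm:prop_var_en}, the competitor bound $\min E_n \leqslant E_{var}(p_n)$, the modulus-of-continuity estimate of \fref{Lemma}{lemma:explicit_local_mod_cty_en}, and the a priori bound $\norm{p^*}{\mathring{H}^1} \leqslant \norm{M}{L^2} + 2\norm{PV}{H^{-1}}$ (which you rightly obtain by testing \eqref{eq:PV_M_inv_weak} with $p^*$, and which the paper leaves implicit via \fref{Theorem}{thm:exist_and_unique}), reproducing exactly the constants $8$, $12$, $3$. The only cosmetic difference is bookkeeping: you apply the quadratic-growth estimate once to $E_{var}$ and split the energy gap $E_{var}(p_n^\varepsilon)-E_{var}(p^*)$ into optimization and discretization parts, whereas the paper first uses the triangle inequality through $p_n^*$ and applies the estimate twice (to $E_n$ and to $E_{var}$); both versions also share the paper's own harmless imprecision of writing \eqref{eq:conv_min_approx_prob_to_true_min_1} for the norm rather than its square.
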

	\begin{proof}
		First note that the $p_n^*$'s are well-defined since the properties of $E_{var}$ parlay over to properties of $E_n$ and the proof of \fref{Theorem}{thm:exist_and_unique}
		can be reproduced for $E_n$.

		Since \eqref{eq:conv_min_approx_prob_to_true_min_3} follows from \eqref{eq:conv_min_approx_prob_to_true_min_1}
		under the additional assumptions of \eqref{eq:conv_min_approx_prob_to_true_min_2} it suffices to establish \eqref{eq:conv_min_approx_prob_to_true_min_1}.
		First we note that, since $p^*$ and $p^*_n$ are minimizers of $E_{var}$ and $E_n$, respectively, item 5 of \fref{Theorem}{thm:prop_var_en}
		(which holds \emph{mutatis mutandis} for $E_n$) tells us that
		\begin{align}
			\norm{p_n^\varepsilon - p^*}{\mathring{H}^1}
			\leqslant \norm{ p_n^\varepsilon - p_n^* }{\mathring{H}^1} + \norm{p_n^* - p^*}{\mathring{H}^1}
		\nonumber\\
			\leqslant 4 \left[ E_n (p_n^\varepsilon) - E_n (p_n^*) \right] + 4 \left[ E_{var}(p_n^*) - E_{var}(p^*) \right].
		\label{eq:conv_min_approx_prob_to_true_min_4}
		\end{align}
		By virtue of \fref{Theorem}{thm:exist_and_unique} and \ref{lemma:explicit_local_mod_cty_en} we may estimate the second term above via
		\begin{align}
			E_{var}(p_n^*) - E_{var}(p^*)
			&= E_n (p_n^*) - E_{var}(p^*)
			= \min E_n - E_{var}(p^*)
			\leqslant E_{var}(p_n) - E_{var}(p^*)
		\nonumber\\
			&\leqslant \left( \frac{1}{2} \norm{M}{L^2} + \frac{3}{4} \left[ \norm{p_n}{\mathring{H}^1} + \norm{p^*}{\mathring{H}^1} \right] \right) \norm{p_n - p^*}{\mathring{H}^1}
		\nonumber\\
			&\leqslant \left( 2 \norm{M}{L^2} + 3 \norm{PV}{H^{-1}} + \frac{3}{4} \norm{p_n - p^*}{\mathring{H}^1} \right) \norm{p_n - p^*}{\mathring{H}^1}.
		\label{eq:conv_min_approx_prob_to_true_min_5}
		\end{align}
		Putting \eqref{eq:conv_min_approx_prob_to_true_min_4} and \eqref{eq:conv_min_approx_prob_to_true_min_5}
		together then produces \eqref{eq:conv_min_approx_prob_to_true_min_1}.
	\end{proof}

\subsection{Geometry}
\label{sec:geometry}

	In this section we do two things.
	\begin{itemize}
		\item	Foreshadowed by \fref{Proposition}{prop:alt_char_balanced_set} we see that $p$ and $M$ may be viewed as \emph{coordinates} along the balanced set $ \mathcal{B} $.
			Moreover $PV$ and $M$ then arise naturally as \emph{dual coordinates} (with respect to some natural inner product arising from the energy).
		\item	Equipped with this geometric language we interpret the Newton descent iteration of \fref{Section}{sec:convergence_guarantees} above in geometric terms.
			We explain how the Newton iteration can be viewed as a flow along the balanced set $ \mathcal{B} $.
			This flow follows the coordinate direction along $M$ while driving $PV$ towards its true value.
	\end{itemize}

	We begin by recalling a definition from \fref{Section}{sec:decomposition}.

	\begin{definition}[Global chart for the balanced set]
	\label{def:global_chart_balanced_set}
		For $ \mathbb{H} \vcentcolon= \mathring{H}^1 \times L^2$ we define $ \Phi : \mathbb{H} \to \mathcal{B} $ via
		\begin{equation*}
			\Phi (p,\, M) = \begin{pmatrix}
				\nabla_h^\perp p\\
				\partial_3 p + \frac{1}{2} {\min}_0\, (M - \partial_3 p)\\
				M - \partial_3 p - \frac{1}{2} {\min}_0\, (M - \partial_3 p)
			\end{pmatrix}.
		\end{equation*}
	\end{definition}

    \begin{figure}
		\centering
		\captionsetup{width=0.85\textwidth}
		\includegraphics[width=0.7\textwidth]{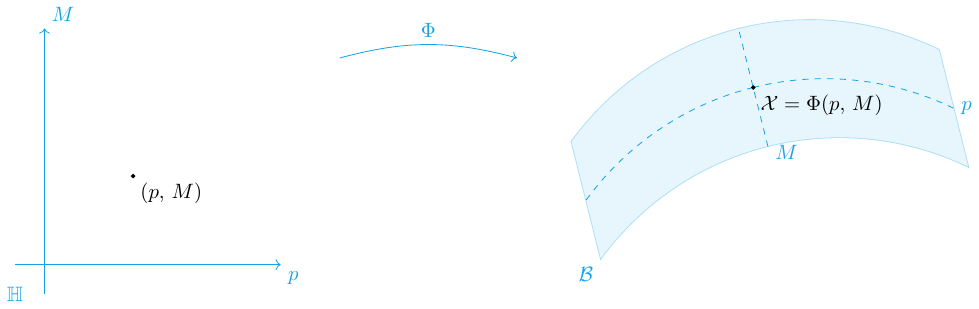}
		\caption{\small A depiction of the global chart for the balanced set introduced in \fref{Definition}{def:global_chart_balanced_set}.}
		\label{fig:global_chart}
	\end{figure}

	We now collect some more properties of this map.

	\begin{prop}
	\label{prop:Pih_is_bi_Lipschitz}
		$\Phi$ is bi--Lipschitz, i.e. Lipschitz with Lipschitz inverse.
	\end{prop}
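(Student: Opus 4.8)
The plan is to prove the two-sided Lipschitz bound by exhibiting explicit formulas for $\Phi$ and its inverse and checking that each involves only globally Lipschitz building blocks. Recall from Proposition~\ref{prop:alt_char_balanced_set} that $\Phi(p,M)$ can be written in terms of the ``good unknown'' $v = \nabla p$ together with the affine-in-$M$ formulas for the thermodynamic components. The key elementary fact is that the map $s \mapsto {\min}_0\, s = \min(s,0)$ is $1$-Lipschitz on $\mathbb{R}$, hence so is $s \mapsto \frac{1}{2}{\min}_0\, s$; consequently each scalar component of $\Phi(p,M)$ is a Lipschitz function of the pair $(\partial_3 p, M)$, with constants that do not depend on $(p,M)$.

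First I would establish the upper Lipschitz bound $\norm{\Phi(p_1,M_1) - \Phi(p_2,M_2)}{\mathbb{L}^2_\sigma} \lesssim \norm{(p_1,M_1) - (p_2,M_2)}{\mathbb{H}}$. The velocity component contributes $\norm{\nabla_h^\perp(p_1 - p_2)}{L^2} \le \norm{\nabla(p_1 - p_2)}{L^2} = \norm{p_1 - p_2}{\mathring{H}^1}$, which is immediate. For the $\theta$ and $q$ components, write $b_i \vcentcolon= M_i - \partial_3 p_i$; using $|{\min}_0\, b_1 - {\min}_0\, b_2| \le |b_1 - b_2| \le |M_1 - M_2| + |\partial_3(p_1 - p_2)|$ pointwise, one integrates to bound $\norm{\theta_1 - \theta_2}{L^2}$ and $\norm{q_1 - q_2}{L^2}$ by a fixed multiple of $\norm{M_1 - M_2}{L^2} + \norm{p_1 - p_2}{\mathring{H}^1}$. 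Summing the three contributions gives the upper bound.

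Next I would establish the lower bound, i.e. that $\Phi^{-1}$ is Lipschitz. Here the relevant observation is that from a balanced state $(u,\theta,q) = \Phi(p,M)$ one recovers the coordinates explicitly: $M = \theta + q$ directly, and $\nabla_h p = u_h^\perp$ together with $\partial_3 p = \theta - {\min}_0\, q$ (which is the hydrostatic balance, as in the characterisation of $\mathcal{B}$), so that $\nabla p$ is fully determined by $(u,\theta,q)$ as a Lipschitz (indeed linear) function. Thus $\norm{p_1 - p_2}{\mathring{H}^1} = \norm{\nabla p_1 - \nabla p_2}{L^2} \le \norm{u_{h,1}^\perp - u_{h,2}^\perp}{L^2} + \norm{(\theta_1 - {\min}_0\, q_1) - (\theta_2 - {\min}_0\, q_2)}{L^2}$, and the last term is controlled by $\norm{\theta_1 - \theta_2}{L^2} + \norm{q_1 - q_2}{L^2}$ using the $1$-Lipschitz property of ${\min}_0$ once more; similarly $\norm{M_1 - M_2}{L^2} \le \norm{\theta_1 - \theta_2}{L^2} + \norm{q_1 - q_2}{L^2}$. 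This yields $\norm{(p_1,M_1) - (p_2,M_2)}{\mathbb{H}} \lesssim \norm{\Phi(p_1,M_1) - \Phi(p_2,M_2)}{\mathbb{L}^2_\sigma}$.

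I do not expect a genuine obstacle here: the statement is essentially bookkeeping around the single nonlinearity ${\min}_0$, which is globally Lipschitz, so all constants are uniform and no compactness or elliptic regularity is needed. The one point requiring a little care is making sure one uses the \emph{correct} formula for $\partial_3 p$ in terms of $(\theta, q)$ when inverting — it is $\theta - {\min}_0\, q$, not $\theta$ — and that this is consistent with the forward formula $\theta = \partial_3 p + \frac{1}{2}{\min}_0\,(M - \partial_3 p)$; verifying this consistency is exactly Lemma~\ref{lemma:invert_b_and_M} (or the computation in the proof of Proposition~\ref{prop:alt_char_balanced_set}), which I would cite rather than redo. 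So the proof reduces to the two short estimates above plus that citation.
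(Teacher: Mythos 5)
Your proposal is correct and follows essentially the same route as the paper: exhibit the explicit inverse, observe that the only nonlinearity is the globally $1$-Lipschitz map $s\mapsto{\min}_0\, s$, and cite \fref{Lemma}{lemma:invert_b_and_M} for the consistency of the forward and inverse formulas. The one place you diverge is the estimate for $\Phi^{-1}$: the paper writes $p=\Delta^{-1}\left[\nabla_h^\perp\cdot u_h+\partial_3(\theta-{\min}_0\, q)\right]$, uses that $\Delta:\mathring{H}^1\to H^{-1}$ is an isometry, and controls the $H^{-1}$ norm of the buoyancy-based potential vorticity via \fref{Lemma}{lemma:H_minus_one_norm_of_explicit_PV}, whereas you use the identity, valid on the balanced set, $\nabla p=-u_h^\perp+(\theta-{\min}_0\, q)e_3$ and estimate $\norm{\nabla(p_1-p_2)}{L^2}$ directly in $L^2$. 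Both are correct and carry the same content (the paper's $H^{-1}$ lemma is proved by exactly the duality/integration-by-parts version of your pointwise bound); your version is marginally more elementary since it never leaves $L^2$, while the paper's formula for $\Phi^{-1}$ has the advantage of making sense as written for arbitrary (not necessarily balanced) states. One small slip: since $u_h=\nabla_h^\perp p$ one has $u_h^\perp=-\nabla_h p$, so your identity should read $\nabla_h p=-u_h^\perp$; this sign is harmless for the norm estimates, but you should fix it for consistency with the hydrostatic component $\partial_3 p=\theta-{\min}_0\, q$.
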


	In particular, since the parametrisation $\Phi$ of $ \mathcal{B} $ is bi-Lipschitz it follows that the balanced set is an honest-to-goodness manifold. This is depicted in \fref{Figure}{fig:global_chart}.

	\begin{cor}
	\label{cor:bal_set_is_manifold}
		The balanced set $\mathcal{B}$ is a Lipschitz-regular Hilbert manifold, modeled after the Hilbert space $ \mathbb{H} $, with global chart $\Phi$.
	\end{cor}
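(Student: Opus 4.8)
The plan is to observe that this corollary is essentially a repackaging of \fref{Proposition}{prop:Pih_is_bi_Lipschitz} together with item~3 of \fref{Proposition}{prop:alt_char_balanced_set}. First I would recall that the equivalence of items~1 and~3 in \fref{Proposition}{prop:alt_char_balanced_set} says precisely that $\Phi : \mathbb{H} \to \mathcal{B}$ is a bijection: every balanced state $\mathcal{X} \in \mathcal{B}$ equals $\Phi(p,\,M)$ for a \emph{unique} pair $(p,\,M) \in \mathring{H}^1 \times L^2 = \mathbb{H}$. Combined with \fref{Proposition}{prop:Pih_is_bi_Lipschitz}, which says $\Phi$ is bi-Lipschitz, this gives that $\Phi$ is a bi-Lipschitz homeomorphism between the Hilbert space $\mathbb{H}$ and $\mathcal{B}$.

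Next I would spell out why this suffices. We equip $\mathcal{B}$ with the atlas consisting of the single chart $\Phi^{-1} : \mathcal{B} \to \mathbb{H}$ (equivalently, of the single parametrisation $\Phi$). Since the atlas has only one element, the transition-map compatibility conditions are vacuous, so there is nothing to verify there; the resulting manifold inherits exactly the regularity of $\Phi$ and $\Phi^{-1}$, namely Lipschitz regularity, and is modeled on $\mathbb{H}$, which is itself a Hilbert space as a product of the Hilbert spaces $\mathring{H}^1$ and $L^2$. Hence $\mathcal{B}$ is a Lipschitz-regular Hilbert manifold modeled on $\mathbb{H}$ with global chart $\Phi$, as claimed.

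I do not anticipate a genuine obstacle: all of the analytic work — in particular establishing the bi-Lipschitz estimates in the presence of the $\min_0$ nonlinearity, which is the only place the argument could have been delicate — has already been carried out in \fref{Proposition}{prop:Pih_is_bi_Lipschitz}. The one point deserving a sentence of care is that $\Phi$ must be surjective \emph{onto all of} $\mathcal{B}$, and not merely onto a subset, but this is exactly the content of the ``item~1 $\iff$ item~3'' equivalence in \fref{Proposition}{prop:alt_char_balanced_set}, so it is already in hand.
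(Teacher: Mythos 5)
Your proposal is correct and follows essentially the same route as the paper, which simply cites \fref{Proposition}{prop:Pih_is_bi_Lipschitz} and concludes immediately; you merely spell out the (vacuous) single-chart atlas argument and the surjectivity coming from \fref{Proposition}{prop:alt_char_balanced_set}. Nothing to add.
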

	\begin{proof}[Proof of \fref{Corollary}{cor:bal_set_is_manifold}]
		This follows from immediately from \fref{Proposition}{prop:Pih_is_bi_Lipschitz}.
	\end{proof}
	\begin{proof}[Proof of \fref{Proposition}{prop:Pih_is_bi_Lipschitz}]
		First we note that the inverse of $\Phi$ is given by
		\begin{equation*}
			\Phi^{-1} (u,\,\theta,\,q)
			= \begin{pmatrix}
				\Delta^{-1} \left[ \nabla_h^\perp \cdot u_h + \partial_3 (\theta - {\min}_0\, q) \right]\\
				\theta + q
			\end{pmatrix}.
		\end{equation*}
		Indeed, suppose that $(u,\,\theta,\,q) = \Phi (p,\,M)$.
		We see that
		\begin{equation*}
			\sign q = \sign (M - \partial_3 p)
		\end{equation*}
		and so \fref{Lemmas}{lemma:invert_b_and_M} and \ref{lemma:div_and_curl_of_good_unknown} tell us that
		\begin{align*}
			(u,\,\theta,\,q) = \Phi (p,\,M)
			&\iff \left\{
				\begin{aligned}
					&u = \nabla_h^\perp p\\
					&\theta = \partial_3 p + \frac{1}{2} {\min}_0\, (M - \partial_3 p)\\
					&q = M - \partial_3 p - \frac{1}{2} {\min}_0\, (M - \partial_3 p)
				\end{aligned}
			\right.
		\\
			&\iff \left\{
				\begin{aligned}
					&u = \nabla_h^\perp p\\
					&\theta + q = M\\
					&\theta - {\min}_0\, q = \partial_3 p
				\end{aligned}
			\right.
		\\
			&\iff \left\{
				\begin{aligned}
					&-u_h^\perp + (\theta - {\min}_0\, q) e_3 = \nabla p\\
					&u_3 = 0\\
					&\theta + q = M
				\end{aligned}
			\right.
		\\
			&\iff \left\{
				\begin{aligned}
					&p = \Delta^{-1} \nabla\cdot \left[ -u_h^\perp + (\theta - {\min}_0\, q) e_3 \right]\\
					&u_3 = 0\\
					&M = \theta + q
				\end{aligned}
			\right.
		\\
			&\iff \left\{
				\begin{aligned}
					&p = \Delta^{-1} \left[ \nabla_h^\perp \cdot u_h + \partial_3 (\theta - {\min}_0\, q) \right]\\
					&M = \theta + q\\
					&u_3 = 0
				\end{aligned}
			\right.
		\\
			&\iff (p,\,M) = \Phi^{-1} (u,\,\theta,\,q).
		\end{align*}
		The fact that both $\Phi$ and $\Phi^{-1}$ are Lipschitz then follows from the fact that $s\mapsto {\min}_0\, s$ is Lipschitz, that $\Delta : \mathring{H}^1 \to H^{-1}$ is an isometry,
		and from \fref{Lemma}{lemma:H_minus_one_norm_of_explicit_PV}.
		Indeed, for $\Phi$ we have that
		\begin{align*}
			\norm{\Phi ( p_1,\,M_1 ) - \Phi (p_2,\, M_2)}{L^2}^2
			&\lesssim \norm{\nabla p_1 - \nabla p_2}{L^2}^2 + \norm{M_1 - M_2}{L^2}^2
		\\
			&= \norm{ (p_1,\,M_1) - (p_2,\,M_2) }{ \mathring{H}^1 \times L^2 }^2
		\end{align*}
		while for $\Phi^{-1}$ we have that
		\begin{align*}
			&\norm{ \Phi^{-1} ( u_1,\,\theta_1,\,q_1) - \Phi^{-1} (u_2,\,\theta_2,\,q_2) }{ \mathring{H}^1 \times L^2 }^2
		\\
			&= \norm{ \Delta^{-1} \left[
				\nabla_h^\perp\cdot u_{h,\,1} + \partial_3 (\theta_1 - {\min}_0\, q_1)
				- \nabla_h^\perp\cdot u_{h,\,2} - \partial_3 (\theta_2 - {\min}_0\, q_2)
			\right]}{ \mathring{H}^1 }^2
		\\
			&\qquad + \norm{ (\theta_1 + q_1) - (\theta_2 + q_2)}{L^2}^2
		\\
			&= \norm{
				\nabla_h^\perp\cdot u_{h,\,1} + \partial_3 (\theta_1 - {\min}_0\, q_1)
				- \nabla_h^\perp\cdot u_{h,\,2} - \partial_3 (\theta_2 - {\min}_0\, q_2)
			}{H^{-1}}^2
		\\
			&\qquad + \norm{ (\theta_1 + q_1) - (\theta_2 + q_2)}{L^2}^2
		\\
			&\lesssim \norm{u_{h,\,1} - u_{h,\,2}}{L^2}^2
				+ \norm{ (\theta_1 - {\min}_0\, q_1) - (\theta_2 - {\min}_0\, q_2)}{L^2}^2
				+ \norm{ (\theta_1 + q_1) - (\theta_2 + q_2)}{L^2}^2
		\\
			&\lesssim \norm{ (u_1,\,\theta_1,\,q_1) - (u_2,\,\theta_2,\,q_2)}{L^2}^2.
			\qedhere
		\end{align*}
	\end{proof}
	\begin{remark}[Alternate form of $\Phi^{-1}$]
	\label{rmk:alt_form_Phi_inverse}
		We typically write $\Phi^{-1}$ in terms of the nonlinear $PV$-and-$M$ inversion,
		and not in terms of the inversion of a Laplacian as is done in the proof of \fref{Proposition}{prop:Pih_is_bi_Lipschitz} above.
		This is because the Laplacian inversion does not use $PV$, but instead uses a potential vorticity based on the \emph{buoyancy}
		(by contrast with $PV$ which is based on the equivalent potential temperature $\theta$) given by
		\begin{equation*}
			{PV}_b = \nabla_h^\perp \cdot u_h + \partial_3 (\theta - {\min}_0\, q).
		\end{equation*}
		Typically we do \emph{not} want to use ${PV}_b$ since it is not slow. That is why $PV$ is usually preferable.
		However, for the purpose of the proof of \fref{Proposition}{prop:Pih_is_bi_Lipschitz}, ${PV}_b$ is preferable to $PV$
		since it lets us prove more easily that $\Phi^{-1}$ is Lipschitz.
	\end{remark}
	
	So far we have been able to endow the balanced set $\mathcal{B}$ with the structure of a Lipschitz-regular manifold.
	Unfortunately, that is as regular as the balanced set gets due to the presence of $ {\min}_0\, q$ in the buoyancy term,
	which is Lipschitz but not differentiable.

	Nonetheless we can formally differentiate $ {\min}_0\, q$ and thus endow the balanced set $ \mathcal{B} $ with a formal differentiable structure.
	In order to do so, we introduce notation for its formal tangent space, for the formal derivative of the chart $\Phi$, and for the inner product we will use (which comes from the conserved energy).

	\begin{definition}
	\label{def:formal_diff_structure_on_balanced_set}
		Fix an indicator function $H$.
		\begin{enumerate}
			\item	We define
				\begin{align*}
					T_H \mathcal{B} \vcentcolon= \bigg\{
						(u,\,\theta,\,q) \in \mathbb{L}^2_\sigma :
						&\;\partial_3 u_h - \nabla_h^\perp (\theta - qH) = 0,\,
						u_3 = 0,\,
					\\
						&\fint u_h = 0, \text{ and } 
						\fint (\theta - qH) = 0
					\bigg\}.
				\end{align*}
			\item	We define $D_H \Phi : \mathbb{H} \to T_H \mathcal{B} $ via
				\begin{equation*}
					D_H \Phi (p,\,M) \vcentcolon= \begin{pmatrix}
						\nabla_h^\perp p\\
						\partial_3 p + \frac{H}{2} (M-\partial_3 p)\\
						M - \partial_3 p - \frac{H}{2} (M-\partial_3 p)
					\end{pmatrix}.
				\end{equation*}
			\item	For any $(u,\,\theta,q)\in \mathbb{L}^2_\sigma$ we define
				\begin{equation*}
					\norm{ (u,\,\theta,q)}{L^2_H}^2 \vcentcolon= \int {\lvert u \rvert}^2 + \theta^2 + q^2 H + {(\theta + q)}^2
				\end{equation*}
				and let $ {\langle \,\cdot\,,\,\cdot\, \rangle}_{L^2_H} $  denote the corresponding innner product.
		\end{enumerate}
	\end{definition}

	As expected, the formal derivative of $\Phi$ completely characterizes the formal tangent space.
	(This would be trivially true if the balanced set $ \mathcal{B} $ was endowed with an honest-to-goodness differentiable structure by its chart $\Phi$.)

	\begin{lemma}
	\label{lemma:formal_tan_space_is_image_of_formal_derivative}
		$T_H \mathcal{B} = \im D_H \Phi$.
	\end{lemma}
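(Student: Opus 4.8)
The plan is to establish the two inclusions $\im D_H\Phi \subseteq T_H\mathcal{B}$ and $T_H\mathcal{B} \subseteq \im D_H\Phi$ separately, in the same spirit as the characterisation of the (honest) balanced set in \fref{Proposition}{prop:alt_char_balanced_set}. For the forward inclusion, I would take an arbitrary $(p,M)\in\mathbb{H}$, set $(u,\theta,q)=D_H\Phi(p,M)$, and verify directly that the four defining conditions of $T_H\mathcal{B}$ hold. The key computation is the ``buoyancy-like'' combination $\theta - qH$: from the explicit formula,
\begin{equation*}
	\theta - qH = \left(\partial_3 p + \tfrac{H}{2}(M-\partial_3 p)\right) - H\left(M - \partial_3 p - \tfrac{H}{2}(M-\partial_3 p)\right),
\end{equation*}
and using $H^2=H$ this collapses to $\partial_3 p$. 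Hence $\partial_3 u_h - \nabla_h^\perp(\theta - qH) = \partial_3\nabla_h^\perp p - \nabla_h^\perp\partial_3 p = 0$; also $u_3=0$ by construction, $\fint u_h = \fint\nabla_h^\perp p = 0$, and $\fint(\theta - qH) = \fint\partial_3 p = 0$. This shows $D_H\Phi(p,M)\in T_H\mathcal{B}$.

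For the reverse inclusion I would take $(u,\theta,q)\in T_H\mathcal{B}$ and reconstruct the coordinates $(p,M)$. Here the natural approach mirrors the proof of \fref{Proposition}{prop:alt_char_balanced_set}: introduce the auxiliary vector field $v := -u_h^\perp + (\theta - qH)e_3$. The conditions $\partial_3 u_h - \nabla_h^\perp(\theta - qH)=0$ together with $\fint u_h = 0$ and $\fint(\theta-qH)=0$ say exactly (via \fref{Lemma}{lemma:div_and_curl_of_good_unknown}, applied to the ``linearised good unknown'') that $\nabla\times v = 0$ and $\fint v = 0$, so by the Helmholtz decomposition (\fref{Corollary}{cor:Helmholtz}) there is a unique $p\in\mathring{H}^1$ with $v = \nabla p$. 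Then set $M := \theta + q$. One checks that $(u,\theta,q) = D_H\Phi(p,M)$: the equation $v=\nabla p$ gives $u_h = \nabla_h^\perp p$ and $\theta - qH = \partial_3 p$, while $u_3=0$ holds by hypothesis; solving the linear system $\theta + q = M$, $\theta - qH = \partial_3 p$ for $(\theta,q)$ — using $(1-H/2)(1+H)\equiv 1$ from \fref{Remark}{rmk:A_H} — yields precisely $\theta = \partial_3 p + \tfrac{H}{2}(M-\partial_3 p)$ and $q = M - \partial_3 p - \tfrac{H}{2}(M-\partial_3 p)$, which is $D_H\Phi(p,M)$.

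The main obstacle — really the only thing requiring care — is making sure the ``linear'' analogue of \fref{Lemma}{lemma:div_and_curl_of_good_unknown} is available, i.e. that the curl-and-average characterisation of gradient fields applies verbatim when the nonlinear buoyancy $\theta - {\min}_0\,q$ is replaced by its fixed-$H$ linearisation $\theta - qH$. Since $\theta - qH$ is just another $L^2$ scalar field, the three-dimensional Helmholtz machinery of \fref{Corollary}{cor:Helmholtz} applies unchanged to $v = -u_h^\perp + (\theta - qH)e_3$, so this is a routine substitution rather than a genuine difficulty; the algebraic identity $(1-H/2)(1+H)=1$ (valid since $H^2=H$) is what guarantees the $2\times 2$ system for $(\theta,q)$ is invertible and reproduces the stated formulas. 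I would therefore present the argument as: (i) compute $\theta - qH$ for elements of $\im D_H\Phi$ and read off membership in $T_H\mathcal{B}$; (ii) for elements of $T_H\mathcal{B}$, apply Helmholtz to $v=-u_h^\perp+(\theta-qH)e_3$ to produce $p$, set $M=\theta+q$, and invert the linear thermodynamic system to recover the $D_H\Phi$ formula; conclude equality of the two sets.
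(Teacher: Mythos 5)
Your proposal is correct and follows essentially the same route as the paper's proof: both hinge on the fixed-$H$ good unknown $v_H = -u_h^\perp + (\theta - qH)e_3$, the curl/average identities of \fref{Lemma}{lemma:div_and_curl_of_good_unknown}, the Helmholtz decomposition of \fref{Corollary}{cor:Helmholtz}, and the algebraic inversion of $\theta+q=M$, $\theta-qH=\partial_3 p$ via $(1-H/2)(1+H)=1$. The only difference is presentational (two inclusions with a direct forward computation versus the paper's single chain of equivalences), which does not change the substance.
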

	\begin{proof}
		Proceeding similarly to the proof of \fref{Proposition}{prop:alt_char_balanced_set} we introduce the ``good unknown'' (see \fref{Remark}{rmk:good_unknown})
		\begin{equation*}
			v_H \vcentcolon= -u_h^\perp + (\theta - qH) e_3.
		\end{equation*}
		For $ \mathcal{X} = (u,\,\theta,\,q)$ we have on one hand that, as per \fref{Lemma}{lemma:div_and_curl_of_good_unknown},
		\begin{equation*}
			\mathcal{X}\in T_H \mathcal{B}
			\iff \nabla\times v_H,\, \fint v_H = 0, \text{ and } u_3 = 0.
		\end{equation*}
		On the other hand, proceeding as in the proof of \fref{Proposition}{prop:Pih_is_bi_Lipschitz} and using \fref{Lemma}{lemma:invert_b_and_M} tells us that
		\begin{align*}
			\mathcal{X} = D_H \Phi (p,\,M)
			&\iff \left\{
				\begin{aligned}
					&-u_h^\perp + (\theta - qH) e_3 = \nabla p\\
					&u_3 = 0\\
					&\theta + q = 0
				\end{aligned}
			\right.
		\\
			&\iff \left\{
				\begin{aligned}
					&v_H = \nabla p\\
					&u_3 = 0\\
					&\theta + q = M
				\end{aligned}
			\right.
		\end{align*}
		The Helmholtz decomposition (\fref{Corollary}{cor:Helmholtz}) allows us to conclude:
		\begin{align*}
			\mathcal{X}\in T_H \mathcal{B} 
			&\iff \nabla\times v_H = 0,\,\fint v_H = 0, \text{ and } u_3 = 0\\
			&\iff \nabla\times v_H = 0,\,\fint v_H = 0,\,  u_3 = 0, \text{ and } \theta + q =\vcentcolon M\\
			&\iff v_H = \nabla p \text{ for some } p\in \mathring{H}^1 ,\, u_3 = 0, \text{ and } \theta + q = M\\
			&\iff \mathcal{X} = D_H \Phi (p,\,M) \text{ for some } (p,\,M) \in \mathbb{H}\\
			&\iff \mathcal{X}\in \im D_H \Phi.
			\qedhere
		\end{align*}
	\end{proof}

    \begin{figure}
		\centering
		\captionsetup{width=0.85\textwidth}
		\includegraphics[width=0.7\textwidth]{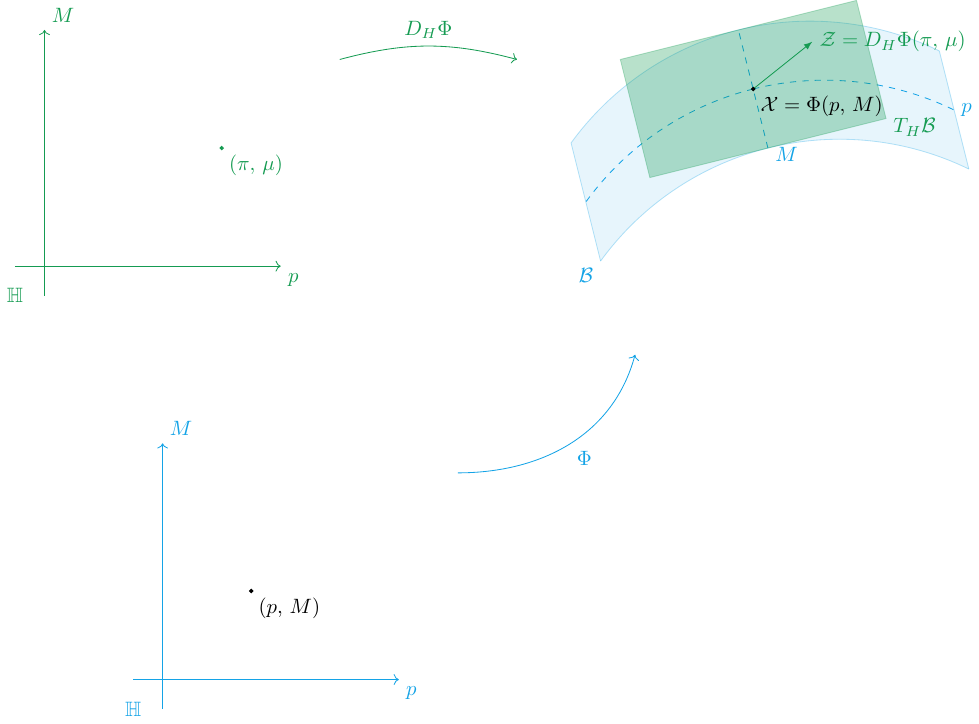}
		\caption{\small
			The formal tangent space to the balanced set introduced in \fref{Definition}{def:formal_diff_structure_on_balanced_set}
			is the image of the formal derivative of the global chart.
		}
		\label{fig:tangent_space}
	\end{figure}

	We now remark that, as suggested by the notation we chose to employ, $D_H\Phi$ may be viewed formally as the derivative of $ \Phi$.
	Since the balanced set $ \mathcal{B} $ is the image of the chart $\Phi$, it follows that the image of its formal derivative $D_H \Phi$
	may be interpreted formally as the tangent space to that same balanced set.
	Since \fref{Lemma}{lemma:formal_tan_space_is_image_of_formal_derivative} above tells us that the image of $D_H \Phi$ is precisely $T_H \mathcal{B} $,
	we see that once again the notation was chosen to be suggestive: $T_H \mathcal{B} $ may be viewed formally as the tangent space to the balanced set $ \mathcal{B} $.
    This is made precise in \fref{Remark}{rmk:D_H_Phi_is_formal_derivative_of_Phi} below and is depicted in \fref{Figure}{fig:tangent_space}.
	\begin{remark}
	\label{rmk:D_H_Phi_is_formal_derivative_of_Phi}
		Formally, we may compute that, for $H \vcentcolon= \mathds{1}(q<0)$ and for any derivative $\partial$,
		\begin{equation*}
			\partial ( {\min}_0\, q ) = (\partial q) \mathds{1} (q<0) = (\partial q) H.
		\end{equation*}
		This shows that $D_H \Phi$ is, formally, the derivative of $\Phi$.
		Since $\Phi$ is a global chart for the Hilbert manifold $ \mathcal{B} $ this means that we may formally view $T_H \mathcal{B} $,
		which is the image of $ D_H \Phi$, as the tangent space of $ \mathcal{B} $.

		An alternative justification, more direct albeit still formal, is as follows.
		Let $ \mathcal{X} = (u,\,\theta,\,q) :(-1,\,1) \to \mathcal{B} $ be a curve of balanced states
		and let $ \mathcal{Z} $ denote its derivative at zero, i.e.
		\begin{equation*}
			\mathcal{Z} = (v,\,\phi,\,r) \vcentcolon= \frac{d}{ds} \mathcal{X}(s) \Big\vert_{s=0}.
		\end{equation*}
		Since $ \mathcal{X}\in \mathcal{B} $ it follows that
		\begin{equation*}
			j( \mathcal{X} ) = 0,\, w( \mathcal{X} ) = 0, \text{ and } a ( \mathcal{X} ) = 0.
		\end{equation*}
		Both $w$ and $a$ are linear, so we deduce immediately that
		\begin{equation*}
			w ( \mathcal{Z} ) = 0 \text{ and } a ( \mathcal{Z} ) = 0.
		\end{equation*}
		What about $j$? Using the formal identity $\partial ( {\min}_0\, q) = (\partial q)H$ discussed above we compute that
		\begin{equation*}
			0
			= \frac{d}{ds} j ( \mathcal{X} ) \Big\vert_{s=0}
			= \frac{d}{ds} \left[ \partial_3 u_h - \nabla_h^\perp (\theta - {\min}_0\, q) \right] \Big\vert_{s=0}
			= \partial_3 v_h - \nabla_h^\perp (\phi - rH)
			=\vcentcolon j_H ( \mathcal{Z} ).
		\end{equation*}
		This means that $ \mathcal{Z} $ belongs to $T_H \mathcal{B} $.
		In other words: if $ \mathcal{X} $ is a curve of balanced states then its formal tangent vector $ \mathcal{Z} $ lies in $T_H \mathcal{B} $.
		This warrants treating formally $T_H \mathcal{B} $ as the tangent space of the balanced set $ \mathcal{B} $.
	\end{remark}

	Note that since the chart $\Phi$ is invertible and since $D_H \Phi$ is its formal derivative,
	we may expect $D_H\Phi$ to inherit this invertibility. This is indeed the case.
	We record this result here since it will be of use later.

	\begin{lemma}
	\label{lemma:D_H_Phi_is_invertible}
		$D_H \Phi$ is invertible.
	\end{lemma}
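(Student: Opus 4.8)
The plan is to recognise $D_H\Phi$ as the \emph{linearised chart} obtained from $\Phi$ by formally replacing $\min_0 q$ with $qH$ throughout, and then to repeat, in this purely linear setting, the argument used for $\Phi$ in \fref{Proposition}{prop:Pih_is_bi_Lipschitz}. First I would note that $D_H\Phi$ is a bounded linear map from $\mathbb{H} = \mathring{H}^1 \times L^2$ into $\mathbb{L}^2_\sigma$: this is immediate from the explicit formula in \fref{Definition}{def:formal_diff_structure_on_balanced_set} since $H$ is a bounded (indicator) function. Moreover \fref{Lemma}{lemma:formal_tan_space_is_image_of_formal_derivative} already tells us that its range is exactly $T_H\mathcal{B}$. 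It therefore remains to establish injectivity and boundedness of the inverse.

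For injectivity, suppose $D_H\Phi(p,\,M) = 0$. Summing the second and third components of the defining formula gives $M = 0$. The first component gives $\nabla_h^\perp p = 0$, hence $\nabla_h p = 0$. With $M = 0$ the second component reduces to $(1 - H/2)\,\partial_3 p = 0$, and since $1 - H/2 \geqslant \frac{1}{2}$ pointwise this forces $\partial_3 p = 0$. Thus $\nabla p = 0$, and since $p\in\mathring{H}^1$ has vanishing average we conclude $p = 0$, so $D_H\Phi$ is injective.

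For the inverse I would exhibit it explicitly, running the same chain of equivalences as in the proof of \fref{Proposition}{prop:Pih_is_bi_Lipschitz}: writing the ``good unknown'' $v_H := -u_h^\perp + (\theta - qH)e_3$ and using \fref{Lemmas}{lemma:invert_b_and_M} and \ref{lemma:div_and_curl_of_good_unknown} together with the Helmholtz decomposition (\fref{Corollary}{cor:Helmholtz}), one finds that $(u,\,\theta,\,q) = D_H\Phi(p,\,M)$ if and only if $v_H = \nabla p$, $u_3 = 0$, and $\theta + q = M$, so that
\begin{equation*}
	(D_H\Phi)^{-1}(u,\,\theta,\,q) = \begin{pmatrix}
		\Delta^{-1}\left[ \nabla_h^\perp\cdot u_h + \partial_3(\theta - qH) \right] \\
		\theta + q
	\end{pmatrix}.
\end{equation*}
Boundedness of this map follows exactly as in \fref{Proposition}{prop:Pih_is_bi_Lipschitz}: $\Delta : \mathring{H}^1 \to H^{-1}$ is an isometry, $s\mapsto sH$ is bounded, and the $H^{-1}$ estimate on the buoyancy-based potential vorticity is supplied by \fref{Lemma}{lemma:H_minus_one_norm_of_explicit_PV} (the relevant substitution of $\min_0 q$ by $qH$ being the same one already discussed in \fref{Remark}{rmk:alt_form_Phi_inverse}). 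Alternatively, since $T_H\mathcal{B}$ is cut out of $\mathbb{L}^2_\sigma$ by continuous linear constraints it is a closed subspace, hence a Hilbert space, and the bounded inverse theorem applied to the bounded linear bijection $D_H\Phi : \mathbb{H} \to T_H\mathcal{B}$ gives boundedness of $(D_H\Phi)^{-1}$ for free.

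There is no genuine obstacle here: everything is linear and all ingredients are already in place. The only point requiring a moment's care is the bookkeeping, namely confirming that the constraints defining $T_H\mathcal{B}$ are closed conditions (for the open-mapping shortcut) or, on the explicit-inverse route, that \fref{Lemma}{lemma:H_minus_one_norm_of_explicit_PV} is applied to $\nabla_h^\perp\cdot u_h + \partial_3(\theta - qH)$ rather than to $PV$ itself.
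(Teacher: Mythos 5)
Your proposal is correct and follows essentially the same route as the paper: the inverse is exhibited explicitly by running the chain of equivalences from \fref{Proposition}{prop:Pih_is_bi_Lipschitz} with ${\min}_0\, q$ replaced by $qH$ (using the good unknown $-u_h^\perp + (\theta - qH)e_3$ and the identity $(1+H)(1-H/2)=1$), arriving at the same formula $(D_H\Phi)^{-1}(u,\,\theta,\,q) = \left( \Delta^{-1}\left[ \nabla_h^\perp\cdot u_h + \partial_3(\theta - qH) \right],\, \theta + q \right)$. Your additional injectivity check and boundedness/open-mapping remarks are fine but not needed beyond what the paper's argument already gives.
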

	\begin{proof}
		This follows as in the proof that $\Phi$ is invertible (\fref{Proposition}{prop:Pih_is_bi_Lipschitz}) since
		$(1+H)(1-H/2) = 1$ and so
		\begin{align*}
			(u,\,\theta,\,q) = D_H \Phi (p,\,M)
			&\iff \left\{
				\begin{aligned}
					&u = \nabla_h^\perp p\\
					&\theta = \partial_3 p + \frac{H}{2} (M - \partial_3 p)\\
					&q = M - \partial_3 p - \frac{H}{2} (M - \partial_3 p)
				\end{aligned}
			\right.
		\\
			&\iff \left\{
				\begin{aligned}
					&-u_h^\perp + (\theta - qH) e_3 = \nabla p\\
					&u_3 = 0\\
					&\theta + q = M
				\end{aligned}
			\right.
		\\
			&\iff \left\{
				\begin{aligned}
					&p = \Delta^{-1} \nabla\cdot \left[ -u_h^\perp + (\theta-qH) e_3 \right]\\
					&u_3 = 0\\
					&M = \theta + q
				\end{aligned}
			\right.
		\\
			&\iff \left\{
				\begin{aligned}
					&p = \Delta^{-1} \left[ \nabla_h^\perp\cdot u_h + \partial_3 (\theta - qH) \right]\\
					&u_3 = 0\\
					&M = \theta + q
				\end{aligned}
			\right.
		\end{align*}
		and so the inverse is given by
		\begin{equation*}
			{(D_H \Phi)}^{-1} (u,\,\theta,\,q) = \begin{pmatrix}
				\Delta^{-1} \left[ \nabla_h^\perp\cdot u_h + \partial_3 (\theta - qH) \right]\\
				\theta + q
			\end{pmatrix}.
            \qedhere
		\end{equation*}
	\end{proof}

	Note that in particular this allows us to view $D_H\Phi$ as recording the \emph{coordinate vectors} arising from the chart $\Phi$. This is made precise in \fref{Remark}{rmk:coord_vec} below and depicted in \fref{Figure}{fig:coordinates}.
 
	\begin{remark}[Coordinate vectors]
	\label{rmk:coord_vec}
		Since $D_H\Phi$ is the formal derivative of $\Phi$ and since $D_H\Phi$ is an invertible linear map whose image is precisely the formal tangent space $T_H \mathcal{B} $,
		we may view $D_H\Phi$ as recording the coordinate vectors corresponding to $\Phi$.
		Indeed, for any $ \mathcal{Z} = (u,\,\theta,\, q)\in T_H \mathcal{B} $ there is a unique $(\pi,\,\mu) \in \mathbb{H} $ such that
		\begin{equation*}
			\mathcal{Z}
			= (D_H\Phi)(\pi,\,\mu)
			= \underbrace{\begin{pmatrix}
				\nabla_h^\perp \pi\\
				\left( 1 - \frac{H}{2} \right)\partial_3 \pi\\
				-\left( 1 - \frac{H}{2} \right)\partial_3 \pi
			\end{pmatrix}}_{ =\vcentcolon e_p (\pi) } + \underbrace{\begin{pmatrix}
				0\\
				\frac{H}{2} \mu\\
				\left( 1 - \frac{H}{2} \right) \mu
			\end{pmatrix}}_{ =\vcentcolon e_M (\mu) }
		\end{equation*}
		where $e_p$ and $e_M$ denote the coordinate ``vectors'' along the $p$ and $M$ coordinates, respectively.
		Note that these are not actually vectors, but rather coordinate \emph{operators}, namely $e_p : \mathring{H}^1 \to T_H \mathcal{B} $ and $e_M : L^2 \to T_H \mathcal{B}$
		such that $T_H \mathcal{B} = \im e_p + \im e_M$.
      This is depicted in \fref{Figure}{fig:coordinates}.
	\end{remark}

    \begin{figure}
		\centering
		\captionsetup{width=0.85\textwidth}
		\includegraphics[width=0.7\textwidth]{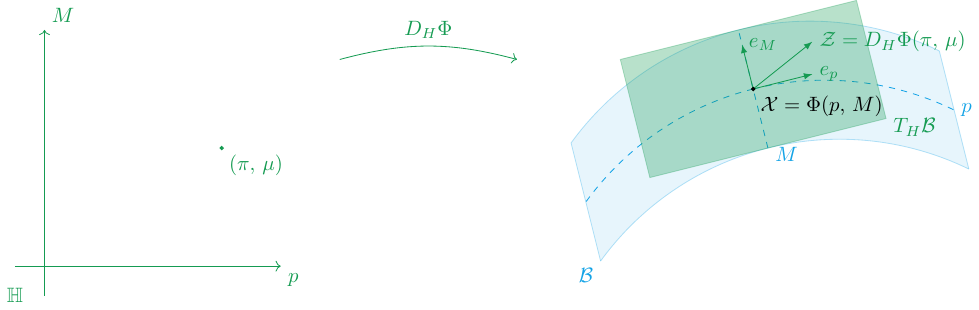}
		\caption{\small
			The ``columns'' of the formal derivative $D_H \Phi$, denoted by $e_p$ and $e_M$, may be viewed as coordinate vectors on the formal tangent space.
		}
		\label{fig:coordinates}
	\end{figure}

	A minor miracle now occurs: while $p$ and $M$ are coordinates along the balanced set $ \mathcal{B} $,
	their dual coordinates turn out to be, essentially, $PV$ and $M$!
	We see this by computing the $L^2_H$--adjoint of $D_H \Phi$.

	\begin{lemma}
	\label{lemma:adjoint_of_D_H_Phi}
		Note that the dual of $ \mathbb{H} $ is $ \mathbb{H}^* = H^{-1} \times L^2$.
		Equipping $T_H \mathcal{B} $ with the inner product $L^2_H$, the adjoint ${(D_H\Phi)}^* : T_H \mathcal{B} \to \mathbb{H}^*$ is given by
		\begin{equation*}
			{(D_H\Phi)}^* (u,\,\theta,\,q)
			= \begin{pmatrix}
				- \nabla_h^\perp \cdot u_h - \partial_3 \theta + \frac{1}{2} \partial_3 \left[ H (\theta + q) \right]\\
				\left( 1 + \frac{H}{2} \right) (\theta + q)
			\end{pmatrix}
			= \begin{pmatrix}
				- PV + \frac{1}{2} \partial_3 (HM)\\
				\left( 1 + \frac{H}{2} \right) M
			\end{pmatrix}.
		\end{equation*}
	\end{lemma}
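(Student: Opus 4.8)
The plan is to unwind the definition of the Hilbert-space adjoint and reduce everything to a single integration by parts on $\mathbb{T}^3$. By definition, ${(D_H\Phi)}^*$ is the unique map $T_H\mathcal{B} \to \mathbb{H}^*$ such that ${\langle {(D_H\Phi)}^*\mathcal{Z},\,(p,\,M)\rangle}_{\mathbb{H}^*\times\mathbb{H}} = {\langle \mathcal{Z},\,D_H\Phi(p,\,M)\rangle}_{L^2_H}$ for all $\mathcal{Z} = (u,\,\theta,\,q) \in T_H\mathcal{B}$ and all $(p,\,M) \in \mathbb{H} = \mathring{H}^1\times L^2$, where the pairing on the left is the $H^{-1}$--$\mathring{H}^1$ pairing in the first slot and the $L^2$ pairing in the second. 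So I would simply expand the right-hand side and read off the answer.

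First I would write $D_H\Phi(p,\,M) = (\widetilde u,\,\widetilde\theta,\,\widetilde q)$ and record the two algebraic identities coming from its definition together with $H^2 = H$: namely $\widetilde\theta + \widetilde q = M$ and $\widetilde\theta - H\widetilde q = \partial_3 p$ (hence $H\widetilde q = \tfrac{H}{2}(M - \partial_3 p)$ and $\widetilde\theta = \partial_3 p + H\widetilde q$). These let me rewrite the combination $\widetilde\theta\,\theta + H\widetilde q\,q$ that appears in $\langle\cdot,\cdot\rangle_{L^2_H}$ as $\partial_3 p\,\theta + \tfrac{H}{2}(M - \partial_3 p)(\theta + q)$. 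Using also $\widetilde u = \nabla_h^\perp p$ and $\widetilde\theta + \widetilde q = M$, the inner product ${\langle \mathcal{Z},\,D_H\Phi(p,\,M)\rangle}_{L^2_H}$ becomes $\int (\nabla_h^\perp p)\cdot u_h + \partial_3 p\,\theta - \tfrac12 H\partial_3 p\,(\theta+q) + (1+\tfrac{H}{2})M(\theta + q)$. The last summand is already of the desired form, equal to $\int (1 + \tfrac{H}{2})M\,M(\mathcal{Z})$, so I would integrate by parts in the remaining three terms — no boundary terms on the torus — using $\int (\nabla_h^\perp p)\cdot u_h = -\int p\,\nabla_h^\perp\cdot u_h$, $\int \partial_3 p\,\theta = -\int p\,\partial_3\theta$, and $-\tfrac12\int H\partial_3 p\,(\theta+q) = \tfrac12\int p\,\partial_3(H(\theta+q))$, to collect them as ${\langle -PV + \tfrac12\partial_3(HM),\,p\rangle}_{H^{-1}\times\mathring{H}^1}$. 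Matching the two slots against ${\langle {(D_H\Phi)}^*\mathcal{Z},\,(p,\,M)\rangle}_{\mathbb{H}^*\times\mathbb{H}}$ then gives the claimed formula, and I would close by noting that the output does lie in $\mathbb{H}^* = H^{-1}\times L^2$: since $u_h,\theta,q\in L^2$, both $PV$ and $\partial_3(HM)$ lie in $H^{-1}$, while $(1+\tfrac{H}{2})M\in L^2$.

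This computation is essentially routine, so I do not expect a serious obstacle; the only delicate points are the sign conventions in the $\nabla_h^\perp$ integration by parts and keeping careful track of the cross term $(\theta+q)(\widetilde\theta+\widetilde q)$ in the $L^2_H$ inner product, which is precisely what produces the factor $1 + \tfrac{H}{2}$ on the $M$-component and the correction $\tfrac12\partial_3(HM)$ on the $PV$-component. If preferred, the identities $\widetilde\theta + \widetilde q = M$ and $\widetilde\theta - H\widetilde q = \partial_3 p$ can be quoted from the earlier lemmas on the linearised buoyancy rather than re-derived on the spot.
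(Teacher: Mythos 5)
Your proposal is correct and follows essentially the same route as the paper: expand ${\langle (u,\theta,q),\, D_H\Phi(p,M)\rangle}_{L^2_H}$, simplify the $\theta$- and $q$-terms using $H^2 = H$ (yielding exactly $\partial_3 p\,\theta - \tfrac{H}{2}\partial_3 p\,(\theta+q) + (1+\tfrac{H}{2})M(\theta+q)$), integrate by parts on the torus, and read off the two slots of the adjoint. The only difference is presentational — you invoke the identities $\widetilde\theta+\widetilde q = M$ and $\widetilde\theta - H\widetilde q = \partial_3 p$ rather than substituting the explicit components of $D_H\Phi$ directly, which amounts to the same computation.
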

	\begin{proof}
		For any $(p,\,M) \in \mathbb{H} $ and any $(u,\,\theta,\,q) \in T_H \mathcal{B} $ we compute by integrating by parts that
		\begin{align*}
			{\left\langle D_H \Phi (p,\,M),\, (u,\,\theta,\,q) \right\rangle }_{L^2_H} 
			&= \int u \cdot \nabla_h^\perp p
				+ \theta \left[ \partial_3 p + \frac{H}{2} (M - \partial_3 p) \right]
		\\	&\qquad
				+ q H \left[ M - \partial_3 p - \frac{H}{2} (M - \partial_3 p) \right]
				+ (\theta + q)M
		\\
			&= \int u \cdot \nabla_h^\perp p + \theta \partial_3 p - \frac{H}{2} (\theta + q) \partial_3 p + \left( 1 + \frac{H}{2} \right) (\theta + q) M
		\\
			&= \int - (\nabla_h^\perp\cdot u_h) p - (\partial_3 \theta) p + \frac{1}{2} \partial_3 \left[ H ( \theta + q ) \right] + \left( 1 + \frac{H}{2} \right) (\theta + q) M
		\\
			&= {\left\langle -\nabla_h^\perp\cdot u_h - \partial_3 \theta + \frac{1}{2} \partial_3 [ M (\theta + q)],\, p \right\rangle }_{H^{-1} \times \mathring{H}^1 } 
		\\	&\qquad
				+ {\left\langle \left( 1 + \frac{H}{2} \right) (\theta + q) ,\, M \right\rangle }_{L^2} 
		\\
			&= {\left\langle \begin{pmatrix}
				- \nabla_h^\perp \cdot u_h - \partial_3 \theta + \frac{1}{2} \partial_3 [ H ( \theta + q ) ]\\
				\left( 1 + \frac{H}{2} \right) M
			\end{pmatrix} ,\, \begin{pmatrix}
				p \\ M
			\end{pmatrix} \right\rangle }_{H^* \times H}. \qedhere
		\end{align*}
	\end{proof}

	\begin{remark}[Dual coordinate vectors]
	\label{rmk:dual_coord_vec}
		Just as in \fref{Remark}{rmk:coord_vec} we viewed $D_H \Phi$ as recording the coordinate vectors corresponding to the chart $\Phi$,
		we can view $ {(D_H\Phi)}^* $ as recording their dual coordinate one-forms.
		Indeed, if we define, for any $(u,\,\theta,\,q) \in T_H \mathcal{B} $,
		\begin{align*}
			e_p^\flat (u,\,\theta,\,q) &\vcentcolon= - PV + \frac{1}{2} \partial_3 (HM) \text{ and } \\
			e_M^\flat (u,\,\theta,\,q) &\vcentcolon= \left( 1 + \frac{H}{2} \right) (\theta + q)
		\end{align*}
	then we see that, for any $(u,\,\theta,\,q) \in T_H \mathcal{B} $ and any $(\pi,\,\mu)\in \mathbb{H} $,
	\begin{align*}
		{\langle (u,\,\theta,\,q) ,\, e_p (\pi) \rangle }_{L^2_H} &= {\langle e_p^\flat (u,\,\theta,\,q) ,\, p \rangle }_{ \mathring{H}^1 } \text{ and } \\
		{\langle (u,\,\theta,\,q) ,\, e_M (\mu) \rangle }_{L^2_H} &= {\langle e_M^\flat (u,\,\theta,\,q) ,\, M \rangle }_{ L^2 }.
	\end{align*}
    This is depicted in \fref{Figure}{fig:dual_coordinates}.
	\end{remark}

    \begin{figure}
		\centering
		\captionsetup{width=0.85\textwidth}
		\includegraphics[width=0.7\textwidth]{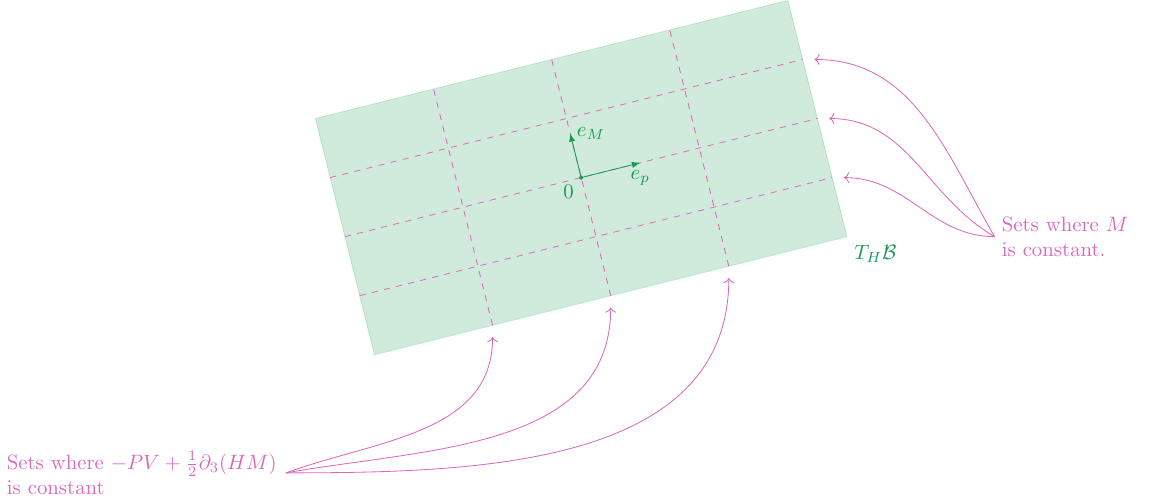}
		\caption{\small
			A depiction of the dual coordinates $e_M^\flat = \left( 1 + \frac{H}{2} \right) M$ and
			$e_p^\flat = - PV + \frac{1}{2} \partial_3 (HM)$ on the formal tangent space $T_H \mathcal{B} $.
		}
		\label{fig:dual_coordinates}
	\end{figure}

	We can see that these coordinates $p$ and $M$ are actually orthogonal.
	This is recorded below, in a result that will also come in handy when interpreting the Newton descent iteration geometrically.

	\begin{lemma}
	\label{lemma:metric_tensor}
		For any $(p,\,M) \in \mathbb{H} $,
		\begin{equation*}
			{(D_H\Phi)}^* (D_H \Phi) (p,\,M) = \begin{pmatrix}
				- \nabla\cdot (A_H \nabla p) \\
				\left( 1 + \frac{H}{2} \right) M
			\end{pmatrix}
		\end{equation*}
		for $A_H \vcentcolon= I - \frac{H}{2} e_3\otimes e_3$ (as usual).
	\end{lemma}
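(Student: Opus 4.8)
The plan is to obtain the claimed identity by directly composing two formulas that are already available: the explicit expression for $D_H\Phi$ from \fref{Definition}{def:formal_diff_structure_on_balanced_set} and the explicit expression for its $L^2_H$-adjoint ${(D_H\Phi)}^*$ from \fref{Lemma}{lemma:adjoint_of_D_H_Phi}. First I would note that the composition is well-defined: by \fref{Lemma}{lemma:formal_tan_space_is_image_of_formal_derivative} the image of $D_H\Phi$ is exactly $T_H\mathcal{B}$, which is precisely the space on which ${(D_H\Phi)}^*$ was defined.

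Then I would set $(u,\,\theta,\,q) \vcentcolon= D_H\Phi(p,\,M)$, so that by \fref{Definition}{def:formal_diff_structure_on_balanced_set} one has $u_h = \nabla_h^\perp p$, $u_3 = 0$, $\theta = \partial_3 p + \frac{H}{2}(M-\partial_3 p)$, and, adding the last two components, $\theta + q = M$. Plugging these into \fref{Lemma}{lemma:adjoint_of_D_H_Phi}, the second component of ${(D_H\Phi)}^*(u,\,\theta,\,q)$ is immediate since $\left(1+\frac{H}{2}\right)(\theta+q) = \left(1+\frac{H}{2}\right)M$. For the first component I would substitute $\nabla_h^\perp\cdot u_h = \nabla_h^\perp\cdot\nabla_h^\perp p = \partial_1^2 p + \partial_2^2 p$ and, using $\theta+q=M$, expand
\begin{equation*}
	-\nabla_h^\perp\cdot u_h - \partial_3\theta + \tfrac12\partial_3\left[H(\theta+q)\right]
	= -\left(\partial_1^2 p + \partial_2^2 p\right) - \partial_3\!\left[\partial_3 p + \tfrac{H}{2}(M-\partial_3 p)\right] + \tfrac12\partial_3(HM).
\end{equation*}
The two occurrences of $\frac12\partial_3(HM)$ cancel, and what survives is $-\Delta p + \frac12\partial_3(H\partial_3 p)$, which is exactly $-\nabla\cdot(A_H\nabla p)$ since $A_H\nabla p = \nabla p - \frac{H}{2}(\partial_3 p)e_3$ and hence $\nabla\cdot(A_H\nabla p) = \Delta p - \frac12\partial_3(H\partial_3 p)$.

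There is no genuine obstacle here: the result is a direct computation chaining \fref{Lemma}{lemma:adjoint_of_D_H_Phi} with \fref{Definition}{def:formal_diff_structure_on_balanced_set}, and the only point requiring a moment's care is the bookkeeping of the $\partial_3(HM)$ terms so that the cross terms cancel cleanly. This calculation is exactly what explains why the elliptic operator $-\nabla\cdot(A_H\nabla\,\cdot\,)$—the linearisation of nonlinear $PV$-and-$M$ inversion, cf. \fref{Remark}{rmk:A_H}—is the component of the metric tensor in the $p$-direction of the chart $\Phi$, while the $M$-direction simply carries the scalar weight $1+\frac{H}{2}$ and is orthogonal to it.
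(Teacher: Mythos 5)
Your proposal is correct and follows essentially the same route as the paper: write $(u,\,\theta,\,q)=D_H\Phi(p,\,M)$, feed it into the explicit adjoint formula of \fref{Lemma}{lemma:adjoint_of_D_H_Phi}, and observe the cancellation of the $\frac12\partial_3(HM)$ terms (the paper merely packages your first-component expansion as the identity $PV(D_H\Phi(p,\,M))=\nabla\cdot(A_H\nabla p)+\frac12\partial_3(HM)$). The computation and the bookkeeping you carry out are accurate, so no changes are needed.
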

	\begin{remark}
	\label{rmk:metric_tensor}
		Note that $ {(D_H\Phi)}^* ( D_H \Phi) = (-\Delta) \oplus 1$ when $H = 0$,
		which is the canonical isometry from $ \mathbb{H} = \mathring{H}^1 \times L^2 $ to its dual $H^* = H^{-1} \times L^2$.
		This is because when $H=0$ the dry case may be viewed as a special case of the moist Boussinesq system which corresponds to neglecting $q$ (i.e. setting $q = 0$).
	\end{remark}
	\begin{proof}[Proof of \fref{Lemma}{lemma:metric_tensor}]
		Let $(p,\,M) \in \mathbb{H} $ and let us write $(u,\,\theta,\,q) \vcentcolon= D_H \Phi (p,\,M)$.
		Then $M(u,\,\theta,\,q) \vcentcolon= \theta + q = M$ while
		\begin{equation*}
			PV (u,\,\theta,\,q)
			\vcentcolon= \nabla_h^\perp \cdot u_h + \partial_3 \theta
			= \nabla\cdot (A_H \nabla p) + \frac{1}{2} \partial_3 (HM).
		\end{equation*}
		Therefore
		\begin{align*}
			{(D_H\Phi)}^* (D_H\Phi) (p,\,M)
			&= \begin{pmatrix}
				-PV + \frac{1}{2} \partial_3 (HM)\\
				\left( 1 + \frac{H}{2} \right) M
			\end{pmatrix}
		\\
			&= \begin{pmatrix}
				-\nabla\cdot (A_H \nabla p) - \frac{1}{2} \partial_3 (HM) + \frac{1}{2} \partial_3 (HM)\\
				\left( 1 + \frac{H}{2} \right) M
			\end{pmatrix}
		\\
			&= \begin{pmatrix}
				- \nabla\cdot (A_H \nabla p) \\
				1 + H/2
			\end{pmatrix}
		\end{align*}
		as claimed.
	\end{proof}

	\begin{remark}
	\label{rmk:coords_are_orthog}
		Written in block form,
		\begin{equation*}
			{(D_H\Phi)}^* (D_H\Phi) = \begin{pmatrix}
				-\nabla\cdot (A_H \nabla) & 0\\
				0 & 1 + H/2
			\end{pmatrix}
		\end{equation*}
		which is block-diagonal.
		This means that the coordinates $p$ and $M$ of the Hilbert manifold $ \mathcal{B} $, which arise from the global chart $ \Phi $,
		may formally be viewed as $L_H^2$--orthogonal (but not orthonormal).

		In other words, for any $ \mathcal{Z} \in T_H \mathcal{B} $, which may be written uniquely as
		\begin{equation*}
			\mathcal{Z} = e_p (\pi) + e_M (\mu)
		\end{equation*}
		for $(\pi,\,\mu) \in \mathbb{H} $ and $e_p$ and $e_M$ as in \fref{Remark}{rmk:coord_vec}, we have that
		\begin{align*}
			{\langle e_p (\pi) ,\, e_M (\mu) \rangle }_{L^2_H}
			= \int 0 
				+ \left( 1 - \frac{H}{2} \right) (\partial_3 \pi) \cdot \frac{H}{2} \mu
				- \left( 1 - \frac{H}{2} \right) (\partial_3 \pi) \cdot \left( 1 - \frac{H}{2} \right) \mu \cdot H
				+ 0
		\\
			= \int \frac{H}{4} (\partial_3 \pi) \mu - \frac{H}{4} (\partial_3 \pi) \mu
			= 0.
		\end{align*}
		This allows us to refine our comment of \fref{Remark}{rmk:coord_vec}: the decomposition $T_H \mathcal{B} = \im e_p + \im e_M$ is actually $L^2_H$--orthogonal.

		Moreover, for the dual one-forms $e_p^\flat$ and $e_M^\flat$ as defined in \fref{Remark}{rmk:dual_coord_vec} we have that
		\begin{equation}
		\label{eq:coords_are_orthog_intermediate_3}
			\mu = \theta + q \text{ and } 
			\pi \text{ solves } \nabla\cdot (A_H \nabla\pi) = \nabla_h^\perp \cdot u_h + \partial_3 \theta.
		\end{equation}
		Indeed:
		\begin{equation*}
			\left( 1 + \frac{H}{2} \right) (\theta + q)
			= e_M^\flat ( \mathcal{Z} )
			= e_M(\flat ( e_p (\pi) ) + e_M^\flat ( e_M ( \mu ) )
			= 0 + \left( 1 + \frac{H}{2} \right) \mu.
		\end{equation*}
		Similarly: on one hand
		\begin{equation}
		\label{eq:coords_are_orthog_intermediate_1}
			e_p^\flat ( \mathcal{Z} )
			= - \nabla_h^\perp \cdot u_h - \partial_3 \theta + \frac{1}{2} \partial_3 [ H (\theta + q)]
			= - \nabla_h^\perp \cdot u_h - \partial_3 \theta + \frac{1}{2} \partial_3 ( HM )
		\end{equation}
		while on the other hand
		\begin{equation}
		\label{eq:coords_are_orthog_intermediate_2}
			e_p^\flat ( e_p (\pi) ) + e_p^\flat ( e_M (\mu ) )
			= -\nabla\cdot (A_H \nabla\pi) + \frac{1}{2} \partial_3 (H\mu)
			= -\nabla\cdot (A_H \nabla\pi) + \frac{1}{2} \partial_3 (HM).
		\end{equation}
		Equating \eqref{eq:coords_are_orthog_intermediate_1} and \eqref{eq:coords_are_orthog_intermediate_2} yields the equation
		for $\pi$ recorded in \eqref{eq:coords_are_orthog_intermediate_3}, as desired.
	\end{remark}

	\begin{remark}[Comparison of the $L^2_H$--orthogonal decomposition with well-known orthogonal decompositions]
	\label{rmk:comparison_L_2_H_orthog_with_Fourier}
		In \fref{Remark}{rmk:dual_coord_vec} above we have explained how any formal tangent vector
		may be written in a unique way in terms of the ``coordinate vectors'' $e_p$ and $e_M$.

		More precisely, we know that for any formal tangent vector $ \mathcal{Z} = (u,\,\theta,\,q) \in T_H \mathcal{B} $
		there exists a unique pair $(\pi,\,\mu) \in \mathbb{H} $ such that
		\begin{align}
			\mathcal{Z} = e_p ( \pi ) + e_M ( \mu )
			&\text{ where } \left\{
				\begin{aligned} 
					&e_M^\flat ( e_M ( \mu ) ) = e_M^\flat ( \mathcal{Z} ) \text{ and } \\
					&e_p^\flat ( e_p ( \pi ) ) = e_p^\flat ( \mathcal{Z} )
				\end{aligned}
			\right.
		\label{eq:fourier_explanation_comparison}\\
			&\iff \left\{
				\begin{aligned}
					&\mu = \theta + q \text{ and } \\
					&\pi \text{ solves } \nabla\cdot ( A_H \nabla \pi ) = PV.
				\end{aligned}
			\right.
		\nonumber
		\end{align}
		We will now help the reader make better sense of this by showing how their favourite orthogonal decomposition,
		namely that of Fourier series, may be written in this way.

		Fourier series tell us that every $f\in L^2 ( \mathbb{T}^3 )$ has uniquely determined coordinates $ {(a_k)}_k \in l^2 ( \mathbb{Z}^3 )$ such that
		\begin{equation}
		\label{eq:fourier_explanation_1}
			f(x) = \sum_k a_k e^{2\pi ik\cdot x}.
		\end{equation}
		Given a function $f$, the process producing its Fourier series can be split into two steps.
		\begin{enumerate}
			\item	We compute the amplitudes $a_k$.
			\item	We reconstruct $f$ from its amplitudes.
		\end{enumerate}
		Let us rewrite each of these two steps in a manner which will make it easier to directly compare with the $L^2_H$--orthogonal decomposition discussed above.
		\begin{itemize}
			\item 	Step 1: The computation of each amplitude corresponds to a map $a_k : L^2 ( \mathbb{T}^3 ) \to \mathbb{C}$ defined by
				\begin{equation*}
					a_k (f) = \int_{\mathbb{T}^3} f(x) e^{-2\pi i k\cdot x} dx.
				\end{equation*}
			\item	Step 2: We write the reconstruction step in an admittedly unusual manner.
				This will make it easier to compare Fourier series to the $L^2_H$--orthogonal decomposition of our formal tangent vectors.
				
				\vspace{1em}
				If we view each $a_k$ as corresponding to a coordinate direction in $L^2 ( \mathbb{T}^3)$ we may then encode this coordinate direction as a map
				$e_k : \mathbb{C} \to L^2 ( \mathbb{T}^3 )$ given by
				\begin{equation*}
					[ e_k (a) ] (x) = a e^{2\pi ik\cdot k}.
				\end{equation*}
		\end{itemize}
		Combining these formulations of the two steps used to produce Fourier series we may write \eqref{eq:fourier_explanation_1} in the form
		\begin{equation}
		\label{eq:fourier_explanation_2}
			f = \sum_k e_k (a_k) \text{ where } a_k = { \langle f,\, e^{2\pi ik\cdot x} \rangle }_{L^2}.
		\end{equation}

		In particular, in the language we used to describe the formal tangent space $T_H \mathcal{B} $,
		each ``coordinate vector'' $e_k$ has a ``dual coordinate vector'' $e_k^\flat$, which is really just its adjoint and is a map
		$e_k^\flat : L^2 (\mathbb{T}^3) \to \mathbb{C}$.
		Using the $L^2$ inner product on $L^2 ( \mathbb{T}^3)$ and the standard complex inner product on $\mathbb{C}$ defined by $ {\langle x,\, y \rangle }_{\mathbb{C}} = x\bar{y}$
		for every $x,\,y\in \mathbb{C}$ we may compute these dual coordinates vectors:
		for any $k\in \mathbb{Z}^3$, any $f\in L^2 ( \mathbb{T}^3 )$, and any $a\in \mathbb{C}$,
		\begin{align*}
			{\langle e_k^\flat (f) ,\, a \rangle }_{\mathbb{C}}
			= { \langle f,\, e_k(a) \rangle }_{L^2}
			= \int_{\mathbb{T}^3} f(x) \overline{ a e^{2\pi i k\cdot x}} dx
			= \bar{a} \int_{\mathbb{T}^3} f(x) e^{-2\pi ik\cdot x} dx
			= \bar{a} \, a_k(f)
			= {\langle a_k (f),\, a \rangle }_{\mathbb{C}}.
		\end{align*}
		In other words: $e_k^\flat = a_k$.

		We may thus write \eqref{eq:fourier_explanation_2}, already an alternate formulation of \eqref{eq:fourier_explanation_1},
		in a manner even more reminiscent of our treatment of coordinate vectors and dual coordinate vectors in $T_H \mathcal{B} $, namely
		\begin{equation}
		\label{eq:fourier_explanation_3}	
			f = \sum_k e_k (a_k) \text{ where } a_k = e_k^\flat (f).
		\end{equation}

		Finally, to turn this expression into an exact parallel of our treatment of $T_H \mathcal{B} $
		we must use the orthonormality of the Fourier basis. That is, we use the fact that
		\begin{equation*}
			e_k^\flat (e_k (a))
			= a_k (e_k (a))
			= \int_{\mathbb{T}^3} a e^{2\pi ik\cdot x} e^{-2\pi ik\cdot x}dx
			= a.
		\end{equation*}
		We may therefore rewrite \eqref{eq:fourier_explanation_3} in its final form as
		\begin{equation}
		\label{eq:fourier_explanation_4}
			f = \sum_k e_k (a_k) \text{ where } e_k^\flat ( e_k (a_k )) = e_k^\flat (f).
		\end{equation}
		This is exactly the same form as \eqref{eq:fourier_explanation_comparison}!
	\end{remark}

	This section is concerned with the geometry of the balanced set $ \mathcal{B} $ on its own,
	and not with how it is embedded in the broader state space $ \mathbb{L}^2_\sigma $.
	Nonetheless, at this stage it is worth commenting on how the $L^2_H$ inner product endowed on the formal tangent space $T_H \mathcal{B} $ of $ \mathcal{B} $
	interacts with the wave set.
    The result below is depicted in \fref{Figure}{fig:wave_set_orthog_to_balanced_set}.

	\begin{lemma}
	\label{lemma:orthog_wave_and_balanced_set}
		The wave set $ \mathcal{W} $ is $L^2_H$--orthogonal to the formal tangent space $T_H \mathcal{B} $ in the sense that,
		for any $ \mathcal{Z} \in T_H \mathcal{B} $ and any $ \mathcal{Y} \in \mathcal{W}$, $ {\langle \mathcal{Z} ,\, \mathcal{Y}  \rangle }_{L^2_H} = 0$.
	\end{lemma}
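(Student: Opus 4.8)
The plan is to reduce the claimed orthogonality to the elementary fact that a gradient is $L^2$--orthogonal to a divergence-free field, i.e.\ to a disguised instance of the Helmholtz decomposition. First I would use \fref{Lemma}{lemma:formal_tan_space_is_image_of_formal_derivative} to write $\mathcal{Z} = D_H\Phi(p,\,M)$ for some $(p,\,M)\in\mathbb{H}$, so that $\mathcal{Z} = (u_Z,\,\theta_Z,\,q_Z)$ with
\[
u_Z = \nabla_h^\perp p,\quad
\theta_Z = \partial_3 p + \tfrac{H}{2}(M - \partial_3 p),\quad
q_Z = M - \partial_3 p - \tfrac{H}{2}(M - \partial_3 p),
\]
and \fref{Proposition}{prop:chara_im_N} to write $\mathcal{Y} = \Psi(\wavecoord,\,w) = (\wavecoord_h^\perp + we_3,\,\wavecoord_3,\,-\wavecoord_3)$ for some divergence-free $\wavecoord$ satisfying $\nabla_h^\perp\cdot\wavecoord_h = \partial_3 w$.

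Next I would expand $ {\langle \mathcal{Z},\, \mathcal{Y} \rangle}_{L^2_H} $ using \fref{Definition}{def:formal_diff_structure_on_balanced_set} and simplify term by term. Two identities, both consequences of $H^2 = H$, do all the work: on the balanced side $\theta_Z + q_Z = M$ and $\theta_Z - Hq_Z = \partial_3 p$, while on the wave side $\theta_Y + q_Y = 0$ and $\theta_Y = \wavecoord_3$. Hence the $(\theta+q)(\phi+r)$ term of the inner product vanishes outright, the buoyancy terms collapse via
\[
\theta_Z\theta_Y + H q_Z q_Y = (\theta_Z - Hq_Z)\,\wavecoord_3 = (\partial_3 p)\,\wavecoord_3,
\]
and, since $u_Z = \nabla_h^\perp p$ has vanishing vertical component and the $\tfrac{\pi}{2}$--rotation preserves the horizontal inner product, $u_Z\cdot u_Y = \nabla_h^\perp p\cdot\wavecoord_h^\perp = \nabla_h p\cdot\wavecoord_h$. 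Adding these contributions yields
\[
 {\langle \mathcal{Z},\, \mathcal{Y} \rangle}_{L^2_H}
= \int_{\mathbb{T}^3} \nabla_h p\cdot\wavecoord_h + (\partial_3 p)\,\wavecoord_3
= \int_{\mathbb{T}^3} \nabla p\cdot\wavecoord.
\]

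Finally I would integrate by parts: since $p\in\mathring{H}^1$ and $\nabla\cdot\wavecoord = 0$ on the torus, $\int_{\mathbb{T}^3}\nabla p\cdot\wavecoord = 0$, which is the assertion. I do not expect a genuine obstacle here: the entire content is the bookkeeping of the Heaviside algebra that collapses the weighted pairing $ {\langle \,\cdot\,,\,\cdot\, \rangle}_{L^2_H} $ down to the plain $L^2$ pairing $\int\nabla p\cdot\wavecoord$ of a gradient against a divergence-free field. One can also phrase this conceptually: the formal tangent vector $\mathcal{Z}$ corresponds, through its good unknown, to the gradient $\nabla p$, whereas the wave state $\mathcal{Y}$ corresponds to the solenoidal field $\wavecoord$, and the $L^2_H$ inner product restricted to this pairing reduces exactly to $\int\nabla p\cdot\wavecoord$; the transversality recorded earlier (e.g.\ in \fref{Lemma}{lemma:PV_and_M_charac_balanced_states}) is thereby promoted to genuine orthogonality for the energy-induced inner product.
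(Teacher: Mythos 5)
Your proposal is correct and follows essentially the same route as the paper: parametrise $\mathcal{Z} = D_H\Phi(p,\,M)$ via \fref{Lemma}{lemma:formal_tan_space_is_image_of_formal_derivative} and $\mathcal{Y} = \Psi(\wavecoord,\,w)$ via \fref{Proposition}{prop:chara_im_N}, expand the $L^2_H$ pairing, use the Heaviside algebra ($H^2=H$) to collapse it to $\int \nabla p\cdot\wavecoord$, and conclude since $\wavecoord$ is divergence-free. The only cosmetic difference is that you organise the cancellation through the identities $\theta_Z - Hq_Z = \partial_3 p$ and $\theta_Y+q_Y=0$ rather than expanding all terms directly, which is equivalent.
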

	\begin{proof}
		\fref{Lemma}{lemma:formal_tan_space_is_image_of_formal_derivative} tells us that $ \mathcal{Z} = D_H \Phi (\pi,\,\mu)$
		for some $(\pi,\,mu) \in \mathbb{H} $
		while \fref{Proposition}{prop:chara_im_N} tells us that $ \mathcal{Y} = \Psi (\wavecoord,\,w)$ for some $\wavecoord \in \mathring{H}^1_\sigma$ and $w\in L^2$.
		Then
		\begin{align*}
			{\langle \mathcal{Z} ,\, \mathcal{Y}  \rangle }_{L^2_H} 
			&= {\langle D_H \Phi (\pi,\,M) ,\, \Psi (\wavecoord,\,w) \rangle }_{L^2_H}
		\\
			&= \int \nabla_h^\perp \pi \cdot \wavecoord_h^\perp + \left( 1 - \frac{H}{2} \right) (\partial_3 \pi) \cdot \wavecoord_3 + \left( 1 - \frac{H}{2} \right) (\partial_3 \pi) \cdot \wavecoord_3 \cdot H + 0
		\\
			&= \int \nabla_h \pi \cdot \wavecoord_h + (\partial_3 \pi)\wavecoord_3
			= \int \nabla \pi \cdot \wavecoord
			= 0
		\end{align*}
		since $\wavecoord$ is divergence-free.
	\end{proof}

	\begin{figure}
		\centering
		\captionsetup{width=0.85\textwidth}
        \begin{tikzpicture}[scale=1.5]
            \def\refx{2}
            \def\refy{-7}
        	\draw[White, opacity=0.3, name path = ManBotRig] (0.25, -1) arc (104.036:64.036:5);
        	\draw[White, opacity=0.3, name path = ManBotLef] (0.25, -1) arc (104.036:144.036:5);
        	\draw[White, opacity=0.3, name path = ManTopRig] (-0.25, 1) arc (104.036:64.036:5);
        	\draw[White, opacity=0.3, name path = ManTopLef] (-0.25, 1) arc (104.036:144.036:5);
        	\tikzfillbetween[of = ManTopLef and ManBotLef, on layer=bg]{White}
        	\tikzfillbetween[of = ManTopRig and ManBotRig, on layer=bg]{White}
        	\tikzfillbetween[of = ManTopRig and ManBotRig, on layer=bg]{Cerulean, opacity=0.1}
        	\tikzfillbetween[of = ManTopLef and ManBotLef, on layer=bg]{Cerulean, opacity=0.1}
        	\draw[Cerulean, opacity=0.3] (0.25, -1) arc (104.036:64.036:5);
        	\draw[Cerulean, opacity=0.3] (0.25, -1) arc (104.036:144.036:5);
        	\draw[Cerulean, opacity=0.3] (-0.25, 1) arc (104.036:64.036:5);
        	\draw[Cerulean, opacity=0.3] (-0.25, 1) arc (104.036:144.036:5);
        	\draw[Cerulean, opacity=0.3] (3.152, 0.645) -- (3.652, -1.355);
        	\draw[Cerulean, opacity=0.3] (-2.584, -2.914) -- (-3.084, -0.914);
        	\node[Cerulean, below left] at (-2.584, -2.914) {$\mathcal{B}$};
        	\draw[Cerulean, dashed] (0, 0) arc (104.036:64.036:5);
        	\draw[Cerulean, dashed] (0, 0) arc (104.036:144.036:5);
        	\node[Cerulean, right] at (3.402, -0.355) {$p$};
        	\draw[Cerulean, dashed] (0.25, -1) -- (-0.25, 1);
        	\node[Cerulean, right] at (0.25, -1-0.1) {$M$};
        	\draw[color=ForestGreen, fill=ForestGreen, opacity=0.3] (1.75, 1.5) -- (2.25, -0.5) -- (-1.75, -1.5) -- (-2.25, 0.5) -- cycle;
        	\node[ForestGreen, right] at (2.25, -0.5-0.15) {$T_H \mathcal{B}$};
        	\draw[ForestGreen, -latex] (0, 0) --	(1, 0.8);
        	\node[ForestGreen, right] at		(1, 0.8) {$\mathcal{Z} = D_H \Phi (\pi,\,\mu)$};
        	\draw[pink] (0, -2) -- (0, 2);
        	\node[pink, right] at (0, 2) {$\mathcal{W}$};
        	\draw[Purple, -latex] (0, 0) -- (0, 0.75);
        	\node[Purple, left] at (0, 0.75) {$\mathcal{Y} = \Psi (\xi,\,w)$};
        	\filldraw (0, 0) circle (0.03);
        	\node[below right] at (0,0) {$\mathcal{X}$};
        \end{tikzpicture}
		\caption{\small
			Depiction of \fref{Lemma}{lemma:orthog_wave_and_balanced_set}: the wave set is $L^2_H$--orthogonal
			to the formal tangent space to the balanced set.
		}
		\label{fig:wave_set_orthog_to_balanced_set}
	\end{figure}

	We are now ready to use the geometric language introduced in this section in order
	to provide a geometric interpretation of the Newton descent iteration presented in \fref{Section}{sec:convergence_guarantees}.

	First, we note that we can rewrite the defining equation of the Newton descent search direction, namely \eqref{eq:Newt_desc_S}, as follows.

	\begin{remark}
	\label{rmk:rewrite_Newton_direction}
		The Newton descent direction $\pi$ solves, for $H \vcentcolon= \mathds{1} (M < \partial_3 p)$,
		\begin{equation*}
			-\nabla\cdot (A_H \nabla\pi ) = PV ( \Phi (p,\, M)) - PV.
		\end{equation*}
		This follows from \eqref{eq:Newt_desc_S} upon observing that $\min_0 (M - \partial_3 p) = H ( M - \partial_3 p )$.

		In other words, and this will play a key role in the geometric interpretation of Newton descent:
		$\pi$ is determined by the potential vorticity discrepancy, i.e. by how far the potential vorticity at the current guess is from the true potential vorticity. 
	\end{remark}

	We are now ready to describe the geometric interpretation of Newton descent.

	\begin{remark}[Geometric interpretation of Newton descent]
	\label{rmk:geo_interp_Newton_desc}
		Consider the following formal flow on the balanced set $ \mathcal{B} $ and its formal tangent space $T_H \mathcal{B} $.
		\begin{equation}
			\left\{
			\begin{aligned}
				&\dot{\mathcal{X}} (s) = {(D_H\Phi)}^{-*} \left( PV - PV(\mathcal{X}),\, 0 \right) \text{ and } \\
				&\mathcal{X}(0) = \Phi (p_0,\, M) 
			\end{aligned}
			\right.
        \label{eq:flow}
		\end{equation}
		where $PV$ and $M$ are given and where the starting point $p_0\in \mathring{H}^1 $ may be freely chosen.

		We claim that, formally, the Newton descent iteration may be viewed as a discretisation of this flow.
		First: what does this flow even do?

		Well, since
		\begin{equation*}
			e_p^\flat = - PV + \frac{1}{2} \partial_3 (HM) \text{ and } 
			e_M^\flat = \left( 1 + \frac{H}{2} \right) M
		\end{equation*}
		we may view $PV$ and $M$ as dual coordinates along the balanced set $ \mathcal{B} $.
		The map $ {(D_H\Phi)}^{-*} : \mathbb{H}^* \to T_H \mathcal{B} $ thus associates to $ (PV,\,M) \in \mathbb{H}^*$
		the corresponding tangent vector in $T_H \mathcal{B} $.
        This means that the flow follows the $PV$--direction, keeping the value of $M$ fixed, so as to approach the given target value of $PV$.
        This is depicted in \fref{Figure}{fig:newton}.
		(An analog of this flow is presented in \fref{Remark}{rmk:simple_flow} below in a much simpler setting in order to make the intuition easier to follow.)

		To see that the Newton iteration is a discretisation of this flow we rewrite the Newton iteration in a different way.

		The Newton iteration proceed as follows.
		\begin{itemize}
			\item	$PV$ and $M$ are given.
			\item	Initialize the iteration by choosing an arbitrary $p_0\in \mathring{H}^1 $.
			\item	Iterate, for $k\geqslant 0$,
				\begin{enumerate}
					\item	Let $H_k \vcentcolon= \mathds{1} (M < \partial_3 p_k)$
						and ${PV}_k \vcentcolon= \Delta p_k + \frac{1}{2} \partial_3 {\min}_0\, (M - \partial_3 p_k)$.
					\item	$\pi$ solves $\nabla\cdot ( A_{H_k} \nabla \pi ) = PV - {PV}_k$.
					\item	Let $p_{k+1} \vcentcolon= p_k + \pi$.
				\end{enumerate}
		\end{itemize}
		Note that step 3 is written this way because, once the Newton iteration reaches the so-called ``quadratic regime''
		(which is guaranteed to occur once iterates are sufficiently close to the true solution),
		then exact line search always produces a unit step size. This is why we can update $p_{k+1} = p_k + t\pi$ with $t=1$.

		We are now ready to relate the Newton iteration to the flow above.
		We may interpret (really, rewrite) the Newton iteration geometrically as follows.
		\begin{itemize}
			\item	$PV$ and $M$ are given.
			\item	Initialize the iteration by choosing an arbitrary $p_0\in \mathring{H}^1 $,
				i.e. choosing an arbitrary $\mathcal{X}_0 = \Phi (p_0,\, M) \in \mathcal{B} $
				with the correct $M$--coordinate.
			\item	Iterate, for $k\geqslant 0$.
				\begin{enumerate}
					\item	Let $(p_k,\, M) \vcentcolon= \Phi^{-1} (\mathcal{X}_k)$
						and ${PV}_k \vcentcolon= PV ( \mathcal{X}_k )$.
					\item	$\pi$ solves $ {(D_H\Phi)}^* (D_H\Phi) (\pi,\,0) = \left( PV - {PV}_k,\, 0 \right)$.
					\item	Let $\mathcal{X}_{k+1} \vcentcolon= \Phi \left( (p_k,\,M) + (\pi,\,0) \right)$.
				\end{enumerate}
		\end{itemize}
		Note that step 2 is the same as above, merely written differently (actually: every step coincides!)
		Indeed \fref{Lemma}{lemma:metric_tensor} tells us that
		\begin{equation*}
			{(D_H\Phi)}^* (D_H\Phi) (\pi,\,0) = \left( -\nabla\cdot ( A_H \nabla \pi),\, 0 \right).
		\end{equation*}

		In other words the Newton iteration may be written in one line as
		\begin{equation*}
			\mathcal{X}_{k+1} = \Phi \left( \Phi^{-1} (\mathcal{X}_k) + (\pi,\,0) \right)
			\text{ where $\pi$ solves }
			{(D_H\Phi)}^* (D_H\Phi) (\pi,\,0) = \left( PV - {PV}_k,\, 0 \right).
		\end{equation*}
		When $\mathcal{X}_{k+1}$ is close to $\mathcal{X}_k$, which equivalently means that $\pi$ is small, we may thus formally approximate
		\begin{equation*}
			\mathcal{X}_{k+1} \sim \mathcal{X}_k + (D_H\Phi)(\pi,\,0).
		\end{equation*}
		We compute that
		\begin{align*}
			(D_H\Phi) (\pi,\,0)
			= (D_H\Phi) { \left[ {(D_H\Phi)}^* (D_H\Phi)  \right] }^{-1} \left( PV - {PV}_k,\, 0 \right)
		\\
			= {(D_H\Phi)}^{-*} \left( PV - PV(\mathcal{X}_k),\, 0 \right)
		\end{align*}
		In other words:
		\begin{equation*}
			\mathcal{X}_{k+1} \sim \mathcal{X}_k + {(D_H\Phi)}^{-*} \left( PV - PV(\mathcal{X}_k),\, 0 \right)
		\end{equation*}
		which verifies that indeed the Newton iteration is a formal discretisation of the flow in \eqref{eq:flow} above.
	\end{remark}

    \begin{figure}
		\centering
		\captionsetup{width=0.85\textwidth}
		\includegraphics[width=0.7\textwidth]{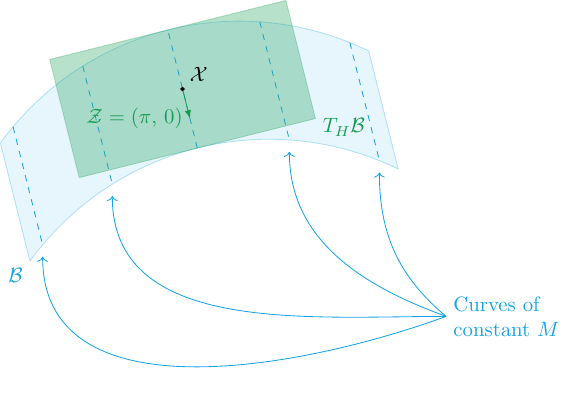}
		\caption{\small
			The Newton descent method follows curves of constant $M$ until the appropriate value of $PV$ is reached: at a point $\mathcal{X}$ on the balanced set, the flow corresponding to Newton descent follows the direction of the formal vector $\mathcal{Z}$ whose $p$--coordinate is given by the Newton descent search direction $\pi$.
		}
		\label{fig:newton}
	\end{figure}

	\begin{remark}[A simple flow]
	\label{rmk:simple_flow}
		We discuss here a very simple analog of the flow presented in \fref{Remark}{rmk:geo_interp_Newton_desc}.
		Consider the following flow on the two-dimensional plane:
		\begin{align*}
			&\dot{x}(s) = x_* - x(s)\\
			&\dot{y}(s) = 0
		\end{align*}
		with initial conditions
		\begin{align*}
			&x(0) = x_0\\
			&y(0) = y_*
		\end{align*}
		where $x_*$ and $y_*$ are given (they play the role of the given target values of $PV$ and $M$) and where $x_0$ (just like $p_0$) may be freely chosen.
		This means that the $x$--coordinate plays the role of the $PV$--coordinate whereas the $y$--coordinate plays the role of the $M$--coordinate.

		Then the new unknown
		\begin{equation*}
			\tilde{x}(s) \vcentcolon= x_* - x(s)
		\end{equation*}
		satisfies
		\begin{equation*}
			\dot{\tilde{x}}(s) = - \dot{x} (s) = - \dot{\tilde{x}}(s)
		\end{equation*}
		and so we deduce that the solution is given by
		\begin{align*}
			&\tilde{x}(s) = \tilde{x} (0) e^{-s} = (x_* - x_0) e^{-s}\\
			&y(s) = y_*.
		\end{align*}
		In terms of the original unknowns $x$ and $y$ this yields
		\begin{align*}
			&x(s) = x_* - \tilde{x}(s) = (1-e^{-s}) x_* + e^{-s} x_0\\
			&y(s) = y_*.
		\end{align*}
		Therefore the solution $(x,\,y)(s)$ approaches the point with coordinates $(x_*,\,y_*)$ exponentially fast,
		along a trajectory of constant $y$.
	\end{remark}

\section{Summary and comparison with Helmholtz and the dry case}
\label{sec:summary}

	We are approaching the conclusion of this paper.
	Before we conclude, we recall in this section that the decomposition we introduced for the moist Boussinesq system under consideration
	has many different facets.
	We summarize these many facets here.
	In order to make it as easy as possible for the reader to follow this discussion, we first recall the many facets of the Helmholtz decomposition
	and of the vortical-wave decomposition of the dry Boussinesq system.

	This is not merely a recapitulation of the discussion of these two classical systems done in the introduction.
	Throughout the paper we have introduced a variety of concepts and terminology for the sake of the moist decomposition,
	but we see that many of these have counterparts in these classical systems.
	It is these counterparts that we summarize first, in \fref{Sections}{sec:summary_Helmholtz} and \ref{sec:summary_dry_Boussinesq} below,
	before summarizing the various facets of the moist decomposition in \fref{Section}{sec:summary_moist_Boussinesq}.

	We could more prosaically call this section: ``the twelve aspects of a decomposition found in both Helmholtz, dry, and moist Boussinesq''.

\subsection{Helmholtz}
\label{sec:summary_Helmholtz}

	We first summarize the various facets of the Helmholtz decomposition.
	\begin{enumerate}
		\item	For any three-dimensional vector field $u$ the Helmholtz decomposition is
			\begin{equation*}
				u = \nabla p + \sigma
			\end{equation*}
			where $\sigma$ is a divergence-free vector field.
		\item	In the context of the Helmholtz decomposition, if we wish to use terminology from atmospheric science and talk about \emph{balanced states},
			a state $u$ will be balanced precisely when it is a solenoidal vector field.
			In other words the balance in question is
			\begin{equation*}
				u_B = \sigma,
			\end{equation*}
			where the subscript $B$ denotes the balanced component of $u$.

			While introducing this terminology in the context of the Helmholtz decomposition may appear overzealous,
			it does serve to make the parallels with the dry Boussinesq system, and ultimately with the \emph{moist} Boussinesq system, more evident.
		\item	Since the second component of the Helmholtz decomposition is the balanced component,
			by analogy with the terminology 
   of this paper we refer to the first component as the \emph{wave component}\footnote{
                This terminology may also be justified more directly as follows, without appealing to how the Helmholtz decomposition is similar to the dry and moist Boussinesq decompositions.
                Indeed, the Helmholtz decomposition may be interpreted as a slow-fast decomposition for the low Mach limit of the compressible Euler equations (see \fref{Remark}{rmk:Helmholtz_as_slow_fast_for_compressible_Euler}) and in this context the first component of the decomposition will precisely obey a wave equation.
			}.
            By analogy with \fref{Proposition}{prop:chara_im_N} we may identify the \emph{slow measurement} in this setting as the measurement which annihilates the wave set.
            Since the wave set is the set of potential vector fields this means that the slow measurements are precisely the \emph{curl} and the \emph{spatial average}:
            $\\nabla p$ is the wave component of $u$ and
			\begin{equation*}
				\nabla\times\nabla p = 0 \text{ and } \fint \nabla p = 0.
			\end{equation*}

		\item	In this case there is no subtlety to identifying the ``good unknown'' since it is the vector field $u$ itself. (See \fref{Remark}{rmk:good_unknown} for a discussion of what is meant by the term ``good unknown'' and what role they play.)
		\item	The elliptic PDE used to reconstruct the balanced component can be viewed as a div-curl system.
			Indeed $\sigma$ solves
			\begin{equation*}
				\nabla\times\sigma = \nabla\times u \text{ subject to } \nabla\cdot\sigma = 0 \text{ and } \fint\sigma = \fint u.
			\end{equation*}
            Note that the forcing is precisely the slow measurements: here the curl and the spatial average are the analogs of the potential vorticity in the dry case
            since they are slow measurements which characterise the balanced component.

			Alternatively we can view this div-curl system as a curl-squared system for the vector potential of $\sigma$.
			Writing $\sigma = c + \nabla\times\psi$ we see that $c = \fint u$ while $\psi$ solves $\nabla\times (\nabla\times\psi) = \nabla\times u$.
			(This curl-squared system may be written as a Laplacian equation since we may without loss of generality impose that the vector potential $\psi$ be divergence-free,
			and in that case the curl-squared operator acting on $\psi$ reduces to a Laplacian.)
		\item	The elliptic PDE used to reconstruct the wave component is Laplace's equation:
			\begin{equation*}
				\Delta p = \nabla\cdot u.
			\end{equation*}
            Note that it is not the fast measurement, the divergence, which plays the role of a forcing term for the wave component.
		\item	Using the decomposition we can obtain a couple of global changes of coordinates.
			We can fully describe the vector field $u$ in three ways.
			\begin{enumerate}
				\item	Describe it using $u$ itself.
				\item	Describe it using the measurements $ \nabla\cdot u$, $ \nabla\times u$, and $\fint u$.
				\item	Describe it using the coordinates $p$ and $\sigma$.
			\end{enumerate}
		\item	Recall that the Helmholtz decomposition may be viewed as a slow-fast decomposition for the low Mach number limit of the compressible Euler equations.
			In that case the leading-order operator of the dynamics is linear, and it is recorded in \fref{Remark}{rmk:Helmholtz_as_slow_fast_for_compressible_Euler}.
			We bring this up now in order to later compare the leading order operators in the case of dry and moist Boussinesq.
			For now it suffices to note that, as is discussed in more detail in \fref{Remark}{rmk:Helmholtz_as_slow_fast_for_compressible_Euler},
			the Helmholtz decomposition is a decomposition into the kernel and the image for that operator: the balanced set is its kernel and the wave set is its image.
		\item	As discussed in \fref{Remark}{rmk:Helmholtz_as_slow_fast_for_compressible_Euler}, the wave component of the Helmholtz decomposition
			can be seen to obey a wave equation.
			Once again, this is not necessarily worthy of note by itself, but we record it here to compare it with the dry and moist Boussinesq decompositions
			in \fref{Sections}{sec:summary_dry_Boussinesq} and \ref{sec:summary_moist_Boussinesq} below.
   
		\item	When viewed as a slow-fast decomposition, the Helmholtz decomposition inherits the additional feature of being orthogonal with respect to the conserved energy,
			which happens to simply be the $L^2$ norm.
		\item	We note that the Helmholtz decomposition and its induced changes of coordinates discussed above yield Parseval--type identities which allow us to rewrite
			the energy in three different ways: in terms of the state $u$, in terms of the coordinates, and in terms of the measurements. We obtain that
			\begin{align*}
				\int {\lvert u \rvert}^2 
				= \int {\lvert \nabla p \rvert}^2 + {\lvert \sigma \rvert}^2 
				= \int {\lvert \nabla\Delta^{-1} (\nabla\cdot u) \rvert}^2 + {\left\lvert {\nabla\times}^{-1} \left( \nabla\times u,\, \fint u  \right) \right\rvert}^2 .
			\end{align*}
			Here we write $\sigma = {(\nabla\times)}^{-1} (\omega,\,a)$ to mean that $\sigma$ solves
			\begin{equation*}
				\nabla\times\sigma = \omega \text{ subject to } \nabla\cdot\sigma = 0 \text{ and } \fint \sigma = a.
			\end{equation*}
        \item   Finally we observe that the extraction of the balanced component of a vector field $u$ may be viewed as a projection. Indeed, if we define the balanced set, in the context of the Helmholtz decomposition, to be $\mathcal{B}_{Hel.} \vcentcolon= {(L^2)}^3_\sigma$, the space of divergence-free three-dimensional vector fields,
            then we see that
            \begin{equation*}
                u_B = \argmin_{v\in\mathcal{B}_{Hel.}} \norm{u-v}{L^2}.
            \end{equation*}
	\end{enumerate}

\subsection{Dry Boussinesq}
\label{sec:summary_dry_Boussinesq}

	We now turn our attention to the various facets of the vortical-wave decomposition used for the dry Boussinesq system.
	Note that the numbering below is the same as the numbering in the discussion of the Helmholtz decomposition in \fref{Section}{sec:summary_Helmholtz} above.
	That is not an accident: it serves to highlight the parallels between the two decompositions.
    Then again we caution the reader that the roles played by the potential and solenoidal pieces are now flipped: $\nabla p$ appeared in the fast waves in \fref{Section}{sec:summary_Helmholtz}, it now appears in the slow balanced components; $\sigma$ appeared in the slow balanced components in \fref{Section}{sec:summary_Helmholtz}, it now appears in the fast waves.
	\begin{enumerate}
		\item	For the dry Boussinesq system discussed in this paper, the decomposition of a state consisting of a vector velocity field $u$ and a scalar potential temperature field $\theta$
			takes the form
			\begin{equation*}
				\begin{pmatrix}
					u \\ \theta
				\end{pmatrix} = \begin{pmatrix}
					\nabla_h^\perp p \\ \partial_3 p
				\end{pmatrix} + \begin{pmatrix}
					\wavecoord_h^\perp + we_3 \\ \wavecoord_3
				\end{pmatrix},
			\end{equation*}
			where $\wavecoord$ is divergence-free.
		\item	In the dry case, the balanced component satisfies the geostrophic and hydrostatic balances:
			\begin{align*}
				u_B &= \nabla_h^\perp p \text{ and } \\
				\theta_B &= \partial_3 p.
			\end{align*}
		\item	The measurement characterising the wave set is, in the dry case, the potential vorticity.
			Indeed in this case the wave set is precisely the set of states whose potential vorticity vanishes:
			\begin{align*}
				PV ( \wavecoord_h^\perp + we_3,\, \wavecoord_3)
                &= \nabla_h^\perp \cdot {( \wavecoord_h^\perp + we_3)}_h + \partial_3 \wavecoord_3
            \\
                &= \nabla_h\cdot\wavecoord_h + \partial_3 \wavecoord_3
                = 0
			\end{align*}
			since $\wavecoord$ is divergence-free.
		\item	In the dry case the good unknown is the following three-dimensional vector field combining both the horizontal velocity and the potential temperature:
			\begin{equation*}
				-u_h^\perp + \theta e_3.
			\end{equation*}
		\item	The elliptic equation used to recover the pressure $p$, and hence the balanced component, is the Laplace/Poisson equation
			\begin{equation*}
				\Delta p = PV.
			\end{equation*}
		\item	The wave component is recovered by setting $w = u_3$ and finding $\wavecoord$ to be the solution of the following div-curl system:
			\begin{equation*}
				\nabla\times \wavecoord
				= \partial_3 u_h - \nabla_h^\perp \theta + (\partial_3 w) e_3
				= j + (\partial_3 w) e_3
				\text{ subject to }
				\nabla\cdot\wavecoord = 0
				\text{ and } 
				\fint \wavecoord = - \fint u_h^\perp + \fint \theta e_3 = a.
			\end{equation*}
		\item	We may describe states of the dry Boussinesq system in three equivalent ways.
			\begin{enumerate}
				\item	Describe them using $u$ and $\theta$.
				\item	Describe them using the measurements $PV$, $j$, $w$, and $a$.
				\item	Describe them using the coordinates $p$, $\wavecoord$, and $w$.
			\end{enumerate}
		\item	The leading-order operator of the dynamics of the dry Boussinesq system considered in this paper is given by
			\begin{equation*}
				\mathcal{L} \begin{pmatrix}
					u \\ \theta
				\end{pmatrix} = \begin{pmatrix}
					\mathbb{P}_L \left( u_h^\perp - \theta e_3 \right) \\ u_3
				\end{pmatrix}.
			\end{equation*}
			The decomposition is adapted to this operator since the balanced set is its kernel and the wave set is its image.
		\item	We may compute immediately that the wave component of a state undergoes linear oscillations under the leading-order operator since, when restricted to the wave set, the square of the operator behaves like multiplication by $-1$:
			\begin{equation*}
				\mathcal{L}^2 \begin{pmatrix}
					\wavecoord_h^\perp + we_3 \\ \wavecoord_3
				\end{pmatrix} = -\begin{pmatrix}
					\wavecoord_h^\perp + we_3 \\ \wavecoord_3
				\end{pmatrix}.
			\end{equation*}
		\item	The conserved energy in the dry case is simply the $L^2$ norm given by
			\begin{equation*}
				E = \int {\lvert u \rvert}^2 + \theta^2.
			\end{equation*}
		\item	The Parseval-type identity in the dry case lets us rewrite the energy in terms of the coordinates or the measurements as follows:
			\begin{equation*}
				\int {\lvert u \rvert}^2 + \theta^2
				= \int {\lvert \nabla p \rvert}^2  + {\lvert \wavecoord \rvert}^2 + w^2
				= \int {\lvert \nabla\Delta^{-1} PV \rvert}^2 + {\left\lvert {\nabla\times}^{-1} \left( j + (\partial_3 w) e_3,\, a \right) \right\rvert}^2.
			\end{equation*}
        \item   For the dry Boussinesq case the balanced set takes the form
                \begin{equation*}
                    \mathcal{B}_{dry} \vcentcolon= \left\{
                        (u,\,\theta) \in {(L^2)}^3_\sigma \times L^2 :
                        u = \nabla_h^\perp p,\, u_3 = 0, \text{ and } \theta = \partial_3 p
                        \text{ for some } p\in\mathring{H}^1
                    \right\}.
                \end{equation*}
                Therefore the extraction of a balanced component once again has a projection perspective:
                \begin{equation*}
                    (u_B,\,\theta_B) = \argmin_{\mathcal{Y} \in \mathcal{B}_{dry}} \norm{(u,\,\theta) - \mathcal{Y}}{L^2}.
                \end{equation*}
	\end{enumerate}

\subsection{Moist Boussinesq}
\label{sec:summary_moist_Boussinesq}

	Finally we perform the same exercise as in \fref{Sections}{sec:summary_Helmholtz} and \ref{sec:summary_dry_Boussinesq}:
	we list the twelve aspects of the moist decomposition presented in this paper, in the same order as the above sections to make the comparison easy to follow.

	Recall that here, in a manner consistent with the rest of this paper (barring the introduction) we write $\theta$ to denote the \emph{equivalent} potential temperature
	(denoted $\theta_e$ in the introduction).
	\begin{enumerate}
		\item	The decomposition of a state $(u,\,\theta,q)$ is given by
			\begin{equation*}
				\begin{pmatrix}
					u \\ \theta \\ q
				\end{pmatrix} = \begin{pmatrix}
					\nabla_h^\perp \\
					\partial_3 p + \frac{1}{2} {\min}_0\, (M - \partial_3 p)\\
					M - \partial_3 p - \frac{1}{2} {\min}_0\, (M - \partial_3 p)
				\end{pmatrix} + \begin{pmatrix}
					\wavecoord_h^\perp + we_3 \\ \wavecoord_3 \\ -\wavecoord_3
				\end{pmatrix},
			\end{equation*}
			where $p$ has vanishing average and $\wavecoord$ is divergence-free.
		\item	In the moist case the balanced component satisfies the geostrophic and hydrostatic balances, meaning that
			\begin{align*}
				u_B &= \nabla_h^\perp p \text{ and } \\
				\theta_B - {\min}_0\, q_B &= \partial_3 p.
			\end{align*}
			To fully characterise the balanced component we must also include the moist variable $M$ and the following ``auxiliary'' balance:
			\begin{equation*}
				\theta_B + q_B = M.
			\end{equation*}
   
		\item	The slow measurements characterising the wave set are now $PV$ and $M$, and indeed
			\begin{equation*}
				PV ( \wavecoord_h^\perp + we_3,\, \wavecoord_3,\, -\wavecoord_3 ) = \nabla\cdot\wavecoord = 0 \text{ and } 
				M ( \wavecoord_h^\perp + we_3,\, \wavecoord_3,\, -\wavecoord_3 ) = \wavecoord_3 - \wavecoord_3 = 0.
			\end{equation*}
		\item	There are \emph{two} good unknowns in the moist case, owing to the fact that the leading-order operator is \emph{not} skew-symmetric (this is discussed in more detail in \fref{Remark}{rmk:global_chg_coord_dry_case}).
            One good unknown is
			\begin{equation*}
				-u_h^\perp + (\theta - {\min}_0\, q) e_3,
			\end{equation*}
			which is related to $j$ and the null set of the leading-order operator,
			while its ``adjoint'' good unknown is
			\begin{equation*}
				-u_h^\perp + \theta e_3,
			\end{equation*}
			which is related to $PV$ and the image of the leading-order operator.
		\item	In order to reconstruct the balanced component we set $M = \theta + q$ and then let $p$ be the solution of nonlinear $PV$-and-$M$ inversion
			\begin{equation*}
				\Delta p + \frac{1}{2} \partial_3 {\min}_0\, (M - \partial_3 p) = PV.
			\end{equation*}
		\item	To reconstruct the wave component we let $w = u_3$ and let $\wavecoord$ solve
			\begin{align*}
				\nabla \times \left( A_B^{-1} \wavecoord \right)
				= \partial_3 u_h - \nabla_h^\perp (\theta - \theta H_B) + (\partial_3 w) e_3
				=\vcentcolon j_B + (\partial_3 w) e_3
			\\
				\text{ subject to }
				\nabla\cdot\wavecoord = 0
				\text{ and } 
				\fint A_B^{-1} \wavecoord = - \fint u_h^\perp + \fint (\theta - qH_B) e_3,
			\end{align*}
			where $A_B^{-1} = I + H_B e_3\otimes e_3$ for $H_B \vcentcolon= \mathds{1} (M < \partial_3 p)$.
		\item	We can describe moist Boussinesq states in three ways.
			\begin{enumerate}
				\item	Describe them using $u$, $\theta$, and $q$.
				\item	Describe them using the measurements $PV$, $M$, $j$, $w$, and $a$.
				\item	Describe them using the coordinates $p$, $M$, $\wavecoord$, and $w$.
			\end{enumerate}
		\item	The leading-order operator of the dynamics in the moist case is
			\begin{equation*}
				\mathcal{N} \begin{pmatrix}
					u \\ \theta \\ q
				\end{pmatrix} = \begin{pmatrix}
					\mathbb{P}_L \left[ u_h^\perp - (\theta - {\min}_0\, q ) e_3 \right]\\
					u_3\\
					-u_3
				\end{pmatrix}.
			\end{equation*}
		\item	As detailed in \fref{Section}{sec:slow_fast_decomp} the operator of interest acting on the wave set, denoted by $\mathcal{L}_{W,\,H_B}$,
			is the linearisation of $\mathcal{N}$ about a balanced state with phases characterised by $H_B$. We then observe that the restriction of this operator to the wave set produces oscillatory dynamics.
			By contrast with the case of the Helmholtz decomposition and of the dry Boussinesq system, the frequencies of these oscillations are \emph{phase-dependent}:
			\begin{equation*}
				\mathcal{L}_{W,\,H_B}^2 \begin{pmatrix}
					\wavecoord_h \\ 0 \\ 0
				\end{pmatrix} = - \begin{pmatrix}
					\wavecoord_h \\ 0 \\ 0
				\end{pmatrix} \text{ and } \mathcal{L}_{W,\,H_B}^2 \begin{pmatrix}
					we_3 \\ \wavecoord_3 \\ -\wavecoord_3
				\end{pmatrix} = - (1 + H_B) \begin{pmatrix}
					we_3 \\ \wavecoord_3 \\ -\wavecoord_3
				\end{pmatrix}.
			\end{equation*}
		\item	In the moist case the conserved energy is no longer quadratic.
			It is now nonlinear and given by
			\begin{equation*}
				E = \int {\lvert u \rvert}^2 + \theta^2 + {\min}_0^2\, q + {(\theta + q)}^2.
			\end{equation*}
		\item	In the moist case there is no simple Parseval-type identity which holds: the entirety of \fref{Section}{sec:energy_and_decomposition} is devoted to finding an alternative!
			Ultimately we identify in \fref{Section}{sec:metric_less_approach} a \emph{statistical divergence} given by
			\begin{equation*}
				D ( \mathcal{X}_1;\, \mathcal{X}_2 )
				= \frac{1}{2} \int A_{H_1} \nabla p \cdot \nabla p + \left( 1 + \frac{H_1}{2} \right) {(M_1 - M_2)}^2 + A_{H_1}^{-1} \wavecoord\cdot\wavecoord + {(w_1 - w_2)}^2
			\end{equation*}
			where, for $H_1 \vcentcolon= \mathds{1} (q_{B,\,1} < 0)$,
			\begin{itemize}
				\item	$p$ solves $\nabla\cdot \left( A_{H_1} \nabla p \right) = {PV}_1 - {PV}_2 - \frac{1}{2} \partial_3 \left[ H_1 \left( M_1 - M_2 \right) \right]$ and
				\item	$\wavecoord$ solves $\nabla\times \left( A_{H_1}^{-1} \wavecoord \right) = j_1 - j_2 + \partial_3 (w_1 - w_2) e_3$
					subject to $\nabla\cdot\wavecoord = 0$ and $\fint A_{H_1}^{-1} \wavecoord = a_1 - a_2$.
			\end{itemize}
        \item   The projection perspective in the moist case takes the following form:
                \begin{equation*}
                    \mathcal{X}_B = \argmin_{\mathcal{Y} \in \mathcal{B}} D(\mathcal{X};\;\mathcal{Y}).
                \end{equation*}
	\end{enumerate}

    Above we have summarized in detail how the many aspects of the Helmholtz, dry, and moist Boussinesq decompositions are analogous to one another, highlighting their differences.
    Below we provide a table summarizing \fref{Sections}{sec:summary_Helmholtz}--\ref{sec:summary_moist_Boussinesq} and which may serve either as a reference or as a more visual way to compare and contrast the various aspects of each decomposition.
    
	\begin{sidewaystable}
		{\small
		\begin{tabular}{|c|c|c|c|}
			\hline
						&Helmholtz			& Dry Boussinesq			& Moist Boussinesq							\\ \hline
			Decomposition		&$u = \nabla p + \sigma$
										&$ \begin{pmatrix}
											u \\ \theta
										\end{pmatrix} = \begin{pmatrix}
											\nabla_h^\perp p \\ \partial_3 p
										\end{pmatrix} + \begin{pmatrix}
											\wavecoord_h^\perp + we_3 \\ \wavecoord_3
										\end{pmatrix}$
															& $\begin{pmatrix}
																u \\ \theta \\ q
															\end{pmatrix} = \begin{pmatrix}
																\nabla_h^\perp p \\
																\partial_3 p + \frac{1}{2} {\min}_0\, (M - \partial_3 p)\\
																M - \partial_3 p + \frac{1}{2} {\min}_0\, (M - \partial_3 p)
															\end{pmatrix} + \begin{pmatrix}
																\wavecoord_h^\perp + we_3 \\ \wavecoord_3 \\ -\wavecoord_3
															\end{pmatrix}$								\\ \hline
            Balanced set & $\mathcal{B}_{Hel.} = {(L^2)}^3_\sigma$ & $\mathcal{B}_{dry} = \left\{
                        \begin{aligned}
                            &u = \nabla_h^\perp p,\, u_3 = 0, \text{ and }
                        \\
                            & \theta = \partial_3 p\text{ for some } p\in\mathring{H}^1
                        \end{aligned}
                    \right\}$
            &
			$\mathcal{B} = \left\{
                \begin{aligned}
    				&u_h = \nabla_h^\perp p,\, u_3 = 0,\,
                    \\&\text{and } \theta - {\min}_0\, q = \partial_3 p
    				\\&\text{ for some } p\in \mathring{H}^1
                \end{aligned}
			\right\}$
            \\ \hline
			Slow measurements, which	&$\nabla\times u$	&$PV = \nabla_h^\perp\cdot u_h + \partial_3\theta$		&$PV = \nabla_h^\perp\cdot u_h + \partial_3 \theta$  					\\ 
			characterise balanced components&	and $\fint u$		&					& and $M = \theta+q$							\\ \hline
			Fast measurements, which	&$\nabla\cdot u$	&$j = \partial_3 u_h - \nabla_h^\perp \theta$,		& $j = \partial_3 u_h - \nabla_h^\perp ( \theta - {\min}_0\, q) e_3$,	\\
			characterise wave components	&&$a= - \fint u_h^\perp + \fint \theta e_3$, 		& $a = -\fint u_h^\perp + \fint (\theta - {\min}_0\, q) e)_3$,		\\
							&			&and $w=u_3$						& and $w = u_3$								\\ \hline
			Elliptic PDE for the		&$\nabla\times\sigma = \nabla\times u$		&$\Delta p = PV$			& $\Delta p + \frac{1}{2} \partial_3 {\min}_0\, (M - \partial_3 p) = PV$	\\
			balanced component		&subject to $\nabla\cdot\sigma = 0$						&					&								\\
                                    &and $\fint \sigma = \fint u$                           &                   &                               \\ \hline
			Elliptic PDE for the		&$\Delta p = \nabla\cdot u$	&$\nabla\times\wavecoord = j + (\partial_3 w)e_3$		& $\nabla\times \left( A_B^{-1} \wavecoord \right)
																			  = j_B + (\partial_3 w) e_3$		\\
			wave component			&&subject to $\nabla\cdot\wavecoord = 0$ and			& subject to $\nabla\cdot\wavecoord = 0$ and\\
							&		&$\fint \wavecoord = -\fint u_h^\perp + \fint \theta e_3$	& $\fint A_B^{-1} \wavecoord
																			  = -\fint u_h^\perp + \fint (\theta - qH_B) e_3$ \\ \hline
            Projection perspective
            & $u_B = \argmin\limits_{v\in\mathcal{B}_{Hel.}} \norm{u-v}{L^2}$
            & $(u_B,\,\theta_B) = \argmin\limits_{\mathcal{Y} \in \mathcal{B}_{dry}} \norm{(u,\,\theta) - \mathcal{Y}}{L^2}$
            &  $\mathcal{X}_B = \argmin\limits_{\mathcal{Y} \in \mathcal{B}} D(\mathcal{X};\;\mathcal{Y})$
            \\ \hline
			Global changes of coordinates	&State: $u$								& State: $u$ and $\theta$		& State: $u$, $\theta$, and $q$\\
							&Measurements: $\nabla\cdot u$, $\nabla\times u$, and $\fint u$		& Measurements: $PV$, $j$, $w$, and $a$	& Measurements: $PV$, $M$, $j$, $w$,
																					  and $a$\\
							&Coordinates: $p$ and $\sigma$					& Coordinates: $p$, $\wavecoord$, and $w$	& Coordinates: $p$, $M$, $\wavecoord$, and $w$
																					\\ \hline
			Leading-order operator		&See \fref{Remark}{rmk:Helmholtz_as_slow_fast_for_compressible_Euler}
										& $\mathcal{L} \begin{pmatrix}
											u \\ \theta
										\end{pmatrix} = \begin{pmatrix}
											\mathbb{P}_L (u_h^\perp - \theta e_3) \\ u_3
										\end{pmatrix}$				& $\mathcal{N} \begin{pmatrix}
																u \\ \theta \\ q
															\end{pmatrix} = \begin{pmatrix}
																\mathbb{P}_L \left[ u_h^\perp - (\theta - {\min}_0\, q) e_3 \right]
																\\ u_3 \\ -u_3
															\end{pmatrix}$									\\ \hline
			Conserved energy		& See \fref{Remark}{rmk:Helmholtz_as_slow_fast_for_compressible_Euler}
								& $E = \int {\lvert u \rvert}^2 + \theta^2$		& $E = \int {\lvert u \rvert}^2 + \theta^2 + {\min}_0^2\, q + {(\theta + q)}^2$	\\ \hline
		\end{tabular}
		}
	\end{sidewaystable}

\section{Conclusion}
\label{sec:conclusion}

	In this section we discuss the main contributions of this paper and the main takeaways for the reader.

	The core contribution of this paper is the following.
	We provided a blueprint for decomposing the solutions of $ \partial_t + \mathcal{N} $, specifically when $ \mathcal{N} $ is nonlinear.
	Recall that otherwise, if $ \mathcal{N} $ is linear, then we can proceed via spectral methods and identify eigenvalues and eigenvectors of that operator.

	This blueprint requires a good understanding of the null set and the image of $ \mathcal{N} $,
	which we refer to as the balanced set and the wave set in the context of this paper.

	We actually provide \emph{two} descriptions of each of these sets:
	(1) an explicit parametrisation and (2) a characterisation as the joint zero set of some measurements.
	Note that the measurements annihilating the wave set are known as slow measurements
	while those annihilating the balanced set are known as fast measurements.

	A crucial condition to check in order to follow this general blueprint is that the slow measurements be characteristic of balanced states.
	In other words: there must be only one balanced set with any given $PV$ and $M$.
	Geometrically this is asking that the wave set (i.e. the joint zero set of $PV$ and $M$) be \emph{transversal} to the balanced set.

	This allows us to define a global decomposition of state space:
	\begin{equation*}
		\mathbb{L}^2_\sigma = \mathcal{B} + \mathcal{W}.
	\end{equation*}

	In particular this decomposition gives broader meaning to nonlinear $PV$-and-$M$ inversion.
	Previously, this inversion was understood in the context of balanced states:
	given $PV$ and $M$, the inversion was used to produce the corresponding balanced state.
	Now this inversion has relevance for \emph{any} state: it is used to produce its balanced \emph{component}.

	We then discuss three aspects of this decomposition.

	\begin{enumerate}
		\item	The first aspect we discuss is its interpretation as a slow-fast decomposition.
		We obtain the three-part decomposition
		\begin{equation}
		\label{eq:three_part_decomp}
			\mathcal{X} = 
			\overunderbraces{
				&\br{1}{\text{balanced}} 	&	&\br{3}{\text{wave}}
			}{
				&\mathcal{X}_B			& +	&\overline{ \mathcal{X} }_W	& +	&\left( \mathcal{X}_W - \overline{ \mathcal{X} }_W \right)
			}{
				&\br{3}{\text{slow}}							&	&\br{1}{\text{fast}}
			}
		\end{equation}
		where $ \overline{ \mathcal{X} }_W $ denotes the time average of the wave component.

		Contrast this with the dry case: in that case the wave component undergoes linear oscillations,
		their time averages thus vanish, and so they do \emph{not} contribute to the slow component at lowest order (they do contribute, a little, when the small parameter $\varepsilon$ is finite).
		Here, in the moist case, the wave component undergoes nonlinear oscillations and so its time average contributes to the slow component.

		\item	The second aspect of the decomposition we discuss is its connection to the conserved energy.
		(Recall: when the operator $ \mathcal{N} $ is linear, the eigendecomposition obtained often has favourable orthogonality properties,
		e.g. with respect to the inner product generated by the conserved energy as is the case in the case for the dry Boussinesq system.)
		We obtain a statistical divergence which quantifies the discrepancy between states in a manner consistent with both the conserved energy and the decomposition.

		This statistical divergence is inpired by the Parseval-type identities which hold for the Helmholtz decomposition and the dry Boussinesq decomposition.
		For example in the dry case we ca write the energy either in terms of the state $(u,\,\theta)$ or in terms of the measurements $(PV,\,j,\,w,\,a)$.
		We proceed similarly in the moist case: we use the nonlinear change of variables from the state $(u,\,\theta,\,q)$ to the measurements $(PV,\,M,\,j,\,w,\,a)$
		to define the statistical divergence in terms of these measurements.

		\item	The third aspect of the decomposition we discuss is how to compute it in practice.
		We introduce iterative methods that may be used to compute the decomposition.
		(These methods produce the balanced component $ \mathcal{X}_B $ and the wave component may then be obtained as a residual $ \mathcal{X}_W \vcentcolon= X - \mathcal{X}_B$.)
		One of these methods, the Newton descent method, has the added benefit of shedding light on past work on this moist Boussinesq system.

		This Newton descent iteration corresponds to a previously used numerical method.
		Our analysis here thus provides the first rigorous proof that this numerical method converges.

		This Newton descent iteration may also be interpreted geometrically: it is precisely the discretisation of the flow along the balanced set where the moist variable $M$
		is kept constant and where we follow the potential vorticity coordinate direction until the desired potential vorticity is attained.
	\end{enumerate}

	\fref{Table}{tab:main_results} below serves as an easy reference for the main results of this paper.
	This table also lists a few important auxiliary results, remarks, and definitions which either highlight the key ideas underpinning the main results or provide essential context for these results.

	\begin{table}
		\centering
		\begin{tabular}{ll}
			\textbf{Result 1:} Decomposition			& \fref{Theorem}{theorem:pre_decomp}\\
			The balanced set					& \fref{Proposition}{prop:alt_char_balanced_set}\\
			The wave set						& \fref{Proposition}{prop:chara_im_N}\\
			Nonlinear change of variables				& \fref{Proposition}{prop:global_chg_coord}\\
			\hline
			\textbf{Result 2:} Slow-fast interpretation		& \fref{Theorem}{theorem:slow_fast_decomposition}\\
			of the decomposition\\
			\hline
			\textbf{Result 3:} Statistical divergence 		& \fref{Proposition}{prop:stat_div}\\
			agreeing with the decomposition\\
			and the conserved energy\\
			Definition of the					& \fref{Definition}{def:our_stat_div}\\
			statistical divergence					& (and \fref{Definition}{def:dimensionalized_Parseval_distance})\\
			\hline
			\textbf{Result 4:} Convergence rate for			& \fref{Theorem}{thm:conv_rates_our_en_backtracking}\\
			the Newton descent iteration\\
			The Newton descent					& \fref{Proposition}{prop:Newt_desc}\\
			search direction\\
			Justification of a numerical method			& \fref{Remark}{rmk:interp_Newt_descent_early}\\
			used in the literature\\
			Geometric interpretation				& \fref{Remark}{rmk:geo_interp_Newton_desc}
		\end{tabular}
		\caption{\small The main results of this paper and some important auxiliary results.}
		\label{tab:main_results}
	\end{table}

	We now pause to highlight the main difficulty in constructing a decomposition with the favourable properties discussed above.
	Naively, the difficulty may appear to be the nonlinearities or the lack of skew-symmetry.
	Both of these challenges come from the phase boundaries, and so the real difficulty is more specifically this: the phase boundaries are \emph{dynamic}, meaning that they evolve over time.

	To explain this we consider what would happen if this difficulty were removed.
	That is, suppose that $(u,\,\theta,\,q)$ is a solution of the moist Boussinesq system \eqref{eq:moist_Boussinesq_u}--\eqref{eq:moist_Boussinesq_q} whose phase boundaries do \emph{not} evolve over time.
	In other words, if $H \vcentcolon= \mathds{1} (q<0)$, suppose that $ \partial_t H = 0$.

	Then we may treat $H$ as a fixed indicator function and observe that $ \mathcal{N} (u,\,\theta,\,q) = \mathcal{L}_H (u,\,\theta,\,q)$.
	Crucially: $\mathcal{L}_H$ is $L^2_H$--skew-symmetric (see \fref{Lemma}{lemma:L_H_is_L_2_H_skew}).
	This means that, when the phase boundaries are \emph{static}, the leading-order operator becomes both linear \emph{and} skew-symmetric.
	Both symptomatic difficulties (nonlinearities and lack of skew-symmetry) are removed when the root difficulty (dynamic phase boundaries) is removed.

\subsection{Looking forward}
\label{sec:look_forward}

In this section we record some thoughts on future research direction stemming from the topics discussed here.

	\begin{itemize}
		\item	Can we prove that the PQG system (see \cite{smith2017precipitating}), or a variant thereof is the rigorous fast-wave averaging limit of the moist Boussinesq system, in the limit of fast rotation and strong stratification
			(i.e. as both the Froude and Rossby numbers approach zero, at the same rate)?

            In the case of balanced, or well-prepared initial data we expect to be obtain to obtain the PQG system as a rigorous limit of the moist Boussinesq system discussed here.

            However, in the case of unbalanced, or ill-prepared initial data, there is mounting evidence that the appropriate limiting system is not the PQG system.
            The crux of the matter are the nonlinear oscillations of the fast waves:
            for ill-prepared data these waves are present and in the system and, in the moist case, they feed back into the slow components through their average. We thus expect that, at the very least, a modification needs to be made to the PQG system if we hope to obtain it as a rigorous limit of the moist Boussinesq system for generic initial data.
            
			Since this is a challenging problem, and here are two more restricted questions to look at first and whose resolution will shed light on the general case.
			\begin{itemize}
				\item	Treat the ODE case first. The ODE problem is recorded in \fref{Lemma}{lemma:ODE_sols_are_PDE_sols_general}, which also shows that ODE solutions are PDE solutions.
					Crucially: this ODE system possesses one of the key difficulties found in the PDE case: its wave solutions, i.e. its fast components,
					are nonlinear oscillations (which are piecewise linear oscillations, meaning that depending on the sign of $q$ they have fixed frequencies,
					but their frequencies change as the sign of $q$ changes) with \emph{non-zero} averages.
				\item	Treat the case of steady interfaces. In other words, look at solutions to the full moist Boussinesq system whose interface (i.e. the zero level set of $q$)
					is fixed in time and does not evolve dynamically, even though $q$ itself evolves dynamically.
					Such solutions are readily obtained by considering velocity fields whose support is localized away from the interface.
					In this case, the Heaviside $H = \mathds{1} (q<0)$ which appears in the moist Boussinesq system can be considered to be fixed,
					and so the leading-order operator is linear!
					This removes a key difficulty from the general, nonlinear, moist Boussinesq system due to dynamically-evolving phase boundaries.

                    It is important to note what we expect to glean from studying this ``fixed $H$'' case. Indeed, if we consider this example from a physical perspective it is severely contrived and hence there is little hope of extracting meaningful physical intuition from this test case. Nonetheless, studying this ``fixed $H$'' case may prove helpful in isolating the mathematical difficulties one at a time, and this is where we may benefit from studying this simpler case first before handling the general case.
			\end{itemize}
			Note that carrying out the fast-wave averaging process will require a precise identification of the solvability condition for the next-order equations,
			something which we have not carried out, even formally, in this paper.
			Recall that this next-order solvability condition is used to determine the behaviour of the ``background profile'' at the slow timescale which arises
			from the solution of the leading-order equation.
		\item	Can we find an explicit example of a PDE state for which its Snell-type distance projection does not agree with our decomposition?
			We strongly suspect that such examples exist since they are easy to find in the ODE case.
        \item 
            In \fref{Section}{sec:geometry} we discussed how the iterative method
            used to solve nonlinear $PV$-and$M$
            inversion numerically could be interpreted geometrically.
            This interpretation viewed the iterative method as the discretisation
            of a flow \emph{along} the balanced set.
            It may be possible to re-interpret this same iterative method
            in a different way as follows.
            First, the detailed geometric understanding of the balanced set,
            in terms of a chart, coordinate vectors, and dual coordinate vectors,
            would need to be extended to the entire state space,
            thus describing both the balanced set \emph{and} the wave set.
            We may then be able to re-interpret this same iterative method geometrically
            as a flow in state space which starts at some (possibly unbalanced) initial state
            and flows \emph{towards} the balanced set,
            only reaching it when the iterative method terminates.
            The appeal of such an interpretation would be that it matches the following physical intuition:
            a balanced state is obtained by starting from \emph{any} state and then progressively removing
            wave components (i.e. imbalances) until we reach the balanced set.
        \item The two metrics/distance functions defined in \fref{Sections}{sec:PDE_centric_approach} and \ref{sec:energy_centric_approach} may potentially be used for other applications, such as adjoint models or data-driven models.
        The ``fixed $H$'' energy from \fref{Section}{sec:PDE_centric_approach} appears in situations where linearisation is used,
        and hence it could find application to adjoint models,
        sensitivity analysis, and variational data assimilation,
        which rely on tangent linear approximations to the
        dynamics
        \cite{le1986variational,park1997validity,errico1997adjoint,errico1999examination,mahfouf1999influence,barkmeijer2001tropical,amerault2008tests,doyle2014initial}.
        The ``fixed $H$'' energy also appears here in the numerical/iterative methods, in \fref{Section}{sec:geometry} where
        geometrical notions relying on tangent spaces (a.k.a. linearisations) are discussed.
        The Snell-like distance in \fref{Section}{sec:energy_centric_approach} here 
        could be useful for applications such as
        data-driven methods that identify nonlinear features 
        in dynamical systems and utilize a metric, distance function,
        and/or kernel function
        \cite{coifman2006diffusion,giannakis2012nonlinear,berry2016variable,das2019delay,giannakis2019data}.
        \item Can the ocean (or other geophysical or astrophysical fluids) have a nonlinear eigenspace decomposition, similar to the atmospheric decomposition presented here? The nonlinearity in the buoyancy here was a nonlinear switch due to clouds, but the blueprint of the techniques here may also be applicable for other types of nonlinearity. For instance, in oceanography, a nonlinear equation of state describes the nonlinear dependence of oceanic buoyancy on temperature and salinity, and plays an important role in a variety of processes \cite{klocker2010influence,hieronymus2013buoyancy,nycander2015nonlinear,roquet2015defining,korn2021global}. Also, PV inversion is used in oceanography as well \cite{muller1995ertel,ochoa2001geostrophy}, and it would be interesting to investigate whether the nonlinear eigenspace decomposition here may be applicable for incorporating the effects of a nonlinear equation of state for seawater.
	\end{itemize}


\section*{Acknowledgments}

The authors are grateful for the support of NSF DMS-1907667.
A.R.-T. gratefully acknowledges support from the Simons Foundation, the Isaac Newton Institute for Mathematical Sciences, and the Department of Applied Mathematics and Theoretical Physics of the University of Cambridge where part of this work was carried out.

\appendix
\section{Appendix}
\label{app:misc}

\subsection{Helmholtz decomposition}
\label{app:Helmholtz}

    In this section we record a precise statement of the Helmholtz decomposition.
    This is a well-known result which plays an important role in this paper and so we record it here, if only to record it in a manner with the notation consistent with the notation in the rest of this paper.

	\begin{theorem}[Helmholtz]
	\label{theorem:Helmholtz}
		For every $u\in L^2$ there exists a unique $p \in \mathring{H}^1 $ and a unique $\sigma \in L^2_\sigma$ such that $u = \sigma + \nabla p$.
	\end{theorem}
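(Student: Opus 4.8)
The plan is to prove the Helmholtz decomposition on the torus $\mathbb{T}^3$ by the standard variational argument, working with Fourier series since the domain is periodic. First I would establish existence of $p$: given $u \in L^2$, consider the functional $J(q) = \tfrac{1}{2}\int_{\mathbb{T}^3} |\nabla q|^2 - \int_{\mathbb{T}^3} u\cdot\nabla q$ over $q \in \mathring{H}^1$. Using the Poincar\'e inequality (valid on $\mathring{H}^1$ because of the zero-average constraint), $J$ is coercive and strictly convex, hence has a unique minimizer $p \in \mathring{H}^1$, characterised by the Euler--Lagrange equation $\int_{\mathbb{T}^3} \nabla p \cdot \nabla q = \int_{\mathbb{T}^3} u \cdot \nabla q$ for all $q \in \mathring{H}^1$, i.e. $\Delta p = \nabla\cdot u$ weakly. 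Then I would set $\sigma := u - \nabla p$ and check that $\sigma$ is divergence-free in the distributional sense: for any $q \in \mathring{H}^1$ (and hence, by density, for any test function up to constants, which do not matter for $\nabla\cdot$), $\int \sigma\cdot\nabla q = \int u\cdot\nabla q - \int\nabla p\cdot\nabla q = 0$. Thus $\sigma \in L^2_\sigma$, giving the decomposition $u = \sigma + \nabla p$.

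For uniqueness, suppose $\sigma_1 + \nabla p_1 = \sigma_2 + \nabla p_2$ with $\sigma_i \in L^2_\sigma$ and $p_i \in \mathring{H}^1$. Then $\nabla(p_1 - p_2) = \sigma_2 - \sigma_1 \in L^2_\sigma$, so the gradient field $\nabla(p_1-p_2)$ is divergence-free, meaning $\Delta(p_1 - p_2) = 0$. Testing against $p_1 - p_2 \in \mathring{H}^1$ gives $\int |\nabla(p_1-p_2)|^2 = 0$, so $\nabla(p_1 - p_2) = 0$; since $p_1 - p_2$ has zero average, $p_1 = p_2$, and consequently $\sigma_1 = \sigma_2$. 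The cleanest packaging of both steps is really the observation that $L^2$ splits orthogonally as $L^2 = \nabla \mathring{H}^1 \oplus L^2_\sigma$, with the two subspaces being exact orthogonal complements of each other in $L^2$; then existence and uniqueness are just the statement that orthogonal projection onto a closed subspace exists and is unique.

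The one point requiring a little care — and the closest thing to an obstacle — is verifying that $\nabla \mathring{H}^1$ is a \emph{closed} subspace of $L^2$, which is what makes the orthogonal-projection argument go through (equivalently, it is what makes $J$ coercive). This is exactly where the Poincar\'e inequality on the torus enters: for $q \in \mathring{H}^1$ one has $\|q\|_{L^2} \lesssim \|\nabla q\|_{L^2}$, so $\|\nabla q\|_{L^2}$ is an equivalent norm on $\mathring{H}^1$ and the map $q \mapsto \nabla q$ is bounded below, hence has closed range. On the Fourier side this is transparent: $\widehat{\nabla q}(k) = 2\pi i k\,\widehat q(k)$, and restricting to $k \neq 0$ (the zero-average condition) the factor $|k|$ is bounded away from zero, so the range is closed. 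Everything else is routine, and I would not belabour it in the write-up beyond citing Poincar\'e and the projection theorem.
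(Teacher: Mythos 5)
Your proof is correct, but it takes a different route from the paper's. The paper's proof is a direct Fourier-space construction: it defines $p$ as the solution of $\Delta p = \nabla\cdot u$, a vector potential $\psi$ solving $\nabla\times(\nabla\times\psi)=\nabla\times u$, and the constant $c=\fint u$, and then verifies $u=\nabla p + \nabla\times\psi + c$ by direct computation on Fourier modes (the explicit representation $\sigma = \nabla\times\psi + c$ is not incidental — it is reused elsewhere in the paper, e.g.\ in the dry Parseval identity and in the div--curl solvability result, so the constructive form earns its keep). You instead obtain $p$ by minimizing $J(q)=\tfrac12\int|\nabla q|^2-\int u\cdot\nabla q$ over $\mathring{H}^1$, define $\sigma := u-\nabla p$ as the residual, and package everything as the orthogonal splitting $L^2 = \nabla\mathring{H}^1 \oplus L^2_\sigma$, with closedness of $\nabla\mathring{H}^1$ supplied by Poincar\'e. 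Your approach is more abstract and arguably cleaner for existence and uniqueness as stated (uniqueness drops out of orthogonal projection, and your check that $\int\sigma\cdot\nabla q=0$ for all $q\in\mathring{H}^1$ implies $\nabla\cdot\sigma=0$ distributionally is handled correctly by subtracting averages from test functions), but it does not produce the vector potential $\psi$ for the solenoidal part; the paper's Fourier construction gives slightly more than the theorem asserts, at the cost of being tied to the periodic setting, while yours transfers verbatim to any setting where Poincar\'e and the projection theorem hold.
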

	\begin{proof}
		We define $p$ to be the solution of $\nabla p = \nabla\cdot u$, $\psi$ to be the solution of $\nabla\times (\nabla\times\psi) = \nabla\times u$,
		and let $c = \fint u$.
		Then $u = \nabla p + \sigma$ for $\sigma = \nabla\times \psi + c$ (as can be directly verified using the Fourier transform since here we are working on the three-dimensional torus).
	\end{proof}

    In particular we will use the following corollary of the Helmholtz decomposition.
 
	\begin{cor}
	\label{cor:Helmholtz}
		For every $u\in L^2$, if $\nabla\times u = 0$ and $\fint u = 0$ then there exists a unique $p\in \mathring{H}^1 $ such that $u = \nabla p$.
	\end{cor}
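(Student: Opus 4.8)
The plan is to derive this corollary directly from the Helmholtz decomposition of \fref{Theorem}{theorem:Helmholtz}. First I would apply the theorem to the given vector field $u\in L^2$: this yields a unique $p\in\mathring{H}^1$ and a unique $\sigma\in L^2_\sigma$ such that $u = \sigma + \nabla p$. The goal is then to show that the hypotheses $\nabla\times u = 0$ and $\fint u = 0$ force $\sigma = 0$, which would give $u = \nabla p$ as desired, with uniqueness inherited from the uniqueness in \fref{Theorem}{theorem:Helmholtz}.

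To show $\sigma = 0$, I would use the two pieces of information separately. Since $\nabla\times(\nabla p) = 0$, taking the curl of $u = \sigma + \nabla p$ and using $\nabla\times u = 0$ gives $\nabla\times\sigma = 0$. Likewise, since $\sigma$ is divergence-free, $\nabla\cdot\sigma = 0$. On the three-dimensional torus, a vector field that is both curl-free and divergence-free must be constant (this is immediate from the Fourier representation: for each nonzero wavenumber $k$, the conditions $k\times\hat\sigma(k) = 0$ and $k\cdot\hat\sigma(k) = 0$ force $\hat\sigma(k) = 0$, so only the $k=0$ mode survives). Hence $\sigma \equiv c$ for some constant $c\in\mathbb{R}^3$. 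Finally, integrating $u = c + \nabla p$ over $\mathbb{T}^3$ and using that $\fint \nabla p = 0$ (as $p$ is periodic) together with the hypothesis $\fint u = 0$ gives $c = 0$, so $\sigma = 0$.

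The only mild subtlety — and the one step worth stating carefully — is the claim that a curl-free, divergence-free $L^2$ vector field on $\mathbb{T}^3$ is constant; this is exactly the Fourier argument alluded to in the proof of \fref{Theorem}{theorem:Helmholtz} and is the crux of why the topology of the torus (no nontrivial harmonic one-forms beyond the constants, up to the de Rham classes, which here are captured precisely by the average) does not cause trouble. Everything else is a routine combination of uniqueness in \fref{Theorem}{theorem:Helmholtz} and elementary manipulations. I do not anticipate any real obstacle; the proof is short.
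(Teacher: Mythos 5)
Your proposal is correct and follows essentially the same route as the paper: apply \fref{Theorem}{theorem:Helmholtz} and show the solenoidal part vanishes because it is determined by the curl and the average of $u$, both of which are zero by hypothesis. The paper states this in two lines, while you spell out the Fourier argument that a curl-free, divergence-free field on $\mathbb{T}^3$ is a constant killed by the zero-average condition; this is exactly the detail implicit in the paper's construction of $\sigma = \nabla\times\psi + c$.
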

	\begin{proof}
		It follows from the Helmholtz decomposition that $u = \nabla p + \sigma$ for some uniquely determined $p$ and $\sigma$.
		Since $\sigma$ is uniquely determined by the curl and the average of $u$, which both vanish, we see that indeed $\sigma = 0$.
	\end{proof}

    To conclude this section we discuss, as was alluded to in the introduction, how
    the Helhmoltz decomposition may be viewed as a slow-fast decomposition for the compressible Euler equations.

	\begin{remark}[The Helmholtz decomposition is a slow-fast decomposition for the compressible Euler equations]
	\label{rmk:Helmholtz_as_slow_fast_for_compressible_Euler}
		In this remark we discuss how the Helmholtz decomposition may be interpreted as a slow-fast decomposition for compressible Euler equations:
		the slow piece is governed by the incompressible Euler equations whereas the fast piece is governed by the wave equation of acoustics.

		As starting point we take the following isentropic compressible Euler equations.
		\begin{subnumcases}{}
			\partial_t \rho + \nabla\cdot (\rho u) = 0 \text{ and } \\
			\rho ( \partial_t u + u\cdot\nabla u) + \frac{1}{\varepsilon^2} \nabla p = 0
		\end{subnumcases}
		where $\varepsilon$ denotes the Mach number.
		Here we are considering the ideal gas law $p = \rho^\gamma$ for some $\gamma > 1$.

		Then, for $D_t \vcentcolon= \partial_t + \cdot\nabla$ denoting the advective derivative,
		\begin{equation*}
			D_t p
			= \gamma\rho^{\gamma - 1}D_t \rho
			= - \gamma\rho^{\gamma-1} \rho \nabla\cdot u
			= -\gamma p\nabla\cdot u
		\end{equation*}
		and so we may rewrite the system in terms of the pressure $p$ and the velocity field $u$ only as
		\begin{subnumcases}{}
			\partial_t p + u\cdot\nabla p + \gamma p \nabla\cdot u = 0 \text{ and } \\
			p^{1/\gamma} ( \partial_t u + u\cdot\nabla u) + \frac{1}{\varepsilon^2} \nabla p = 0.
		\end{subnumcases}
		We may now \emph{symmetrize} the system by changing the pressure coordinate and using instead $q \vcentcolon= \frac{p - \bar{p}}{\varepsilon}$
		for some constant $\bar{p}$, such that $p = \bar{p} + \varepsilon  q$, which yields
		\begin{subnumcases}{}
			\partial_t q + u\cdot\nabla q + \frac{\gamma p}{\varepsilon} \nabla\cdot u = 0 \text{ and } \\
			p^{1/\gamma} ( \partial_t u + u\cdot\nabla u) + \frac{1}{\varepsilon} \nabla q = 0,
		\end{subnumcases}
		which is now symmetrized.

		We now make the two timescale assumption (see \fref{Section}{sec:slow_fast_decomp})
		\begin{equation*}
			\mathcal{X} (t) = \mathcal{X} \left( t,\,\frac{t}{\varepsilon} \right),
		\end{equation*}
		denoting the fast timescale $\tau \vcentcolon= t/\varepsilon$
		\emph{and} we make the expansion assumption
		\begin{equation*}
			\mathcal{X} = \mathcal{X}_0 + \varepsilon \mathcal{X}_1 + O(\varepsilon^2)
		\end{equation*}
		for $\mathcal{X} = u,\,q$.
		In particular we see that
		\begin{equation*}
			p = \bar{p} + \varepsilon q_0 + \varepsilon^2 q_1 + O (\varepsilon^3).
		\end{equation*}
		This yields
		\begin{equation*}
			\frac{1}{\varepsilon} ( \partial_\tau q_0 + \gamma \bar{p} \nabla\cdot u_0 ) + O(1) = 0
		\end{equation*}
		and
		\begin{equation*}
			\bar{p}^{1/\gamma} \cdot\frac{1}{\varepsilon} \partial_\tau u_0 + \frac{1}{\varepsilon} \nabla q_0 + O (1) = 0.
		\end{equation*}
		This leading order this comes down to
		\begin{subnumcases}{}
			\partial_\tau q_0 + \gamma \bar{p} \nabla\cdot u_0 = 0 \text{ and } \\
			\bar{p}^{1/\gamma} \partial_\tau u_0 + \nabla q_0 = 0.
		\end{subnumcases}

		From now on in this remark we will focus on the leading order system, so we drop the subscript and consider
		\begin{subnumcases}{}
			\partial_t q + \gamma \bar{p} \nabla\cdot u = 0 \text{ and } \\
			\bar{p}^{1/\gamma} \partial_t u + \nabla q = 0.
		\end{subnumcases}
		Therefore the energy
		\begin{equation}
		\label{eq:acoustic_energy}
			E = \frac{1}{2} \int \gamma \bar{p}^{1 + 1/\gamma} {\lvert u \rvert}^2 + q^2
		\end{equation}
		is conserved since
		\begin{align*}
			\int \gamma \bar{p}^{1 + 1/\gamma} u\cdot \partial_t u + q \partial_t q
			&= \int \gamma \bar{p}^{1 + 1/\gamma} u \cdot \left( - \frac{1}{\bar{p}^{1/\gamma}} \nabla q \right)
				+ q \cdot \left( -\gamma \bar{p} \nabla\cdot u \right)
		\\
			&= -\gamma \bar{p} \int \left[ u\cdot\nabla q + q(\nabla\cdot u) \right]
			= 0 .
		\end{align*}

		We now set $\gamma = \bar{p} = 1$ for simplicity, reducing the leading order system to
		\begin{subnumcases}{}
			\partial_t q + \nabla\cdot u = 0 \text{ and } \label{eq:Helmholtz_slow_fast_system_a}\\
			\partial_t u + \nabla q = 0 \label{eq:Helmholtz_slow_fast_system_b}
		\end{subnumcases}
		which we may rewrite as
		\begin{equation*}
			\partial_t \begin{pmatrix}
				q \\ u
			\end{pmatrix} = \mathcal{L} \begin{pmatrix}
				q \\ u
			\end{pmatrix} \text{ where } \mathcal{L} = \begin{pmatrix}
				0 & \text{div} \\ \text{grad} & 0
			\end{pmatrix}.
		\end{equation*}
		When $\gamma = \bar{p} = 1$ the conserved energy reduces to the standard $L^2$ inner product.
		Here we note that here the leading order operator $\mathcal{L}$ is skew-symmetric with respect to $L^2$
		since the $L^2$ adjoint of $\text{div}$ is $-\text{grad}$.

		We then decompose ${(L^2)}^3 = {(L^2)}^3_\sigma + \mathring{H}^1 $, i.e. writing $u = \sigma + \nabla\phi$ where $\sigma$ is divergence-free, such that
		\begin{equation*}
			\begin{pmatrix}
				q \\ u
			\end{pmatrix} = \begin{pmatrix}
				q \\ \nabla \phi
			\end{pmatrix} + \begin{pmatrix}
				0 \\ \sigma
			\end{pmatrix}.
		\end{equation*}
		The first term in this decomposition is the fast acoustic term.
		The second term in this decomposition is the slow term obeying incompressible Euler.
		In particular: the kernel of the leading order operator $\mathcal{L}$ is precisely the space $ \left\{ 0 \right\} \times {(L^2)}^3_\sigma$ to which the second component
		belongs, and so this is an honest-to-goodness slow-fast decomposition.
		(We could also readily verify that the image of $\mathcal{L}$ is the space in which the first component lives.)

		In other words: the Helmholtz decomposition is precisely the slow-fast decomposition for this compressible Euler system!

		Moreover considering $ \partial_t \eqref{eq:Helmholtz_slow_fast_system_a} - \nabla\cdot \eqref{eq:Helmholtz_slow_fast_system_b}$ produces
		\begin{equation*}
			( \partial_t^2 - \Delta ) q = 0
		\end{equation*}
		and so, for $S$ denoting the semigroup associated with the wave equation above,
		\begin{equation*}
			q(t) = S \left( \frac{t}{\varepsilon} \right) \bar{q} (t),
		\end{equation*}
		where $\bar{q} (t)$ would then be determined by a solvability condition arising from the next-order dynamics.
	\end{remark}

\subsection{Useful computations, identities, and facts}
\label{app:useful_facts}

        In this section we record a variety of computations, identities, and facts which are of use throughout the paper.

        First we record a useful splitting of the curl operator into its horizontal and vertical components.

	\begin{lemma}[Horizontal-vertical decomposition of the curl]
	\label{lemma:horizontal_vertical_decomposition_of_the_curl}
		For any sufficiently regular vector field $v$, $\nabla \times v = \partial_3 v_h^\perp - \nabla_h^\perp v_3 + ( \nabla_h^\perp \cdot v_h) e_3$.
	\end{lemma}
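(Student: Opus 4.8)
The statement is a purely computational identity for the curl of a three-dimensional vector field, decomposed according to the horizontal/vertical splitting $v = v_h + v_3 e_3$, where $v_h = (v_1,\, v_2,\, 0)$. The plan is to expand both sides in Cartesian coordinates and match components. First I would recall the component-wise definition of the curl, namely that for $v = (v_1,\, v_2,\, v_3)$ one has $\nabla \times v = (\partial_2 v_3 - \partial_3 v_2,\; \partial_3 v_1 - \partial_1 v_3,\; \partial_1 v_2 - \partial_2 v_1)$. Then I would write out the three terms on the right-hand side: the quantity $v_h^\perp = (-v_2,\, v_1,\, 0)$, so $\partial_3 v_h^\perp = (-\partial_3 v_2,\, \partial_3 v_1,\, 0)$; the quantity $\nabla_h^\perp v_3 = (-\partial_2 v_3,\, \partial_1 v_3,\, 0)$; and the quantity $(\nabla_h^\perp \cdot v_h) e_3 = (\partial_1 v_2 - \partial_2 v_1) e_3 = (0,\, 0,\, \partial_1 v_2 - \partial_2 v_1)$, using the convention $\nabla_h^\perp \cdot w = \partial_1 w_2 - \partial_2 w_1$ consistent with $w^\perp = (-w_2,\, w_1,\, 0)$ and $\nabla_h^\perp = (-\partial_2,\, \partial_1,\, 0)$.

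The next step is simply to add the three right-hand-side contributions component by component. The first component is $-\partial_3 v_2 - (-\partial_2 v_3) + 0 = \partial_2 v_3 - \partial_3 v_2$, which matches the first component of $\nabla \times v$. The second component is $\partial_3 v_1 - \partial_1 v_3 + 0 = \partial_3 v_1 - \partial_1 v_3$, matching the second component. The third component is $0 - 0 + (\partial_1 v_2 - \partial_2 v_1) = \partial_1 v_2 - \partial_2 v_1$, matching the third. Hence the identity holds. The only subtlety worth flagging explicitly is the sign convention for $\nabla_h^\perp$ and $\nabla_h^\perp \cdot$: one must be careful that $w_h^\perp$ denotes the $\tfrac{\pi}{2}$-rotation $(-w_2,\, w_1,\, 0)$ as fixed earlier in the paper, so that $\nabla_h^\perp$ acting on a scalar gives $(-\partial_2,\, \partial_1,\, 0)$ and the divergence-type operator $\nabla_h^\perp \cdot$ acting on a horizontal vector gives $\partial_1(\cdot)_2 - \partial_2(\cdot)_1$. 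Once these conventions are pinned down, the verification is mechanical.

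There is no real obstacle here — the identity is an elementary rearrangement — so the "hard part" is merely bookkeeping: keeping the signs of the perpendicular operators straight and making sure the regularity hypothesis (that $v$ is, say, $H^1$ or smooth enough for the distributional derivatives to make sense) is stated so that all the partial derivatives appearing are well-defined. I would phrase the proof as a one-paragraph direct computation, displaying the three right-hand-side terms and their sum, and concluding by comparison with the coordinate expression of $\nabla \times v$.
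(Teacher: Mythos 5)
Your proof is correct and is essentially the same as the paper's: the paper simply writes out the coordinate expression $\nabla\times v = (\partial_2 v_3 - \partial_3 v_2,\, \partial_3 v_1 - \partial_1 v_3,\, \partial_1 v_2 - \partial_2 v_1)$ and groups terms, which is exactly your component-by-component verification with the sign conventions for $v_h^\perp$ and $\nabla_h^\perp$ made explicit.
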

	\begin{proof}
		Since $\nabla\times v = (\partial_2 v_3 - \partial_3 v_2,\, \partial_3 v_1 - \partial_1 v_3,\, \partial_1 v_2 - \partial_2 v_1)$ the result follows from appropriately grouping terms.
	\end{proof}

        We now record a couple of identities concerning the divergence and curl of
        a vector field of the form of the ``good unknowns'' (see \fref{Remark}{rmk:good_unknown}). This is a very useful identity since it helps us tie the good unknown to the potential vorticity $PV$ and to the thermal wind imbalance $j$ (see also \fref{Remark}{rmk:good_unknown}).
 
	\begin{lemma}
	\label{lemma:div_and_curl_of_good_unknown}
		Consider a sufficiently regular vector field $v$ and scalar field $\phi$. Then
		\begin{equation*}
			\nabla\cdot ( - v_h^\perp + \phi e_3 ) = \nabla_h^\perp \cdot v_h + \partial_3 \phi
			\text{ and }
			\nabla\times ( -v_h^\perp + \phi e_3 ) = \partial_3 v_h - \nabla_h^\perp \phi - (\nabla_h\cdot v_h) e_3.
		\end{equation*}
		In particular if $v$ is divergence-free then $\nabla\times ( -v_h^\perp + \phi e_3 ) = \partial_3 v_h - \nabla_h^\perp \phi + ( \partial_3 v_3) e_3$.
	\end{lemma}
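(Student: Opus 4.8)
The statement to prove is Lemma~\ref{lemma:div_and_curl_of_good_unknown}, which asserts two vector-calculus identities for the ``good unknown'' $-v_h^\perp + \phi e_3$: first that its divergence is $\nabla_h^\perp\cdot v_h + \partial_3\phi$, and second that its curl is $\partial_3 v_h - \nabla_h^\perp\phi - (\nabla_h\cdot v_h)e_3$, together with the simplification of the curl formula when $v$ is divergence-free.

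The plan is to prove this by direct coordinate computation, exactly as in the preceding \fref{Lemma}{lemma:horizontal_vertical_decomposition_of_the_curl}. First I would spell out the vector field componentwise: writing $v = (v_1, v_2, v_3)$, we have $v_h^\perp = (-v_2, v_1, 0)$, so $-v_h^\perp + \phi e_3 = (v_2, -v_1, \phi)$. For the divergence, I would simply compute $\partial_1 v_2 + \partial_2(-v_1) + \partial_3\phi = \partial_1 v_2 - \partial_2 v_1 + \partial_3\phi$, and then recognize that $\partial_1 v_2 - \partial_2 v_1 = \nabla_h^\perp\cdot v_h$ (since $\nabla_h^\perp = (-\partial_2, \partial_1)$ acting on $v_h = (v_1, v_2)$ gives $-\partial_2 v_1 + \partial_1 v_2$). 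That yields the first identity immediately.

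For the curl, I would apply the standard formula $\nabla\times w = (\partial_2 w_3 - \partial_3 w_2,\ \partial_3 w_1 - \partial_1 w_3,\ \partial_1 w_2 - \partial_2 w_1)$ to $w = (v_2, -v_1, \phi)$. The three components come out to $\partial_2\phi + \partial_3 v_1$, $\partial_3 v_2 - \partial_1\phi$, and $-\partial_1 v_1 - \partial_2 v_2$. I would then identify $(\partial_3 v_1, \partial_3 v_2)$ as $\partial_3 v_h$ in the horizontal slots, identify $(\partial_2\phi, -\partial_1\phi)$ as $-\nabla_h^\perp\phi$ (again using $\nabla_h^\perp = (-\partial_2,\partial_1)$), and identify $-\partial_1 v_1 - \partial_2 v_2 = -(\nabla_h\cdot v_h)$ in the vertical slot, giving $\partial_3 v_h - \nabla_h^\perp\phi - (\nabla_h\cdot v_h)e_3$. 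Finally, for the divergence-free case, I would use $\nabla\cdot v = \nabla_h\cdot v_h + \partial_3 v_3 = 0$, so $-\nabla_h\cdot v_h = \partial_3 v_3$, and substitute this into the vertical component to obtain the stated simplification $\partial_3 v_h - \nabla_h^\perp\phi + (\partial_3 v_3)e_3$.

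There is really no substantive obstacle here — this is a routine bookkeeping computation, and the only thing to be careful about is sign conventions, namely keeping straight that $v_h^\perp = (-v_2, v_1, 0)$ and $\nabla_h^\perp = (-\partial_2, \partial_1, 0)$, so that the perpendicular operator introduces the expected minus signs consistently. If one wished to be slicker, one could also observe that the first identity follows from \fref{Lemma}{lemma:horizontal_vertical_decomposition_of_the_curl} applied in a dual form, but the direct computation is shortest and most transparent, so that is the route I would take.
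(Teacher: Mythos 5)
Your proof is correct, and it is essentially the same elementary verification the paper relies on: the paper's one-line proof simply invokes \fref{Lemma}{lemma:horizontal_vertical_decomposition_of_the_curl} (applied to $w=-v_h^\perp+\phi e_3$, using $(v_h^\perp)^\perp=-v_h$ and $\nabla_h^\perp\cdot v_h^\perp=\nabla_h\cdot v_h$), which is itself established by exactly the componentwise computation you carry out directly. Your sign bookkeeping with $v_h^\perp=(-v_2,v_1,0)$ and $\nabla_h^\perp=(-\partial_2,\partial_1,0)$ is consistent with the paper's conventions, so nothing is missing.
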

    \begin{proof}
        This follows immediately from \fref{Lemma}{lemma:horizontal_vertical_decomposition_of_the_curl}.
    \end{proof}

        We now record a useful algebraic inversion which tells us that we may equivalently
        characterise the thermodynamic variables $\theta$ and $q$ by instead using the
        buoyancy $b = \theta - {\min}_0\,q$ and the moist variable $M = \theta + q$.

	\begin{lemma}
	\label{lemma:invert_b_and_M}
		Given any two scalars $b$ and $M$ the nonlinear algebraic system
		\begin{equation*}
			\left\{
			\begin{aligned}
				& \theta + q = M \text{ and } \\
				& \theta - {\min}_0\, q = b
			\end{aligned}
			\right.
		\end{equation*}
		is invertible, with solution given uniquely by
		\begin{equation*}
			\left\{
			\begin{aligned}
				& \theta = b + \frac{1}{2} {\min}_0\, (M - b) \text{ and } \\
				& q = M - b - \frac{1}{2} {\min}_0\, (M - b).
			\end{aligned}
			\right.
		\end{equation*}
	\end{lemma}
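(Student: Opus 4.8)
The plan is to verify the claimed formula by direct substitution, handling the nonlinear term $\min_0$ by splitting into the two cases $M - b \geqslant 0$ and $M - b < 0$. First I would observe that the system is triangular in a weak sense: once $q$ is known, $\theta$ is determined by the first equation via $\theta = M - q$, so the whole problem reduces to solving a single scalar equation for $q$. Substituting $\theta = M - q$ into the second equation gives $M - q - \min_0 q = b$, i.e.\ $q + \min_0 q = M - b$. The function $g(q) \vcentcolon= q + \min_0 q$ is continuous, piecewise linear, and strictly increasing (it equals $2q$ for $q < 0$ and $q$ for $q \geqslant 0$), hence a bijection from $\mathbb{R}$ to $\mathbb{R}$; this already gives existence and uniqueness. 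It then remains only to check that the inverse of $g$ is $s \mapsto s - \tfrac{1}{2}\min_0 s$, which produces the stated formula $q = (M-b) - \tfrac{1}{2}\min_0(M-b)$ and then $\theta = M - q = b + \tfrac{1}{2}\min_0(M-b)$.

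To verify $g^{-1}(s) = s - \tfrac12 \min_0 s$, I would again split on the sign of $s = M - b$. If $s \geqslant 0$, then $s - \tfrac12 \min_0 s = s \geqslant 0$, so $\min_0(g^{-1}(s)) = 0$ and $g(g^{-1}(s)) = s + 0 = s$, as required; correspondingly $q = s \geqslant 0$ and $\theta = b$. If $s < 0$, then $s - \tfrac12 \min_0 s = s - \tfrac{s}{2} = \tfrac{s}{2} < 0$, so $\min_0(\tfrac{s}{2}) = \tfrac{s}{2}$ and $g(\tfrac{s}{2}) = \tfrac{s}{2} + \tfrac{s}{2} = s$, again as required; correspondingly $q = \tfrac{s}{2} < 0$ and $\theta = b + \tfrac{s}{2}$. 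In both cases one checks directly that $\theta + q = (M - q) + q = M$ and $\theta - \min_0 q = b$, confirming that the proposed pair solves the original system.

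I do not anticipate a genuine obstacle here — the statement is an elementary algebraic inversion and the only subtlety is bookkeeping the two phases consistently. The one point worth stating carefully in the write-up is the strict monotonicity of $g(q) = q + \min_0 q$, since that is what simultaneously delivers uniqueness and the existence of a well-defined inverse; everything else is substitution. If desired, one could also phrase the argument purely by the observation that $\mathrm{id} + \min_0$ and $\mathrm{id} - \tfrac12 \min_0$ are mutually inverse (this is exactly the identity $(1+H)(1 - H/2) = 1$ for $H = \mathds{1}(\,\cdot\, < 0)$ already used in \fref{Remark}{rmk:A_H}), which makes the proof a two-line consequence of that algebraic fact applied to the argument $M - b$.
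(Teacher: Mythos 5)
Your proof is correct, and it is essentially the paper's argument viewed from a slightly different angle. The paper's own proof is exactly the alternative you mention in your last sentence: introduce $H = \mathds{1}(q<0)$, treat it as fixed so the system becomes linear, solve to get $\theta = b + \tfrac{1}{2}H(M-b)$ and $q = (1-H/2)(M-b)$, and then note $\sign q = \sign(M-b)$ so that $H(M-b) = {\min}_0\,(M-b)$. Your reduction to the single scalar equation $q + {\min}_0\, q = M-b$ with $g = \mathrm{id} + {\min}_0$ strictly increasing buys one small thing the paper's write-up leaves implicit: uniqueness (and surjectivity) follow at once from strict monotonicity of $g$, rather than from the sign-consistency check after the fixed-$H$ inversion. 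Either presentation is fine; yours is complete as written.
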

	\begin{proof}
		We can do this easily by introducing $H = \mathds{1} (q<0)$.
		Treating $H$ as fixed, the system then becomes linear (and invertible) and we obtain
		\begin{equation*}
			\left\{
			\begin{aligned}
				& \theta =  b + \frac{1}{2} H (M-b) \text{ and } \\
				& q = (1 - H/2) (M-b).
			\end{aligned}
			\right.
		\end{equation*}
		In particular we note that $\sign q = \sign (M - b)$ and so $H(M-b) = {\min}_0\, (M-b)$, from which the claim follows.
	\end{proof}

        Below we relate another identity related to ``good unknowns''.
        This time we obtain an identity for the vector field $-u_h^\perp + (\theta-qH) e_3$
        which appears in the ``fixed $H$'' version of the measurement-to-coordinates map
        (see \fref{Lemma}{lemma:def_and_invertibility_of_C_H} and \fref{Remark}{rmk:def_C_H}).
 
	\begin{lemma}
		\label{lemma:identity_lin_buoyancy_and_xi}
		If, for some indicator function $H$,
		\begin{align*}
			u &= \nabla_h^\perp p + \wavecoord_h^\perp + we_3,\\
			\theta &= \partial_3 p + \frac{H}{2} (M - \partial_3 p) + \wavecoord_3, \text{ and } \\
			q &= M - \partial_3 p - \frac{H}{2} (M - \partial_3 p) - \wavecoord_3
		\end{align*}
		for some divergence-free $\wavecoord$ satisfying $\nabla_h^\perp \cdot \wavecoord_h = \partial_3 w$
		then
		\begin{equation*}
			- u_h^\perp + (\theta - qH) e_3 = \nabla p + A_H^{-1} \wavecoord
		\end{equation*}
		where $A_H = I - \frac{H}{2} e_3\otimes e_3$ and so $A_H^{-1} = I _ H e_3\otimes e_3$
		(since $(1-H/2)(1+H) \equiv 1$).
	\end{lemma}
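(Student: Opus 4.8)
The plan is to verify the claimed identity component-by-component, exploiting the algebraic fact $(1 - H/2)(1 + H) \equiv 1$ for any indicator $H$ (which is how $A_H^{-1} = I + H e_3 \otimes e_3$ inverts $A_H = I - \frac{H}{2} e_3 \otimes e_3$). First I would unpack both sides of the target equation. On the left, the horizontal part of $-u_h^\perp + (\theta - qH) e_3$ is simply $-u_h^\perp$, while its vertical part is $\theta - qH$. Using the hypothesis $u = \nabla_h^\perp p + \wavecoord_h^\perp + w e_3$ we get $u_h = \nabla_h^\perp p + \wavecoord_h^\perp$, so $-u_h^\perp = -(\nabla_h^\perp p)^\perp - (\wavecoord_h^\perp)^\perp = \nabla_h p + \wavecoord_h$ (since applying the $\frac{\pi}{2}$-rotation twice negates, and $(\nabla_h^\perp p)^\perp = -\nabla_h p$). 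On the right, the horizontal part of $\nabla p + A_H^{-1} \wavecoord$ is $\nabla_h p + \wavecoord_h$ because $A_H^{-1} = I + H e_3 \otimes e_3$ acts as the identity on horizontal components. So the horizontal parts already match.

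The substance of the proof is the vertical component. I would compute $\theta - qH$ directly from the hypotheses:
\begin{align*}
	\theta - qH
	&= \left[ \partial_3 p + \frac{H}{2}(M - \partial_3 p) + \wavecoord_3 \right]
		- H \left[ M - \partial_3 p - \frac{H}{2}(M - \partial_3 p) - \wavecoord_3 \right].
\end{align*}
Expanding and using $H^2 = H$, the terms involving $M - \partial_3 p$ collect as $\frac{H}{2}(M-\partial_3 p) - H(M - \partial_3 p) + \frac{H}{2}(M - \partial_3 p) = 0$, while the $\wavecoord_3$ terms give $\wavecoord_3 + H \wavecoord_3 = (1 + H)\wavecoord_3$, leaving $\theta - qH = \partial_3 p + (1 + H)\wavecoord_3$. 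On the other side, the vertical component of $\nabla p + A_H^{-1}\wavecoord$ is $\partial_3 p + (A_H^{-1}\wavecoord)_3 = \partial_3 p + (1 + H)\wavecoord_3$, since $A_H^{-1} = I + H e_3 \otimes e_3$ scales the third component by $1 + H$. The two expressions agree, which completes the verification.

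I do not anticipate a genuine obstacle here: the statement is an identity whose proof is a short algebraic manipulation, and the only mild subtlety is keeping track of the $\perp$-rotation conventions (that $(v_h^\perp)^\perp = -v_h$ and $(\nabla_h^\perp p)^\perp = -\nabla_h p$) and the idempotency $H^2 = H$. The hypotheses that $\wavecoord$ is divergence-free and satisfies $\nabla_h^\perp \cdot \wavecoord_h = \partial_3 w$ are in fact not needed for this particular identity — they are carried along because the lemma is invoked in contexts (such as \fref{Proposition}{prop:rewrite_the_energy} and \fref{Corollary}{cor:decomposition}) where $\wavecoord$ does satisfy them — so I would simply note that the identity holds for any $\wavecoord$, $p$, $M$, $H$ subject only to $H$ being an indicator function. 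If desired, one can also package the computation more slickly by observing that the hypotheses say precisely $(u, \theta, q) = \mathcal{S}_H(p, M, \wavecoord, w) = (D_H\Phi)(p,M) + \Psi(\wavecoord, w)$ in the notation of \fref{Definition}{def:formal_diff_structure_on_balanced_set} and \fref{Proposition}{prop:chara_im_N}, and then the identity is the statement that the ``good unknown'' map intertwines this parametrisation with the affine map $(p, M, \wavecoord, w) \mapsto \nabla p + A_H^{-1}\wavecoord$ — but the direct component-wise check above is the most transparent route.
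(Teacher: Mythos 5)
Your proof is correct and follows essentially the same route as the paper's: the horizontal part reduces to $-u_h^\perp = \nabla_h p + \wavecoord_h$ via the rotation identity ${(v_h^\perp)}^\perp = -v_h$, and the vertical part is the direct computation $\theta - qH = \partial_3 p + (1+H)\wavecoord_3$ using $H^2 = H$, which matches $(\nabla p + A_H^{-1}\wavecoord)_3$. Your side remark that the divergence-free and $\nabla_h^\perp\cdot\wavecoord_h = \partial_3 w$ hypotheses are not actually used in this pointwise identity is also accurate.
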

	\begin{proof}
		Since ${\left(v^\perp\right)}^\perp = -v_h$ we see immediately that $-u_h^\perp = \nabla_h p + \wavecoord_h$.
		Then we compute that
		\begin{equation*}
			\theta - qH
			= \left( 1 - \frac{H}{2} \right)\partial_3 p + \frac{H}{2} M + \wavecoord_3 - \frac{1}{2}H (M-\partial_3 p) + H \wavecoord_3
			= \partial_3 p + (1 + H)\wavecoord_3.
		\end{equation*}
		The claim follows from combining these two observations.
	\end{proof}

        We now record a well-posedness result for a div-curl system
        involving an elliptic matrix.
        This is used to verify that the ``fixed $H$'' version of the measurement--to--coordinate map is invertible.

	\begin{prop}[Solvability of a div-curl system]
	\label{prop:solvability_div_curl_system}
		For any uniformly elliptic $B\in L^\infty$, $f\in { \left( \mathring{H}^1_\sigma \right)}^*$, and $g\in \mathbb{R}^3$
		there is a unique $\wavecoord \in { \left( L^2 \right)}^3_\sigma$ satisfying
		\begin{equation}
			\nabla\times(B\wavecoord) = f,\, \nabla\cdot\wavecoord = 0, \text{ and } \fint \wavecoord = g.
		\label{eq:solvability_div_curl_system}
		\end{equation}
	\end{prop}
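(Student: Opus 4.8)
The plan is to reduce the div-curl system \eqref{eq:solvability_div_curl_system} to a well-posed scalar (or vector-potential) elliptic problem via the three-dimensional Helmholtz decomposition, exploiting the fact that $\wavecoord$ is divergence-free and thus determined by its curl and average. First I would observe that since we require $\nabla\cdot\wavecoord = 0$ and $\fint\wavecoord = g$, we may write $\wavecoord = g + \nabla\times\psi$ for some vector potential $\psi$, which we are free to take divergence-free (by the gauge freedom in the Helmholtz decomposition, \fref{Theorem}{theorem:Helmholtz}). The unknown is then $\psi$, and the constraint to satisfy is $\nabla\times(B\wavecoord) = f$, i.e. $\nabla\times\bigl(B(g + \nabla\times\psi)\bigr) = f$. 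This is a second-order elliptic system for $\psi$: expanding, $\nabla\times\bigl(B\,\nabla\times\psi\bigr) = f - \nabla\times(Bg)$, and the operator $\psi \mapsto \nabla\times(B\,\nabla\times\psi)$ restricted to divergence-free $\psi$ with vanishing average is coercive on $\mathring{H}^1_\sigma$ precisely because $B$ is uniformly elliptic (so $\int B(\nabla\times\psi)\cdot(\nabla\times\psi) \gtrsim \norm{\nabla\times\psi}{L^2}^2 \gtrsim \norm{\psi}{\mathring{H}^1}^2$, the last step by the div-curl Poincaré/Gaffney-type inequality on the torus for divergence-free mean-zero fields).

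The key steps, in order, are: (1) reformulate via $\wavecoord = g + \nabla\times\psi$ with $\psi \in \mathring{H}^1_\sigma$, reducing \eqref{eq:solvability_div_curl_system} to finding $\psi$ with $\nabla\times(B\,\nabla\times\psi) = f - \nabla\times(Bg)$ in the weak sense; (2) set up the bilinear form $a(\psi,\phi) \vcentcolon= \int_{\mathbb{T}^3} B\,(\nabla\times\psi)\cdot(\nabla\times\phi)$ on $\mathring{H}^1_\sigma \times \mathring{H}^1_\sigma$ and the linear functional $\ell(\phi) \vcentcolon= \langle f,\phi\rangle - \int Bg\cdot(\nabla\times\phi)$, checking that $\ell \in (\mathring{H}^1_\sigma)^*$ (here one uses $f \in (\mathring{H}^1_\sigma)^*$, $B \in L^\infty$, $g$ constant); (3) verify continuity of $a$ (immediate from $B \in L^\infty$) and coercivity of $a$ on $\mathring{H}^1_\sigma$ using uniform ellipticity of $B$ together with the inequality $\norm{\nabla\psi}{L^2} \lesssim \norm{\nabla\times\psi}{L^2}$ valid for divergence-free mean-zero $\psi$ on $\mathbb{T}^3$ (most cleanly checked in Fourier space, as the paper does repeatedly); (4) apply Lax–Milgram to get a unique $\psi \in \mathring{H}^1_\sigma$, then define $\wavecoord \vcentcolon= g + \nabla\times\psi$ and check it lies in $(L^2)^3_\sigma$ with $\nabla\cdot\wavecoord = 0$ and $\fint\wavecoord = g$ by construction; (5) for uniqueness of $\wavecoord$ itself, note that if $\wavecoord_1,\wavecoord_2$ both solve \eqref{eq:solvability_div_curl_system} then $\wavecoord_1 - \wavecoord_2$ is divergence-free, mean-zero, and satisfies $\nabla\times(B(\wavecoord_1-\wavecoord_2)) = 0$; pairing against $\wavecoord_1 - \wavecoord_2$ (which is legitimate since $B(\wavecoord_1-\wavecoord_2)$ has a vector potential) and integrating by parts gives $\int B(\wavecoord_1-\wavecoord_2)\cdot(\wavecoord_1-\wavecoord_2) = 0$, whence $\wavecoord_1 = \wavecoord_2$ by ellipticity.

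The main obstacle I expect is step (5), the uniqueness argument, and more precisely the justification that one may test the equation $\nabla\times(B(\wavecoord_1-\wavecoord_2))=0$ against $\wavecoord_1-\wavecoord_2$ and integrate by parts cleanly: since $B(\wavecoord_1-\wavecoord_2)$ need not itself be divergence-free, one must argue that a curl-free mean-zero $L^2$ field is a gradient (by \fref{Corollary}{cor:Helmholtz}), hence orthogonal to the divergence-free $\wavecoord_1-\wavecoord_2$, giving $\int B(\wavecoord_1-\wavecoord_2)\cdot(\wavecoord_1-\wavecoord_2) = \int \nabla(\text{something})\cdot(\wavecoord_1-\wavecoord_2) = 0$—wait, this needs care, since it is $\nabla\times(B(\cdot))$ that vanishes, not $B(\cdot)$ being curl-free of something tested against. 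The cleanest route is instead: uniqueness of $\wavecoord$ follows from uniqueness of $\psi$ in the Lax–Milgram problem together with the fact that the map $\psi \mapsto g + \nabla\times\psi$ is injective on $\mathring{H}^1_\sigma$ and its image is exactly $\{\wavecoord \in (L^2)^3_\sigma : \fint\wavecoord = g\}$, which is precisely the constraint set; any solution of \eqref{eq:solvability_div_curl_system} lies in this set and hence corresponds to some $\psi$, which must be \emph{the} Lax–Milgram solution. This sidesteps the delicate integration by parts entirely. The remaining routine work (Fourier-space verification of the Poincaré-type bound $\norm{\psi}{\mathring{H}^1} \lesssim \norm{\nabla\times\psi}{L^2}$ on divergence-free mean-zero fields, and the bookkeeping that $\ell$ is bounded) I would relegate to a short computation.
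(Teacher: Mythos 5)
Your proof is correct, and it takes a genuinely different route from the paper's. You decompose $\wavecoord$ itself, writing $\wavecoord = g + \nabla\times\psi$ with $\psi \in \mathring{H}^1_\sigma$, and solve one variable-coefficient variational problem by Lax--Milgram for $a(\psi,\phi) = \int B(\nabla\times\psi)\cdot(\nabla\times\phi)$; coercivity follows from uniform ellipticity together with the Fourier identity $\norm{\nabla\times\psi}{L^2} = \norm{\nabla\psi}{L^2}$ for divergence-free mean-zero fields plus Poincar\'e, and your second uniqueness argument (the map $\psi \mapsto g + \nabla\times\psi$ is a bijection from $\mathring{H}^1_\sigma$ onto the constraint set, so any solution corresponds to the unique Lax--Milgram $\psi$) is the right fix --- your first attempt, testing against $\wavecoord_1 - \wavecoord_2$, would indeed be illegitimate since that field is only $L^2$. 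The paper instead decomposes $B\wavecoord$ by Helmholtz: it solves the constant-coefficient problem $\nabla\times(\nabla\times\psi) = f$, then the scalar divergence-form equation $\nabla\cdot(B^{-1}\nabla\pi) = -\nabla\cdot\left[B^{-1}(\nabla\times\psi + g)\right]$ to enforce $\nabla\cdot\wavecoord = 0$, and sets $\wavecoord = B^{-1}(\nabla\pi + \nabla\times\psi + g)$, with uniqueness read off from uniqueness of the Helmholtz decomposition of $B\wavecoord$. Your route buys a single self-contained variational problem and makes the natural weak formulation (testing against $\mathring{H}^1_\sigma$) explicit; the paper's route is modular, reusing only standard constant-coefficient curl--curl and scalar elliptic solves. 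One caveat worth noting: the paper's own proof actually enforces $\fint B\wavecoord = g$ (that is what it checks in both the existence and uniqueness steps, and it matches the downstream use $\fint A_H^{-1}\wavecoord = a$ in \fref{Lemma}{lemma:def_and_invertibility_of_C_H}), whereas you prove the statement as literally written, with $\fint \wavecoord = g$; adapting your ansatz to the $\fint B\wavecoord = g$ normalization is possible, but the constant part of $\wavecoord$ then no longer decouples from $\psi$ and must be determined alongside it.
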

	\begin{proof}
		First we show that a solution exists.
		We let $\psi \in \mathring{H}^1_\sigma$ be the unique solution of
		\begin{equation*}
			\nabla\times(\nabla\times\psi) = f
		\end{equation*}
		and then let $\pi\in \mathring{H}^1 $ be the unique solution of
		\begin{equation*}
			\nabla\cdot \left( B^{-1} \nabla\pi \right) = -\nabla\cdot \left[ B^{-1} \left( \nabla\times \psi + g \right) \right].
		\end{equation*}
		Then
		\begin{equation*}
			\wavecoord \vcentcolon= B^{-1} \left( \nabla \pi + \nabla\times \psi + g \right)
		\end{equation*}
		is a solution.
		Indeed
		\begin{equation*}
			\fint B\wavecoord = g
			\text{ and } 
			\nabla\times (B\wavecoord) = \nabla\times (\nabla\times \psi) = f
		\end{equation*}
		while
		\begin{equation*}
			\nabla\cdot \wavecoord 
			= \nabla\cdot \left( B^{-1} \nabla\pi \right) + \nabla\cdot \left[ B^{-1} \left( \nabla\times \psi + g \right) \right]
			= 0.
		\end{equation*}

		The uniqueness of a solution then follows from the uniqueness of the Helmholtz decomposition
		since we have characterized the solution $\wavecoord$ by the Helmholtz decomposition of $B\wavecoord$.
		If $\wavecoord_1$ and $\wavecoord_2$ are solutions of \eqref{eq:solvability_div_curl_system} then their Helmholtz decompositions
		\begin{equation*}
			B\wavecoord_i = \nabla\pi_i + \nabla\times\psi_i + c_i,
		\end{equation*}
		for $i=1,\,2$, where $c_i \in \mathbb{R}^3$, satisfy
		\begin{equation*}
			c_1 = \fint B\wavecoord_1 = g = \fint B\wavecoord_2 = c_2
		\end{equation*}
		and
		\begin{equation*}
			\nabla\times (\nabla\times \psi_1)
			= \nabla\times (B\wavecoord_1)
			= f
		 	= \nabla\times (B\wavecoord_2)
			= \nabla\times (\nabla\times \psi_2),
		\end{equation*}
		which ensures that $c_1 = c_2$ and $\psi_1 = \psi_2$.
		This means that
		\begin{equation*}
			B(\wavecoord_1 - \wavecoord_2) = \nabla (\pi_1 - \pi_2)
		\end{equation*}
		and so
		\begin{equation*}
			\nabla\cdot \left[ B^{-1} \nabla (\pi_1 - \pi_2) \right]
			= \nabla\cdot (\wavecoord_1 - \wavecoord_2)
			= 0,
		\end{equation*}
		which also ensures that $\pi_1 = \pi_2$, and hence that $\wavecoord_1 = \wavecoord_2$.
	\end{proof}

        The result below pertains to the leading order dynamics discussed in \fref{Section}{sec:slow_fast_decomp}.
        It tells us how the balanced and wave sets relate to the dynamics of solutions of
        $\partial_t + \mathcal{N}$.

	\begin{prop}
	\label{prop:dyn_pdt_plus_N}
		Let $\mathcal{X} = (u,\,\theta,\,q)$ be a solution of $\partial_t \mathcal{X} + \mathcal{N} \mathcal{X} = 0$ for $\mathcal{N}$ as defined in \eqref{eq:def_N}.
		Then
		\begin{equation*}
			\mathcal{X} = \underbrace{ \begin{pmatrix}
				\nabla_h^\perp p \\
				\partial_3 p + \frac{1}{2} {\min}_0\, (M- \partial_3 p)\\
				M - \partial_3 p - \frac{1}{2} {\min}_0\, (M - \partial_3 p)
			\end{pmatrix}}_{ \mathcal{X}_B } + \underbrace{ \begin{pmatrix}
				\wavecoord_h^\perp + we_3 \\ \wavecoord_3 \\ - \wavecoord_3
			\end{pmatrix}}_{ \mathcal{X}_W}
		\end{equation*}
		where $ \mathcal{X}_B \in \mathcal{B}$ and $ \mathcal{X}_W \in \mathcal{W}$ for $\mathcal{B}$ and $\mathcal{W}$ as in \fref{Propositions}{prop:alt_char_balanced_set} and \ref{prop:chara_im_N},
		respectively, for $M \vcentcolon= \theta + q$ and $p$ the solution of
		\begin{equation*}
			\Delta p + \frac{1}{2} \partial_3 {\min}_0\, (M-\partial_3 p) = \underbrace{ \nabla_h^\perp \cdot u_h + \partial_3 \theta}_{PV},
		\end{equation*}
		and where $\mathcal{X}_W \vcentcolon= \mathcal{X} - \mathcal{X}_B$.
		Moreover
		\begin{equation*}
			\partial_t \mathcal{X}_B = 0
			\text{ and } 
			\partial_t \mathcal{X}_W = \begin{pmatrix}
				\mathbb{P}_L ( -\wavecoord_h - (1+H_B)\wavecoord_3 e_3) \\ w \\ -w
			\end{pmatrix} =\vcentcolon \mathcal{L}_{W,\,H_B} \mathcal{X}_W \in \mathcal{W}
		\end{equation*}
		for $H_B \vcentcolon= \mathds{1} (q_B < 0) = \mathds{1} (M < \partial_3 p)$.
		Finally: if $ \mathcal{X}_W = (\wavecoord_h,\,0,\,0)$ then
		\begin{equation*}
			\mathcal{L}^2_{W,\,H_B} \mathcal{X}_W = - \mathcal{X}_W
		\end{equation*}
		and if $ \mathcal{X}_W = (we_3,\,\wavecoord_3,\,-\wavecoord_3)$ then
		\begin{equation*}
			\mathcal{L}^2_{W,\,H_B} \mathcal{X}_W = - (1+H_B) \mathcal{X}_W.
		\end{equation*}
		In other words: the $\mathcal{B}$--component is constant whereas the dynamics of the $\mathcal{W}$--component induce an $ \mathcal{X}_B$--dependent dynamical system on $\mathcal{W}$.
	\end{prop}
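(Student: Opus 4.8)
\textbf{Proof proposal for \fref{Proposition}{prop:dyn_pdt_plus_N}.}

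The plan is to peel off the statement piece by piece, leaning throughout on the decomposition already established in \fref{Theorem}{theorem:pre_decomp} and the characterisations of $\mathcal{B}$ and $\mathcal{W}$ from \fref{Propositions}{prop:alt_char_balanced_set} and \ref{prop:chara_im_N}. First I would note that the decomposition of $\mathcal{X}$ into $\mathcal{X}_B + \mathcal{X}_W$, together with the explicit formulas for $p$, $M$, and $\wavecoord$, is exactly the content of \fref{Theorem}{theorem:pre_decomp} applied at each instant $t$; so that part requires no new work, only a citation. The substance is in computing $\partial_t \mathcal{X}_B$ and $\partial_t \mathcal{X}_W$.

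For the time derivatives, the key observation is that the measurements $PV$ and $M$ are \emph{slow}: since $\mathcal{X}$ solves $\partial_t \mathcal{X} + \mathcal{N}(\mathcal{X}) = 0$ and since (by \fref{Proposition}{prop:chara_im_N}) $\mathcal{N}(\mathcal{X}) \in \mathcal{W}$, which by the same proposition is precisely the joint zero set of $PV$ and $M$, we get $\partial_t\, PV(\mathcal{X}) = -PV(\mathcal{N}(\mathcal{X})) = 0$ and likewise $\partial_t\, M(\mathcal{X}) = 0$. Since $\mathcal{X}_B$ is the unique balanced state with these (now time-independent) values of $PV$ and $M$ — this is \fref{Lemma}{lemma:PV_and_M_charac_balanced_states} — uniqueness forces $\partial_t \mathcal{X}_B = 0$. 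Then $\partial_t \mathcal{X}_W = \partial_t \mathcal{X} - \partial_t \mathcal{X}_B = \partial_t \mathcal{X} = -\mathcal{N}(\mathcal{X})$, and the task reduces to showing $-\mathcal{N}(\mathcal{X})$ equals the claimed $\mathcal{L}_{W,H_B}\mathcal{X}_W$. Here I would substitute $u = \nabla_h^\perp p + \wavecoord_h^\perp + we_3$, $\theta = \partial_3 p + \tfrac12\min_0(M-\partial_3 p) + \wavecoord_3$, $q = M - \partial_3 p - \tfrac12\min_0(M-\partial_3 p) - \wavecoord_3$ into the definition \eqref{eq:def_N} of $\mathcal{N}$, using $\min_0(M-\partial_3 p) = H_B(M - \partial_3 p)$ and \fref{Lemma}{lemma:identity_lin_buoyancy_and_xi} to rewrite the buoyancy $\theta - \min_0 q$ cleanly; the balanced gradient pieces get annihilated by $\mathbb{P}_L$ up to the explicit $\nabla p$ term, and what remains collapses to the stated $\mathbb{P}_L(-\wavecoord_h - (1+H_B)\wavecoord_3 e_3)$ in the velocity slot with $\pm w$ in the thermodynamic slots. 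That this output lies in $\mathcal{W}$ follows since $PV$ and $M$ of it vanish, again by \fref{Proposition}{prop:chara_im_N}.

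The last two identities, $\mathcal{L}_{W,H_B}^2 \mathcal{X}_W = -\mathcal{X}_W$ on the $(\wavecoord_h,0,0)$ piece and $= -(1+H_B)\mathcal{X}_W$ on the $(we_3,\wavecoord_3,-\wavecoord_3)$ piece, I would verify by direct iteration of $\mathcal{L}_{W,H_B}$. On the first piece one has $\mathcal{L}_{W,H_B}(\wavecoord_h,0,0)$ landing in the $(we_3,\dots)$ sector and vice versa, so squaring returns to the original sector; tracking the perpendicular rotation $(\cdot)_h^\perp$ gives the sign $-1$ in the first case, while in the second case the extra factor comes from the coefficient $(1+H_B)$ multiplying $\wavecoord_3 e_3$ in the definition of $\mathcal{L}_{W,H_B}$. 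One should be slightly careful that the Leray projector $\mathbb{P}_L$ acts as the identity on the relevant divergence-free vector fields here (which it does, since $\wavecoord_h^\perp + we_3$ and its images under $\mathcal{L}_{W,H_B}$ remain divergence-free by the constraint $\nabla_h^\perp\cdot\wavecoord_h = \partial_3 w$), so the projector can be dropped during the iteration. The main obstacle I anticipate is purely bookkeeping: keeping the horizontal/vertical and $\theta$-versus-$q$ components straight through the substitution into $\mathcal{N}$, and confirming that the non-slow remainder after removing $\mathcal{X}_B$ really does reorganise into the compact $\mathcal{L}_{W,H_B}$ form rather than leaving stray balanced pieces — this is where \fref{Lemma}{lemma:identity_lin_buoyancy_and_xi} and the algebraic identity $(1-H_B/2)(1+H_B)=1$ do the decisive work.
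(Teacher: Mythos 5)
Your overall architecture is essentially the paper's: the decomposition itself is \fref{Theorem}{theorem:pre_decomp} applied at each instant, the constancy of the balanced component rests on $\partial_t PV = \partial_t M = 0$, and the remaining assertions are direct computations. Your route to $\partial_t \mathcal{X}_B = 0$ --- $PV$ and $M$ are constant because $\mathcal{N}(\mathcal{X}) \in \mathcal{W}$ is annihilated by the linear measurements $PV$ and $\mathcal{M}$, and then uniqueness in \fref{Lemma}{lemma:PV_and_M_charac_balanced_states} pins down the balanced state at every time --- is a legitimate and arguably cleaner variant of the paper's argument, which instead differentiates the $PV$-and-$M$ inversion in time and uses uniform ellipticity of $A_B$ to conclude $\partial_t p = 0$; your version avoids formally differentiating the nonlinear equation, and you make explicit the slowness of $PV$ and $M$ that the paper only asserts.

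The soft spot is the identification of $\partial_t \mathcal{X}_W = -\mathcal{N}(\mathcal{X})$ with the operator $\mathcal{L}_{W,\,H_B}$. \fref{Lemma}{lemma:identity_lin_buoyancy_and_xi} is a fixed-$H$ identity: it rewrites $\theta - qH_B$, whereas $\mathcal{N}$ involves $\theta - {\min}_0\, q = \theta - q\,\mathds{1}(q<0)$, and the phase of the full $q = q_B - \wavecoord_3$ need not coincide with $H_B = \mathds{1}(q_B < 0)$. Carrying out your substitution honestly leaves an extra term $q\,(H_B - H)\,e_3$, with $H \vcentcolon= \mathds{1}(q<0)$, inside the Leray projector; it vanishes only where the state and its balanced component share the same phase, so the ``clean collapse'' needs either that hypothesis or an explicit remainder (the paper glosses this too, and elsewhere concedes the $e^{-\tau\mathcal{L}_{W,\,H_B}}$ notation is abusive). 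Relatedly, your ``$\pm w$'' hides a real sign: from $\partial_t\mathcal{X} = -\mathcal{N}(\mathcal{X})$ the thermodynamic slots of $\partial_t\mathcal{X}_W$ are $(-w,\,+w)$, so what the computation actually yields (where phases agree) is $\partial_t\mathcal{X}_W + \mathcal{L}_{W,\,H_B}\mathcal{X}_W = 0$, consistent with the later use of $e^{-\tau\mathcal{L}_{W,\,H_B}}$, not the literal equality you set out to prove. Finally, your claim that $\mathbb{P}_L$ may be dropped throughout the iteration for $\mathcal{L}^2_{W,\,H_B}$ is false in general: $-\wavecoord_h - (1+H_B)\wavecoord_3 e_3$ has divergence $-\partial_3(H_B \wavecoord_3)$, which does not vanish when $H_B$ varies in space, so the projector genuinely acts except in the first sub-case (where $\wavecoord_3 = 0$); the two $\mathcal{L}^2$ identities must be verified with $\mathbb{P}_L$ kept in place.
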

	\begin{proof}
		The existence and uniqueness of $p$ is established in \cite{remond2024nonlinear}.
		The fact that the $\mathcal{B}$--component is constant follows from the fact that $\partial_t PV = \partial_t M = 0$ and the fact that $\partial_t p$ solves
		\begin{equation*}
			\nabla\cdot (A_B \nabla \partial_t p) = 0
		\end{equation*}
		for $A_B \vcentcolon= I - \frac{1}{2} \mathds{1} (M < \partial_3 p)e_3\otimes e_3)$ uniformly elliptic, such that indeed $\partial_t p$ vanishes.
		Then $\mathcal{X}_W \vcentcolon= \mathcal{X} - \mathcal{X}_B$ belongs to $\mathcal{W}$ and takes the prescribed form by virtue of \fref{Proposition}{prop:chara_im_N}.
		Moreover we note that $\im \mathcal{L}_{W,\,H_B}\subseteq \mathcal{W}$ since
		\begin{equation*}
			PV \begin{pmatrix}
				\mathbb{P}_L ( -\wavecoord_h - (1+H_B)\wavecoord_3 e_3) \\ w \\ -w
			\end{pmatrix} = - \nabla_h^\perp \cdot \wavecoord_h + \partial_3 w = 0
		\end{equation*}
		and
		\begin{equation*}
			\mathcal{M} \begin{pmatrix}
				\mathbb{P}_L ( -\wavecoord_h - (1+H_B)\wavecoord_3 e_3) \\ w \\ -w
			\end{pmatrix} = w - w = 0.
		\end{equation*}
		Finally the identities involving $\mathcal{L}^2$ follow from direct computations.
	\end{proof}

        We now turn our attention to identities involving the conserved energy.
        First we record the computation verifying that the energy is indeed conserved.

	\begin{prop}[Conserved energy]
	\label{prop:cons_en}
		Let $(u,\,\theta,\,q)$ be a solution of the moist Boussinesq system \eqref{eq:moist_Boussinesq_u}--\eqref{eq:moist_Boussinesq_q}.
		Then the energy
		\begin{equation*}
			E = \frac{1}{2} \int {\lvert u \rvert}^2 + \theta^2 + {\min}_0^2\, q + {(\theta + q)}^2
		\end{equation*}
		is conserved, meaning that $ \frac{dE}{dt} = 0$.
	\end{prop}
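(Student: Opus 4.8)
The plan is the standard energy method: differentiate $E$ in time, substitute the evolution equations \eqref{eq:moist_Boussinesq_u}--\eqref{eq:moist_Boussinesq_q}, and check that everything cancels. First I would compute $\frac{dE}{dt}$ by differentiating under the integral sign, term by term: the $\tfrac12\int|u|^2$ term gives $\int u\cdot\partial_t u$; the $\tfrac12\int\theta^2$ term gives $\int\theta\,\partial_t\theta$; the $\tfrac12\int{\min}_0^2\,q$ term gives $\int {\min}_0\,q\,\partial_t q$ (using that $s\mapsto {\min}_0^2 s = \min(s,0)^2$ is $C^1$ with derivative $2\min(s,0)$); and the $\tfrac12\int M^2$ term, with $M=\theta+q$, gives $\int M(\partial_t\theta+\partial_t q)$.

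Next I would substitute. From \eqref{eq:moist_Boussinesq_u}, $\partial_t u = -u\cdot\nabla u - \tfrac1\varepsilon(e_3\times u + \nabla p - (\theta-{\min}_0\,q)e_3)$; from \eqref{eq:moist_Boussinesq_theta}, $\partial_t\theta = -u\cdot\nabla\theta - \tfrac1\varepsilon u_3$; from \eqref{eq:moist_Boussinesq_q}, $\partial_t q = -u\cdot\nabla q + \tfrac1\varepsilon u_3$. The key cancellations to verify are: (i) all advective terms $\int u\cdot(u\cdot\nabla u)$, $\int\theta\,u\cdot\nabla\theta$, $\int{\min}_0\,q\,u\cdot\nabla q$, $\int M\,u\cdot\nabla M$ vanish because $\nabla\cdot u=0$ on the torus (each is $\tfrac12\int u\cdot\nabla(\text{scalar}^2)$ or the analogous identity for $u\cdot\nabla u$, integrating by parts with no boundary terms; for the ${\min}_0\,q$ term one uses ${\min}_0\,q\,(u\cdot\nabla q) = \tfrac12 u\cdot\nabla({\min}_0^2\,q)$ since $\min(q,0)\,\partial_i q = \tfrac12\partial_i(\min(q,0)^2)$); (ii) the Coriolis term $\int u\cdot(e_3\times u)=0$ pointwise; (iii) the pressure term $\int u\cdot\nabla p = -\int(\nabla\cdot u)p = 0$; and (iv) the remaining $\tfrac1\varepsilon$ terms — the buoyancy work $\tfrac1\varepsilon\int(\theta-{\min}_0\,q)u_3$ against the stratification terms $-\tfrac1\varepsilon\int\theta u_3 + \tfrac1\varepsilon\int{\min}_0\,q\,u_3$ (from the $\theta\,\partial_t\theta$ and ${\min}_0\,q\,\partial_t q$ pieces) and the $M$-equation contribution $\tfrac1\varepsilon\int M(-u_3+u_3)=0$. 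I would collect these and observe $\tfrac1\varepsilon\int(\theta-{\min}_0\,q)u_3 - \tfrac1\varepsilon\int\theta u_3 + \tfrac1\varepsilon\int{\min}_0\,q\,u_3 = 0$, so $\frac{dE}{dt}=0$.

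The main obstacle — more a bookkeeping subtlety than a genuine difficulty — is the nonlinear term ${\min}_0^2\,q$: one must justify that it is differentiable enough in $q$ to apply the chain rule under the integral, and that $\min(q,0)\,\partial_i q = \tfrac12\partial_i(\min(q,0)^2)$ holds in the sense needed for integration by parts. Since $s\mapsto\min(s,0)^2$ is $C^{1,1}$ (continuously differentiable with Lipschitz derivative $2\min(s,0)$), the chain rule is valid for Sobolev-regular $q$, and the integration by parts is legitimate; I would note this explicitly. Everything else is routine manipulation with $\nabla\cdot u=0$ and periodicity, so I would keep the write-up brief, presenting the computation as a single display and pointing out the three groups of cancellations (advection, rotation/pressure, buoyancy vs.\ stratification).

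\begin{proof}[Proof of \fref{Proposition}{prop:cons_en}]
	Differentiating under the integral sign and using that $s\mapsto\min(s,0)^2$ is continuously differentiable with derivative $2\min(s,0)$, we obtain
	\begin{equation*}
		\frac{dE}{dt}
		= \int u\cdot\partial_t u + \theta\,\partial_t\theta + {\min}_0\,q\,\partial_t q + (\theta+q)(\partial_t\theta + \partial_t q).
	\end{equation*}
	Substituting \eqref{eq:moist_Boussinesq_u}--\eqref{eq:moist_Boussinesq_q} and grouping terms yields
	\begin{align*}
		\frac{dE}{dt}
		&= -\int u\cdot(u\cdot\nabla u) + \theta\,(u\cdot\nabla\theta) + {\min}_0\,q\,(u\cdot\nabla q) + (\theta+q)\,u\cdot\nabla(\theta+q)
	\\	&\quad
		- \frac{1}{\varepsilon}\int u\cdot(e_3\times u) + u\cdot\nabla p - (\theta - {\min}_0\,q)\,u_3
	\\	&\quad
		- \frac{1}{\varepsilon}\int \theta\,u_3 - {\min}_0\,q\,u_3 + (\theta+q)(u_3 - u_3).
	\end{align*}
	Since $\nabla\cdot u = 0$ on $\mathbb{T}^3$, each term in the first line is of the form $\tfrac12\int u\cdot\nabla(f^2) = -\tfrac12\int(\nabla\cdot u)f^2 = 0$ (using ${\min}_0\,q\,\partial_i q = \tfrac12\partial_i(\min(q,0)^2)$ for the third term and the analogous identity $u\cdot(u\cdot\nabla u) = \tfrac12 u\cdot\nabla|u|^2$ for the first), so the first line vanishes. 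In the second line, $u\cdot(e_3\times u) = 0$ pointwise and $\int u\cdot\nabla p = -\int(\nabla\cdot u)p = 0$. The third line vanishes identically. Finally, the remaining buoyancy and stratification contributions combine to
	\begin{equation*}
		\frac{1}{\varepsilon}\int (\theta - {\min}_0\,q)\,u_3 - \frac{1}{\varepsilon}\int \theta\,u_3 + \frac{1}{\varepsilon}\int {\min}_0\,q\,u_3 = 0.
	\end{equation*}
	Hence $\frac{dE}{dt} = 0$, as claimed.
\end{proof}
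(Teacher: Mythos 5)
Your proposal is correct and follows essentially the same route as the paper's proof: differentiate the energy, use $\nabla\cdot u=0$ to kill the advective and pressure terms, note the Coriolis term vanishes pointwise, and observe that the buoyancy and stratification contributions cancel, with the chain rule ${\min}_0\,q\,\partial_t q=\partial_t\bigl(\tfrac12{\min}_0^2\,q\bigr)$ handling the nonlinear term. The only cosmetic difference is that the paper first disposes of the $M^2$ term separately by noting $M=\theta+q$ is purely advected, whereas you carry it along in one display; your explicit remark on the $C^{1,1}$ regularity justifying the chain rule is a welcome touch.
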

	\begin{proof}
		We immediately note that, for $M \vcentcolon= \theta + q$,
		\begin{equation*}
			( \partial_t + u\cdot\nabla ) M = 0.
		\end{equation*}
		Therefore, since $u$ is divergence-free,
		\begin{equation*}
			\frac{d}{dt} \int \frac{1}{2} {(\theta + q)}^2
			= \int ( \partial_t +u\cdot\nabla ) \left( \frac{1}{2} M^2 \right)
			= \int M ( \partial_t + u\cdot\nabla ) M
			= 0.
		\end{equation*}
		Now observe that $\frac{d}{ds} {\min}_0^2 s = {\min}_0\, s$ and so
		\begin{equation*}
			\left( \partial_t + u\cdot\nabla \right) \left( \frac{1}{2} {\min}_0^2\, q \right)
			= \left( {\min}_0\, q \right) \left( \partial_t +u\cdot\nabla \right) q.
		\end{equation*}
		Therefore
		\begin{align*}
			&\frac{d}{dt} \int {\lvert u \rvert}^2 + \frac{1}{2} \theta^2 + \frac{1}{2} {\min}_0^2\, q
		\\
			&= \int u \cdot ( \partial_t + u \cdot \nabla ) u + \theta ( \partial_t + u \cdot \nabla ) \theta + ( {\min}_0\, q) ( \partial_t + u \cdot \nabla ) q
		\\
			&= \frac{1}{\varepsilon} - u\cdot e_3\times u - u\cdot\nabla p + u_3 (\theta - {\min}_0\, q) - \frac{1}{\varepsilon} \int \theta u_3 + \frac{1}{\varepsilon} \int ( {\min}_0\, q) u_3
		\\
			&= \frac{1}{\varepsilon} \int (\nabla\cdot u) p
			= 0,
		\end{align*}
		and so $ \frac{dE}{dt} = 0$ as claimed.
	\end{proof}

        We now turn our attention to another variant of the energy, namely the
        ``fixed $H$'' energy used in \fref{Section}{sec:energy_and_decomposition}.
        We first show that the (formally) linearised operator is skew-adjoint with respect to that energy, which is also known in the geometric context of \fref{Section}{sec:geometry} as the $L^2_H$ inner product.

	\begin{lemma}
	\label{lemma:L_H_is_L_2_H_skew}
		Let $H$ be an indicator function.
		The $L^2_H$--adjoint of $\mathcal{L}_H$ is $-\mathcal{L}_H$.
	\end{lemma}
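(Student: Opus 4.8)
The plan is to verify the claim directly from the definitions of $\mathcal{L}_H$ in \eqref{eq:def_L_H} and of the $L^2_H$ inner product in \fref{Definition}{def:formal_diff_structure_on_balanced_set}. First I would note that $\mathcal{L}_H$ is a bounded linear operator on $\mathbb{L}^2_\sigma$ (its first component is Leray-projected and hence divergence-free, and the other two components are $\pm u_3 \in L^2$), so the adjoint question is well-posed; and that $\langle\,\cdot\,,\,\cdot\,\rangle_{L^2_H}$ is a genuine inner product on $\mathbb{L}^2_\sigma$, equivalent to the standard one, since $\theta^2 + q^2 H + (\theta+q)^2$ controls $\theta^2 + q^2$ from above and below up to universal constants. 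Since we work over $\mathbb{R}$, showing that the adjoint of $\mathcal{L}_H$ is $-\mathcal{L}_H$ is equivalent to showing that the bilinear form $(\mathcal{X}_1,\mathcal{X}_2) \mapsto \langle \mathcal{L}_H \mathcal{X}_1,\, \mathcal{X}_2 \rangle_{L^2_H}$ is skew-symmetric, which by polarization reduces to checking that the quadratic form $\langle \mathcal{L}_H \mathcal{X},\, \mathcal{X} \rangle_{L^2_H}$ vanishes for every $\mathcal{X} = (u,\,\theta,\,q) \in \mathbb{L}^2_\sigma$.

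The main step is then an elementary expansion. Using the definitions, $\langle \mathcal{L}_H \mathcal{X},\, \mathcal{X} \rangle_{L^2_H} = \int \mathbb{P}_L\left[ e_3\times u - (\theta - qH)e_3 \right]\cdot u + u_3 \theta - u_3 q H + (u_3 - u_3)(\theta + q)$. The final term vanishes identically: this is where the equal-and-opposite structure $(u_3,\,-u_3)$ of the thermodynamic block of $\mathcal{L}_H$ conspires with the $M^2 = (\theta+q)^2$ term of the energy. For the remaining terms I would use three facts: (i) since $u$ is divergence-free and $\mathbb{P}_L$ is the self-adjoint orthogonal projection onto divergence-free fields, $\int \mathbb{P}_L[v]\cdot u = \int v\cdot \mathbb{P}_L u = \int v\cdot u$, so the Leray projector drops out; (ii) $(e_3\times u)\cdot u = 0$ pointwise; and (iii) $e_3\cdot u = u_3$, so that the buoyancy coupling contributes $-\int (\theta - qH)u_3 = \int(-\theta u_3 + qH u_3)$, which exactly cancels the surviving $\int (u_3\theta - u_3 qH)$ from the thermodynamic block. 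Collecting everything gives $\langle \mathcal{L}_H \mathcal{X},\, \mathcal{X} \rangle_{L^2_H} = 0$, and hence $\mathcal{L}_H^{*,L^2_H} = -\mathcal{L}_H$.

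The step I expect to be the only delicate point — not a real obstacle, but the one requiring care — is the cancellation in (iii). It is precisely the choice of the weight $q^2 H$ (rather than the naive $q^2$) in the $L^2_H$ energy that renders the buoyancy coupling $-(\theta - qH)e_3$ skew against the $\pm u_3$ in the thermodynamic equations; with the weight $q^2$ the cross-terms would not match. In writing this up I would present the computation in the bilinear form $\langle \mathcal{L}_H \mathcal{X}_1,\mathcal{X}_2\rangle_{L^2_H} + \langle \mathcal{L}_H \mathcal{X}_2,\mathcal{X}_1\rangle_{L^2_H}$ so that the skew-symmetry of $e_3\times\,\cdot\,$ (via $(e_3\times a)\cdot b = -(e_3\times b)\cdot a$) and the pairwise cancellation of the buoyancy/thermodynamic cross-terms are both manifest, making the argument a short and transparent integration-by-parts-free calculation.
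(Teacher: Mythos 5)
Your proof is correct and is essentially the paper's own argument: the same direct computation with the $L^2_H$ pairing, using that the Leray projector drops against divergence-free fields, that $(e_3\times u)\cdot u = 0$, that the $(u_3,\,-u_3)$ thermodynamic block annihilates the $(\theta+q)$-term, and that the weight $q^2 H$ is exactly what makes the buoyancy and thermodynamic cross-terms cancel. The only cosmetic difference is that you verify the quadratic form $\langle \mathcal{L}_H\mathcal{X},\,\mathcal{X}\rangle_{L^2_H}=0$ and polarize, whereas the paper pairs two arbitrary states and regroups the integrand to read off the adjoint $-\mathcal{L}_H$ directly.
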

	\begin{proof}
		For any $(u,\,\theta,\,q)$ and $(v,\,\phi,\,r)$ in $ \mathbb{L}^2_\sigma $ we compute that
		\begin{align*}
			{\langle \mathcal{L}_H (u,\,\,\theta,\,q) ,\, (v,\,\phi,\,r) \rangle }_{L^2_H} 
			&= \int \mathbb{P}_L \left( u_h^\perp - \theta e_3 + q H e_3  \right) \cdot v + u_3\phi - u_3 r H + 0
		\\
			&= \int -u \cdot v_h^\perp - \theta v_3 + qv_3 H + u_3\phi - u_3 rH
		\\
			&= \int u\cdot \left( -v_h^\perp + \phi e_3 - rHe_3 \right) + \theta ( -v_3) + q\cdot v_3 \cdot H
		\\
			&= {\langle (u,\,\theta,\,q) ,\, \left( \mathbb{P}_L ( -v_h^\perp + \phi e_3 - rH e_3 ) ,\, -v_3,\, v_3 \right) \rangle }_{L^2_H} 
		\end{align*}
		and so indeed
		\begin{equation*}
			\mathcal{L}_H^* \begin{pmatrix}
				v \\ \phi \\ r
			\end{pmatrix} = \begin{pmatrix}
				\mathbb{P}_L ( -v_h^\perp + \phi e_3 - rHe_3 ) \\ -v_3 \\ v_3
			\end{pmatrix} = - \mathcal{L}_H \begin{pmatrix}
				v \\ \phi \\ r
			\end{pmatrix}. \qedhere
		\end{equation*}
	\end{proof}

        We now record another property of the ``fixed $H$'' energy,
        showing that if the indicator is time-dependent, this energy \emph{fails}
        to be conserved.
    
	\begin{prop}
	\label{prop:L_H_energy}
		Suppose that $H$ is a given indicator function and that $\mathcal{X} = (u,\,\theta,\,q)$ is a solution of $\partial_t \mathcal{X} + \mathcal{L}_H \mathcal{X} = 0$
		for $\mathcal{L}_H$ as defined in \eqref{eq:def_L_H}.
		For
		\begin{equation*}
			E_H ( \mathcal{X} ) \vcentcolon= \frac{1}{2} \int {\lvert u \rvert}^2  + \theta^2 + q^2 H + {(\theta + q)}^2
		\end{equation*}
		we have that
		\begin{equation*}
			\frac{d}{dt} E_H = \frac{1}{2} \int q^2 \partial_t H.
		\end{equation*}
		In particular $E_H$ is constant if and only if the $H$--interface is contained in the $q$--interface, i.e. $ \partial\left\{ H = 1 \right\} \subseteq \left\{ q = 0 \right\}$.
	\end{prop}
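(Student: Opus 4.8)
The plan is to exploit the $L^2_H$-skew-adjointness of $\mathcal{L}_H$ already established in \fref{Lemma}{lemma:L_H_is_L_2_H_skew}, treating $E_H$ as half of the squared $L^2_H$-norm but keeping careful track of the fact that here the weight $H$ is \emph{time-dependent}. First I would note that $E_H(\mathcal{X}) = \frac{1}{2}\norm{\mathcal{X}}{L^2_H}^2$ and differentiate in $t$, using that both $\mathcal{X}$ and the weight $H$ vary. Among the four quadratic terms making up $\norm{\,\cdot\,}{L^2_H}^2$ only $\int q^2 H$ carries any $H$-dependence, so the product rule gives
\begin{equation*}
	\frac{d}{dt} E_H = \left[ \int u\cdot\partial_t u + \theta\,\partial_t\theta + qH\,\partial_t q + (\theta+q)\,\partial_t(\theta+q) \right] + \frac{1}{2}\int q^2\,\partial_t H.
\end{equation*}
The bracketed expression is precisely $ {\langle \mathcal{X},\, \partial_t\mathcal{X} \rangle}_{L^2_H} $ computed with $H$ frozen at its value at time $t$. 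Since $\partial_t\mathcal{X} = -\mathcal{L}_H\mathcal{X}$, this equals $- {\langle \mathcal{X},\, \mathcal{L}_H\mathcal{X} \rangle}_{L^2_H} $, which vanishes by \fref{Lemma}{lemma:L_H_is_L_2_H_skew}. This leaves $\frac{d}{dt}E_H = \frac{1}{2}\int q^2\,\partial_t H$, as claimed.

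If one prefers to avoid invoking the abstract adjoint identity, the bracketed term can instead be expanded by hand: writing $\partial_t\mathcal{X} + \mathcal{L}_H\mathcal{X} = 0$ componentwise gives $\partial_t\theta = -u_3$, $\partial_t q = u_3$ (so $\partial_t(\theta+q) = 0$ and the last summand drops) together with $\partial_t u = -\mathbb{P}_L[e_3\times u - (\theta - qH)e_3]$; since $u$ is divergence-free one may replace $\mathbb{P}_L$ by the identity when paired against $u$, use $u\cdot(e_3\times u) = 0$ pointwise, and check that the surviving contributions $\int u_3(\theta - qH) - \int \theta u_3 + \int qH u_3$ cancel identically. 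Either route yields the same energy identity.

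For the ``in particular'' claim I would argue that $\partial_t H$ is a signed measure concentrated on the moving interface $\partial\{H = 1\}$, while $q^2 \geq 0$ everywhere: hence if $\partial\{H=1\}\subseteq\{q=0\}$ then $q^2$ vanishes on the support of $\partial_t H$, so $\frac{d}{dt}E_H\equiv 0$; conversely, if $q^2$ fails to vanish somewhere on $\partial\{H=1\}$ then that portion of the interface contributes nontrivially to $\int q^2\,\partial_t H$ and $E_H$ is not constant. The only place I expect to need real care rather than formal manipulation is in giving precise meaning to $\partial_t H$ and justifying the product rule on $q^2 H$ for $H$ merely an indicator function (of bounded variation in time); under the mild regularity on the evolving phase interface used elsewhere in the paper this is routine, and otherwise the identity should be read formally, consistently with the rest of \fref{Section}{sec:slow_fast_decomp}.
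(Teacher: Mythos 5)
Your proof is correct and follows essentially the same route as the paper: differentiate $E_H$, split off $\tfrac{1}{2}\int q^2\,\partial_t H$ by the product rule on the $q^2H$ term, and kill the frozen-$H$ part using $\partial_t(\theta+q)=0$ together with the $L^2_H$--skew-adjointness of $\mathcal{L}_H$ (the paper records exactly the direct computation you give as your ``by hand'' alternative, noting the skew-adjointness parenthetically). The paper's proof does not address the ``in particular'' statement at all, so your formal discussion of $\partial_t H$ as an interface measure goes slightly beyond it and is consistent with the level of rigor used there.
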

	\begin{proof}
		This follows from the fact that if $ \mathcal{X}$ solves $\partial_t \mathcal{X} + \mathcal{L} \mathcal{X}$ then $M \vcentcolon= \theta + q$ satisfies $\partial_t M = 0$ and
		so we may compute that
		\begin{equation*}
			\int u\cdot\partial_t u + \theta \partial_t \theta + q (\partial_t q) H + M (\partial_t M)
			= \int u_h \cdot u_h^\perp - u_3 (\theta - qH) + \theta u_3 - q H u_3
			= 0
		\end{equation*}
		(in other words: if $H$ is constant in time then the operator $\mathcal{L}_H$ is skew-adjoint with respect to the inner product induced by $E_H$).
	\end{proof}

        We now turn our attention to a standard result verifying that pullback metrics
        are themselves metrics. This plays a crucial role in \fref{Section}{sec:energy_and_decomposition} in order to eventually define the statistical divergence we introduce in this paper and which agrees with both the decomposition and the conserved energy.

	\begin{prop}[Pullback metric]
	\label{prop:pullback_metric}
		Let $S$ be a set, let $(M,\,d)$ be a metric space, and let $F:S\to M$ be injective.
		The map $F^* d :S\times S \to [0,\,\infty)$ defined by
		\begin{equation*}
			F^* d (x,\,y) = d(F(x),\, F(y))
		\end{equation*}
		for every $x,\,y\in S$ is a metric on $S$
		known as the \emph{pullback metric} of $d$ under $F$.
	\end{prop}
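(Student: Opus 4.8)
The statement to prove is Proposition~\ref{prop:pullback_metric}, asserting that if $F:S\to M$ is injective and $(M,d)$ is a metric space, then $F^*d(x,y)\vcentcolon= d(F(x),F(y))$ defines a metric on $S$. This is a standard fact, and the plan is simply to verify each of the three metric axioms directly from the corresponding axioms for $d$, using injectivity only for the point-separation property.

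First I would check non-negativity and symmetry. For any $x,y\in S$ we have $F^*d(x,y)=d(F(x),F(y))\geqslant 0$ since $d$ is non-negative, and $F^*d(x,y)=d(F(x),F(y))=d(F(y),F(x))=F^*d(y,x)$ by symmetry of $d$. Next I would establish that $F^*d(x,y)=0$ if and only if $x=y$. If $x=y$ then $F(x)=F(y)$ and so $d(F(x),F(y))=0$. Conversely, if $F^*d(x,y)=0$ then $d(F(x),F(y))=0$, which forces $F(x)=F(y)$ because $d$ is a metric; since $F$ is injective, this gives $x=y$. This is the only place injectivity is used, and it is where the hypothesis is genuinely needed: without it, distinct points with the same image would have zero distance, yielding only a pseudometric.

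Finally I would verify the triangle inequality. For any $x,y,z\in S$,
\begin{equation*}
	F^*d(x,y)=d(F(x),F(y))\leqslant d(F(x),F(z))+d(F(z),F(y))=F^*d(x,z)+F^*d(z,y),
\end{equation*}
where the middle inequality is just the triangle inequality for $d$ applied to the points $F(x),F(z),F(y)\in M$. This completes the verification of all metric axioms.

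There is no real obstacle here; the proof is a routine unwinding of definitions. The only subtlety worth flagging explicitly in the writeup is the role of injectivity in the definiteness axiom, which I would state clearly so the reader sees why the hypothesis cannot be dropped. In the paper this proposition is invoked (in the proof of Proposition~\ref{prop:dimensionalized_Parseval_distance}) with $F=\mathcal{C}_H\circ\mathcal{M}$, a composition of two invertible maps and hence injective, so the hypothesis is satisfied there.
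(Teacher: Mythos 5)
Your proof is correct and follows essentially the same route as the paper: symmetry, non-negativity, and the triangle inequality are inherited directly from $d$, and injectivity of $F$ is used exactly where the paper uses it, namely to upgrade definiteness from $F(x)=F(y)$ to $x=y$. Nothing further is needed.
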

	\begin{proof}
		The pullback metric immediately inherits symmetry and the triangle inequality from $d$.
		Its positive-definiteness then follows from the positive-definiteness of $d$ and the injectivity of $F$: for any $x,\,y\in S$,
		\begin{equation*}
			F^* d (x,\,y) = 0
			\iff d (F(x),\, F(y)) = 0
			\iff F(x) = F(y)
			\iff x = y.
			\qedhere
		\end{equation*}
	\end{proof}

        The last result we record in this appendix is a result used in \fref{Section}{sec:geometry}
        to verify that the balanced set is an honest-to-goodness Lipschitz--regular Hilbert manifold.
        The result in question is an elementary estimate.
 
	\begin{lemma}
	\label{lemma:H_minus_one_norm_of_explicit_PV}
		For any $v \in {(L^2)}^3$ and any $\phi \in L^2$,
		\begin{equation*}
			\norm{ \nabla_h^\perp \cdot v_h + \partial_3 \phi }{H^{-1}} \leqslant \norm{v_h}{L^2} + \norm{\phi}{L^2}.
		\end{equation*}
	\end{lemma}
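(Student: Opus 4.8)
The statement to prove is a bound on the $H^{-1}$-norm of $\nabla_h^\perp \cdot v_h + \partial_3 \phi$ in terms of the $L^2$-norms of $v_h$ and $\phi$. The plan is to work directly from the definition of the $H^{-1}$-norm as a dual norm: for $f \in H^{-1}$ we have $\norm{f}{H^{-1}} = \sup \{ \langle f,\, \psi \rangle : \psi \in \mathring{H}^1,\; \norm{\psi}{\mathring{H}^1} \leqslant 1 \}$, where we recall $\norm{\psi}{\mathring{H}^1} = \norm{\nabla \psi}{L^2}$ since $\psi$ has vanishing average. So first I would fix an arbitrary test function $\psi \in \mathring{H}^1$ and examine the pairing $\langle \nabla_h^\perp \cdot v_h + \partial_3 \phi,\, \psi \rangle$.

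The key step is integration by parts. Since $\nabla_h^\perp \cdot v_h + \partial_3 \phi$ is (minus) a divergence of a vector field — indeed it is $\nabla \cdot (-(v_h)^\perp + \phi e_3)$ up to the relabelling in \fref{Lemma}{lemma:div_and_curl_of_good_unknown} — we can move the derivative onto $\psi$:
\begin{equation*}
	\langle \nabla_h^\perp \cdot v_h + \partial_3 \phi,\, \psi \rangle
	= - \int_{\mathbb{T}^3} v_h \cdot \nabla_h^\perp \psi + \phi\, \partial_3 \psi.
\end{equation*}
Then I would apply Cauchy--Schwarz to each term, obtaining the bound $\norm{v_h}{L^2} \norm{\nabla_h^\perp \psi}{L^2} + \norm{\phi}{L^2} \norm{\partial_3 \psi}{L^2}$. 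Since both $\norm{\nabla_h^\perp \psi}{L^2} = \norm{\nabla_h \psi}{L^2}$ and $\norm{\partial_3 \psi}{L^2}$ are each at most $\norm{\nabla \psi}{L^2} = \norm{\psi}{\mathring{H}^1}$, this is bounded by $(\norm{v_h}{L^2} + \norm{\phi}{L^2}) \norm{\psi}{\mathring{H}^1}$. Taking the supremum over $\psi$ with $\norm{\psi}{\mathring{H}^1} \leqslant 1$ yields the claim.

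There is essentially no obstacle here — the only minor care needed is to confirm that $\nabla_h^\perp \cdot v_h + \partial_3 \phi$ genuinely defines an element of $H^{-1} = (\mathring{H}^1)^*$, which is immediate from the integration-by-parts identity above (the right-hand side is a bounded linear functional of $\psi \in \mathring{H}^1$), and that no boundary terms appear, which holds because we are on the torus $\mathbb{T}^3$ with periodic boundary conditions. The estimate $\max(\norm{\nabla_h^\perp \psi}{L^2},\, \norm{\partial_3 \psi}{L^2}) \leqslant \norm{\nabla \psi}{L^2}$ is just the fact that $|\nabla_h \psi|^2 + |\partial_3 \psi|^2 = |\nabla \psi|^2$ pointwise, so each summand is dominated by the whole.
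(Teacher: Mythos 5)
Your proposal is correct and follows essentially the same route as the paper: pair against a test function $\psi \in \mathring{H}^1$, integrate by parts (no boundary terms on $\mathbb{T}^3$), apply Cauchy--Schwarz to each term, and use that $\norm{\nabla_h \psi}{L^2}$ and $\norm{\partial_3 \psi}{L^2}$ are each bounded by $\norm{\nabla\psi}{L^2} = \norm{\psi}{\mathring{H}^1}$ before taking the supremum. The paper's proof is exactly this computation, so there is nothing to add.
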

	\begin{proof}
		This follows from a direct computation since, for any $\psi\in \mathring{H}^1 $,
		\begin{align*}
			{\langle \nabla_h^\perp \cdot v_h + \partial_3 \phi,\, \psi \rangle }_{H^{-1} \times \mathring{H}^1}
			&= \int ( \nabla_h^\perp \cdot v_h + \partial_3 \phi) \psi
		\\
			&= - \int v_h \cdot\nabla_h^\perp \psi + \phi \partial_3 \psi
			\leqslant \left( \norm{v_h}{L^2} + \norm{\phi}{L^2} \right) \norm{\psi}{ \mathring{H}^1 },
		\end{align*}
		as desired.
	\end{proof}

\subsection{Tools from convex analysis}
\label{app:convex}

In this section we record various results from convex analysis that are of use to us.
This begins with a result on various characterisations of strong convexity
and concludes with results pertaining to the \hyperref[def:quad_remainder]{quadratic remainder}
(which are used in \fref{Section}{sec:convergence_guarantees} to obtain a quadratic upper bound on our variational energy, one of the two key ingredients used to deduce convergence rates for various descent methods).

First we recall a few first-order characterisations of strong convexity.

\begin{lemma}[Equivalent characterisations of strong convexity]
\label{lemma:equiv_charac_strg_conv}
	Let $ \left( H,\, \langle \,\cdot\,,\,\cdot\, \rangle  \right)$ be a Hilbert space and let $f: H\to \mathbb{R}$ be Fr\'{e}chet differentiable.
	For any $\mu > 0$ the following are equivalent, where each inequality holds for every $x,\,y\in H$ and $\theta\in [0,\,1]$.
	\begin{align}
		&f(\theta x + (1-\theta) y) \leqslant \theta f(x) + (1-\theta) f(y) - \theta(1-\theta) \frac{\mu}{2} \norm{x-y}{}^2	\label{eq:equiv_charac_strg_conv_1}\\
		&f(y) \geqslant f(x) + \langle Df(x),\, y-x \rangle + \frac{\mu}{2} \norm{x-y}{}^2					\label{eq:equiv_charac_strg_conv_2}\\
		&\langle Df(x) - Df(y) ,\, x-y \rangle \geqslant \mu \norm{x-y}{}^2							\label{eq:equiv_charac_strg_conv_3}
	\end{align}
	If any of these inequalities hold recall that we say that $f$ is \emph{$\mu$-strongly convex}.
\end{lemma}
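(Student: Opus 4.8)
\textbf{Proof plan for \fref{Lemma}{lemma:equiv_charac_strg_conv}.}
The plan is to prove the three-way equivalence by a cycle of implications \eqref{eq:equiv_charac_strg_conv_1} $\Rightarrow$ \eqref{eq:equiv_charac_strg_conv_2} $\Rightarrow$ \eqref{eq:equiv_charac_strg_conv_3} $\Rightarrow$ \eqref{eq:equiv_charac_strg_conv_1}. The unifying device is the auxiliary function $g(x) \vcentcolon= f(x) - \frac{\mu}{2}\norm{x}{}^2$, which is Fr\'echet differentiable with $Dg(x) = Df(x) - \mu x$ (viewing the inner product to identify derivatives with elements of $H$). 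Each of the three displayed inequalities for $f$ with parameter $\mu$ is equivalent to the corresponding \emph{plain convexity} statement for $g$ (i.e.\ the same inequality with $\mu = 0$), because $\frac{\mu}{2}\norm{\,\cdot\,}{}^2$ is itself exactly $\mu$-strongly convex with equality throughout; this reduces the lemma to the classical first-order characterisations of convexity for a differentiable function on a Hilbert space. So the real content is: (i) verify the algebraic identities showing that subtracting $\frac{\mu}{2}\norm{\,\cdot\,}{}^2$ turns each $\mu$-inequality into the $0$-inequality for $g$, and (ii) recall the standard convexity equivalences for $g$.

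First I would record the three identities. For \eqref{eq:equiv_charac_strg_conv_1}: a direct expansion gives
\begin{equation*}
	\norm{\theta x + (1-\theta) y}{}^2 = \theta \norm{x}{}^2 + (1-\theta)\norm{y}{}^2 - \theta(1-\theta)\norm{x-y}{}^2,
\end{equation*}
so \eqref{eq:equiv_charac_strg_conv_1} for $f$ is literally convexity of $g$ along the segment. For \eqref{eq:equiv_charac_strg_conv_2}: since
\begin{equation*}
	\tfrac{\mu}{2}\norm{y}{}^2 = \tfrac{\mu}{2}\norm{x}{}^2 + \langle \mu x,\, y-x\rangle + \tfrac{\mu}{2}\norm{x-y}{}^2,
\end{equation*}
the inequality \eqref{eq:equiv_charac_strg_conv_2} for $f$ rearranges to $g(y) \geqslant g(x) + \langle Dg(x),\, y-x\rangle$, the gradient inequality for $g$. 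For \eqref{eq:equiv_charac_strg_conv_3}: $\langle \mu x - \mu y,\, x-y\rangle = \mu\norm{x-y}{}^2$, so \eqref{eq:equiv_charac_strg_conv_3} for $f$ is exactly monotonicity of $Dg$, i.e.\ $\langle Dg(x) - Dg(y),\, x-y\rangle \geqslant 0$.

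Then I would close the loop using the classical facts for the differentiable function $g$: convexity (segment inequality, $\theta=\tfrac12$ suffices together with continuity to propagate, or one runs the argument for all $\theta$) implies the gradient inequality by letting $\theta \downarrow 0$ in $g(x + \theta(y-x)) \leqslant g(x) + \theta\big(g(y)-g(x)\big)$ and dividing by $\theta$, using Fr\'echet differentiability to pass to the limit $\langle Dg(x),\, y-x\rangle$; the gradient inequality at the pair $(x,y)$ and at $(y,x)$ added together gives monotonicity of $Dg$; and monotonicity of $Dg$ implies convexity of $g$ via the one-dimensional function $\varphi(t) \vcentcolon= g(x + t(y-x))$, whose derivative $\varphi'(t) = \langle Dg(x+t(y-x)),\, y-x\rangle$ is nondecreasing in $t$ by monotonicity, so $\varphi$ is convex on $[0,1]$, which is the segment inequality for $g$. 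Transporting each step back through the three identities above yields the cycle for $f$ with parameter $\mu$. The main obstacle here is purely bookkeeping rather than conceptual: one must be careful that $Df(x)$ is being identified with an element of $H$ through the Riesz map so that expressions like $Df(x) - \mu x$ and inner products $\langle Df(x),\, y-x\rangle$ make sense, and that the limiting argument $\theta \downarrow 0$ genuinely uses Fr\'echet (not merely G\^ateaux) differentiability to get the linear term right; no estimate is delicate and no calculation is long.
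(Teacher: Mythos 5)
Your argument is correct, but it is worth noting that the paper does not prove this lemma at all: its ``proof'' is a citation to Definition 2.1.2 and Theorem 2.1.9 of Nesterov's book. Your route — reducing each $\mu$-inequality to the corresponding plain-convexity statement for $g(x) = f(x) - \frac{\mu}{2}\norm{x}{}^2$ via the three algebraic identities (the parallelogram-type identity for $\norm{\theta x + (1-\theta)y}{}^2$, the expansion of $\frac{\mu}{2}\norm{y}{}^2$ about $x$, and $\langle \mu x - \mu y,\, x-y\rangle = \mu\norm{x-y}{}^2$), and then running the classical cycle convexity $\Rightarrow$ gradient inequality $\Rightarrow$ monotonicity of $Dg$ $\Rightarrow$ convexity through the one-dimensional restriction $\varphi(t) = g(x+t(y-x))$ — is the standard textbook argument, and all three identities and implications check out. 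What your version buys is self-containedness and the fact that it works verbatim in the Hilbert-space setting of the lemma (the cited reference is stated for $\mathbb{R}^n$, even though its arguments transfer); what the paper's citation buys is brevity, since the result is classical and not where the paper's contribution lies. One small remark: in the step where you let $\theta \downarrow 0$ to extract $\langle Dg(x),\, y-x\rangle$, G\^ateaux differentiability already suffices, so your caution about needing Fr\'echet there is harmless but not necessary; Fr\'echet differentiability is simply what the hypothesis provides.
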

\begin{proof}
	See Definition 2.1.2 and Theorem 2.1.9 in \cite{nesterov}.
\end{proof}

We now turn our attention towards properties of the \hyperref[def:quad_remainder]{quadratic remainder}.
First we note that the map $f \mapsto T^f$ is linear.

\begin{lemma}[Linearity of the quadratic remainder map]
\label{lemma:lin_quad_remainder_map}
	For any $x\in\mathbb{R}^d$ the map $ \hyperref[def:quad_remainder]{T_x} : C^1 ( \mathbb{R}^d;\, \mathbb{R} ) \to C^0 ( \mathbb{R}^d;\, \mathbb{R} )$
	defined by $f\mapsto \hyperref[def:quad_remainder]{T_x^f}$ is linear.
\end{lemma}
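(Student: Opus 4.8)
The plan is to prove the linearity claim directly from the definition of the quadratic remainder, since this is an entirely elementary computation that follows by unpacking definitions. Recall from \fref{Definition}{def:quad_remainder} that for $f \in C^1(\mathbb{R}^d;\,\mathbb{R})$ and $x \in \mathbb{R}^d$, the quadratic remainder is $T_x^f(y) = f(y) - f(x) - Df(x)(y-x)$. First I would fix $x \in \mathbb{R}^d$, take two functions $f,\,g \in C^1(\mathbb{R}^d;\,\mathbb{R})$ and two scalars $\alpha,\,\beta \in \mathbb{R}$, and observe that $\alpha f + \beta g \in C^1(\mathbb{R}^d;\,\mathbb{R})$ with derivative $D(\alpha f + \beta g)(x) = \alpha Df(x) + \beta Dg(x)$ by the linearity of differentiation.

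Then the computation is immediate: for every $y \in \mathbb{R}^d$,
\begin{align*}
	T_x^{\alpha f + \beta g}(y)
	&= (\alpha f + \beta g)(y) - (\alpha f + \beta g)(x) - D(\alpha f + \beta g)(x)(y-x)\\
	&= \alpha \left[ f(y) - f(x) - Df(x)(y-x) \right] + \beta \left[ g(y) - g(x) - Dg(x)(y-x) \right]\\
	&= \alpha T_x^f(y) + \beta T_x^g(y).
\end{align*}
Since this holds for all $y$, we conclude $T_x^{\alpha f + \beta g} = \alpha T_x^f + \beta T_x^g$, which is exactly the statement that $f \mapsto T_x^f$ is linear. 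One should also note in passing that $T_x^f$ is indeed continuous (so that the codomain $C^0(\mathbb{R}^d;\,\mathbb{R})$ is correct): this follows because $f$ is continuous and $y \mapsto f(x) + Df(x)(y-x)$ is an affine, hence continuous, function of $y$.

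There is no real obstacle here — the only thing to be careful about is that the map is being asserted to land in $C^0$ rather than, say, $C^1$, which is why the continuity remark is worth including. The entire proof is two or three lines, and I would simply present the displayed computation above together with the brief observation about the derivative of a linear combination and about continuity of $T_x^f$.
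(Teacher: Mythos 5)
Your proof is correct and follows essentially the same route as the paper, which simply notes that linearity of $T_x$ follows immediately from linearity of the derivative; you have written out that one-line observation in full detail, plus a harmless remark on continuity of $T_x^f$.
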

\begin{proof}
	In light of the definition of the quadratic remainder this follows immediately from the derivative being linear.
\end{proof}

Second we observe that quadratic remainders can be used to characterise affine functions.
This is also the first of two results teasing out the relation between affine functions and quadratic remainders

\begin{lemma}[Vanishing quadratic remainders characterise affine functions]
\label{lemma:vanishing_quad_rem_charac_aff_func}
	A continuously differentiable function $f: \mathbb{R}^d\to\mathbb{R}$ is affine, meaning that there exist $v\in\mathbb{R}^d$ and $s\in\mathbb{R}$ for which
	$
		f(x) = v^T x + s \text{ for every } x\in\mathbb{R}^d,
	$
	if and only if its \hyperref[def:quad_remainder]{quadratic remainder} $T_x^f$ vanishes everywhere,
	i.e. $T_x^f (y) = 0$ for every $x,\,y\in\mathbb{R}^d$.
\end{lemma}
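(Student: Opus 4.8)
The plan is to prove both directions of the equivalence, with the forward direction being essentially immediate and the reverse direction requiring an integration argument.

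First I would handle the easy direction: if $f(x) = v^T x + s$ is affine, then $Df(x) = v^T$ for every $x$, so
\begin{equation*}
	T_x^f(y) = f(y) - f(x) - Df(x)(y-x) = (v^T y + s) - (v^T x + s) - v^T(y-x) = 0
\end{equation*}
for all $x, y \in \mathbb{R}^d$, as claimed.

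For the converse, suppose $T_x^f(y) = 0$ for all $x, y$, which means $f(y) = f(x) + Df(x)(y-x)$ for all $x, y$. The plan is to show $Df$ is constant, from which affineness follows by the fundamental theorem of calculus. Fixing $x$ and viewing the identity as a function of $y$, we see $f$ agrees with its own first-order Taylor expansion at $x$ everywhere; differentiating in $y$ (legitimate since $f \in C^1$ and the right-hand side is manifestly differentiable in $y$) gives $Df(y) = Df(x)$ for all $y$. Hence $Df \equiv v^T$ for some fixed $v \in \mathbb{R}^d$. Then for any $x$, writing $\gamma(t) = tx$ and integrating, $f(x) - f(0) = \int_0^1 Df(tx)\, x \, dt = v^T x$, so $f(x) = v^T x + s$ with $s := f(0)$.

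I do not expect any serious obstacle here; the only point requiring a modicum of care is justifying the differentiation in $y$ of the identity $f(y) = f(x) + Df(x)(y-x)$, but this is immediate from the $C^1$ hypothesis on $f$ together with the fact that the right-hand side is affine (hence smooth) in $y$. Alternatively, one can avoid differentiation entirely: the hypothesis with the roles of $x$ and $y$ swapped gives $f(x) = f(y) + Df(y)(x-y)$; adding this to $f(y) = f(x) + Df(x)(y-x)$ yields $0 = (Df(x) - Df(y))(y-x)$ for all $x, y$, and a standard argument (replacing $y$ by $x + te_i$ and letting $t \to 0$ after dividing by $t$, using continuity of $Df$) forces $Df$ to be constant.
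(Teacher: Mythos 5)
Your proof is correct, and the forward direction is the same direct computation as in the paper. For the converse, however, you take a longer route than necessary: you use the full strength of the hypothesis (vanishing of $T_x^f$ for \emph{all} basepoints $x$), differentiate the identity $f(y) = f(x) + Df(x)(y-x)$ in $y$ to conclude $Df$ is constant, and then integrate along a segment. The paper instead observes that the single instance $T_0^f(y) = 0$ already rearranges to $f(y) = f(0) + {\nabla f(0)}^T y$ for every $y$, which \emph{is} the affine form with $v = \nabla f(0)$ and $s = f(0)$; no differentiation or integration is needed, and only one basepoint is used. Your main argument is valid as written, but note that your alternative sketch at the end is slightly incomplete: from $\left( Df(x) - Df(y) \right)(y-x) = 0$, substituting $y = x + te_i$ and dividing by $t$ gives $\left( Df(x) - Df(x+te_i) \right) e_i = 0$, and letting $t \to 0$ yields nothing new by continuity; what this identity actually gives is that $\partial_i f$ is constant along lines in the direction $e_i$ (more generally, that $f$ is affine along every line, which circles back to the original hypothesis), so that route needs an extra step before concluding $Df$ is constant. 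Since your primary argument does not rely on this alternative, the proof stands, but the paper's one-line observation is the cleaner path.
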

\begin{proof}
	The ``only if'' direction follows from a direct computation.
	Conversely suppose that $T_x^f \equiv 0$.
	In particular this means that $T_0^f (y) = 0$, which may be rearranged into $f(y) = f(0) + {\nabla f(0)}^T y$, proving that $f$ is affine.
\end{proof}

To conclude this section we note how the quadratic remainder behaves under affine transformations -- this is the second of two results on the relation between affine functions and quadratic remainders.

\begin{lemma}[Equivariance of the quadratic remainder map under affine transformations]
\label{lemma:equiv_quad_rem_map_aff_trans}
	Consider a continuously differentiable function $f : \mathbb{R}^d\to\mathbb{R}$ and an affine function $\Phi : \mathbb{R}^d \to \mathbb{R}^d$.
	Then
	$
		\hyperref[def:quad_remainder]{T_x^{f\circ\Phi} (y)} = T_{\Phi(x)}^f \left( \Phi(y) \right) \text{ for every } x,\,y\in\mathbb{R}^d.
	$
\end{lemma}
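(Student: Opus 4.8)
The statement to prove is the equivariance identity
\begin{equation*}
	T_x^{f\circ\Phi}(y) = T_{\Phi(x)}^f\left(\Phi(y)\right)
	\text{ for every } x,\,y\in\mathbb{R}^d,
\end{equation*}
where $f$ is continuously differentiable and $\Phi$ is affine. The plan is to unwind the definition of the quadratic remainder on both sides and use the two defining features of an affine map: (1) that its derivative $D\Phi$ is a constant linear map $L$ (so $\Phi(y) - \Phi(x) = L(y-x)$), and (2) the chain rule $D(f\circ\Phi)(x) = Df(\Phi(x))\circ D\Phi(x) = Df(\Phi(x))\circ L$.

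First I would write $\Phi(x) = Lx + b$ for a fixed linear map $L$ and vector $b$, and record the consequence $\Phi(y) - \Phi(x) = L(y-x)$. Then, starting from the left-hand side, by \fref{Definition}{def:quad_remainder} applied to the function $f\circ\Phi$,
\begin{equation*}
	T_x^{f\circ\Phi}(y) = (f\circ\Phi)(y) - (f\circ\Phi)(x) - D(f\circ\Phi)(x)(y-x).
\end{equation*}
Next I would substitute the chain rule $D(f\circ\Phi)(x)(y-x) = Df(\Phi(x))\big(L(y-x)\big) = Df(\Phi(x))\big(\Phi(y)-\Phi(x)\big)$, which turns the right-hand side into
\begin{equation*}
	f(\Phi(y)) - f(\Phi(x)) - Df(\Phi(x))\big(\Phi(y)-\Phi(x)\big),
\end{equation*}
and this is exactly $T_{\Phi(x)}^f(\Phi(y))$ by the definition of the quadratic remainder of $f$ at the point $\Phi(x)$, evaluated at $\Phi(y)$. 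That completes the argument.

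There is no real obstacle here: the only subtlety is making sure the chain rule is invoked correctly and that the derivative $D\Phi(x)$ is recognized as the (point-independent) linear part $L$ of $\Phi$, which is what lets the difference $L(y-x)$ be rewritten as $\Phi(y)-\Phi(x)$. If one wanted to be fully careful about the regularity hypothesis, one would note that $f\circ\Phi$ is continuously differentiable since $f$ is and $\Phi$ is smooth, so both quadratic remainders appearing in the statement are well-defined in the sense of \fref{Definition}{def:quad_remainder}. Everything else is a one-line substitution.
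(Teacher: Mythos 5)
Your proposal is correct and follows essentially the same route as the paper: expand the definition of $T_x^{f\circ\Phi}$, apply the chain rule, and use the affine identity $\Phi(y)-\Phi(x)=D\Phi(x)(y-x)$ to rewrite the linear term, which yields $T_{\Phi(x)}^f(\Phi(y))$. The only cosmetic difference is that you justify the affine identity by writing $\Phi(x)=Lx+b$ explicitly, whereas the paper derives it from its earlier characterisation of affine functions via vanishing quadratic remainders; the substance is identical.
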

\begin{proof}
	We compute that
	\begin{equation}
	\label{eq:int_T_aff_comp}
		T_x^{f\circ\Phi} (y) = f \left( \Phi (y) \right) - f \left( \Phi (x) \right) - { \left( D\Phi \right) }^T (x) (\nabla f) \left( \Phi (x) \right) \cdot (y-x).
	\end{equation}
	Since we seek to write $T^{f\circ\Phi}$ in terms of $T^f$ the key step in \eqref{eq:int_T_aff_comp} is to write $y-x$ in terms of $\Phi (y) - \Phi (x)$.
	To do so we note that $\Phi$ is affine and so \fref{Lemma}{lemma:vanishing_quad_rem_charac_aff_func} tells us that $T^\Phi$ vanishes, i.e.
	$
		\Phi (y) - \Phi (x) = D\Phi (x) (y - x)
	$
	for every $x,\,y\in\mathbb{R}^d$.
	Plugging this into \eqref{eq:int_T_aff_comp} yields
	\begin{align*}
		T_x^{f\circ\Phi} (y)
		&= f \left( \Phi (y) \right) - f \left( \Phi(x) \right) - (\nabla f) \left( \Phi (x) \right) \cdot(D\Phi) (x) (y-x)
	      \\&= f \left( \Phi (y) \right) - f \left( \Phi (x) \right) - (\nabla f) \left( \Phi (x) \right) \cdot \left( \Phi (y) - \Phi (x) \right)
		= T_{\Phi(x)}^f \left( \Phi (y) \right),
	\end{align*}
	as claimed.
\end{proof}
 
\subsection{Exact nonlinear waves}
\label{app:nonlinear_waves}

    In this section we record results pertaining to the nonlinear oscillations with non-vanishing averages which are present in the moist Boussinesq system.
    The key results are \fref{Proposition}{prop:second_order_ODE_with_nonzero_average_solutions}, which shows that a simple ODE leads to nonlinear oscillations with non-vanishing averages, and \fref{Corollary}{cor:moist_Bouss_has_wave_sols_with_nonvanish_averages} which explains how this solutions of this simple ODE give rise to honest-to-goodness solutions of the moist Boussinesq system with oscillations of the same nature.

    First we find an exact solution of this simple ODE.

	\begin{prop}
	\label{prop:second_order_ODE_with_nonzero_average_solutions}
		The ODE $q'' + q + {\min}_0\, q = 0$ admits periodic weak solutions with nonzero average over their period.
	\end{prop}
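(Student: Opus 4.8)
The plan is to exploit the piecewise-linear structure of the equation. Writing ${\min}_0 q = \min(q,0)$, the ODE reads $q'' + q = 0$ on $\{q > 0\}$ and $q'' + 2q = 0$ on $\{q < 0\}$, i.e.\ two harmonic oscillators of frequencies $1$ and $\sqrt 2$ respectively. I would construct an explicit periodic orbit by concatenating one positive half-arc of the slow oscillator with one negative half-arc of the fast oscillator, glued at the phase boundary $\{q = 0\}$ with matching value and first derivative. Because the two arcs have different amplitudes and half-periods, the resulting orbit will not be symmetric about $q = 0$, and its mean over a period will be nonzero.

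Concretely, I would fix $v > 0$ and set $q(t) = v \sin t$ for $t \in [0,\pi]$, which is $\geq 0$ there with $q(0) = q(\pi) = 0$ and $q'(\pi) = -v$, and then $q(t) = -\tfrac{v}{\sqrt 2}\sin\!\big(\sqrt 2\,(t-\pi)\big)$ for $t \in [\pi,\, \pi + \pi/\sqrt 2]$, which is $\leq 0$ there with $q(\pi) = 0$, $q'(\pi) = -v$ matching the left piece, and $q(\pi + \pi/\sqrt 2) = 0$, $q'(\pi + \pi/\sqrt 2) = v$. Extending by periodicity with period $T := \pi\big(1 + 1/\sqrt 2\big)$ then produces a function that is continuous together with its first derivative, hence lies in $C^1 \cap H^1_{\mathrm{loc}}$. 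On each open arc it solves the relevant linear ODE classically, and since $q \geq 0$ exactly where the slow equation $q'' + q = 0$ is used and $q \leq 0$ exactly where the fast equation $q'' + 2q = 0$ is used, the orbit satisfies $q'' + q + {\min}_0 q = 0$ pointwise a.e.; testing against $\phi \in C^\infty_{\mathrm{per}}$ and integrating by parts then gives the weak formulation.

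Finally I would compute the mean over one period. Integrating the two pieces gives
\[
\int_0^\pi v\sin t\,dt = 2v, \qquad
\int_0^{\pi/\sqrt 2} \Big(-\tfrac{v}{\sqrt 2}\sin(\sqrt 2\, s)\Big)\,ds = -v,
\]
so $\int_0^T q\,dt = v \neq 0$ and the average $\tfrac{1}{T}\int_0^T q\,dt = v/T \neq 0$, which is the claim.

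The only point requiring genuine care — and the step I would write out most carefully — is the verification that the glued function is a bona fide weak solution across the interface $\{q = 0\}$. This rests on two observations: first, that $q$ and $q'$ being continuous forces $q'' \in L^\infty_{\mathrm{loc}}$ to coincide with the classical piecewise second derivative, so that no Dirac term is created at the gluing points; and second, that the sign of $q$ on each arc agrees with the branch of ${\min}_0$ used in the construction, so that ${\min}_0 q$ evaluated along the orbit really equals the piecewise-linear term $q\,\mathds{1}(q<0)$ driving the two linear ODEs. Granted these, every remaining ingredient is an elementary explicit computation.
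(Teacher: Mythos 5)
Your proposal is correct and is essentially the paper's own argument: you glue a positive half-arc of the frequency-$1$ oscillator to a negative half-arc of the frequency-$\sqrt{2}$ oscillator with matching value and derivative (your solution with $v=\sqrt{2}$ is exactly the paper's explicit $Q$), extend periodically with period $T=\pi\left(1+1/\sqrt{2}\right)$, and verify the nonzero mean by the same direct integration. The $C^1$ gluing argument you spell out for the weak formulation is the same justification the paper uses, only stated more explicitly.
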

	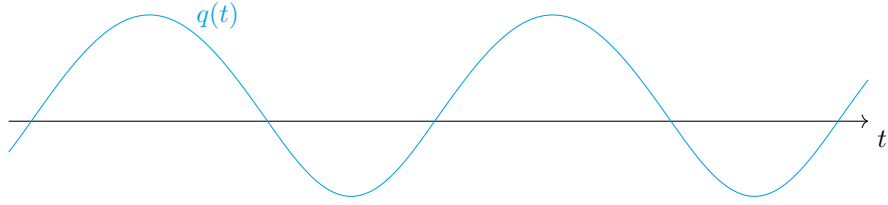
\begin{figure}
		\centering
		\captionsetup{width=0.85\textwidth}
        \begin{tikzpicture}
            \def\T{5.36303}
            \def\tbar{0.920151}
            \def\leftoffset{-0.3}
            \def\rightoffset{0.4}
            \def\sqrttwo{1.41421}
        	\draw[->] (\leftoffset, 0) -- (2*\T + \rightoffset, 0) node[below right] {$t$};
        	\node[Cerulean, right] at (pi/2 + 0.5, \sqrttwo) {$q(t)$};
        	\draw[Cerulean, smooth, domain=\leftoffset	:0,			variable=\t] plot ({\t}, {		sin(deg( sqrt(2)*(\t + \T - \tbar)	))});
        	\draw[Cerulean, smooth, domain=0		:pi,			variable=\t] plot ({\t}, {sqrt(2) *	sin(deg( \t				))});
        	\draw[Cerulean, smooth, domain=pi		:\T,			variable=\t] plot ({\t}, {		sin(deg( sqrt(2)*(\t - \tbar)		))});
        	\draw[Cerulean, smooth, domain=\T		:\T+pi,			variable=\t] plot ({\t}, {sqrt(2) *	sin(deg( \t - \T			))});
        	\draw[Cerulean, smooth, domain=\T+pi		:2*\T,			variable=\t] plot ({\t}, {		sin(deg( sqrt(2)*(\t - \T - \tbar)	))});
        	\draw[Cerulean, smooth, domain=2*\T		:2*\T+\rightoffset,	variable=\t] plot ({\t}, {sqrt(2) *	sin(deg( \t - 2*\T			))});
        \end{tikzpicture}
		\caption{\small
          \small 
          An exact solution of $q'' + q + {\min}_0\, q = 0$ which has nonzero averages (over its period).
            Note that this is \emph{not} a sine wave shifted upwards slightly:
            the oscillation frequencies and amplitudes are \emph{different} when $q<0$
            and when $q\geqslant 0$.
        }
		\label{fig:nonlinear_oscillations}
	\end{figure}
	\begin{proof}
		Define
		\begin{equation*}
			Q(t) \vcentcolon= \left\{
			\begin{aligned}
				&\sqrt{2} \sin t			&&\text{if } 0 \leqslant t \leqslant \pi \text{ and } \\
				&\sin \sqrt{2} \left(t-\bar{t}\,\right)	&&\text{if } \pi \leqslant t < T
			\end{aligned}
			\right.
		\end{equation*}
		for $T \vcentcolon= \pi + \frac{\pi}{\sqrt{2}}$ and $\bar{t} \vcentcolon= \pi - \frac{\pi}{\sqrt{2}}$,
		and note that $Q$ is differentiable.
		Then define $q: [0,\,\infty) \to \mathbb{R} $ to be the $T$--periodic extention of $Q$,
		noting that $q$ is also differentiable.
		Moreover: $q'$ is continuous on $ \left\{ 0,\,\pi \right\} + T\mathbb{N}$, i.e. on the set where it changes sign.
		Therefore $q$ is a weak solution of $q'' + q + {\min}_0\, q = 0$ since
		\begin{equation*}
			q'' = q \text{ when } q > 0 \text{ and } q'' =  2q \text{ when } q < 0.
		\end{equation*}
		Finally the simple substitution $s = \sqrt{2}(t - \pi) + \pi$ shows that
		\begin{equation*}
			\int_0^T q
			= \int_0^T Q
			= \int_0^\pi \sqrt{2} \sin t \,dt + \int_\pi^{2\pi} \frac{1}{\sqrt{2}} \sin s\, ds
			= \left( \sqrt{2} - \frac{1}{\sqrt{2}} \right) \int_0^\pi \sin t \,dt
			\neq 0,
		\end{equation*}
		as desired.
	\end{proof}

     We now seek to use \fref{Proposition}{prop:second_order_ODE_with_nonzero_average_solutions} above to show that the moist Boussinesq system \eqref{eq:moist_Boussinesq_u}--\eqref{eq:moist_Boussinesq_q} admits solutions with the same nonlinear oscillations.
     As an intermediate step we consider an ODE system contained within the moist Boussinesq system (this notion is made precise in \fref{Lemma}{lemma:ODE_sols_are_PDE_sols} below).

	\begin{cor}
	\label{cor:ODE_system_with_nonzero_average_solutions}
		The ODE system
		\begin{subnumcases}{}
			\frac{dw}{dt} = \theta - {\min}_0\, q,		\label{eq:ODE_model_w}\\
			\frac{d\theta}{dt} = -w, \text{ and }		\label{eq:ODE_model_theta}\\
			\frac{dq}{dt} = w				\label{eq:ODE_model_q}
		\end{subnumcases}
		admits periodic solutions with nonzero averages over their period.
		Moreover these solutions may be chosen to satisfy $M \vcentcolon= \theta + q = 0$.
	\end{cor}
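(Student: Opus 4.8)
The plan is to reduce \fref{Corollary}{cor:ODE_system_with_nonzero_average_solutions} directly to \fref{Proposition}{prop:second_order_ODE_with_nonzero_average_solutions} by eliminating $w$ and imposing the constraint $M = \theta + q = 0$ from the outset. First I would observe that if we set $\theta = -q$, then the constraint $M = 0$ is satisfied identically, and moreover \eqref{eq:ODE_model_theta} and \eqref{eq:ODE_model_q} become $\frac{d\theta}{dt} = -w$ and $\frac{dq}{dt} = w = -\frac{d\theta}{dt}$, which is automatically consistent with $\theta = -q$. So the constraint is \emph{dynamically preserved}: if it holds initially and $\frac{d}{dt}(\theta + q) = -w + w = 0$, then it holds for all time. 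This means the system on the constraint surface $\{\theta + q = 0\}$ is genuinely governed by the remaining equations.

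Next I would substitute $\theta = -q$ into \eqref{eq:ODE_model_w}, giving $\frac{dw}{dt} = -q - {\min}_0\, q$. Combining this with $\frac{dq}{dt} = w$, we differentiate the latter to obtain $\frac{d^2 q}{dt^2} = \frac{dw}{dt} = -q - {\min}_0\, q$, i.e. precisely the scalar ODE $q'' + q + {\min}_0\, q = 0$ from \fref{Proposition}{prop:second_order_ODE_with_nonzero_average_solutions}. Conversely, given a weak solution $q$ of that scalar ODE, I would set $w \vcentcolon= q'$, $\theta \vcentcolon= -q$, and check that the triple $(w, \theta, q)$ solves \eqref{eq:ODE_model_w}--\eqref{eq:ODE_model_q}: equations \eqref{eq:ODE_model_theta}, \eqref{eq:ODE_model_q} are immediate from the definitions, and \eqref{eq:ODE_model_w} reads $w' = q'' = -q - {\min}_0\, q = \theta - {\min}_0\, q$, which holds by the scalar equation (in the weak sense, since $q'$ is continuous where $q$ changes sign, so $w = q'$ is a legitimate $C^1$ function and the system makes classical sense).

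It then remains to transfer the nonzero-average property. Since $q$ may be chosen (by \fref{Proposition}{prop:second_order_ODE_with_nonzero_average_solutions}) to be $T$-periodic with $\int_0^T q \neq 0$, the resulting solution $(w, \theta, q)$ is visibly $T$-periodic, and one of its components, namely $q$ itself (and equally $\theta = -q$), has nonzero average over the period. The component $w = q'$ has zero average by periodicity, but the corollary only asks that the solution have nonzero averages — which is witnessed by the $q$ and $\theta$ components — and that $M = \theta + q = 0$, which holds by construction. So the proof is essentially a one-line reduction plus a verification.

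I do not anticipate a serious obstacle here; the only mild subtlety is making sure the weak-solution notion is handled cleanly when passing from the second-order scalar equation back to the first-order system. Since $q$ is differentiable with $q'$ continuous precisely at the phase-transition times (as established in the proof of \fref{Proposition}{prop:second_order_ODE_with_nonzero_average_solutions}), setting $w = q'$ gives a genuine continuous function and all three equations \eqref{eq:ODE_model_w}--\eqref{eq:ODE_model_q} hold pointwise away from a discrete set and weakly everywhere, so no regularity is lost. The bookkeeping of the period $T = \pi + \frac{\pi}{\sqrt 2}$ and the explicit piecewise form of $q$ can simply be quoted from the preceding proposition rather than re-derived.
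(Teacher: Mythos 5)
Your proposal is correct and follows essentially the same route as the paper's proof: impose $M=\theta+q=0$, eliminate $\theta=-q$ to reduce the system to the scalar equation $q''+q+{\min}_0\, q=0$, invoke \fref{Proposition}{prop:second_order_ODE_with_nonzero_average_solutions}, and reconstruct the triple $(w,\theta,q)=(q',-q,q)$. The one place you depart from the paper's write-up is actually in your favour: the paper additionally claims $\int_0^T w_*\,dt\neq 0$, but its own computation reduces to $(\sqrt{2}-1)\int_0^\pi\cos t\,dt$, which vanishes---as it must, since $w_*=q_*'$ and $q_*$ is $T$-periodic, so $\int_0^T w_* = q_*(T)-q_*(0)=0$. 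Your reading, that the nonzero time average is carried by the $\theta$ and $q$ components (which is all the statement, and its later use in \fref{Corollary}{cor:moist_Bouss_has_wave_sols_with_nonvanish_averages}, requires), is the correct one.
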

	\begin{proof}
		Suppose that $M \vcentcolon= \theta + q$ vanishes.
		Then $q = -\theta$ and so the ODE system reduces to
		\begin{equation*}
			\frac{dw}{dt} = -q- {\min}_0\, q \text{ and } \frac{dq}{dt} = w
		\end{equation*}
		such that $q$ must satisfy
		\begin{equation*}
			\frac{d^2 q}{dt^2} = \frac{dw}{dt} = - q - {\min}_0\, q,
		\end{equation*}
		or equivalently
		\begin{equation}
			q'' + q + {\min}_0\, q = 0.
		\label{eq:second_order_ODE_int}
		\end{equation}
		\fref{Proposition}{prop:second_order_ODE_with_nonzero_average_solutions} then guarantees the existence of a periodic weak solution $q_*$ of \eqref{eq:second_order_ODE_int}
		with nonzero average over its period.
		Then $(w_*,\,\theta_*,\,q_*)$ is a solution of the original ODE system \eqref{eq:ODE_model_w}--\eqref{eq:ODE_model_q} satisfying $\theta_* + q_* = 0$ provided that
		\begin{equation*}
			\theta_* = -q_*
			\text{ and } 
			w_* = \frac{dq}{dt}.
		\end{equation*}
		So finally, for the period $T$ and for $\bar{t}$ being defined as in \fref{Proposition}{prop:second_order_ODE_with_nonzero_average_solutions} we have that
		\begin{equation*}
			w_* \vert_{ (0,\,T) }
			= \left\{
			\begin{aligned}
				&\sqrt{2} \cos t						&&\text{if } 0\leqslant t < \pi \text{ and } \\
				&\sqrt{2} \cos \sqrt{2} \left(t - \bar{t} \, \right)		&&\text{if } \pi \leqslant t < T
			\end{aligned}
			\right.
		\end{equation*}
		and so
		\begin{equation*}
			\int_0^T w_*
			= \int_0^\pi \sqrt{2} \cos t \,dt + \int_\pi^{2\pi} \cos s \,ds
			= (\sqrt{2} - 1) \int_0^\pi \cos t \,dt
			\neq 0.
		\end{equation*}
		So indeed $w_*$ and $\theta_*$ also have nonzero averages over their period.
	\end{proof}

    As alluded to earlier we now show that solutions of the ODE system of \fref{Corollary}{cor:ODE_system_with_nonzero_average_solutions} give rise to solutions of the moist Boussinesq system \eqref{eq:moist_Boussinesq_u}--\eqref{eq:moist_Boussinesq_q}.

	\begin{lemma}
	\label{lemma:ODE_sols_are_PDE_sols}
		Let $\bar{w},\,\bar{\theta},\,\bar{q} : [0,\,\infty)\to \mathbb{R} $ be weak solutions of the ODE system \eqref{eq:ODE_model_w}--\eqref{eq:ODE_model_q}.
		If we define
		\begin{equation*}
			u(t,\,x) = \bar{w}(t) e_3,\,
			\theta (t,\,x) = \bar{\theta} (t), \text{ and } \\
			q (t,\,x) = \bar{q} (t)
		\end{equation*}
		then $(u,\,\theta,\,q)$ is a weak solution of the moist Boussinesq system \eqref{eq:moist_Boussinesq_u}--\eqref{eq:moist_Boussinesq_q}
		with $\varepsilon = 1$.
	\end{lemma}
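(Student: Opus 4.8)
The plan is to substitute the proposed ansatz directly into the moist Boussinesq system \eqref{eq:moist_Boussinesq_u}--\eqref{eq:moist_Boussinesq_q} with $\varepsilon = 1$ and check each of the four equations in turn. First I would observe that a spatially constant field is automatically divergence-free and has vanishing advective derivative: since $u(t,x) = \bar w(t) e_3$ depends only on $t$, we have $\nabla \cdot u = \partial_3 \bar w(t) = 0$ and $u \cdot \nabla f = \bar w(t)\, \partial_3 f = 0$ for any spatially constant $f$ (in particular for $u$, $\theta$, and $q$ themselves). This kills all the nonlinear transport terms at once, so the PDE system collapses to the system of ODEs governing $(\bar w, \bar\theta, \bar q)$ together with whatever constraint the pressure gradient and the horizontal momentum equations impose.

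Next I would handle the pressure. With the advection terms gone, the $u$-equation \eqref{eq:moist_Boussinesq_u} reads $\partial_t u + e_3 \times u + \nabla p - (\theta - {\min}_0\, q) e_3 = 0$. Writing $u = \bar w(t) e_3$, the Coriolis term $e_3 \times u = \bar w(t)\, e_3 \times e_3 = 0$ vanishes, and the horizontal components of the equation become simply $\nabla_h p = 0$, while the vertical component becomes $\bar w'(t) + \partial_3 p - (\bar\theta(t) - {\min}_0\, \bar q(t)) = 0$. The natural choice is to take $p \equiv 0$ (or any constant), which is consistent with $p \in \mathring H^1$; then the vertical momentum equation reduces to $\bar w'(t) = \bar\theta(t) - {\min}_0\, \bar q(t)$, which is precisely \eqref{eq:ODE_model_w}. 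I would note here that since $u_h = 0$ the horizontal momentum equations are trivially satisfied, and that the Leray projection plays no role because $\bar w(t) e_3$ is already divergence-free.

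Finally I would verify the thermodynamic equations: \eqref{eq:moist_Boussinesq_theta} becomes $\bar\theta'(t) = -u_3 = -\bar w(t)$, which is \eqref{eq:ODE_model_theta}, and \eqref{eq:moist_Boussinesq_q} becomes $\bar q'(t) = u_3 = \bar w(t)$, which is \eqref{eq:ODE_model_q}. Hence $(u,\theta,q)$ solves the moist Boussinesq system (weakly, in the same sense in which $(\bar w, \bar\theta, \bar q)$ solves the ODE system, since the only regularity lost is in $t$ and the nonlinearity ${\min}_0$ is Lipschitz). I do not anticipate a genuine obstacle here: the argument is a routine substitution, and the only point requiring a modicum of care is the precise meaning of ``weak solution'' — one should check that the weak formulation of the ODE system (testing against $C^\infty_c$ functions of $t$) matches the weak formulation of the PDE system restricted to spatially constant test functions, which it does because all spatial derivatives annihilate the ansatz. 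A secondary small point is confirming that the choice $p \equiv 0$ is legitimate and that no nonzero pressure is forced; this follows from the computation above since $\nabla_h p = 0$ and the vertical equation can absorb the entire buoyancy into $\bar w'$.
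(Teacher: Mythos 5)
Your proposal is correct and follows essentially the same route as the paper: direct substitution of the spatially constant ansatz, noting that advection and the Coriolis term vanish and that the pressure reduces to a constant (zero in $\mathring{H}^1$), so the remaining equations are exactly the ODEs \eqref{eq:ODE_model_w}--\eqref{eq:ODE_model_q}. The paper's proof is just a terser version of your argument (it pins down $p=0$ via $-\Delta p = \nabla\cdot(u\cdot\nabla u)=0$ rather than by inspection of the momentum components), so there is nothing substantive to add.
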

	\begin{proof}
		When $u$, $\theta$, and $q$ are constant in space the advective terms vanish.
		Since the pressure necessarily solves $-\Delta p = \nabla\cdot (u\cdot\nabla u)$ it must therefore also vanish.
		Finally, since $e_3\times u = u_h^\perp$, which also vanishes when $u = \bar{w} e_3$,
		the claim follows.
	\end{proof}

    We are finally ready to show that the moist Boussinesq system \eqref{eq:moist_Boussinesq_u}--\eqref{eq:moist_Boussinesq_q} admits solutions with nonlinear oscillations whose time averages do not vanish.

	\begin{cor}
	\label{cor:moist_Bouss_has_wave_sols_with_nonvanish_averages}
		The moist Boussinesq system \eqref{eq:moist_Boussinesq_u}--\eqref{eq:moist_Boussinesq_q}
		admits periodic weak solutions with nonzero averages over their period.
		Moreover these solutions may be chosen to belong to the wave set $\mathcal{W}$ at every instant (for $\mathcal{W}$ as defined in \fref{Proposition}{prop:chara_im_N}).
	\end{cor}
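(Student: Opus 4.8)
The plan is to assemble this corollary directly from the three preceding results: \fref{Corollary}{cor:ODE_system_with_nonzero_average_solutions}, \fref{Lemma}{lemma:ODE_sols_are_PDE_sols}, and \fref{Proposition}{prop:chara_im_N}. First I would invoke \fref{Corollary}{cor:ODE_system_with_nonzero_average_solutions} to obtain a periodic weak solution $(\bar{w},\,\bar{\theta},\,\bar{q})$ of the ODE system \eqref{eq:ODE_model_w}--\eqref{eq:ODE_model_q} whose average over a period is nonzero and which, in addition, satisfies $\bar{\theta} + \bar{q} = 0$ identically. Then I would lift this to a spatially constant state by setting $u(t,\,x) = \bar{w}(t) e_3$, $\theta(t,\,x) = \bar{\theta}(t)$, and $q(t,\,x) = \bar{q}(t)$; \fref{Lemma}{lemma:ODE_sols_are_PDE_sols} then guarantees that $(u,\,\theta,\,q)$ is a weak solution of the moist Boussinesq system \eqref{eq:moist_Boussinesq_u}--\eqref{eq:moist_Boussinesq_q} with $\varepsilon = 1$. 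This solution is periodic in $t$ with the same period as the ODE solution, and its spatial average coincides with the (nonzero) temporal average of the ODE solution, which establishes the first assertion.

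For the second assertion I would appeal to the characterisation of the wave set in item 2 of \fref{Proposition}{prop:chara_im_N}: a state $(u,\,\theta,\,q)$ lies in $\mathcal{W}$ precisely when $PV = \nabla_h^\perp \cdot u_h + \partial_3 \theta = 0$ and $M = \theta + q = 0$. Both conditions hold at every instant for the constructed solution: since $u = \bar{w}(t) e_3$ has vanishing horizontal part and $\theta = \bar{\theta}(t)$ does not depend on the spatial variable, we get $\nabla_h^\perp \cdot u_h + \partial_3 \theta = 0$, while $\theta + q = \bar{\theta}(t) + \bar{q}(t) = 0$ by the choice made in \fref{Corollary}{cor:ODE_system_with_nonzero_average_solutions}. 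Hence $(u,\,\theta,\,q)(t) \in \mathcal{W}$ for every $t$, as desired.

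There is no genuine obstacle here, as all of the analytic work has been done in the cited lemmas and propositions; the corollary is an assembly step. The only point that merits a sentence of care is that the wave set $\mathcal{W}$ is $\varepsilon$-independent, being cut out by the vanishing of $PV$ and $M$, neither of which involves $\varepsilon$, so the fact that \fref{Lemma}{lemma:ODE_sols_are_PDE_sols} produces a solution of the system with $\varepsilon = 1$ does not interfere with membership in $\mathcal{W}$. One could also remark, for emphasis, that the oscillations of the resulting solution are exactly the piecewise-sinusoidal oscillations of \fref{Proposition}{prop:second_order_ODE_with_nonzero_average_solutions} depicted in \fref{Figure}{fig:nonlinear_oscillations}, with phase-dependent frequencies and amplitudes and nonvanishing time average, which is precisely the phenomenon this corollary is meant to exhibit within the full PDE system.
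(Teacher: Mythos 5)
Your proposal is correct and follows essentially the same route as the paper: invoke \fref{Corollary}{cor:ODE_system_with_nonzero_average_solutions} for a periodic ODE solution with $M=\theta+q=0$ and nonzero period-average, lift it to a spatially constant weak solution via \fref{Lemma}{lemma:ODE_sols_are_PDE_sols}, and then verify $PV=0$ and $M=0$ so that \fref{Proposition}{prop:chara_im_N} places the solution in $\mathcal{W}$ at every instant. The only cosmetic slip is the phrase about the ``spatial average''; the relevant quantity is the time average over the period, which is nonzero because it coincides with that of the ODE solution.
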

	\begin{proof}
		This follows from \fref{Corollary}{cor:ODE_system_with_nonzero_average_solutions} and \fref{Lemma}{lemma:ODE_sols_are_PDE_sols} since
		together they guarantee the existence of weak solutions of the moist Boussinesq system \eqref{eq:moist_Boussinesq_u}--\eqref{eq:moist_Boussinesq_q}
		with $M \vcentcolon= \theta + q = 0$.
		Moreover we know from \fref{Lemma}{lemma:ODE_sols_are_PDE_sols} that these solutions take the form $u = \bar{w} e_3$ and $\theta = \bar{\theta} $.
		Therefore these solutions have vanishing potential vorticity, and they satisfy $M=0$.
		\fref{Proposition}{prop:chara_im_N} therefore tells us that, indeed, these solutions belong to the wave set $\mathcal{W}$.
	\end{proof}

    To conclude this section we record a result analogous to \fref{Lemma}{lemma:ODE_sols_are_PDE_sols} above: now the full velocity is incorporated into the ODE system, not just its vertical component. This is of independent interest from \fref{Lemma}{lemma:ODE_sols_are_PDE_sols} since it is used in the study of the Snell-type metric in \fref{Section}{sec:energy_centric_approach}.

    \begin{lemma}
    \label{lemma:ODE_sols_are_PDE_sols_general}
        Let $\bar{u},\,\bar{\theta},\,\bar{q} : [0,\,\infty) \to \mathbb{R}$ be weak solutions of the ODE system
		\begin{subnumcases}{}
			\frac{du}{dt} = - u_h^\perp + (\theta - {\min}_0\, q) e_3,		\label{eq:ODE_model_gen_u}\\
			\frac{d\theta}{dt} = -u_3, \text{ and }		\label{eq:ODE_model_gen_theta}\\
			\frac{dq}{dt} = u_3				\label{eq:ODE_model_gen_q}
		\end{subnumcases}
		If we define
		\begin{equation*}
			u(t,\,x) = \bar{u}(t),\,
			\theta (t,\,x) = \bar{\theta} (t), \text{ and } \\
			q (t,\,x) = \bar{q} (t)
		\end{equation*}
		then $(u,\,\theta,\,q)$ is a weak solution of the moist Boussinesq system \eqref{eq:moist_Boussinesq_u}--\eqref{eq:moist_Boussinesq_q}
		with $\varepsilon = 1$.
	\end{lemma}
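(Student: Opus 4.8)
The plan is to follow the proof of \fref{Lemma}{lemma:ODE_sols_are_PDE_sols} almost verbatim, the only new ingredient being that the full velocity vector (rather than just its vertical component) is carried along, so the Coriolis term $e_3\times u$ must be tracked. First I would record that if $u$, $\theta$, $q$ are independent of the spatial variable $x\in\mathbb{T}^3$, then $\nabla\cdot u=0$ holds trivially and all advective terms vanish, i.e. $u\cdot\nabla u=0$, $u\cdot\nabla\theta=0$, and $u\cdot\nabla q=0$. Hence, for spatially constant fields, the moist Boussinesq system \eqref{eq:moist_Boussinesq_u}--\eqref{eq:moist_Boussinesq_q} with $\varepsilon=1$ reduces to
\begin{equation*}
\partial_t u + e_3\times u + \nabla p - (\theta - {\min}_0\, q)\, e_3 = 0, \qquad \partial_t \theta + u_3 = 0, \qquad \partial_t q - u_3 = 0.
\end{equation*}

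Next I would eliminate the pressure. The vector field $e_3\times u - (\theta - {\min}_0\,q)\,e_3$ is spatially constant, being a fixed algebraic combination of the spatially constant unknowns, hence divergence-free with vanishing gradient; equivalently, taking the divergence of the momentum equation yields $\Delta p = 0$, so $p$ may be taken constant in $x$ and $\nabla p \equiv 0$. (In the language of the Leray projector $\mathbb{P}_L$ used to define the leading-order operator in \eqref{eq:def_N}, a spatially constant vector field is fixed by $\mathbb{P}_L$.) With $\nabla p\equiv 0$ the momentum equation becomes $\partial_t u = -e_3\times u + (\theta - {\min}_0\,q)\,e_3$, and since $e_3\times u = u_h^\perp$ this is exactly \eqref{eq:ODE_model_gen_u}; the two thermodynamic equations are exactly \eqref{eq:ODE_model_gen_theta}--\eqref{eq:ODE_model_gen_q}. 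So the spatially constant extension of any (weak) solution $(\bar u,\,\bar\theta,\,\bar q)$ of the ODE system \eqref{eq:ODE_model_gen_u}--\eqref{eq:ODE_model_gen_q} satisfies the moist Boussinesq system.

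Finally, to phrase this precisely at the level of weak solutions, I would note that because the constructed fields carry no spatial dependence, the spacetime weak formulation of \eqref{eq:moist_Boussinesq_u}--\eqref{eq:moist_Boussinesq_q} against test functions on $[0,\infty)\times\mathbb{T}^3$ decouples: integrating a test function over $\mathbb{T}^3$ collapses each equation to the temporal weak formulation of the corresponding ODE, which $(\bar u,\,\bar\theta,\,\bar q)$ satisfies by hypothesis. I do not anticipate a genuine obstacle here; the only two points needing care are the sign-convention check that $e_3\times u = u_h^\perp$ in the notation of the paper, and the observation that the pressure contributes nothing — both of which are immediate consequences of spatial constancy. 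A minor technical point worth a sentence is that ${\min}_0$ is $1$-Lipschitz, so no regularity is lost in passing the ODE nonlinearity through to the PDE, exactly as in \fref{Lemma}{lemma:ODE_sols_are_PDE_sols}.
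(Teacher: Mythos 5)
Your argument is correct and is essentially the paper's proof: the paper simply states that the general case follows exactly as \fref{Lemma}{lemma:ODE_sols_are_PDE_sols} (spatial constancy kills the advective terms, the pressure solves $-\Delta p = \nabla\cdot(u\cdot\nabla u)=0$ and hence contributes nothing), with the only difference being that the Coriolis term $e_3\times u = u_h^\perp$ is now retained and absorbed into \eqref{eq:ODE_model_gen_u} rather than vanishing. Your additional remarks on the weak formulation and the Lipschitz nonlinearity are harmless elaborations of the same argument.
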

    \begin{proof}
        This follows exactly as for \fref{Lemma}{lemma:ODE_sols_are_PDE_sols}.
    \end{proof}

\bibliographystyle{alpha-bis}
\bibliography{KSS-PV-references,references}

\end{document}